\newtheorem{prop}{Proposition}[section]
\newtheorem{theo}[prop]{Theorem}\usepackage[top=1in,left=1in,right=1in,bottom=1in]{geometry}
\numberwithin{equation}{section}
\newtheorem{example}{Example}
\newtheorem{lemm}[prop]{Lemma}
\theoremstyle{definition}
\newtheorem{rem}[prop]{Remark}
\newcommand{\de}{\ensuremath{\partial}}
\newcommand{\R}{{\mathbb R}}
\newcommand{\Di}{{\slashed{D}}}
\newcommand{\tDi}{{\tilde{\slashed{D}}}}
\newcommand{\hf}{{\hat{f}}}
\newcommand{\mCS}{\mathbb S}
\newcommand{\anut}{\aver{\nu}_{\rT}}
\newcommand{\vsa}{\varsigma}
\newcommand{\TT}{}
\newcommand{\m}{}
\date{\today}
\newcommand{\aw}[1]{{{\color{blue}{AW: #1}}}}
\newcommand{\slb}[1]{{{\color{orange}{SLB: #1}}}}
\newcommand{\alexis}[1]{{{\color{purple}{AD: #1}}}}
\newcommand{\jl}[1]{{{\color{brown}{[JL: #1]}}}}
\newcommand{\gb}[1]{{{\color{blue}{GB: #1}}}}
\def\smallsection#1{\smallskip\noindent\textbf{#1}.}
\title[Magnetic slowdown of topological edge states]{Magnetic slowdown of topological edge states}
\author{G. Bal}
\address[Guillaume Bal]{University of Chicago, USA.}
\email{guillaumebal@uchicago.edu}
\author{S. Becker}
\address[Simon Becker]{Courant Institute for Mathematical Sciences, New York City, USA.}
\email{simon.becker@cims.nyu.edu}
\author{A. Drouot}
\address[Alexis Drouot]{University of Washington, USA.} 
\email{adrouot@uw.edu}
\begin{document}

\maketitle

\newcommand{\eps}{\varepsilon}
\newcommand{\Rm}{\mathbb{R}}
\newcommand{\Cm}{\mathbb{C}}
\newcommand{\Zm}{\mathbb{Z}}
\newcommand{\Nm}{\mathbb{N}}
\newcommand{\tGamma}{\widetilde{\Gamma}}
\newcommand{\Mm}{\mathbb{M}}
\newcommand{\Sm}{\mathbb{S}}
\newcommand{\dsum}{\displaystyle\sum}
\newcommand{\dint}{\displaystyle\int}
\newcommand{\fa}{\mathfrak a}
\newcommand{\Ub}{{\mathbf U}}
\newcommand{\aver}[1]{\langle #1 \rangle}
\newcommand{\mA}{\mathcal A}
\newcommand{\mC}{\mathcal C}
\newcommand{\mL}{\mathcal L}
\newcommand{\mR}{\mathcal R}
\newcommand{\mS}{\mathcal S}
\newcommand{\mT}{\mathcal T}
\newcommand{\mV}{\mathcal V}
\newcommand{\fU}{\mathfrak U}
\newcommand{\bU}{\operatorname U}
\newcommand{\rT}{{\rm T}}
\newcommand{\fb}{\mathfral{b}} 
\newcommand{\zetatwo}{\zeta}
\newcommand{\xione}{\xi}
\newcommand{\xinorm}{\xi}
\newcommand{\ynorm}{\zeta}
\newcommand{\Xitot}{\xi,\zeta}
\newcommand{\fhat}{\hat f}
\newcommand{\C}{\mathbb{C}}
\newcommand{\Z}{\mathbb{Z}}
\newcommand{\epsi}{\varepsilon}
\newcommand{\N}{\mathbb{N}}
\newcommand{\p}{\partial}
\newcommand{\te}{\theta}
\newcommand{\RR}{\mathcal{R}}
\newcommand{\trace}{\operatorname{tr}}
\newcommand{\matrice}[1]{\left[  \begin{matrix} #1 \end{matrix}\right]}
\newcommand{\systeme}[1]{\left\{  \begin{matrix} #1 \end{matrix}\right.}
\newcommand{\lr}[1]{\left\langle #1 \right\rangle}
\newcommand{\loc}{{\operatorname{loc}}}
\newcommand{\HH}{{\mathcal{H}}}
\newcommand{\vp}{{\varphi}}
\newcommand{\hh}{{\mathfrak{h}}}
\newcommand{\aaa}{{\mathfrak{a}}}
\newcommand{\SSS}{{\mathcal{S}}}
\newcommand{\VV}{{\mathcal{V}}}
\newcommand{\tL}{{\tilde{L}}}
\newcommand{\tf}{{\tilde{f}}}
\newcommand{\tSSS}{{\tilde{\SSS}}}
\newcommand{\sgn}{\operatorname{sgn}}

\begin{abstract}
 We study the propagation of wavepackets along curved interfaces between topological, magnetic materials. Our Hamiltonian is a massive Dirac operator with a magnetic potential. We construct semiclassical wavepackets propagating along the curved interface as adiabatic modulations of straight edge states under constant magnetic fields. While in the magnetic-free case, the wavepackets propagate coherently at speed one, here they experience slowdown, dispersion, and Aharonov--Bohm effects. Several numerical simulations illustrate our results.
\end{abstract}



\section{Introduction}\label{sec:intro}

This paper analyzes wavepackets propagating along an interface between two topologically distinct materials, in the presence of an external magnetic field. It extends constructions carried out in \cite{bal2021edge} for magnetic-free models. We represent here the electron dynamics via a two-dimensional Dirac equation:
\begin{equation}\label{eq:D1}
    (\eps D_t + \slashed D)\Psi(t,x) = 0, \quad (t,x) \in \mathbb R \times \mathbb R^2.
\end{equation}
In \eqref{eq:D1},  $\slashed D$ denotes a Dirac operator with sign-changing mass and magnetic field:
\begin{equation}\label{eq:Dirac}
  \slashed D = (\eps D_1-A_1(x))\sigma_1+(\eps D_2-A_2(x))\sigma_2 + \kappa(x)\sigma_3, \qquad \text{where:}
\end{equation}
\begin{itemize}
\item  $\epsi > 0$ is a small semiclassical parameter and $\epsi D_t=-i \epsi \partial_t$, $\epsi D_j=-i \epsi\partial_j$ denote the self-adjoint semiclassical derivatives;
\item $A=(A_1,A_2)^t \in C^\infty(\R^2,\R^2)$ is a magnetic potential with $\nabla A \in C^\infty_b$ (i.e.\ smooth with all derivatives uniformly bounded), inducing the magnetic field $B=\partial_1A_2-\partial_2A_1$; 
\item $\kappa \in C^\infty(\R^2,\R)$ has varying sign and satisfies $\nabla\kappa\in C^\infty_b$;
\item $\sigma_1, \sigma_2, \sigma_3$ are the standard $2\times2$ Pauli matrices.
\end{itemize}

The sign of the domain wall $\kappa$ characterizes the topological phase of the material. Specifically, under the transversality condition
\begin{equation}\label{eq-3a}
    \kappa(y) = 0 \qquad \Rightarrow \qquad \nabla \kappa(y) \neq 0,
\end{equation}
the interface $\Gamma = \kappa^{-1}(0)$ separates regions of distinct local topology \cites{B19b,bal2021edge}. For analytic reasons, we consider here a uniform version of \eqref{eq-3a}:
\begin{equation}
    \inf\{ \big|\nabla \kappa(x)\big| : x \in \Gamma\} >0.
\end{equation}

The bulk-edge correspondence for Dirac operators \cite{B19b} predicts that a tubular neighborhood of $\Gamma = \kappa^{-1}(0)$ supports asymmetric currents, hence (some analogue of) edge states for small $\epsi$. For vanishing magnetic potentials, we constructed in \cite{bal2021edge} long-lived solutions to \eqref{eq:D1}. These were confined and propagating at speed one along $\Gamma$. We referred to them as dynamical edge states, since we derived them as time-dependent adiabatic modulations of straight edge states. This paper extends the construction to the magnetic case. A non-zero magnetic field induces several new phenomena for dynamical edge states:
\begin{itemize}
    \item[(i)] A systematic slowdown, see Figure \ref{fig:my_label};
     \item[(ii)] A large phase-shift, generating a Aharonov--Bohm effect when $\Gamma$ is a loop;
     \item[(iii)] In general, a mesoscopic dispersion along $\Gamma$.
\end{itemize}
We set a few notations in \S\ref{sec-1.1}, state a simplified main result in \S\ref{sec-1.2}, and detail the effects (i)--(iii) in \S\ref{sec-1.3}.

\vspace{-6mm}

\begin{figure}[ht]
\floatbox[{\capbeside\thisfloatsetup{capbesideposition={right,center},capbesidewidth=10cm}}]{figure}[\FBwidth]
{\hspace{-1cm}\caption{\label{fig:my_label} 
Snapshots of the evolution of Gaussian wavepackets propagating along a straight interface $\kappa(x)=x_2$, under a constant magnetic field $B=0,0.5,1,1.5,2$ (from right to left), computed numerically. The packets slow down in stronger fields.}}
{\begin{tikzpicture}
   \node at (0,0) {\includegraphics[width=8cm]{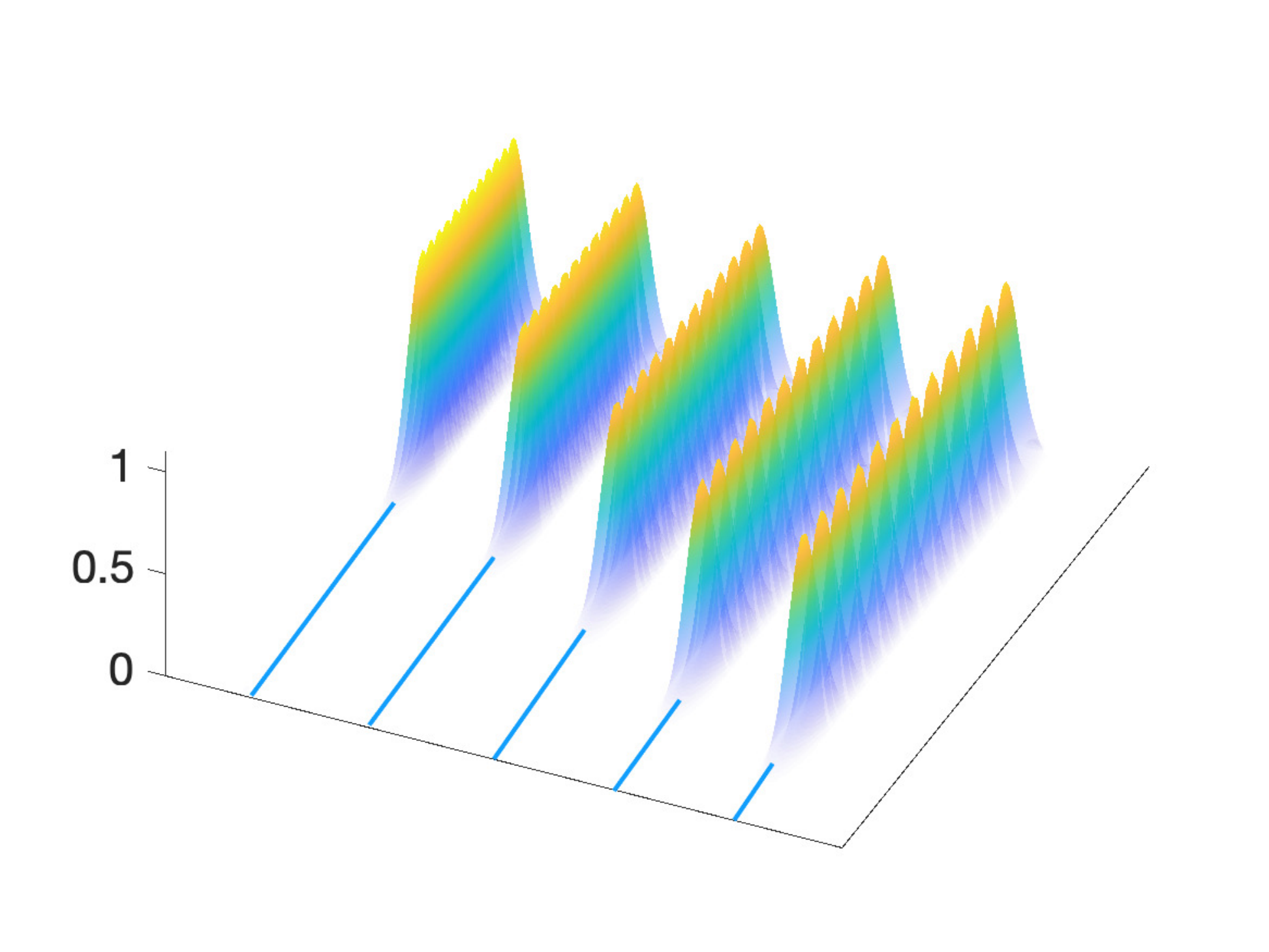}};
 \node[blue!70] at (.7,-2.5) {$B=0$};
 \node[blue!70] at (-2.5,-1.8) {$B=2$};
  \end{tikzpicture}}
\end{figure}

\vspace{-.5cm}

\subsection{Notations}\label{sec-1.1} We first define the normal and tangent vector fields to the level sets of $\kappa$:
\begin{equation}
\label{eq:tang_vec}
    n(x) = \frac{\nabla \kappa(x)}{|\nabla \kappa(x)|}, \qquad \tau(x) = J n(x), \qquad J=\matrice{ 0&-1\\ 1 & 0}.
\end{equation}

Given $y_0 \in \Gamma$, we let $y_t$ be the solution of the ODE
\begin{equation}\label{eq:yt}
    \dot y_t = c(y_t) \tau(y_t), \qquad c(y) := \dfrac{|\nabla \kappa(y)|}{\sqrt{|\nabla \kappa(y)|^2 + B(y)^2}},
\end{equation}
where $B = \nabla \times A$ is the magnetic field. Note that $\p_t \kappa(y_t) = 0$ since $\nabla \kappa(x) \cdot \tau(x) = 0$, hence $y_t \in \Gamma$ for any $t \in \R$. We then define the quantities
\begin{equation}
   B_t=B(y_t), \quad n_t = \frac{\nabla \kappa(y_t)}{|\nabla \kappa(y_t)|}, \quad r_t = |\nabla \kappa(y_t)|, \qquad \rho_t = \sqrt{r_t^2+B_t^2}, \quad \gamma_t = \frac{B_t}{\rho_t^2}.
\end{equation}

To capture the local geometry of the interface near $y_t$, we define two smoothly varying angles $\varphi_t$ and $\te_t$ (and the corresponding clockwise rotation $R_{\te_t}$) by 
\begin{equation}\label{eq:coefst}
     \cos \varphi_t= \frac{r_t}{\rho_t} := c_t, \quad \sin \varphi_t= \frac{B_t}{\rho_t}:= s_t, 
     \quad R_{\theta_t} = \matrice{ \cos\theta_t & \sin\theta_t \\ -\sin\theta_t & \cos\theta_t },
\end{equation}
see Figure \ref{fig:geom}; we assume $\varphi_0$ and $\theta_0$ belong to $[0,2\pi)$ for concreteness.
With these notations, $R_{\theta_t} n_t=e_2$ and $\dot{y_t} = c_t \tau_t$ with $\tau_t=Jn_t$. 
Differentiating the equation defining $\theta_t$, we observe that $\dot n_t+\dot\theta_t\tau_t=0$. Since $\dot y_t=c_t\tau_t$, we deduce that $\dot\theta_t=-c_t (\tau\cdot\partial_\tau n)(y_t)$. In particular, $|\dot\theta_t|=c_t K_t$ where $K_t$ is the curvature of the curve $\Gamma$ at $y_t$. 

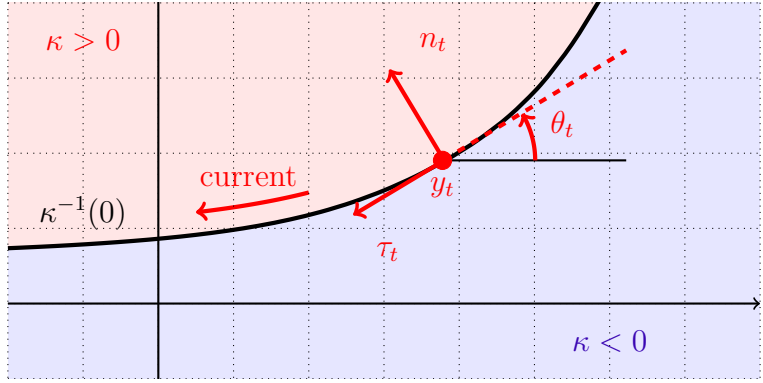
\begin{figure}[htbp]
\newcommand{\fun}{{\x/2.5+(1+\x*\x/6.25)^(1/2)-1.5 +  tanh(\x-3) +1 }}

\definecolor{dblue}{RGB}{57, 5, 179}

\begin{tikzpicture}
\clip (-5,-3) rectangle (5,2);

\fill [red!10, domain=-5:5, variable=\x]
  (-5, 3)
  -- plot ({\x}, {\fun})
  -- (5, 3)
  -- cycle;

\fill [blue!10, domain=-5:5, variable=\x]
  (-5, -4)
  -- plot ({\x}, {\fun})
  -- (5, -4)
  -- cycle;

\draw[domain=-5:5 , smooth, variable=\x, ultra thick] plot (\x, {\fun});

\draw[domain=-1:-2.5 , smooth, variable=\x, ultra thick,red,->] plot (\x, {.3+\x/2.5+(1+\x*\x/6.25)^(1/2)-1.5 +  tanh(\x-3) +1 });

\node[red] at (-1.8, -.3) {current};

\draw[dotted, thin] (-5,3) grid (5,-4);
\draw[thick,->] (-3,-4) -- (-3,3);
\draw[thick,->] (-5,-2) -- (5,-2);
\node[red] at (-4, 1.5) {$\kappa > 0$};
\node[dblue] at (3, -2.5) {$\kappa < 0$};
\node at (-4, -.8) {$\kappa^{-1}(0)$};

\begin{scope}[shift={(-1.4,-.7)},scale=1.1]

\begin{scope}[shift={(2,0.55)},scale=1.1]
\draw[domain=0:2 , smooth, variable=\x, ultra thick, red, dashed] plot (\x, {1.21+0.601*(\x-2)});
\draw[red, ultra thick, ->] (0,0) -- (-1,-0.601); 
\draw[red, ultra thick, ->] (0,0) -- (-0.601,1); 
\draw[thick] (0,0) -- (2,0);
\draw[red, domain=0:30.9 , smooth, variable=\x, ultra thick, ->] plot ({cos(\x)}, {sin(\x)});
 \node[red] at (1.3, .4) {$\theta_t$};
\end{scope}

\begin{scope}[shift={(0,0.55)},scale=1.1]
 \node[red] at (1.8, -.3) {$y_t$};
 \draw[red,fill=red] (1.8, 0) circle (1mm);
\node[red] at (1.7, 1.3) {$n_t$};
\node[red] at (1.2, -1) {$\tau_t$};
\end{scope}

\end{scope}
\end{tikzpicture}
\caption{Geometry of curved interface.} \label{fig:geom} 
\end{figure}

We also introduce the unitary pullback operator $\RR_{\te_t}$ by $R_{\te_t}$ as $\RR_{\te_t} g(z) = g(R_{\te_t} z)$. Associated to $\te_t$ and $\varphi_t$ are two spinorial rotations,
\begin{equation}
  U_{2,\varphi_t} = e^{-i\frac{\varphi_t}2\sigma_2}, \qquad U_{3,\theta_t} = e^{-i\frac{\theta_t}2\sigma_3}.
\end{equation}

\subsection{Simplified main result}\label{sec-1.2} For the sake of simplicity, we consider a restricted setup: the domain wall satisfies a geometric condition and the magnetic field is constant. Our main result, \ref{thm:main}, takes then a clearer form.

The aforementioned assumptions are:
\begin{itemize}
    \item $\kappa \in C^\infty(\R^2,\R)$ with $\nabla \kappa \in C^\infty_b$
    satisfies
\begin{equation}\label{eq:00g}
    y \in \kappa^{-1}(0) \quad \Rightarrow \quad \big| \nabla \kappa(y) \big| = 1, \quad \Delta \kappa(y) = 0; 
\end{equation}
\item The magnetic field $B$ is constant.
    \end{itemize}
While \eqref{eq:00g} is analytically restrictive, it is not geometrically restrictive: any one-dimensional submanifold of $\R^2$ is the zero set of a function $\kappa$ satisfying \eqref{eq:00g}. 
For instance, for the straight edge $\Gamma = \R e_2$, we can choose $\kappa(y)=y_2$ while for the unit circle $\Gamma = \mathbb{S}^1$ we can choose $\kappa(y)=\ln |y|$ (near $|y|=1$). When the  two above conditions hold, the quantities $r_t, B_t, \rho_t, \gamma_t, \varphi_t, c_t$ and $s_t$ do not depend on $t$, and we omit the subscript $t$.

The most elementary setup with these two conditions consists of $\kappa(x) = x_2$ and $A = -B x_2 e_1$. The corresponding Dirac operator is:
\begin{equation}\label{eq-2a}
    \slashed{D}_{0,B} := ( \epsi D_1+Bx_2) \sigma_1 + \epsi D_2 \sigma_2 + x_2 \sigma_3.
\end{equation}
We remark that the magnetic potential vanishes along the interface $\R e_1$ and is parallel to it. The equation $(\epsi D_t + \slashed{D}_{0,B}) \Psi = 0$ admits an explicit family of non-dispersive wavepacket solutions that propagate along $\Gamma$ at speed $c = (1+B^2)^{-1/2}$:
\begin{equation}\label{eq:00i}
    \Psi_{0,B}(x) := \frac{1}{\sqrt\eps} \psi_{0,B}\left( \dfrac{x-ct e_1}{\sqrt{\epsi}} \right), \quad \psi_{0,B}(z) := \int_\R e^{i\xione z_1-\frac{1}{2}\rho ( z_2+\gamma \xione)^2} \fhat(\xione) d\xione \cdot U_{2,\varphi}\matrice{1  \\ -1},
\end{equation}
where $\fhat$ is any Schwartz-class function, $\rho = \sqrt{1+B^2}$, $\gamma = \frac{B}{1+B^2}$, and $\varphi=\arctan B$. 

We extend this statement to interfaces tilted by an angle $\te \in \R$. Let $\slashed{D}_{\te,B}$ be the Dirac operator with domain wall and magnetic potential
\begin{equation}
    \kappa(x) = n \cdot x, \quad A(x) = B \kappa(x) \tau, \quad n = \matrice{-\sin \te \\ \cos \te}, \quad \tau = -\matrice{\cos \te \\ \sin \te}.
\end{equation}
The equation $(\epsi D_t + \slashed{D}_{\te,B}) \Psi = 0$ is unitarily equivalent to the case $\te=0$ and admits a family of solutions constructed from $\Psi_{0,B}$:
\begin{equation}
    \Psi_{\te,B}(x) := \frac{1}{\sqrt\eps} \psi_{\te,B}\left( \dfrac{x-ct \tau}{\sqrt{\epsi}} \right), \qquad  \psi_{\te,B}(z) := U_{3,\te} \RR_{\te} \psi_{0,B}(z). 
\end{equation}

Our main result produces approximate solutions to $(\epsi D_t + \slashed{D}) \Psi = 0$ as modulations of $\psi_{\te,B}$. To state it, we need the distribution $g_t$ on $\R^2$ defined by:
\begin{equation}\label{eq:00j}
    g_t(z) := \int_{\R^2} e^{i\zeta z + i \gamma (\te_t-\te_0) \zeta_1^2} d\zeta = \dfrac{e^{-i\pi/4}}{|2\pi\gamma (\te_t-\te_0)|^{1/2}} e^{i \frac{-z_1^2}{4 \gamma (\te_t-\te_0)}} \cdot \delta_0(z_2).
\end{equation}
The second equality is valid for $\te_t \neq \te_0$ and should be replaced by a Dirac mass when $\te_t = \te_0$. Our simplified theorem for constant magnetic fields reads then as follows.

\begin{theo}\label{thm:5} Assume that $\kappa$ satisfies the condition \eqref{eq:00g} and that $B$ is constant. For any $\rT >0$, the equation $(\epsi D_t + \slashed{D} ) \Psi = 0$
admits solutions that satisfy, uniformly for $t \in [-\rT,\rT]$ and $x \in \R^2$:
\begin{equation}\label{eq:00ii}
    \Psi(t,x) = \dfrac{e^{\frac{i}{\epsi}\chi(t,x)}}{\sqrt{\epsi}}  \psi\left( t,\dfrac{x-y_t}{\sqrt{\epsi}} \right) + O_{L^2}(\epsi^{1/2}), \qquad \text{where} \quad  \psi(t,z) := \big(\RR_{\te_t} g_t\big) * \psi_{\te_t,B}(z),
    \end{equation}
    \begin{equation} 
    \text{and} \quad
        \chi(t,x) = \int_0^t \dot{y_s} \cdot A(y_s) ds + A(y_t) \cdot (x-y_t) + (x-y_t) \cdot \left( \nabla A(y_t)^\top - B n_t \tau_t^\top\right) (x-y_t).
    \end{equation}
\end{theo}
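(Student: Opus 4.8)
The plan is to run a multiscale WKB (wavepacket) construction: build an approximate solution carrying the correct $\sqrt\eps$-transverse localization, reduce the equation to an effective transport law along $\Gamma$, and then pass to an exact solution by a Duhamel estimate. First I would insert the ansatz $\Psi(t,x) = \eps^{-1/2}e^{i\chi(t,x)/\eps}\psi(t,(x-y_t)/\sqrt\eps)$ into $(\eps D_t+\slashed D)\Psi=0$, conjugate by $e^{i\chi/\eps}$ (so $\eps D_j-A_j$ becomes $\eps D_j+\p_j\chi-A_j$ and a scalar $\p_t\chi$ appears), rescale to the fast variable $z=(x-y_t)/\sqrt\eps$, and Taylor expand $\kappa$, $A$, $\nabla\chi$ about $y_t\in\Gamma$, using $\kappa(y_t)=0$ together with \eqref{eq:00g} (so $|\nabla\kappa(y_t)|=1$ and $\Delta\kappa(y_t)=0$). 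This produces $i\eps\p_t\psi=(\sqrt\eps\,\mathcal{L}_1(t)+\eps\,\mathcal{L}_2(t)+O(\eps^{3/2}))\psi$, where $\mathcal{L}_1(t)$ is a Dirac operator with coefficients affine in $z$. The shape of $\chi$ is forced at this stage: the linear term $A(y_t)\cdot(x-y_t)$ cancels the constant $A(y_t)$, and the quadratic term $(x-y_t)\cdot(\nabla A(y_t)^\top-Bn_t\tau_t^\top)(x-y_t)$ is the unique quadratic gauge such that, combined with the pullback rotation $\RR_{\te_t}$ and the spinor rotation $U_{3,\te_t}$ aligning $n_t$ with $e_2$, it makes $\mathcal{L}_1(t)$ unitarily equivalent to the fixed model $\slashed D_{0,B}$ of \eqref{eq-2a} (a quadratic-phase, metaplectic normal-form computation: $\chi$ is chosen precisely to annihilate the leftover affine-in-$z$ terms). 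The scalar $\int_0^t\dot y_s\cdot A(y_s)\,ds$ in $\chi$ is the accumulated dynamical/Aharonov--Bohm phase, and $\mathcal{L}_2(t)$ collects the moving-frame term $-\dot y_t\cdot D_z$, the time derivative of the quadratic chirp, and the second-order Taylor coefficients of $\kappa$, which by $\Delta\kappa(y_t)=0$ encode only the curvature $K_t=|\dot\te_t|/c_t$.

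Next comes the leading and subleading dynamics. Fourier transforming $\slashed D_{0,B}$ in the edge variable (dual momentum $\xi$) and using \eqref{eq:00i}, one checks that the transverse profile there is an eigenvector of $(\xi+Bz_2)\sigma_1+D_{z_2}\sigma_2+z_2\sigma_3$ with eigenvalue $c\,\xi$, $c=(1+B^2)^{-1/2}$: the edge band is \emph{linear} with group velocity $c$. Choosing $\dot y_t=c_t\tau_t$ with $c_t=c$ (and $c_t=|\nabla\kappa(y_t)|/\rho_t$ in general) flattens the moving-frame dispersion, so the $O(\sqrt\eps)$ equation is satisfied by an arbitrary band superposition $\psi(t,\cdot)=\psi_{\te_t,B}[\hat f_t]$ with Schwartz amplitude $\hat f_t(\xi)$; this pins down $y_t$ and the band but leaves $\hat f_t$ free. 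To determine $\hat f_t$ I would project the $O(\eps)$ equation onto the instantaneous band via the spectral projector $P(t)$ (introducing, if needed, an order-$\sqrt\eps$ off-band corrector, which contributes only $O_{L^2}(\eps^{1/2})$ to $\Psi$ and is absorbed into the error). The effective evolution is driven by the turning rate $\dot\te_t$: since $\psi_{\te_t,B}[\hat f]$ is localized, in the normal direction, near the momentum-dependent Landau offset $z_n=-\gamma\xi$, an infinitesimal rotation $d\te_t$ of the frame shifts the edge coordinate by an amount proportional to $\gamma\xi\,d\te_t$, i.e.\ multiplies $\hat f_t$ by $e^{i\gamma\xi^2\,d\te_t}$. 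Integrating gives $\hat f_t(\xi)=e^{i\gamma(\te_t-\te_0)\xi^2}\hat f_0(\xi)$ up to a constant; in position space, multiplication by this quadratic phase is convolution with the Fresnel kernel $g_t$ of \eqref{eq:00j} (a Dirac mass when $\te_t=\te_0$), and undoing $\RR_{\te_t}$ yields exactly $\psi(t,z)=(\RR_{\te_t}g_t)*\psi_{\te_t,B}(z)$.

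With $\chi$ and $\psi$ so chosen, $\Psi_{\mathrm{app}}(t,x):=\eps^{-1/2}e^{i\chi/\eps}\psi(t,(x-y_t)/\sqrt\eps)$ solves $(\eps D_t+\slashed D)\Psi_{\mathrm{app}}=R_\eps$ with $\sup_{|t|\le\rT}\|R_\eps(t)\|_{L^2}=O(\eps^{3/2})$: the Taylor remainders of $\kappa$ and $A$ are controlled by $\nabla\kappa,\nabla A\in C^\infty_b$ and the uniform transversality, while the tails of the localized profile are controlled by the Schwartz decay of $\hat f$. Since $\slashed D$ is self-adjoint and time-independent, $e^{-it\slashed D/\eps}$ is unitary, so Duhamel's formula gives $\|\Psi_{\mathrm{app}}(t)-e^{-it\slashed D/\eps}\Psi_{\mathrm{app}}(0)\|_{L^2}\le\eps^{-1}\int_0^{|t|}\|R_\eps(s)\|_{L^2}\,ds=O(\eps^{1/2})$, uniformly on $[-\rT,\rT]$; taking $\Psi:=e^{-it\slashed D/\eps}\Psi_{\mathrm{app}}(0)$ gives the stated solution.

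The main obstacle is the effective-dynamics step: keeping track of how the time-dependent rotation $\RR_{\te_t}$ --- which mixes the edge and normal coordinates --- interacts with the Landau-gauge structure of the magnetic term so as to produce \emph{exactly} the dispersive phase $\gamma(\te_t-\te_0)\xi^2$ and nothing else; the algebraic counterpart in the first step is verifying that the quadratic correction $-Bn_t\tau_t^\top$ leaves no residual affine-in-$z$ term in $\mathcal{L}_1(t)$, which is the computation that justifies the precise form of $\chi$. A secondary technical point is that, because the wavepacket lives at scale $\sqrt\eps$, the expansion and all remainder estimates should be run in suitable weighted (or FBI/Bargmann-type) spaces to ensure the successive correctors are genuinely of lower order in $L^2$.
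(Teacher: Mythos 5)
Your overall architecture coincides with the paper's: gauge conjugation, $\sqrt\eps$ moving-frame rescaling with spatial/spinorial rotations, a leading ``edge-band'' problem, a solvability condition at the next order producing the Fresnel chirp $e^{i\gamma(\te_t-\te_0)\xi^2}$ (hence the convolution with $g_t$), one corrector, and then self-adjointness plus Duhamel, with the correct order counting ($O(\eps^{3/2})$ residual, $O(\eps^{1/2})$ error). The gaps are at precisely the two steps the paper spends its technical sections on. First, you run the whole construction with the quadratic gauge of the statement, whereas the paper first builds an exact local gauge with $A-\nabla\chi=\beta\kappa\tau$ near the moving center (Lemma \ref{lem-1e}, Proposition \ref{prop:globalgauge}), so that the conjugated potential genuinely vanishes on $\Gamma$ and is tangent to it, and only at the very end replaces this $\chi$ by its quadratic expansion, at the cost of an explicit $O(\eps^{1/2})$ estimate on $(e^{i\chi/\eps}-e^{i\chi_2/\eps})\psi^1$. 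With only the quadratic gauge, the conjugated potential differs from $\beta\kappa\tau$ by $\nabla(\chi_{\mathrm{exact}}-\chi)=O(|x-y_t|^2)$, which enters the rescaled expansion at order $\sqrt\eps$ --- the same order as the transport equation. These extra terms are in fact harmless: in rescaled variables they are of the form $i[L_0,c_3]$ applied to the leading profile, with $c_3$ the cubic Taylor polynomial of $(\chi_{\mathrm{exact}}-\chi)/\eps$, and since $L_0$ is symmetric and annihilates both the profile and the kernel elements, they drop out of the solvability condition; but your proposal neither notices them nor supplies such an argument, so in your gauge the effective transport law is not yet justified.

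Second, the derivation of the dispersion coefficient is only heuristic. The picture ``an infinitesimal frame rotation shifts the edge coordinate by $\gamma\xi\,d\te$, hence multiplies $\hat f$ by $e^{i\gamma\xi^2 d\te}$'' gives the right answer but hides genuinely competing contributions: after the rotations $\bU_t$ and the shifted Fourier transform $\mV_t$, the angular-momentum term $-\dot\te_t(z_1D_2-z_2D_1)$ becomes $\dot\te_t\gamma_t(D_\zeta^2-\xi^2)$, and its $D_\zeta^2$ part cancels against the spinor-rotation term $\tfrac{\dot\te_t s_t}{2}$ generated by $\bU_t^*D_t\bU_t$ (Lemmas \ref{lem-1c}, \ref{lem-1d} and the computation of $\mT$ in Lemma \ref{lem-1h}); only after this cancellation does $\mT=D_t-\dot\te_t\gamma\,\xi^2$ survive in the setting of Theorem \ref{thm:5}, yielding \eqref{eq:00j}--\eqref{eq:00ii}. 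You flag this step as the main obstacle, and it is: as written, the key formula is asserted rather than proved. (For comparison, the paper does not give a standalone argument but obtains Theorem \ref{thm:5} as the $J=1$ case of Theorem \ref{thm:main}, where the corrector, the dispersion-weighted spaces, and the replacement of the exact gauge by its quadratic expansion are all handled systematically.)
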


The leading order term in \eqref{eq:00ii}:
\begin{itemize}
    \item[(i)] Propagates at speed $c = (1+B^2)^{-1/2}$ along $\Gamma$, in the prescribed direction $\tau$.
    \item[(ii)] Is semiclassically localized at the phase-space point $(y_t,A_t)$ with $A_t=A(y_t)$. This is where the eigenvalues of the symbol of $\Di$ are degenerate. 
    While the rapid oscillations generated by $A_t$ can be locally gauged away, they cannot be globally neglected when $\Gamma$ is a closed loop: this is the Aharonov--Bohm effect. 
    \item[(iii)] Disperses along $\Gamma$ at rate prescribed by the difference $\te_t-\te_0$:
    \begin{equation}\label{eq:00k}
  \big| \psi(t,z) \big| \leq \dfrac{C}{1+|\te_t-\te_0|^{1/2}}\sup_{z \in \R^2} \big| \psi(0,z) \big|.
\end{equation} 
In particular, the leading part in \eqref{eq:00ii} is controlled by $\epsi^{-1/2} |\te_t-\te_0|^{-1/2}$.
This is a relatively weak dispersion as it comes from dispersion of the wave envelop  $\R_{\te_t g_t} \star \psi_{\te_t,B}$ rather than dispersive relations of plane waves. It produces effects for times of order one, in contrast with dispersion in e.g. the semiclassical Schr\"odinger equation which arises at time $\epsi$.
\end{itemize}
We detail these three effects in \S\ref{sec-1.3}. Theorem \ref{thm:main} will extend the result of Theorem \ref{thm:5} to cover varying magnetic fields, general domain walls and longer times of validity. The corresponding wavepackets will more generally have a variable speed, given by the ODE \eqref{eq:yt}; a phase with properties identical to (ii) above; and a more complicated rate of dispersion.

\subsection{Effects of the magnetic field}\label{sec-1.3} We comment here on the structure of the wavepacket \eqref{eq:00i}, when $B$ is constant and $\kappa$ satisfies \eqref{eq:00g}; and detail how things change when these conditions are relaxed (see Theorem \ref{thm:main}).   

\begin{figure}[htbp]
\floatbox[{\capbeside\thisfloatsetup{capbesideposition={right,center},capbesidewidth=10cm}}]{figure}[\FBwidth]
{\caption{ \label{fig:AB} Aharonov--Bohm effect for the domain wall $\kappa(y) = \ln |y|$ (corresponding to a circle interface). The plot shows the rapid evolution of the phase of the first spinor component as the wavepacket performs a single revolution around the circle. See \S\ref{sec-7.3} for details.}}
{\begin{tikzpicture}
   \node at (0,0) {\includegraphics[width=6cm]{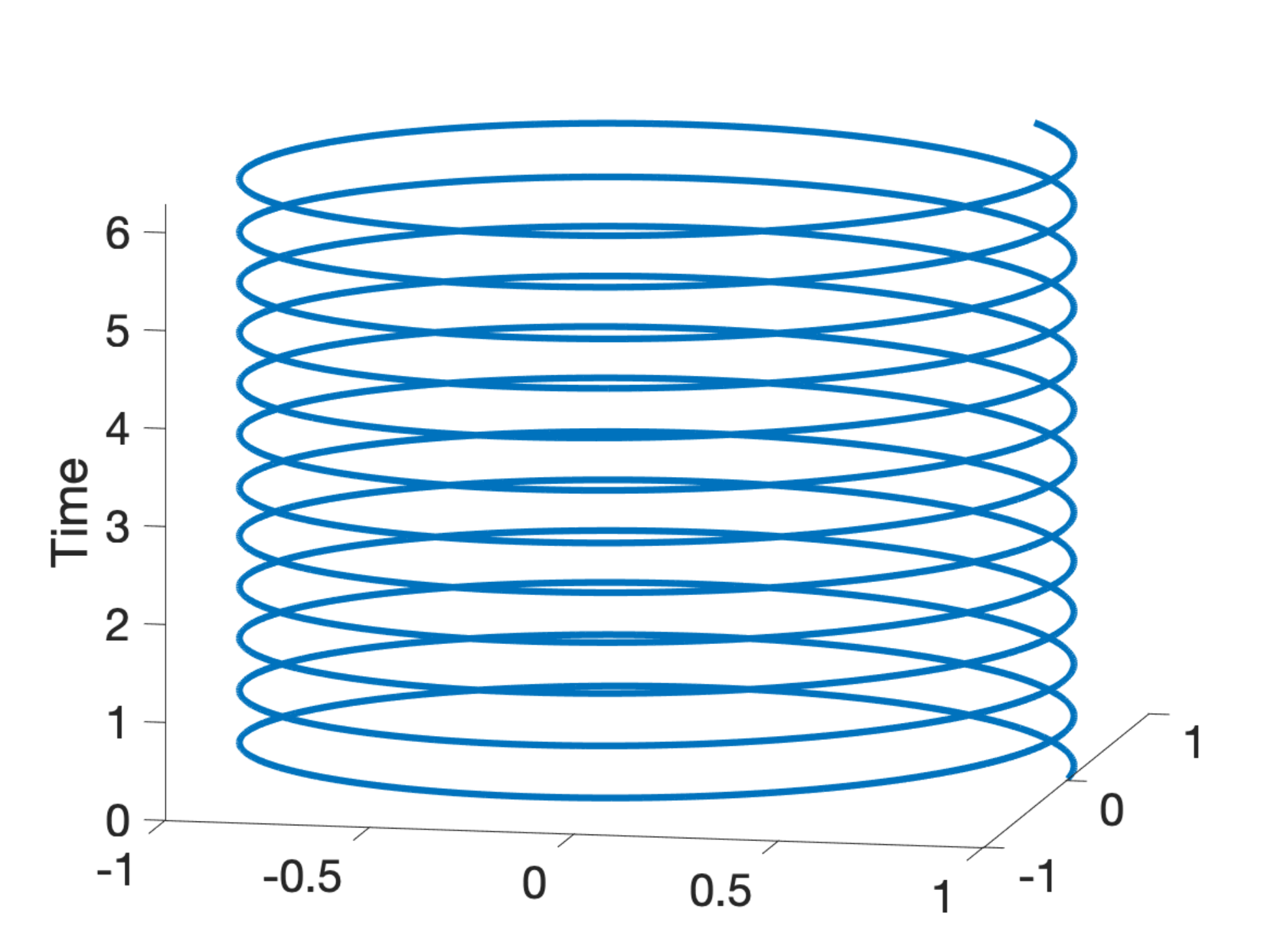}};
  \end{tikzpicture}}
\end{figure}

\begin{figure}[b]
    \centering
    \includegraphics[width=5cm]{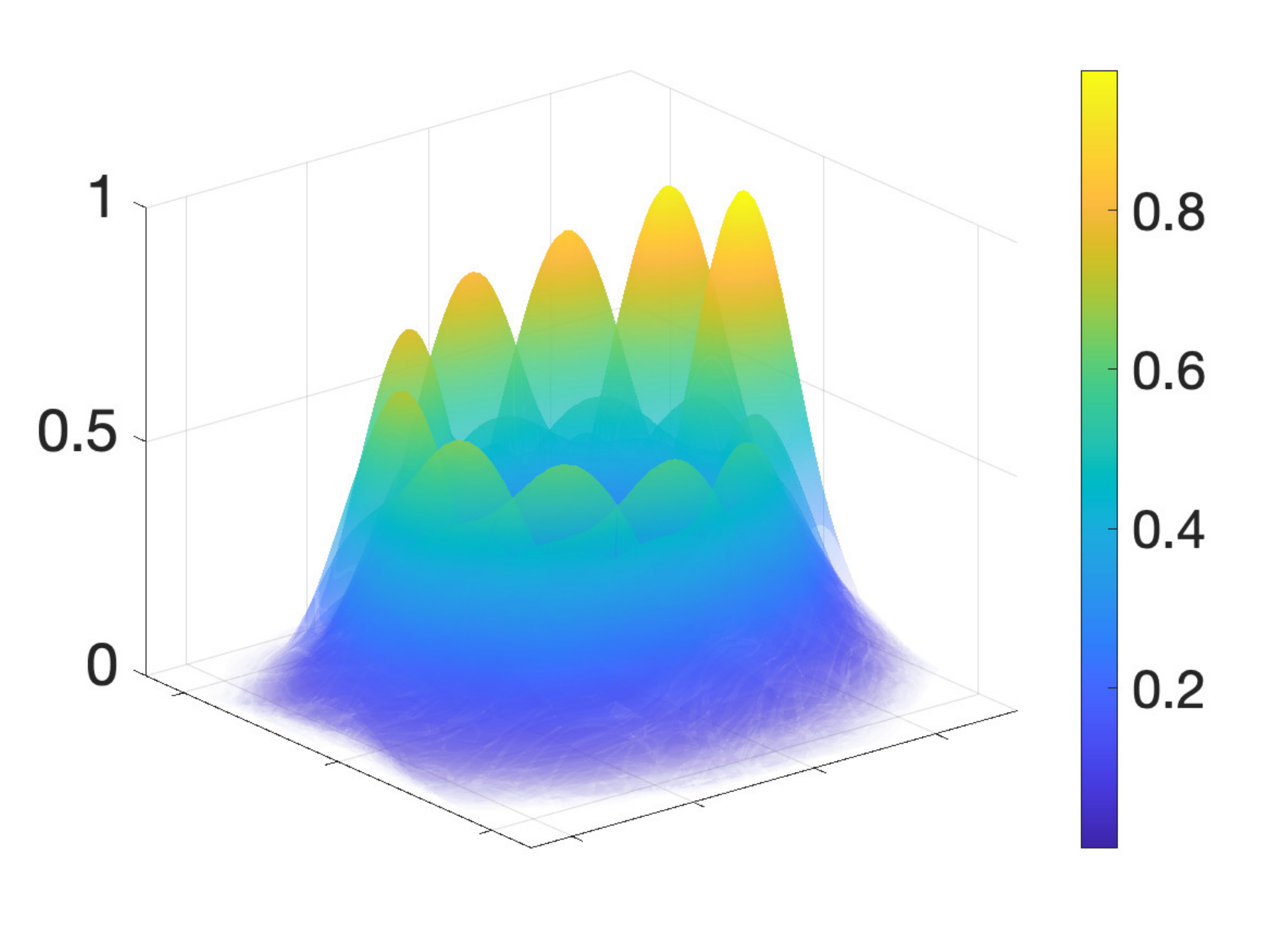} 
    \includegraphics[width=5cm]{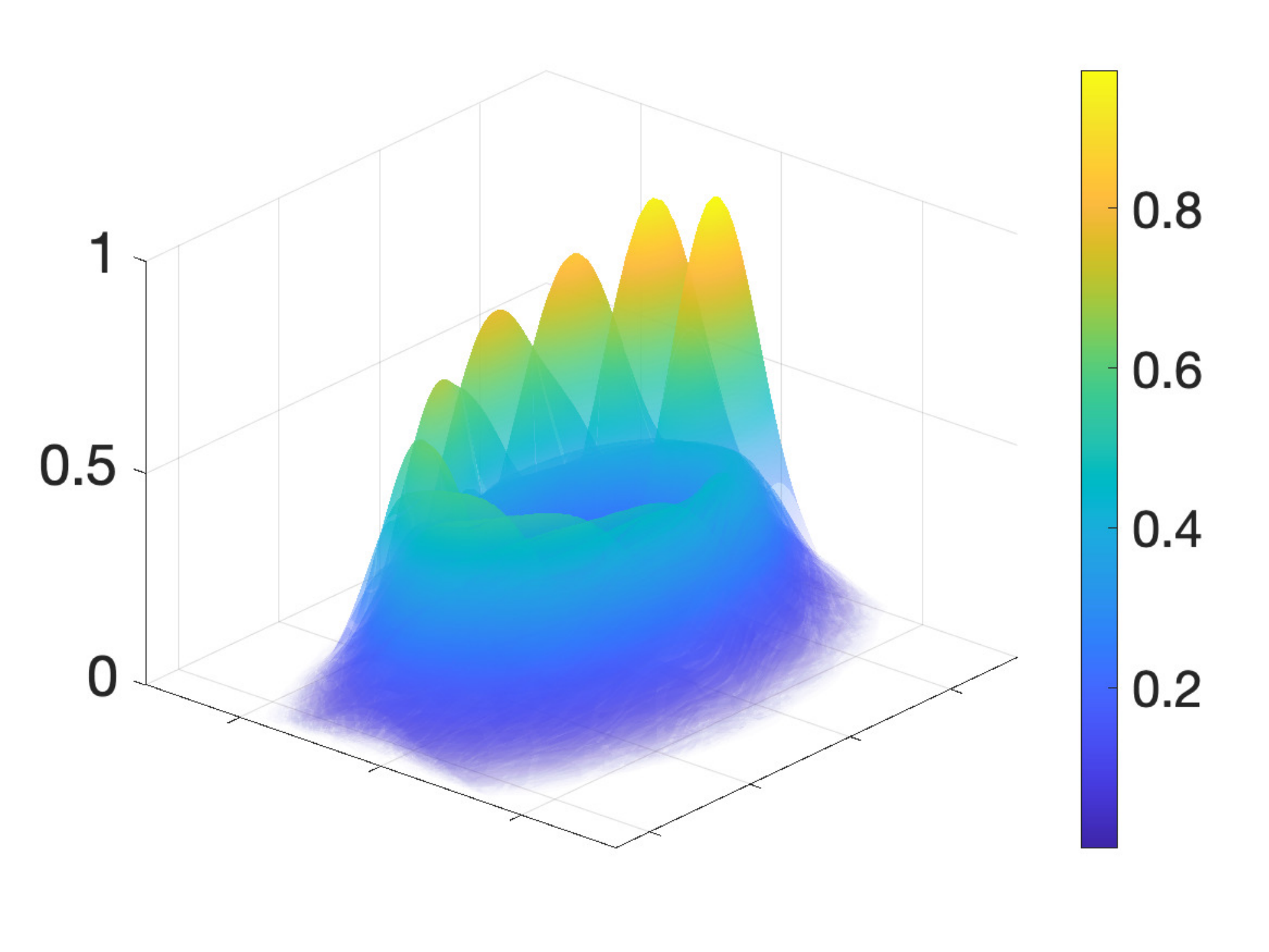} 
    \includegraphics[width=5cm]{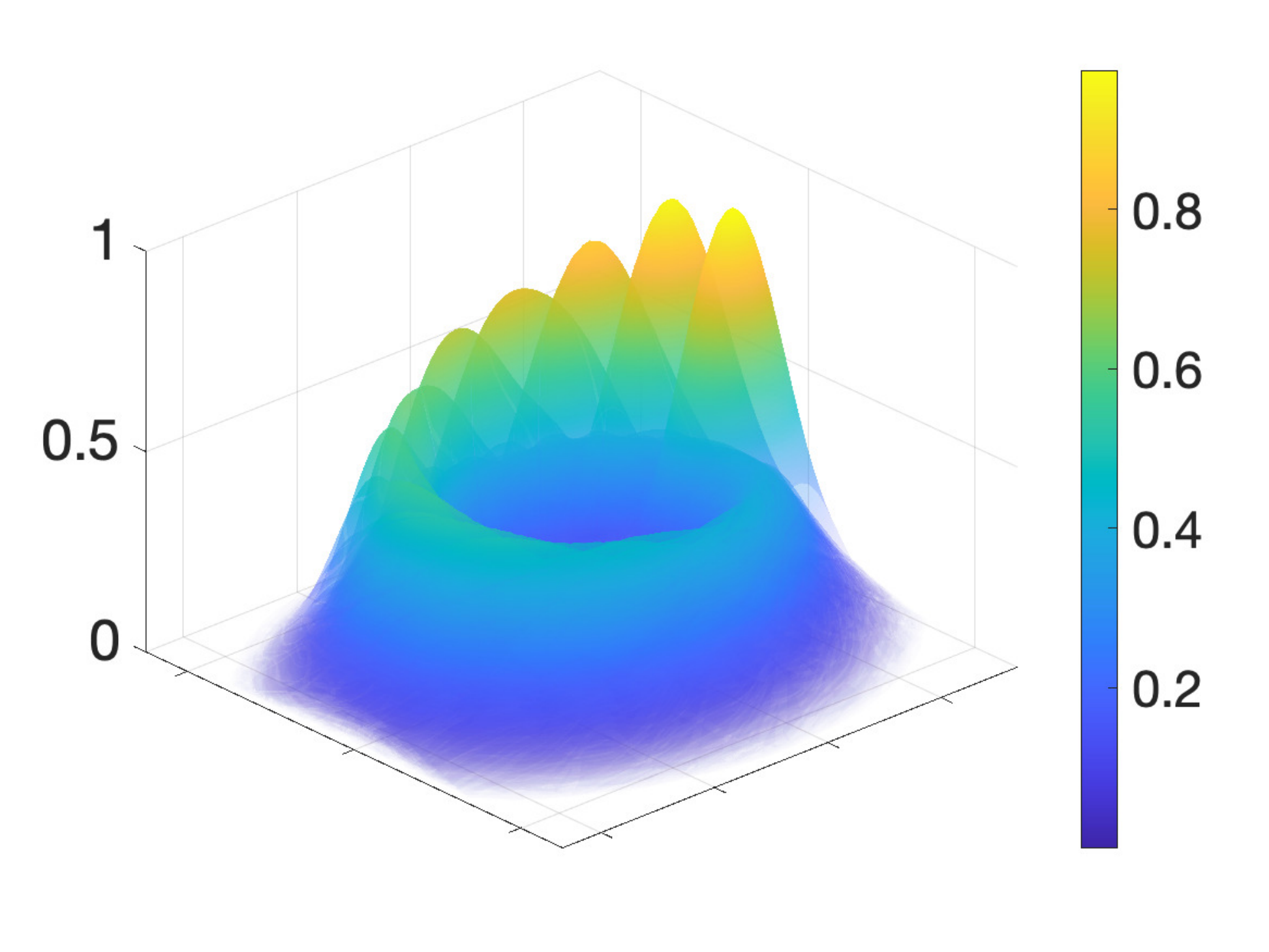}\\
    \includegraphics[width=5cm]{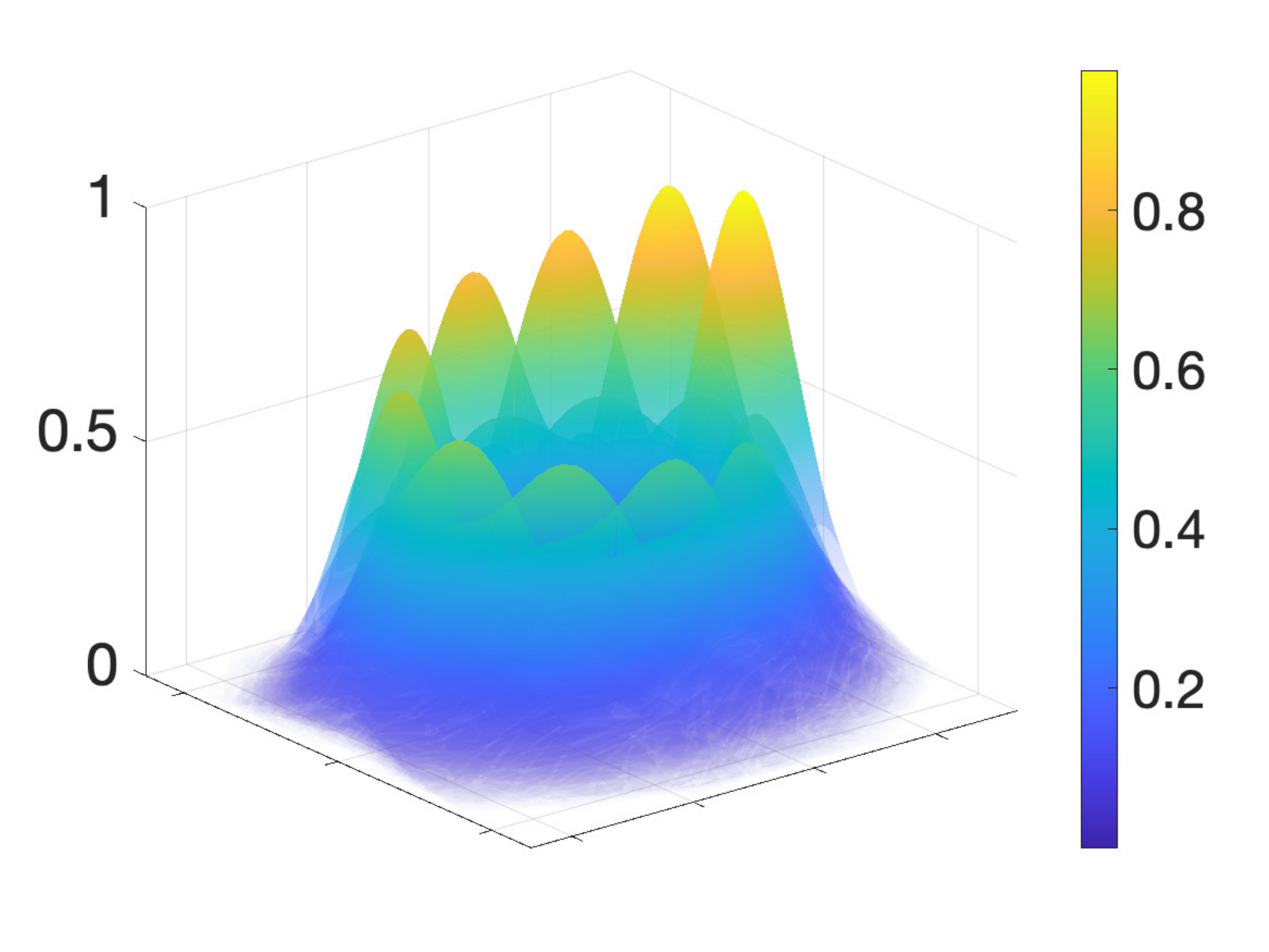} 
\includegraphics[width=5cm]{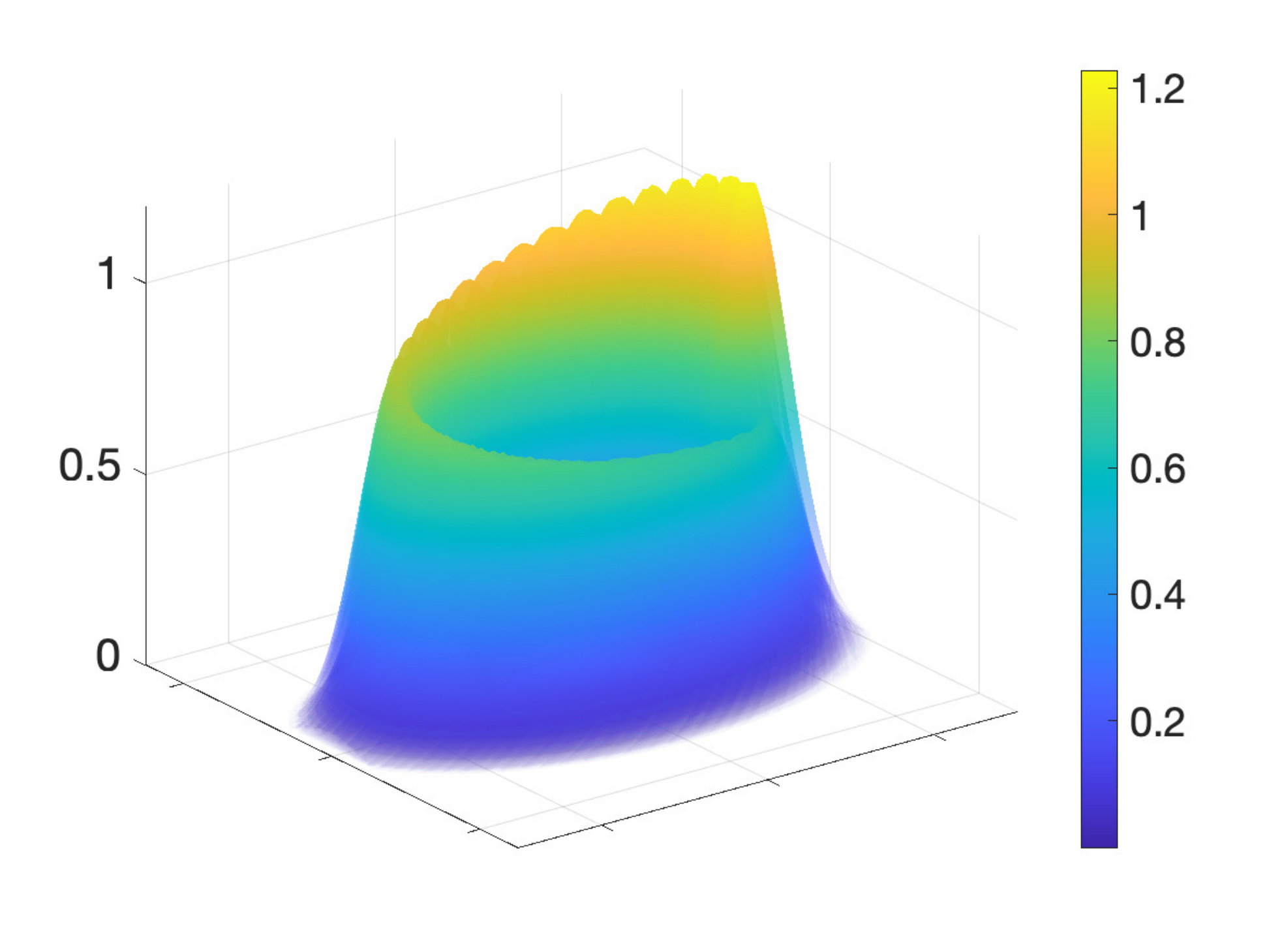} 
\includegraphics[width=5cm]{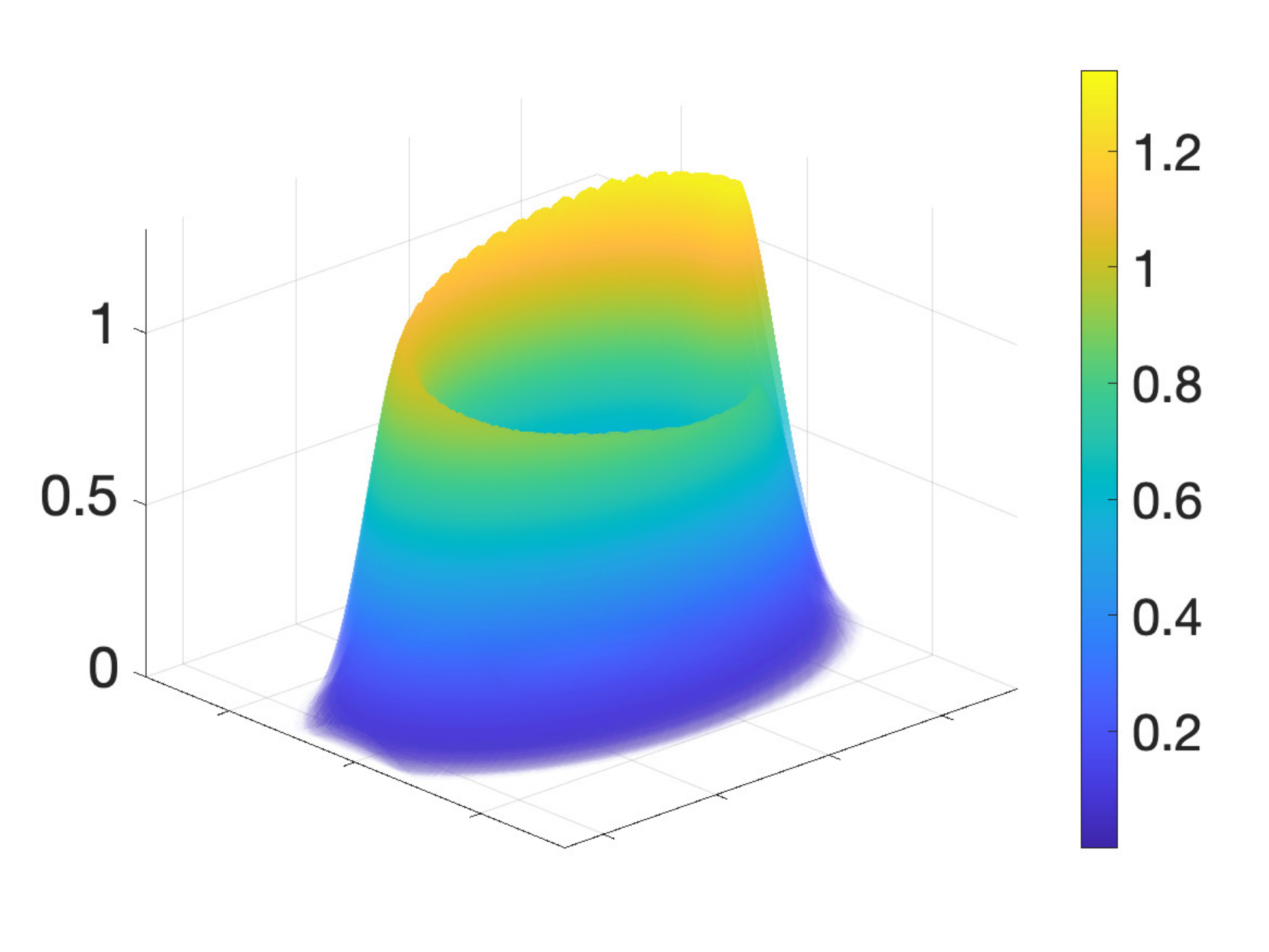}
    \caption{ Snapshots of the evolution of a wavepacket around the unit disc
    (with $\kappa(y) = \ln |y|$ and $\dot\theta=c$) in magnetic fields $B = 1/2,1/\sqrt{2},1$ (top row) and $B=2,3,4$ (bottom row). 
    We observe gradual dispersion consistent with our predicted dispersion rate $\gamma |\te_t-\te_0| = \gamma c t = B (1+B^2)^{-3/2} t$, see \eqref{eq-7m}. For equal distance of propagation the dispersion is maximal when $B=1$ while for equal time of propagation the dispersion is strongest when $B=1/\sqrt{2}$. }
    \label{fig:Fig_overview}
\end{figure}

Turning on a magnetic field systematically slows down the propagation. In the general setup of Theorem \ref{thm:main}, the wavepackets move at speed $(1+B_t^2)^{-1/2}$, which is smaller than $1$ whenever the magnetic field does not vanish.

The wavepacket is semiclassically localized at $(y_t,A_t)$. This point lies in the crossing set of the semiclassical symbol
\begin{equation}
    \slashed{D}(x,\xi) = \big(\xi_1-A_1(x)\big) \sigma_1 + \big(\xi_2-A_2(x)\big) \sigma_2  + \kappa(x) \sigma_3;
\end{equation}
that is, the eigenvalues of $\slashed{D}(y_t,A_t)$ are repeated. Hence,  $(y_t,A_t)$ is an exotic semiclassical trajectory, in the sense that it is not among those predicted by standard propagation of singularity, such as \cite{DH72}.

The semiclassical action $\dot{y_t} \cdot A_t$ generates a large phase-shift $e^{\frac{i}{\epsi}\int_0^t \dot{y_s} \cdot A_s ds}$ that can be locally -- but not globally -- gauged away. When $\Gamma$ is a loop, after a full revolution the phase shift relates to the magnetic flux $\Phi = \int_\Gamma A$: it is $e^{\frac{i}{\epsi} \Phi}$, see Figure \ref{fig:AB} for the case of the circle. 
This is the Aharonov--Bohm effect.

The envelop of the wavepacket typically disperses along $\Gamma$. Under the geometric condition \eqref{eq:00g}, the rate of change of dispersion is $|\te_t-\te_0|^{-1/2}$. Indeed, writing $\psi$ in terms of $g_t$ and $\psi_{0,B}$, and using the formula \eqref{eq:00j}, we have for $|\te_t-\te_0| \geq 1$:
\begin{align}
    \sup_{z \in \R^2} \big| \psi(t,z) \big| = \sup_{z \in \R^2} \big|g_t * \psi_{0,B}(z)\big| & \leq \dfrac{1}{\sqrt{ 2\pi \gamma |\te_t-\te_0|}} \sup_{z_2 \in \R} \int_\R |\psi(y_1,z_2)| dy_1 
    \\ 
    & \leq \dfrac{C}{\sqrt{\gamma |\te_t-\te_0|}} = \dfrac{C'}{\sqrt{ \gamma |\te_t-\te_0|}} \sup_{z \in \R^2} \big| \psi(0,z) \big|. \label{eq-7m}
\end{align}
This yields \eqref{eq:00k}. While the rate of change of dispersion $|\te_t-\te_0|^{-1/2}$ is bounded above if $\Gamma$ is asymptotically flat, it can be as small as $t^{-1/2}$ when $\Gamma$ is a loop or a spiral. In these cases, the wavepacket loses coherence over long times as displayed in Figure \ref{fig:Fig_overview}; it should be noted however that larger magnetic fields do not necessarily give rise to larger dispersion. In the general setup of Theorem \ref{thm:main}, the rate of dispersion $\nu_t$ takes a more complicated form: see Lemma \ref{lem-1i}, \eqref{eq-1a} and \eqref{eq-1d}. We comment that dispersion can have long-time effects. When the dispersion is strongest,  $\nu_t \sim t$ and our construction holds up to times $\rT\ll\eps^{-1/8}$; when it is weakest, $\nu_t = O(1)$ and we recover the time of validity  $\rT\ll\eps^{-1/2}$ of \cite{bal2021edge}.

When $B$ or $|\nabla \kappa|$ vary along $\Gamma$ (that is, outside the setup of Theorem \ref{thm:5}), an additional effect emerges: time-dependent anisotropic compression/stretching in the normal and tangent directions of $\Gamma$. We refer the reader to the formula \eqref{eq-00m} and Theorem \ref{thm:main}.
The compression factors, $\sqrt{\rho_t}$ (in the normal direction) and $c_0^2/c_t^2$ (in the tangent direction), remain bounded above and below in the limit $t \rightarrow \infty$. This contrasts with the parameter $\nu_t$ that controls the dispersion, which can grow like $t$.

\subsection{Strategy of proof} Conceptually speaking, our approach consists in constructing successive transformations of $\Di$ that bring us closer to the flat Dirac operator $\Di_{0,B}$ of \eqref{eq-2a}. The main steps are as follows:

\begin{itemize}
    \item[\textbf{1.}] In \S\ref{sec:gauge}, we conjugate $\Di$ by a gauge transform $e^{i\chi(t,x)/\epsi}$, cooked up so that the resulting magnetic potential $A-\nabla \chi$ vanishes along $\Gamma$ and is tangent to the level sets of $\kappa$. These are shared features with the flat Dirac operator $\Di_{0,B}$.  The expansion of the gauge term $e^{i\chi(t,x)/\epsi}$ near $y_t$ produces the large oscillatory phase of \eqref{eq:00ii}.
    
    \item[\textbf{2.}] In \S\ref{sec:rescaling} we look for solutions to the gauge-modified Dirac equation $(\epsi D_t + \tilde{\slashed{D}}) \tilde{\Psi} = 0$ as semiclassical wavepackets localized at $(y_t,0)$. A formal Taylor expansion produces a hierarchy of equations for the envelops.
    
    \item[\textbf{3.}] In \S\ref{sec:rotation}, we perform a series of spatial and spinorial rotations on the leading equation. The first two rotations (which already appear in \cite{bal2021edge}) flatten the interface; the last is magnetically induced and is among the new ingredients. The result is the leading equation that one would get starting from $\slashed{D}_{0,B}$.  
    
    \item[\textbf{4.}] Up to a partial shifted Fourier transform, the leading operator takes the same form as in the absence of magnetic fields \cite{bal2021edge}. In \S\ref{sec:model}, we compute explicitly its kernel and prove stability estimates -- later needed for the subleading transport equation.
    
    \item[\textbf{5.}] In \S\ref{sec:transport} we explicitly integrate the transport equation. In Fourier variables, the solutions have a quadratic phase with Hessian $\te_t-\te_0$ (in the setup of \S\ref{sec-1.2}). In physical space, this transfers to dispersion at rate $|\te_t-\te_0|^{-1/2}$.
    
    \item[\textbf{6.}] Higher-order approximations are constructed iteratively in \S\ref{sec:error}.  Combining them with the unitarity of $e^{-it\Di}$ and a Duhamel argument, we obtain our main result, Theorem \ref{thm:main}. It gives an approximate solution to $(\epsi D_t + \Di) \Psi = 0$ that propagates along $\Gamma$, slowed down by the magnetic field, and explicitly expressed through the above transformations. 
\end{itemize}
We then illustrate our findings with a series of numerical simulations in \S\ref{sec:num}.

\subsection{Related literature} For systems of semiclassical PDEs, the symbols that govern the macroscopic transport are the eigenvalues of the (matrix-valued) symbol. A symbolic diagonalization argument shows that their Hamiltonian flow governs the leading-order dynamics.
When the eigenvalues of the symbol are degenerate, the classical equations of motion break down. The situation studied here is among the simplest such cases. Our analysis show that the phase-space crossing set $\{ \big(x, A(x)\big) , x\in \Gamma\}$ support wavepackets with the dynamics \eqref{eq:yt}. As mentioned above, \eqref{eq:yt} is an exotic semiclassical trajectory not predicted by the standard results on propagation of singularities \cite{DH72}.
In other setups, wavepackets may e.g. start away from the crossing set, reach it, and undergo a Landau--Zener transition; see e.g. \cites{Hag94,HJ98,FG2,Col04}.

From a physical point of a view, our motivation stems from the ubiquity of $\Di$ in the field of topological phases of matter \cites{WI,moessner2021topological}, and in particular one-particle models of topological insulators and topological superconductors \cites{VO,Be13,PSB16,B19b,B19a}, which generically come with conical points \cite{Drouot:21}. The domain wall $\kappa(x)$ models the interface between two topologically distinct insulating phases \cites{FLW16,Drouot:19,DW20}. This in turn generates an asymmetric transport along the interface $\Gamma$ by a principle called the bulk-interface correspondence; see e.g. \cites{EG02,GP,PSB16,Drouot:19b,B20,bal2021topological,Drouot2020microlocal,bal2021}. The wavepackets analyzed here encode this asymmetry; see \cite[\S1.4]{bal2021edge} for a discussion when $A=0$.  The operator $\Di$ also emerges in the effective analysis of graphene  and its pseudomagnetic (strained) analogues, see for instance \cite{GRW21}.

Note that the magnetic field is essential in the integer quantum Hall effect \cites{TKNN,avron1994,BES94}, which was the first observed example of topologically non-trivial state of matter.
There, the insulating gaps are obtained from the degenerate Landau levels associated to a constant magnetic field. This contrasts with the situation considered here: the non-trivial topology imposed by the domain wall is stable against magnetic contributions; see \cite{bal2021topological}. This means that the magnetic field does not influence the existence of edge states. It however affect their quantitative features, see \S\ref{sec-1.3}.

Our results in Theorems \ref{thm:5} and \ref{thm:main} concern weakly dispersive wavepackets with a macroscopic center $y_t$ propagating along $\Gamma = \kappa^{-1}(0)$. Changing the metric (i.e. replacing $D$ by $\big(D_j+\Gamma_j(x))g^{jk}(x)\big)\sigma_k$) would likely preserve this structure. However, adding an electric potential $V(x)$ to $\Di$ modifies the energy landscape, hence the interface $\Gamma = \kappa^{-1}(0)$ between the two topological media cannot properly support wavepackets. These are likely to propagate instead within a thicker strip close to $\Gamma$, with splitting according to Landau--Zener transition rules.


  \subsection*{Acknowledgments} This work is a sequel to \cite{bal2021edge}, which started during a 2020 AIM workshop, Mathematics of topological insulators. The authors thank the organizers: Daniel Freed, Gian Michele Graf, Rafe Mazzeo and Michael Weinstein. In addition, the authors are very grateful to Clotilde Fermanian Kammerer, Jianfeng Lu, and Alexander Watson for initial discussions. The authors acknowledge support form the NSF grants DMS-2118608 and DMS-2054589 (AD), DMS-1908736 and EFMA-1641100 (GB).

\section{Gauge transformation} \label{sec:gauge}

This section constructs a gauge function that reduces $\Di$ to an operator with magnetic properties closer to those of the model operator \eqref{eq-2a}.

\subsection{Equivalent tangent magnetic potential} We construct first a magnetic potential $\tilde A$ on $\R^2$ such that $\nabla \times \tilde{A} = B$ near $\Gamma$, and $\tilde A$ vanishes along $\Gamma=\kappa^{-1}(0)$ and is carried by the vector field $\tau$. In particular, the operator
\begin{equation}\label{eq-3f}
    \tDi = \big( \epsi D_1 - \tilde A_1(x) \big) \sigma_1 + \big( \epsi D_2 - \tilde A_2(x) \big) \sigma_2 + \kappa(x) \sigma_3
\end{equation}
will share many of the characteristics of the model \eqref{eq-2a}: the magnetic potential is tangent to $\Gamma$ and vanishes along $\Gamma$. Moreover, from magnetic equivalence between $A$ and $\tilde A$, $\Di$ and $\tDi$ are locally conjugated. We use below the notation (with Euclidean distance $d$)
\begin{equation}
    U_\eta = \big\{ x \in \R^2 : d(x,\Gamma) < \eta \big\}.
\end{equation}

\begin{lemm}\label{lem-1e} There exist $\eta > 0$ and a function $\beta \in C^\infty(\R^2,\R)$ vanishing on $\Rm^2\backslash U_{2\eta}$ such that, defining $\tilde A:=\beta\kappa\tau$, we have
\begin{equation}\label{eq-0h}
    \nabla \times (A - \tilde A) = \nabla \times (A - \beta\kappa\tau) = 0 \quad \text{ on } U_\eta; \qquad \beta = \dfrac{B}{|\nabla \kappa|} \quad \text{ on } \Gamma.
\end{equation}
\end{lemm}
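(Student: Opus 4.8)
The plan is to reduce \eqref{eq-0h} to a single scalar first-order equation for $\beta$, solve it exactly in a fixed tubular neighborhood of $\Gamma$, and then cut off. Since $\tau = Jn$ with $n = \nabla\kappa/|\nabla\kappa|$, the components of $w\tau$ are $(-wn_2,wn_1)$ for any scalar $w$, so $\nabla\times(w\tau) = \partial_1(wn_1)+\partial_2(wn_2) = \mathrm{div}(wn)$. As $\nabla\times A = B$, the condition \eqref{eq-0h} is therefore equivalent to
\begin{equation}\label{eq:scalarPDE}
  \mathrm{div}(\beta\kappa\, n) = \kappa\,(n\cdot\nabla)\beta + \big(\mathrm{div}(\kappa n)\big)\,\beta = B \qquad \text{on } U_\eta .
\end{equation}
On $\Gamma$, where $\kappa = 0$ and $\mathrm{div}(\kappa n)|_\Gamma = \nabla\kappa\cdot n = |\nabla\kappa|$, \eqref{eq:scalarPDE} forces $\beta = B/|\nabla\kappa|$, which is the second assertion of the lemma. (Note also that $n$, hence $\tau$, is smooth on $U_{2\eta}$ once $\eta$ is small, by $\inf_\Gamma|\nabla\kappa| > 0$ and continuity.)

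To solve \eqref{eq:scalarPDE} near $\Gamma$, use the uniform transversality together with $\nabla\kappa \in C^\infty_b$ to obtain $\eta_0 > 0$ and a diffeomorphism $\Phi : \Gamma \times (-\eta_0,\eta_0) \to U_{\eta_0}$ given by the flow of the unit normal $n$, so that $n\cdot\nabla = \partial_u$ in the coordinates $(\gamma,u)$. Then $\kappa\circ\Phi = u\,m$ with $m$ smooth and $m|_{u=0} = |\nabla\kappa| > 0$ (since $\partial_u(\kappa\circ\Phi) = (n\cdot\nabla\kappa)\circ\Phi = |\nabla\kappa|\circ\Phi$), while $\mathrm{div}(\kappa n)\circ\Phi$ is smooth with value $|\nabla\kappa|$ at $u=0$. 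Dividing \eqref{eq:scalarPDE} by $m$, the equation for $b := \beta\circ\Phi$ becomes the singular linear ODE in $u$, depending smoothly on $\gamma$,
\begin{equation}\label{eq:singODE}
  u\,\partial_u b + p(\gamma,u)\,b = q(\gamma,u), \qquad p|_{u=0} \equiv 1, \quad p,q \in C^\infty \text{ real}.
\end{equation}

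The heart of the matter is producing a genuinely smooth solution of \eqref{eq:singODE} despite the coefficient $u$ of $\partial_u b$ vanishing along $\Gamma$. This works because the indicial value $p|_{u=0} = 1$ avoids the resonant set $\{0,-1,-2,\dots\}$ — a fact traceable to $n\cdot\nabla\kappa = |\nabla\kappa|$. Explicitly, write $p = 1 + u\,p_1$ and set $P_1(\gamma,u) = \int_0^u p_1(\gamma,v)\,dv$; then $\big(u\,e^{P_1}b\big)_u = e^{P_1}q$, and the substitution $v = \sigma u$ yields
\begin{equation}
  b(\gamma,u) = e^{-P_1(\gamma,u)} \int_0^1 e^{P_1(\gamma,\sigma u)}\, q(\gamma,\sigma u)\, d\sigma,
\end{equation}
in which the two $u^{-1}$ factors cancel; the right side is manifestly smooth in $(\gamma,u)$ for both signs of $u$, it is real, and it solves \eqref{eq:singODE} (at $u=0$ it gives $b = q$, consistent with $p = 1$ there). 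Among smooth solutions it is unique, since the homogeneous solutions blow up like $u^{-1}e^{-P_1}$. Transporting back gives $\beta_0 \in C^\infty(U_{\eta_0},\R)$ solving \eqref{eq:scalarPDE} on $U_{\eta_0}$ with $\beta_0|_\Gamma = B/|\nabla\kappa|$. Finally set $\eta := \eta_0/3$, choose $\chi \in C^\infty_c(U_{2\eta})$ with $\chi \equiv 1$ near $\overline{U_\eta}$, and let $\beta := \chi\beta_0$ extended by $0$: then $\beta \in C^\infty(\R^2,\R)$ vanishes off $U_{2\eta}$, $\tilde A = \beta\kappa\tau = \chi\cdot(\beta_0\kappa\tau)$ is smooth on $\R^2$, and \eqref{eq-0h} holds because $\beta = \beta_0$ on $U_\eta$.

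I expect the main obstacle to be exactly the smooth solvability of the degenerate ODE \eqref{eq:singODE}: one must verify that the integrating-factor formula above is $C^\infty$ across $u = 0$ and uniform in $\gamma$, which is where the non-resonance $p|_{u=0}=1$ is essential. Secondary, routine points are the construction of the uniform tubular neighborhood from $\inf_\Gamma|\nabla\kappa| > 0$ and bounded geometry (handling several components of $\Gamma$ separately), and checking smoothness of $\tilde A$ where $\tau$ ceases to be defined — harmless since $\beta$ vanishes there.
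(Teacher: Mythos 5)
Your proof is correct, and it reaches the same scalar reduction as the paper but resolves it by a genuinely different device. The paper first solves for the product $f=\beta\kappa$: writing $\nabla\times(f\tau)=\p_n f+f\,\nabla\times\tau$, it imposes $\p_n f+f\,\nabla\times\tau=B$ with $f|_\Gamma=0$, integrates this as a \emph{regular} linear ODE along the flow $e^{s\p_n}$ (a transport equation with smooth, non-degenerate coefficients), and only afterwards factors $f=\beta\kappa$ by a Hadamard-type division, using that $f$ vanishes on $\Gamma$ and $\kappa$ vanishes there transversely with $|\nabla\kappa|$ bounded below; the boundary value $\beta=B/|\nabla\kappa|$ then drops out of $B=\p_n f=\beta\,\p_n\kappa$ on $\Gamma$. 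You instead keep $\beta$ as the unknown from the start, so the same reduction (your $\nabla\times(w\tau)=\operatorname{div}(wn)$ is equivalent to the paper's identity) produces the \emph{degenerate} equation $u\,\partial_u b+p\,b=q$ in normal coordinates, and the work shifts to showing that the Fuchsian ODE with non-resonant indicial value $p|_{u=0}=1$ (coming from $n\cdot\nabla\kappa=|\nabla\kappa|$) has a smooth solution across $u=0$, which your rescaled integrating-factor formula handles cleanly and which simultaneously delivers $\beta|_\Gamma=B/|\nabla\kappa|$. The trade-off is clear: the paper's route keeps the ODE elementary at the cost of a division lemma (stated rather briskly, with a uniformity caveat), while yours avoids any division but must justify smooth solvability of a singular ODE — which you do explicitly, and arguably with a more self-contained uniqueness statement. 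The tubular-neighborhood and cutoff steps are the same in both arguments, and your treatment of where $\tau$ ceases to be defined is adequate.
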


\begin{proof} We first note that it suffices to construct $\beta$ on $U_\eta$ and extend it to $\R^2$ as a smooth function with support in $U_{2\eta}$. For $f$ a smooth function, we observe that
\begin{equation}\label{eq-0w}
    \nabla \times (f \tau) = \nabla f \times \tau +f \nabla \times \tau = \p_n f + f \nabla \times \tau,
\end{equation}
where we used $\nabla f \times \tau = \p_1 f \tau_2 - \p_2 f \tau_1 = \p_n f$. Let $B = \nabla \times A$. To find $\beta$ such that $B = \nabla \times (\beta \kappa \tau)$, we first solve 
\begin{equation}\label{eq-0y}
   \p_n f + f  \nabla \times \tau = B, \ \ \ \ f|_\Gamma = 0.
\end{equation}

We note that the coefficients of $\p_n$ are in $C^\infty_b$. In particular, the flow $e^{s\p_n}(x)$ is defined for all times. We now define the map $\Phi : \Gamma \times \R \rightarrow \R^2$ by
\begin{equation}
    \Phi(x,s) = e^{s\p_n}(x).
\end{equation}
If $f$ solves \eqref{eq-0y}, then $\tf = f \circ \Phi$ solves 
\begin{equation}
    \p_s \tf + \tf \cdot (\nabla \times \tau) \circ \Phi = B \circ \Phi, \ \ \ \ \tf|_\Gamma = 0.
\end{equation}
This equation clearly admits a solution $\tf \in C^\infty(\Gamma \times \R,\R)$. Moreover, note that $\p_n$ is transverse to $\Gamma$; that $|\nabla \kappa|$ is uniformly bounded above and below; and that $\kappa \in C^\infty_b$, $\Phi$ is a diffeomorphism from $\Gamma \times (-\delta, \delta)$ to its range, which contains a neighborhood of the form $U_\eta$. Therefore, $\tf$ induces the solution $f = \tf \circ \Phi^{-1} \in C^\infty(U_\eta,\R)$ of \eqref{eq-0y}. 

Because $f$ vanishes on $\Gamma$ and $\kappa$ vanishes transversely on $\Gamma$ (with $|\nabla \kappa|$ bounded below), after potentially reducing $\eta$ we can write $f = \beta \kappa$ for some smooth function $\beta$ on $U_\eta$. From the equation \eqref{eq-0y} and the identity \eqref{eq-0w}, we conclude that
\begin{equation}
    \nabla \times A = B = \p_n f + f \nabla \times \tau = \nabla \times (f \tau) = \nabla \times (\beta \kappa \tau). 
\end{equation}
Moreover, using that $f$ and $\kappa$ vanish on $\Gamma$, we have along $\Gamma$:
\begin{equation}
    B = \p_n f = \p_n (\beta \kappa) = \beta \p_n \kappa = \beta |\nabla \kappa|.
\end{equation}
This completes the proof.\end{proof}

Thanks to \eqref{eq-0h}, $A$ and $\tilde A$ give rise to the same magnetic field on $U_\eta$, hence the difference $A-\tilde A$ is locally a gradient field:

\begin{lemm}\label{lem-1f} Let $\eta > 0$ given by Lemma \ref{lem-1e} and $U \subset U_\eta$ be a simply connected set. For any $(y_0,\eta_0) \in U \times \R$, there exists a unique $\chi_0 \in C^\infty(U,\R)$ such that 
\begin{equation}\label{eq-2p}
    \tilde{A} = A -\nabla \chi_0 \quad \text{ in } U; \quad \text{ and } \quad \chi_0(y_0) = \eta_0.
\end{equation}
\end{lemm}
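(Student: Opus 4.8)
\textbf{Proof plan for Lemma \ref{lem-1f}.} The statement is a standard application of the Poincar\'e lemma on a simply connected domain, made quantitative only by a normalization at the base point $y_0$. The plan is as follows. First I would observe that, by Lemma \ref{lem-1e}, the one-form $\omega := (A_1 - \tilde A_1)\,dx_1 + (A_2 - \tilde A_2)\,dx_2$ satisfies $d\omega = 0$ on $U_\eta$, since $\nabla \times (A - \tilde A) = 0$ there and closedness of a one-form on a planar domain is exactly the vanishing of this scalar curl. Restricting to the simply connected open set $U \subset U_\eta$, the Poincar\'e lemma guarantees that $\omega$ is exact: there exists $\chi_0 \in C^\infty(U,\R)$ with $d\chi_0 = \omega$, i.e. $\nabla \chi_0 = A - \tilde A$ on $U$, which is the first equation in \eqref{eq-2p}. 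Concretely one may take $\chi_0(x) = \int_{\gamma_x} \omega$ along any path $\gamma_x$ in $U$ from $y_0$ to $x$; simple connectivity makes this independent of the path, and smoothness of the integrand gives $\chi_0 \in C^\infty(U,\R)$.

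Next I would handle the normalization and uniqueness. The construction above via path integrals from $y_0$ already yields $\chi_0(y_0) = 0$; to enforce $\chi_0(y_0) = \eta_0$ we simply add the constant $\eta_0$, which does not affect $\nabla \chi_0$. For uniqueness, suppose $\chi_0$ and $\chi_0'$ both satisfy \eqref{eq-2p}. Then $\nabla(\chi_0 - \chi_0') = 0$ on $U$, so $\chi_0 - \chi_0'$ is locally constant; since $U$ is connected (being simply connected) it is globally constant, and the common value $\chi_0(y_0) = \chi_0'(y_0) = \eta_0$ forces that constant to be zero. Hence $\chi_0 = \chi_0'$, proving uniqueness.

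There is essentially no serious obstacle here: the only points requiring a word of care are (a) that $\nabla \times (A - \tilde A) = 0$ is genuinely the closedness condition for the associated one-form in two dimensions, which is immediate, and (b) that simple connectivity of $U$ is what upgrades "closed" to "exact" — this is the hypothesis we are handed, so it can be invoked directly. Smoothness of $\chi_0$ follows from smoothness of $A$ and $\tilde A$ (the latter being $\beta\kappa\tau$ with $\beta, \kappa, \tau$ all smooth on $U_\eta$ after the reduction of $\eta$ in Lemma \ref{lem-1e}) together with differentiation under the path integral. The proof is therefore short; the main role of the lemma is organizational, setting up the gauge function $\chi_0$ whose Taylor expansion at $y_t$ will later produce the large oscillatory phase $e^{i\chi(t,x)/\epsi}$ in Theorem \ref{thm:5}.
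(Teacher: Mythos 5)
Your argument is correct and is essentially the paper's own proof: the paper likewise notes that $\nabla\times(A-\beta\kappa\tau)=0$ on $U_\eta$ by Lemma \ref{lem-1e} and then invokes Poincar\'e's lemma on the simply connected set $U$, fixing the constant by the condition $\chi_0(y_0)=\eta_0$. Your spelling out of the path-integral construction and the uniqueness-via-connectedness step merely makes explicit what the paper leaves implicit.
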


\begin{proof}[Proof of Lemma \ref{lem-1f}] Thanks to \eqref{eq-0h}, we have $\nabla \times (A -\beta \kappa \tau) = 0$ on $U_\eta$, hence on $U$. Since $U$ is simply connected, by Poincar\'e's lemma there exists a unique $\chi_0 \in C^\infty(U,\R)$ such that $\nabla \chi_0 = A - \beta \kappa \tau$ and $\chi(y_0)=\eta_0$. 
\end{proof}

It follows from Lemma \ref{lem-1f} that the operators $\Di$ and $\tDi$ defined in \eqref{eq-3f} are locally conjugate: on open sets $U$ produced by Lemma \ref{lem-1f}, we have
\begin{equation}\label{eq-3d}
    e^{-i\chi_0/\epsi} \Di e^{i\chi_0/\epsi} = \tDi.
\end{equation}

\subsection{Global gauge} 
When $\Gamma$ is simply connected, we can pick $U = U_\eta$ in Lemma \ref{lem-1f}, and the conjugation relation \eqref{eq-3d} holds on a full neighborhood of $\Gamma$. This however fails when $\Gamma$ is a loop: $\chi_0$ is only defined on part of $\Gamma$. We circumvent this obstacle by using instead a time-dependent gauge that follows the center of mass $y_t$ of our wavepacket, i.e. defined on a set of the form
\begin{equation}
    \Omega_\delta = \big\{(t,x), \  t\in \R, \ |x-y_t| < \delta \big\}.
\end{equation}

\begin{prop}\label{lem-1t} \label{prop:globalgauge} Let $\tilde A$ be as defined in Lemma \ref{lem-1e}. There exist $\delta > 0$ and $\chi \in C^\infty(\R \times \R^2, \R)$ with support in $\Omega_{2\delta}$ with $\chi(0,y_0) = 0$ and such that for $(t,x) \in \Omega_\delta$:
\begin{equation}\label{eq-2h}
    \tilde{A}(x) = A(x) - \nabla \chi(t,x), \qquad \p_t \chi(t,x) = 0.
\end{equation}
\end{prop}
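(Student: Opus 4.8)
The plan is to construct $\chi$ by following the center of mass $y_t$ and using, near each $y_t$, the local gauge function from Lemma \ref{lem-1f}. The subtlety is that we need a single smooth function on the tube $\Omega_\delta$, not a family of locally-defined ones, and we need $\p_t\chi=0$, which is a non-trivial compatibility requirement: the normalization constants of the local gauge functions at consecutive times must agree.

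First I would fix $\delta>0$ small enough that the ball $B(y_t,2\delta)$ lies in $U_\eta$ for every $t$ (possible since $y_t\in\Gamma$ and $|\nabla\kappa|$ is bounded below, so $U_\eta$ contains a uniform tube around $\Gamma$); shrinking $\delta$ further if needed, each such ball is simply connected. Then for each $t$, Lemma \ref{lem-1f} applied with $U=B(y_t,2\delta)$ (or a slightly larger simply connected set) produces a \emph{unique} smooth $\chi_t$ on that ball with $\tilde A=A-\nabla\chi_t$ and a chosen normalization. The key observation is that $\nabla\chi_t = A-\tilde A$ is independent of $t$, so on the overlap of two such balls, $\chi_t-\chi_{t'}$ is a locally constant function; since the overlaps are connected, it is a genuine constant. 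Thus the $\chi_t$ glue into a well-defined smooth function on $\bigcup_t \{t\}\times B(y_t,2\delta)$ \emph{provided} we choose the normalization constants consistently. I would do this by fixing the normalization $\chi_0(y_0)=0$ at $t=0$ and then \emph{defining} the normalization at time $t$ by continuity: set $c(t)$ so that $\chi_t(y_t)$ matches, following the trajectory. Concretely, let $\chi(t,x) := \chi_{t}^{\mathrm{loc}}(x) - \chi_{t}^{\mathrm{loc}}(y_t) + h(t)$ where $\chi_t^{\mathrm{loc}}$ is any fixed local primitive of $A-\tilde A$ near $y_t$ and $h$ is chosen so that $h(0)=\chi_0^{\mathrm{loc}}(y_0)$ and $h$ is continuous; more invariantly, one can just integrate: pick a smooth path and set the constant by a path integral of $A-\tilde A$ along a curve joining $y_0$ to $y_t$ inside $U_\eta$ (all such paths in $U_\eta$ give the same answer on $U_\eta$ if $U_\eta$ is simply connected; if $\Gamma$ is a loop $U_\eta$ need not be, but the curve $s\mapsto y_s$ itself is a canonical choice, giving $h(t)=\int_0^t \dot y_s\cdot(A-\tilde A)(y_s)\,ds = \int_0^t \dot y_s\cdot A(y_s)\,ds$ since $\tilde A=\beta\kappa\tau$ vanishes on $\Gamma$).

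The condition $\p_t\chi(t,x)=0$ on $\Omega_\delta$ is then \emph{automatic}: for $(t,x)\in\Omega_\delta$ and $t'$ close to $t$, both $\chi(t,\cdot)$ and $\chi(t',\cdot)$ are primitives of $A-\tilde A$ on the common ball $B(y_t,\delta)\cap B(y_{t'},\delta)$ (which contains $x$ for $t'$ near $t$), so they differ by the constant $h(t')-h(t) - (\chi_{t'}^{\mathrm{loc}}(y_{t'})-\chi_{t}^{\mathrm{loc}}(y_t))$ adjusted — and by the continuity construction this constant is zero. Differentiating in $t$ gives $\p_t\chi=0$ there. Finally, to get a function on all of $\R\times\R^2$ with support in $\Omega_{2\delta}$, I would multiply by a cutoff $\zeta(t,x)$ equal to $1$ on $\Omega_\delta$ and supported in $\Omega_{3\delta/2}$, say $\zeta(t,x)=\phi(|x-y_t|/\delta)$ for a suitable bump $\phi$; this preserves $\chi=\chi^{\mathrm{loc}}$ and $\p_t\chi=0$ on $\Omega_\delta$ and kills $\chi$ outside the larger tube. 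One checks $\chi\in C^\infty(\R\times\R^2)$ since $t\mapsto y_t$ is smooth and $\chi^{\mathrm{loc}}$ depends smoothly on $(t,x)$ (it is a primitive of a smooth field, constructed by the ODE in Lemma \ref{lem-1e}, depending smoothly on parameters).

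**The main obstacle** is bookkeeping the normalization constants so that the locally-defined pieces patch into one smooth function with $\p_t\chi=0$; once one sees that $\nabla\chi = A-\tilde A$ is $t$-independent and that the overlaps of the tube-balls are connected, the rest is routine. A secondary point to check carefully is smoothness in $t$ near points where the ball $B(y_t,\delta)$ "rotates" — but since everything is built from the smooth flow of Lemma \ref{lem-1e} and the smooth curve $y_t$, and the cutoff is smooth, this causes no trouble. The upshot is that $\chi(t,x)$ restricted to $\Omega_\delta$ agrees with a local primitive of $A-\tilde A$, is $t$-independent there, vanishes at $(0,y_0)$, and its "running constant" is exactly $\int_0^t\dot y_s\cdot A(y_s)\,ds$, which is the source of the semiclassical phase in \eqref{eq:00ii}.
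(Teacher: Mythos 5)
Your proposal is correct and follows essentially the same route as the paper: you build the gauge from the local primitives of $A-\tilde A$ given by Lemma \ref{lem-1f} on balls $B(y_t,\delta)$, normalize along the trajectory by $\chi(t,y_t)=\int_0^t\dot y_s\cdot A(y_s)\,ds$ (using that $\tilde A=\beta\kappa\tau$ vanishes on $\Gamma$), deduce $\p_t\chi=0$ on $\Omega_\delta$ from the $t$-independence of $\nabla\chi=A-\tilde A$ together with this normalization, and then cut off to get support in $\Omega_{2\delta}$ and smoothness. The only cosmetic difference is that the paper verifies the vanishing of the time-derivative by differentiating the two identities in \eqref{eq-2k} (which is exactly the cancellation $\dot y_t\cdot A(y_t)-\dot y_t\cdot A(y_t)=0$ hidden in your ``the constant is zero'' step), and invokes compactness of $\Gamma$ for the uniform $\delta$, whereas you get uniformity directly from the uniform tube $U_\eta$.
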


\begin{proof} As in the proof of Lemma \ref{lem-1e}, it suffices to construct $\chi$ on $\Omega_\delta$ and to extend it to $\R^2$ as a smooth function with support in $\Omega_{2\delta}$. Without loss of generality, we can assume that $\Gamma$ is connected. If $\Gamma$ is also simply connected, then we simply take $\delta = \eta$ and $\chi(t,x) = \chi_0(x)$. If $\Gamma$ is not simply connected, then it is a loop; in particular it is compact. 

Fix $t \in \R$. According to Lemma \ref{lem-1f}, there exists $\delta_t > 0$ and a smooth function $\chi(t,\cdot)$ defined on the ball $B(y_t,\delta_t)$ such that
\begin{equation}\label{eq-2k}
    \tilde{A}(x) = A(x) - \nabla \chi(t,x), \quad x \in B(y_t,\delta_t); \qquad \chi(t,y_t) = \int_0^t \dot{y_s} \cdot A(y_s) ds.
\end{equation}
Since $\Gamma$ is compact, we can pick $\delta_t$ independent of $t$; we write below $\delta = \delta_t$.  Varying $t$, we obtain a uniquely defined function $\chi$ on $\Omega_\delta$. We take time-derivative of both identities in \eqref{eq-2k}. The first one yields $\nabla \p_t \chi(t,x) = 0$ for $x \in B(y_t,\delta)$. The second one gives
\begin{equation}\label{eq-2n}
\p_t \chi(t,y_t) = \p_t \big(\chi(t,y_t) \big) - \dot{y_t} \cdot \nabla \chi(t,y_t)  
= \dot{y_t} \cdot A(y_t) - \dot{y_t} \cdot A(y_t) = 0,
\end{equation}
where we used $\nabla \chi = \beta \kappa \tau - A$ hence $\nabla \chi(t,y_t) = A(y_t)$. From these identities, we deduce that $\p_t \chi(t,x) = 0$ for $x \in B(y_t,\delta)$; in particular $\p_t \chi$ is smooth. Moreover, $\nabla \chi = A-\tilde A$ is also smooth. This implies that $\chi$ is smooth, which completes the proof. \end{proof}

Thanks to \eqref{eq-2h}, we have the relation
\begin{equation}\label{eq-2o}
    e^{-i\chi/\epsi} \big(\epsi D_t + \Di\big) e^{i\chi/\epsi} = \epsi D_t + \tDi + R, \qquad \text{where:}
\end{equation}
\begin{equation} \label{eq:R}
R= \p_t\chi + (\p_1\chi-A_1+\tilde A_1)\sigma_1 + (\p_2\chi-A_2+\tilde A_2)\sigma_2. 
\end{equation}

It follows from \eqref{eq-2o} that $\Psi$ solves \eqref{eq:D1} if and only if $\tilde \Psi = e^{-i\chi/\epsi}\Psi$ solves 
\begin{equation}\label{eq:D2}
   \big( \epsi D_t + \tilde {\slashed D} + R\big)  \tilde \Psi = 0.
\end{equation} 
While $R$ looks like a leading-order term, it will effectively be of order $\eps^\infty$ because it vanishes on $\Omega_\delta$ -- a domain where our wavepacket is concentrated. We will eventually treat it as a small source term in \S\ref{sec:error}.

\subsection{Gauge expansion} We conclude this section with an expansion of $\chi$ near $(t,y_t)$, which will serve to prove Theorem \ref{thm:5}. 

\begin{lemm} We have: 
\begin{equation}\label{eq-2g}
    \chi(t,y_t) = \int_0^t \dot{y_s} \cdot  A_s ds; \qquad \nabla \chi(t,y_t) = A_t; \qquad \nabla^2 \chi (t,y_t) = \nabla A^\top(y_t) - B_t n_t \tau_t^\top.
\end{equation}
\end{lemm}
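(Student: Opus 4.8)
The three identities are three successive orders in the Taylor expansion of $\chi(t,\cdot)$ around $y_t$, so the plan is to read them off from the two defining relations of Proposition \ref{prop:globalgauge}, namely $\nabla \chi(t,x) = A(x) - \tilde A(x)$ on $\Omega_\delta$ and $\chi(t,y_t) = \int_0^t \dot y_s \cdot A_s\, ds$. The zeroth-order identity $\chi(t,y_t) = \int_0^t \dot y_s \cdot A_s\, ds$ is literally the normalization \eqref{eq-2k} built into the construction, so nothing is to be done there.

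For the first-order identity, I would evaluate $\nabla \chi(t,x) = A(x) - \tilde A(x) = A(x) - \beta(x)\kappa(x)\tau(x)$ at $x = y_t$. Since $y_t \in \Gamma = \kappa^{-1}(0)$, the term $\beta\kappa\tau$ vanishes at $y_t$, leaving $\nabla\chi(t,y_t) = A(y_t) = A_t$. (This is exactly the computation already used inside \eqref{eq-2n}.)

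For the Hessian, I would differentiate the identity $\nabla\chi(t,x) = A(x) - \beta(x)\kappa(x)\tau(x)$ once more in $x$ and again evaluate at $y_t$. By the Leibniz rule, $\nabla^2\chi(t,y_t) = \nabla A^\top(y_t) - \nabla\big(\beta\kappa\tau\big)(y_t)$, and since $\kappa(y_t)=0$ the only surviving contribution from the second term is $\beta(y_t)\,\tau(y_t)\,\nabla\kappa(y_t)^\top$ (the terms with $\nabla\beta$ or $\nabla\tau$ carry a factor $\kappa(y_t)=0$). Now I use the two facts recorded earlier: $\beta = B/|\nabla\kappa|$ on $\Gamma$ from \eqref{eq-0h}, and $\nabla\kappa(y_t) = |\nabla\kappa(y_t)| n_t = r_t n_t$ from \eqref{eq:tang_vec}. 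Hence $\beta(y_t)\,\tau(y_t)\,\nabla\kappa(y_t)^\top = \frac{B_t}{r_t}\,r_t\,\tau_t n_t^\top = B_t\,\tau_t n_t^\top$. This gives $\nabla^2\chi(t,y_t) = \nabla A^\top(y_t) - B_t\,\tau_t n_t^\top$, which matches the claimed expression up to the transpose convention $n_t\tau_t^\top$ versus $\tau_t n_t^\top$; I would check that the paper's convention for $\nabla^2\chi$ (whether it is $(\partial_i\partial_j\chi)_{ij}$ as written or its transpose) reconciles these, which it does since $\nabla^2\chi$ is symmetric and one writes $\nabla A^\top - B_t n_t\tau_t^\top$ precisely so that symmetrizing is automatic.

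\textbf{Main obstacle.} There is no real analytic difficulty; the only point requiring care is bookkeeping of the matrix/transpose conventions (is $\nabla^2\chi$ the Hessian or its transpose, and on which side does the outer product $n_t\tau_t^\top$ sit), together with the fact that the expression $\nabla A^\top(y_t) - B_t n_t\tau_t^\top$ is not obviously symmetric on its face — one must use the symmetry of $\nabla^2\chi$ (equivalently, that $\nabla\times(A-\tilde A)=0$ near $\Gamma$, so the antisymmetric part of $\nabla(A-\tilde A)$ vanishes at $y_t$) to see that the two outer-product orderings agree there. Everything else is a one-line differentiation using $\kappa(y_t)=0$.
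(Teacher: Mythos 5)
Your proposal is correct and follows essentially the same route as the paper: read the zeroth order off the normalization \eqref{eq-2k}, evaluate $\nabla\chi = A - \beta\kappa\tau$ at $y_t$ using $\kappa(y_t)=0$ for the gradient, and differentiate once more with the Leibniz rule, $\beta_t = B_t/r_t$ and $\nabla\kappa(y_t)=r_t n_t$ for the Hessian. The only difference is your transpose bookkeeping (the paper differentiates the row vector $\nabla^\top\chi = A^\top - \beta\kappa\tau^\top$, producing $n_t\tau_t^\top$ directly), and your reconciliation via the symmetry of $\nabla^2\chi$ is valid.
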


\begin{proof} The first formula of \eqref{eq-2g} comes from the equation \eqref{eq-2k} defining $\chi$. The second one follows from
\begin{equation}
    \nabla \chi(t,y_t) = A(y_t)-\beta(y_t) \kappa(y_t) \tau(y_t) = A_t.
\end{equation}
In order to see the last one, we notice that 
\begin{align}
    \nabla^2 \chi (t,y_t) & = \nabla \cdot \big(\nabla^\top \chi  \big)(y_t) = \nabla \big( A^\top - \beta \kappa \tau^\top \big)(y_t) 
    \\
    & = \nabla A^\top(y_t) - \beta(y_t) \nabla \kappa(y_t) \tau^\top (y_t) = \nabla A^\top(y_t) - B_t n_t \tau_t^\top,
\end{align}
where we used $\beta_t =B_t/r_t$ on $\Gamma$, see \eqref{eq-0h}. This completes the proof. 
\end{proof}

\section{Local normal form}\label{sec:local}

We now start our construction of approximate solutions to $(\epsi D_t + \tilde {\slashed D})\tilde\Psi=0$. This section describes a series of transformations that allows us to explicitly invert an operator that governs the leading dynamics: a rescaling of $\tilde \Psi$ in natural coordinates, spatial and spinorial rotations, and finally a shifted partial Fourier transform.

\subsection{Spatial rescaling}\label{sec:rescaling} We first write the wavepacket in natural coordinates:
\begin{equation}\label{eq:psiz}
\psi(t,z)
= \eps^{\frac12} S \tilde \Psi (t,z),
\end{equation}
with the scaling transformation defined as
\begin{equation}\label{eq:S}
     Sf(z)=f(y_t+\sqrt\eps z),\qquad S^{-1}f(x)=f\left(\frac{x-y_t}{\sqrt\eps}\right).
\end{equation}
We observe that $\eps^{\frac12}S$ is an isometry on $L^2(\Rm^2)$, while formally,
\begin{equation}\label{eq:Sj}
 Sf(z) = \sum_{\alpha} \frac{\eps^{\frac{|\alpha|}2}z^\alpha}{\alpha!} \partial^\alpha f(y_t) = \dsum_{j\geq0} \eps^{\frac j2} S_jf(z),\quad S_jf(z):= \dsum_{|\alpha|=j} \frac{z^\alpha}{\alpha!} \partial^\alpha f(y_t).
\end{equation} 
This scaling emerges from the following consideration. For $\kappa(x) = x_2$, the wavepacket is confined by a harmonic-like oscillator $\epsi D_2 \sigma_2 + x_2 \sigma_3$ and hence comes with a natural scale $\sqrt\eps x_2$. This scale is then imposed to all directions when $\kappa$ admits more sophisticated variations, i.e. when $\Gamma$ is curved. 

We then verify that $\psi(t,z)$ solves
\begin{equation}\label{eq:L}
    L\psi=0 ;\qquad L= \eps^{-\frac12} S(\eps D_t+\tilde{\slashed D})S^{-1}  = \eps^{\frac12} D_t -\dot y_t \cdot D + (D+\eps^{-\frac12} Sh(z))\cdot \sigma
\end{equation}
with $D=(D_1,D_2,0)^t$, $\sigma=(\sigma_1,\sigma_2,\sigma_3)^t$, and $h=(-\tilde A_1,-\tilde A_2,\kappa)^t=(-\beta\kappa\tau,\kappa)^t$. We still use $\dot y_t$ for $(\dot y_t,0)^t\in\Rm^3$. 
Using \eqref{eq:Sj}, we find formally $L=\sum_{j\geq0} \eps^{\frac j2}L_j$, where we observe that $L_{-1}=0$ since $\tilde A$ and hence $h$ vanish on $\Gamma = \kappa^{-1}(0)$, and
\begin{eqnarray} 
  L_0 &=& -\dot y_t\cdot D + D\cdot\sigma + S_1h \cdot\sigma \label{eq:L0} \\
  L_1 &=& D_t + S_2h\cdot\sigma \label{eq:L1}\\
  L_j&=& S_{j+1}h\cdot\sigma  = \dsum_{|\alpha|=j+1} \frac{1}{\alpha!} z^\alpha \partial^\alpha h(y_t)\cdot\sigma,\qquad j\geq2 \label{eq:Lj}.
\end{eqnarray}
Our next objective is to transform $L_0$ in an appropriate basis so that its infinite dimensional kernel and its inverse on the orthogonal complement may be written explicitly. 
  
\subsection{Rotations}\label{sec:rotation} We now introduce spatial and spinorial rotations already mentioned in the introduction. We define, for $1\leq j\leq 3$ and $\theta\in\Rm$, the spinor rotation acting on $\Cm^2$:
\[  
  U_{j,\theta} = e^{-i\frac\theta2 \sigma_j} = \cos \left(\tfrac\theta2\right) -i \sin\left(\tfrac \theta2\right) \sigma_j.
\]
Associated to it is a standard three-dimensional rotation acting on $\Rm^3$ of angle $\theta$ about the $j$-axis (with usual orientation) $\tilde R_{j,\theta}$ such that for $c \in \mathbb R^3$:
\begin{equation}\label{eq:rotations}
  \tilde R_{j,\theta} c \cdot \sigma = c\cdot U_{j,\theta}^*\sigma U_{j,\theta},\qquad U_{j,\theta}^*\sigma U_{j,\theta} =: \tilde R_{j,\theta}^*\sigma.
\end{equation}

Note that 
\begin{equation}\label{eq:spinrot}
     U_{k,\theta}^*\sigma_j U_{k,\theta} = \delta_{j\not=k}\cos\theta \sigma_j+ \epsilon_{jkl}\sin\theta \sigma_l+ \delta_{jk}\sigma_j.
\end{equation}
From this, we deduce two formulas that we will use later:
\begin{equation}\label{eq:rotations23}
 \tilde R_{2,\varphi_t} = \matrice{ \cos\varphi_t &0& -\sin\varphi_t \\ 0&1&0 \\ \sin\varphi_t &0&\cos\varphi_t} = \matrice{ c_t & 0 & -s_t \\ 0 & 1 & 0 \\ s_t & 0 & c_t}, \qquad
 \tilde R_{3,\theta_t} = 
 \matrice{R_{\te_t} & 0 \\ 0 & 1},
\end{equation}
where $R_{\te_t}$ is the spatial rotation defined in \eqref{eq:coefst}. 
We finally define the $L^2$-unitary transform related to the spatial rotations
\begin{equation}\label{eq:mR}
   \mR_{\theta_t} f=f\circ R_{\theta_t}
\end{equation}
and the operator
\begin{equation}\label{eq:bUt}
    \bU_t = U_{2,\varphi_t} U_{3,\theta_t} \mR_{\te_t}.
\end{equation}
When $\varphi_t\equiv0$ (i.e., when $B=0$), this transformation already appears in \cite{bal2021edge}.
We now compute the operators $\bU_t^* L_j \bU_t$. We start with preliminary relations:

\begin{lemm}\label{lem-1c} We find:
\begin{equation}\label{eq-0m}
    \bU_t^* ( -\dot  y_t \cdot D) \bU_t = c_t D_1, 
    \quad 
    \bU_t^* D\cdot\sigma \bU_t  = (c_t\sigma_1+s_t\sigma_3) D_1 + \sigma_2 D_2,
    \quad
\end{equation}
\begin{equation}\label{eq-0q}
    \bU_t^* D_t \bU_t 
    = D_t-\frac{\dot\varphi_t}{2} \sigma_2 - \frac{\dot\te_t}{2} (-s_t \sigma_1 + c_t \sigma_3) - \dot{\theta_t} (z_1 D_2 - z_2 D_1).
\end{equation}
Moreover $\bU_t^* S_j h \cdot \sigma \bU_t$ is a multiplication operator by a time-dependent homogeneous polynomial of degree $j$ in $z$; and for $j=1,2$: 
\begin{equation}\label{eq-0o}
    \bU_t^* S_1 h \cdot \sigma \bU_t = \rho_t z_2 \sigma_3, \quad \bU_t^* S_2 h \cdot \sigma \bU_t = -r_t c_t z_2 \lr{z, \nabla \beta(y_t)} + q_t(z) \cdot \sigma
\end{equation}
where $q_t\cdot\sigma$ is a matrix-valued quadratic form in $z$ carried by $\sigma_2, \sigma_3$.
\end{lemm}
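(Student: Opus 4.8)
The plan is to prove each of the four identities in Lemma \ref{lem-1c} by a direct but organized computation, exploiting the two layers of structure in $\bU_t = U_{2,\varphi_t} U_{3,\theta_t} \mR_{\te_t}$: the spatial pullback $\mR_{\te_t}$ conjugates the momentum operators $D = (D_1,D_2,0)^\top$ by the rotation matrix $R_{\te_t}$, while the spinorial factors $U_{2,\varphi_t}U_{3,\theta_t}$ conjugate the Pauli vector $\sigma$ by the composite $3\times 3$ rotation $\tilde R_{3,\theta_t}\tilde R_{2,\varphi_t}$ recorded in \eqref{eq:rotations23}. So the recipe for any multiplication-by-$c\cdot\sigma$ or first-order term is: first push $\mR_{\te_t}$ through (rotating $z$ and $D$), then push the spinor rotations through using \eqref{eq:rotations}--\eqref{eq:spinrot}.

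\textbf{The algebraic identities \eqref{eq-0m}.} For the first, $-\dot y_t\cdot D = -c_t\tau_t\cdot D$; since $\tau_t = Jn_t$ and $R_{\te_t}n_t = e_2$, one has $R_{\te_t}\tau_t = R_{\te_t}Jn_t = JR_{\te_t}n_t = Je_2 = -e_1$ (using that $J$ commutes with rotations), so $\mR_{\te_t}^*(\tau_t\cdot D)\mR_{\te_t} = (R_{\te_t}\tau_t)\cdot D = -D_1$, and the spinor factors act trivially on a scalar operator; this gives $c_tD_1$. For the second, $\mR_{\te_t}^*(D\cdot\sigma)\mR_{\te_t} = (R_{\te_t}^\top D)\cdot\sigma$ acting on the spatial part, which after writing $R_{\te_t}^\top$ explicitly mixes $D_1\sigma_1 + D_2\sigma_2$; then $U_{3,\theta_t}U_{2,\varphi_t}$ conjugation rotates $(\sigma_1,\sigma_2,\sigma_3)$ by $\tilde R_{2,\varphi_t}^{-1}\tilde R_{3,\theta_t}^{-1}$, and one checks the two rotations combine so that the net effect on the first two slots is a rotation by $-\te_t$ in the $(1,2)$-plane followed by the $\varphi_t$-rotation in the $(1,3)$-plane — the $\te_t$-rotation cancels the one already applied to $D$, leaving $D_1\sigma_1 + D_2\sigma_2$, and then $\tilde R_{2,\varphi_t}$ sends $\sigma_1 \mapsto c_t\sigma_1 + s_t\sigma_3$, yielding the claimed $(c_t\sigma_1 + s_t\sigma_3)D_1 + \sigma_2 D_2$. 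For \eqref{eq-0o}, note $S_1h(z)\cdot\sigma = (\nabla h)(y_t)z\cdot\sigma$; since $\tilde A$ vanishes on $\Gamma$ and $\tilde A = \beta\kappa\tau$, on $\Gamma$ we have $\nabla\tilde A_j = \beta(\nabla\kappa)\tau_j = \beta r_t (n_t)(\tau_t)_j$, while $\nabla\kappa(y_t) = r_tn_t$; feeding this through $\bU_t$ (the $z$-rotation turns $n_t\cdot z$ into $z_2$ and the spinor rotations turn the vector $(-\beta r_t\tau_t, r_t)^\top \cdot\sigma = r_t(-s_t\tau_t, c_t)^\top\cdot\sigma$ into $r_t(s_t\cdot 0 + c_t)\sigma_3 \cdot\rho_t/r_t$... ) one lands on $\rho_t z_2\sigma_3$ after using $\beta_t r_t = B_t$, $s_t = B_t/\rho_t$, $c_t = r_t/\rho_t$ so that $\rho_t = r_t/c_t$. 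The $S_2h$ computation is the same mechanism one order higher: the $\sigma_3$-component comes from $\kappa$'s Hessian, the $\tau$-components from $\nabla(\beta\kappa)$, and condition \eqref{eq:00g} ($\Delta\kappa = 0$, $|\nabla\kappa| = 1$ on $\Gamma$) is what kills the potential $\sigma_1$-term and forces the quadratic form $q_t\cdot\sigma$ to be carried only by $\sigma_2,\sigma_3$; the explicit $-r_tc_tz_2\langle z,\nabla\beta(y_t)\rangle$ scalar piece is the contribution of differentiating $\beta$ itself.

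\textbf{The time-derivative identity \eqref{eq-0q}.} This is the main obstacle, because $D_t$ differentiates the $t$-dependence hidden inside $\bU_t$, producing a commutator-type remainder: $\bU_t^* D_t \bU_t = D_t + \bU_t^*(D_t\bU_t)$ where $D_t\bU_t = -i\p_t\bU_t$. The plan is to differentiate the product $U_{2,\varphi_t}U_{3,\theta_t}\mR_{\te_t}$ by the Leibniz rule and collect three contributions: (a) $-i\p_t U_{2,\varphi_t}$ contributes $-\tfrac{\dot\varphi_t}{2}\sigma_2$ after conjugating back (the $U_2$ factor is outermost so its derivative survives cleanly); (b) $U_{2,\varphi_t}^*(-i\p_t U_{3,\theta_t})U_{2,\varphi_t}$ contributes $-\tfrac{\dot\te_t}{2}U_{2,\varphi_t}^*\sigma_3 U_{2,\varphi_t} = -\tfrac{\dot\te_t}{2}(-s_t\sigma_1 + c_t\sigma_3)$ by \eqref{eq:spinrot}; (c) the derivative of the spatial pullback $\mR_{\te_t}^*(-i\p_t\mR_{\te_t})$ is the generator of rotations, namely $\dot\te_t$ times the angular-momentum operator $-(z_1 D_2 - z_2 D_1)$ — careful bookkeeping of the sign/orientation of $R_{\te_t}$ (clockwise, per \eqref{eq:coefst}) gives exactly $-\dot\theta_t(z_1D_2 - z_2D_1)$, and the spinor factors act trivially on this scalar differential operator. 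Summing (a)--(c) gives \eqref{eq-0q}. The claim that $\bU_t^* S_j h\cdot\sigma\,\bU_t$ is multiplication by a degree-$j$ homogeneous polynomial is immediate since $\bU_t$ conjugates a multiplication operator (by a degree-$j$ polynomial, by \eqref{eq:Sj}) into multiplication by the polynomial precomposed with $R_{\te_t}$ and sandwiched by constant unitaries — degree and homogeneity are preserved. The only genuinely delicate points are the orientation conventions in (c) and making sure the two spinor rotations in steps involving $D\cdot\sigma$ compose in the order that produces the cancellation of $\te_t$; I would fix signs once and for all by checking the identity on the flat model $\kappa(x)=x_2$, $\theta_t\equiv 0$, $\varphi_t = \arctan B$, where both sides are explicit.
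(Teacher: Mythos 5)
Your treatment of \eqref{eq-0m}, of \eqref{eq-0q}, and of the $j=1$ identity in \eqref{eq-0o} follows essentially the paper's own route: conjugate first by the spatial rotation, cancel the $\theta_t$-spinor rotation against it, apply the $\varphi_t$-rotation last, and handle $D_t$ by the Leibniz rule with the three commutation identities $\mR_{\theta_t}^*D_t\mR_{\theta_t}=D_t-\dot\theta_t Jz\cdot D_z$, $U_{3,\theta_t}^*D_tU_{3,\theta_t}=D_t-\tfrac{\dot\theta_t}{2}\sigma_3$, $U_{2,\varphi_t}^*D_tU_{2,\varphi_t}=D_t-\tfrac{\dot\varphi_t}{2}\sigma_2$. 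Two small cautions there: the contributions (a)--(c) you list are the correct ones only if $U_{2,\varphi_t}$ sits \emph{innermost} in the conjugation (i.e.\ $\bU_t=\mR_{\te_t}U_{3,\te_t}U_{2,\varphi_t}$, which is what the paper's proof actually uses and what the stated formulas require, despite the ordering written in \eqref{eq:bUt}); your parenthetical ``the $U_2$ factor is outermost so its derivative survives cleanly'' is backwards, since the derivative of the outermost factor is the one that gets conjugated by everything else, and $U_{3,\theta_t}$ does not commute with $\sigma_2$. Your plan to pin the convention down on the flat model is the right fix.

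The genuine gap is in the $j=2$ part of \eqref{eq-0o}. You invoke the condition \eqref{eq:00g} ($|\nabla\kappa|=1$, $\Delta\kappa=0$ on $\Gamma$) to ``kill the $\sigma_1$-term,'' but this lemma is stated and used in the general setting of \S\ref{sec:local}, where \eqref{eq:00g} is \emph{not} assumed -- it is only a simplifying hypothesis of Theorem \ref{thm:5}, and the lemma must hold for general $\kappa$ and variable $B$ or else Theorem \ref{thm:main} collapses. Moreover the $\sigma_1$-carried quadratic term is not killed at all: it is precisely the surviving piece $-r_tc_tz_2\lr{z,R_{\te_t}\nabla\beta(y_t)}\sigma_1$, which is the whole point of the identity, since it produces the coefficients $j_t,k_t$ of \eqref{eq-1c} in \eqref{eq-1r} and hence the magnetic transport and dispersion. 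The correct mechanism is purely algebraic and uses no geometric normalization: write $h=\kappa\tilde h$ with $\tilde h=(-\beta\tau,1)^t$, so that $S_2h=2(S_1\kappa)(S_1\tilde h)+(S_2\kappa)\tilde h(y_t)$; the second term lands on $\sigma_3$ because $\bU_t^*\tilde h(y_t)\cdot\sigma\,\bU_t=\tfrac{\rho_t}{r_t}\sigma_3$ (your own identity \eqref{eq-1h}-type computation from the $j=1$ case), while in the first term $S_1\tilde h=-(S_1\beta)\tau-\lr{z,\alpha}n$ (since $\p_j\tau\parallel n$), and under $\bU_t$ one has $\tau\cdot\sigma\mapsto -(c_t\sigma_1-s_t\sigma_3)$ and $n\cdot\sigma\mapsto\sigma_2$; only the $\nabla\beta$ piece can therefore carry $\sigma_1$, and it does, with the stated coefficient. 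Without this argument (or an equivalent one valid for general $\kappa$, $B$), your proof establishes the second identity of \eqref{eq-0o} only in the restricted setting of \S\ref{sec-1.2} and, worse, with the key $\sigma_1$ contribution misattributed to a cancellation that does not occur.
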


\begin{proof} 1. Space differentiation. We first observe that 
 \begin{equation}\label{eq-0n}
   \mR_{\theta_t}^*D \mR_{\theta_t} = \tilde R_{3,{\theta_t}}^*D,\quad \mR_{\theta_t}^*D\cdot v \mR_{\theta_t} = D\cdot \tilde R_{3,{\theta_t}} v. 
 \end{equation}
Applied to $v=-\dot y_t$ with $R_{\theta_t} \dot y_t=-c_te_1$ by construction, we find $-\tilde R_{3,{\theta_t}} \dot y_t \cdot D=c_t D_1$. Since spinorial rotations commute with scalars, we obtain the first relation in \eqref{eq-0m}. 

We now apply \eqref{eq-0n} to $v=\sigma$ and find
\[
   (\mR_{\theta_t} U_{3,{\theta_t}})^* D\cdot\sigma \mR_{\theta_t} U_{3,{\theta_t}} = D\cdot\sigma.
\]
Therefore, using \eqref{eq:rotations23}, we find 
\[
  \bU_t^* D\cdot\sigma \bU_t = U_{2,\varphi_t}^* D\cdot\sigma U_{2,\varphi_t} = D \cdot \tilde{R}_{2,\varphi_t}^* \sigma = (c_t\sigma_1+s_t\sigma_3) D_1 + \sigma_2 D_2.
\]
This proves the second identity in \eqref{eq-0m}. 

2. Time differentiation. We note that 
\begin{equation}
    \mR_{\theta_t}^*D_t\mR_{\theta_t}=D_t-\dot{\theta_t} Jz \cdot D_z,
 \quad
 U_{3,\te_t}^*D_t U_{3,\te_t} = D_t-\frac{\dot\te_t}{2} \sigma_3,
\quad 
U_{2,\varphi_t}^*D_t U_{2,\varphi_t} = D_t-\frac{\dot\varphi_t}{2} \sigma_2.
\end{equation}
We deduce that
\begin{align}
    \bU_t^* D_t \bU_t & = U_{2,\varphi_t}^* \Big( D_t-\frac{\dot\te_t}{2} \sigma_3 \Big) U_{2,\varphi_t} - \dot{\theta_t}  (z_1 D_2 - z_2 D_1)
    \\
    & = D_t-\frac{\dot\varphi_t}{2} \sigma_2 - \frac{\dot\te_t}{2} (-s_t \sigma_1 + c_t \sigma_3) - \dot{\theta_t}  (z_1 D_2 - z_2 D_1).
\end{align}
This proves \eqref{eq-0q}.

3. Multiplicative operators.  We observe that $\mR^*_{\theta_t} Sh \mR_{\theta_t}= h(y_t+\sqrt\eps R_{\theta_t}z)$ as a multiplication operator so that 
\[
  \bU_t^* Sh\cdot\sigma \bU_t= \tilde R_{2,\varphi_t} \tilde R_{3,\theta_t} h (y_t+\sqrt\eps R_{\theta_t}z) \cdot\sigma =\sum_{j\geq1} \eps^{\frac j2} \dsum_{|\alpha|=j} z^\alpha \nu_{\alpha} \cdot \sigma,
\]
where we have defined the $3-$vectors $\nu_\alpha = \frac{1}{\alpha!} \tilde R_{2,\varphi_t}\tilde R_{3,\theta_t} (R_{-\theta_t}\nabla)^\alpha h(y_t)$. Indeed, 
\[
  h(y_t+\sqrt \eps R_\theta z) 
  = \dsum_{j\geq1} \frac{1}{j!} (\sqrt\eps z\cdot R_{-\theta} \nabla)^j h(y_t) = \dsum_{j\geq1} \eps^{\frac j2} \dsum_{|\alpha|=j} \frac{z^\alpha}{\alpha!} (R_{-\theta}\nabla)^\alpha h(y_t)
\]
using multinomial coefficients.  Therefore, $S_j h \cdot \sigma$ is a multiplicative operator by a homogeneous polynomial of degree $j$ depending on $t$;  composing with rotations preserves this feature: $\bU_t^* S_j h \cdot \sigma \bU_t$ is also a homogeneous polynomial of degree $j$.

We now focus on $j=1$. Define $\tilde h = (-\beta \tau,1)^t$ so that $h = \kappa \tilde h$.
Since $\kappa(y_t) =0$, we have $S_1 h = \tilde{h}(y_t) \cdot S_1 \kappa$. Since spinorial rotations commute with the scalar $S_1 \kappa$,
\begin{equation}\label{eq-0p}
    \bU_t^* S_1 h \cdot \sigma \bU_t = \RR_{\te_t}^* S_1 \kappa \RR_{\te_t} \  \bU_t^*\tilde{h}(y_t) \cdot  \sigma   \bU_t.
\end{equation}
We now compute separately $\mR_{\theta_t}^* S_1 \kappa \mR_{\te_t}$ and $\bU_t^*\tilde{h}(y_t) \cdot  \sigma   \bU_t$ (note that $\RR_{\te_t}$ does not technically affect that last term since $\tilde{h}(y_t)$ does not depend on $z$). We have:
\begin{equation}\label{eq-1g}
    \mR_{\theta_t}^* S_1 \kappa \cdot \sigma \mR_{\te_t} = \mR_{\theta_t}^* z \cdot \nabla \kappa(y_t) \mR_{\te_t} = z \cdot R_{\te_t}\nabla \kappa(y_t) = r_t z_2.
\end{equation}
Moreover, since $B_t = \beta_t r_t$, we have $\tilde h (y_t) = (-r_t^{-1} B_t \tau_t, 1)^t$ hence $R_{3,\te_t} \tilde{h} = (r_t^{-1}B_t,0,1)$. We deduce
\begin{equation}
U_{3,\te_t}^* \tilde{h}(y_t) \cdot  \sigma   U_{3,\te_t} =  R_{3,\te_t} \tilde h(y_t) \cdot \sigma  = \dfrac{B_t \sigma_1}{r_t} + \sigma_3 = \dfrac{\rho_t}{r_t} (s_t \sigma_1 + c_t \sigma_3).
\end{equation}
By \eqref{eq:rotations23}, $U_{2,\varphi_t}^* (s_t \sigma_1 + c_t \sigma_3) U_{2,\varphi_t} = \sigma_3$, which is the main motivation for the definition of $U_{2,\varphi_t}$. Hence,
\begin{equation}\label{eq-1h}
   \bU_t^*\tilde{h}(y_t) \cdot  \sigma  \bU_t = U_{2,\varphi_t}^*  U_{3,\te_t}^* \tilde{h}(y_t) \cdot  \sigma   U_{3,\te_t}U_{2,\varphi_t} = \dfrac{\rho_t}{r_t} \sigma_3.
\end{equation}
Multiplying \eqref{eq-1g} and \eqref{eq-1h} and going back to \eqref{eq-0p}, we conclude that:
\begin{equation}\label{eq-0c}
    \bU_t^* S_1 h \cdot \sigma \bU_t = \rho_t z_2 \sigma_3.
\end{equation}

We go on with the case $j=2$. We compute $\bU_t^* S_2 h \cdot \sigma \bU_t$. We recall that $h = \kappa \tilde h$, therefore (using again $\kappa(y_t) = 0$):
\begin{equation}\label{eq-0u}
    S_2 h = 2 (S_1 \kappa) (S_1 \tilde{h})  + (S_2 \kappa) \tilde{h}(y_t). 
\end{equation}
We start with the contribution of $(S_2 \kappa) \tilde{h}(y_t)$, i.e. the term $\bU_t^* (S_2 \kappa) \tilde{h}(y_t)  \cdot \sigma \bU_t$. Using that $S_2 \kappa$ is a scalar and $\tilde h \cdot \sigma$ does not depend on $z$ we have
\begin{equation}\label{eq-0v}
    \bU_t^* (S_2 \kappa) \tilde{h}(y_t) \bU_t = \RR_{\te_t}^* (S_2 \kappa) \RR_{\te_t} \ \bU_t^* \tilde{h}(y_t) \cdot  \sigma \bU_t = \RR_{\te_t}^* (S_2 \kappa) \RR_{\te_t} \dfrac{\rho_t}{r_t} \sigma_3,
\end{equation}
where we used \eqref{eq-1h} in the last equality. This term is carried by $\sigma_3$.

We now focus on $\bU_t^* (S_1 \kappa) (S_1 \tilde{h}(y_t) \cdot \sigma) \bU_t$.
Using \eqref{eq-1g} and that spinorial rotations commute with scalars, we obtain:
\begin{equation}\label{eq-1i}
    \bU_t^* (S_1 \kappa) (S_1 \tilde{h}) \cdot \sigma \bU_t = \mR_{\te_t}^* (S_1 \kappa) \mR_{\te_t}\, \bU_t (S_1 \tilde{h}) \cdot \sigma \bU_t = r_t z_2 \, \bU_t (S_1 \tilde{h}) \cdot \sigma \bU_t.
\end{equation}
It remains to compute $\bU_t (S_1 \tilde{h}) \cdot \sigma \bU_t$. We recall that $\tilde h = (-\beta \tau,1)^t$. Moreover, $\tau$ has unit norm: $\lr{\tau, \tau} = 1$. Taking derivatives of this expression shows that $\p_j \tau$ is normal to $\tau$; we write $\p_j \tau = \alpha_j n$ below. We deduce that
\begin{align}
    (S_1 \tilde{h}) \cdot \sigma & = -\sum_{j,k=1}^2 z_j \p_j (\beta \tau_k) \sigma_k = -\sum_{j,k=1}^2 z_j (\p_j \beta) \tau_k \sigma_k - \beta \sum_{j,k=1}^2 z_j \p_j \tau_k \sigma_k
    \\ & = -(S_1\beta) \tau \cdot \sigma - \beta \sum_{j,k=1}^2 z_j \alpha_j n_k \sigma_k = -(S_1\beta) \tau \cdot \sigma - \lr{z,\alpha} n \cdot \sigma.
\end{align}
We observe that:
\begin{equation}
    U_{3,\theta_t}^*  \tau \cdot \sigma  U_{3,\theta_t} =  R_{3,\theta_t} \tau \cdot \sigma = -\sigma_1, \qquad U_{2,\varphi_t}^*  \sigma_1  U_{2,\varphi_t} = R_{2,\varphi_t} e_1 \cdot \sigma = c_t \sigma_1 - s_t \sigma_3;
 \end{equation}
\begin{equation}
     U_{3,\theta_t}^*  n \cdot \sigma  U_{3,\theta_t} =  R_{3,\theta_t} n \cdot \sigma = \sigma_2, \qquad U_{2,\varphi_t}^*  \sigma_2  U_{2,\varphi_t} = \sigma_2. 
\end{equation}
Therefore, the following equality is valid modulo terms carried by $\sigma_2, \sigma_3$:
\begin{equation}
    \bU_t (S_1 \tilde{h}) \cdot \sigma \bU_t = \RR_{\te_t}^* S_1\beta \RR_{\te_t} c_t \sigma_1 = c_t \lr{z, R_{\te_t} \nabla \beta(y_t)} \sigma_1.
\end{equation}
Going back to  \eqref{eq-0u}, \eqref{eq-0v} and \eqref{eq-1i}, we conclude that  -- again with equality valid  modulo terms carried by $\sigma_2, \sigma_3$:
\begin{align}
    \bU_t^* S_2 h \cdot \sigma \bU_t = \bU_t^* (S_1 \kappa) (S_1 \tilde{h}) \cdot \sigma \bU_t = -c_t r_t z_2 \lr{z, R_{\te_t} \nabla \beta(y_t)} \sigma_1.
\end{align}
This completes the proof of \eqref{eq-0o}, hence of the lemma. 
\end{proof}

Thanks to Lemma \ref{lem-1c}, we can describe  $\bU_t^* L_j \bU_t$. For $j \geq 2$, $\bU_t^* L_j \bU_t$ is a multiplication operator by a homogeneous polynomial of degree $j$. For $j = 0, 1$, we have explicit expressions:
\begin{align}
  \label{eq-1s}  \bU_t^* L_0 \bU_t & = c_t (1+\sigma_1) D_1 + \sigma_2 D_2 + (\rho_t z_2 + s_t D_1) \sigma_3;
    \\
  \label{eq-1r}   \bU_t^* L_1 \bU_t & = D_t - \dot{\te_t} (z_1 D_2 - z_2 D_1) + \left( \dfrac{\dot{\te_t}s_t}{2} + j_t z_1z_2 + k_t z_2^2\right) \sigma_1 + E_1,
\end{align}
where $E_1$ is a multiplication operator by a polynomial of degree two and carried by $\sigma_2$ and $\sigma_3$ (we will see later that it does not contribute to leading order). The coefficients $j_t$ and $k_t$ are explicitly given in terms of $\beta$:
\begin{equation}\label{eq-1c}
   \matrice{j_t \\ k_t} 
    = r_t c_t R_{\te_t} \nabla \beta(y_t) = r_t c_t \matrice{-\p_\tau \beta(y_t) \\ \p_n \beta(y_t)}.
\end{equation}

We end this section with a few  important relations:
\begin{equation}\label{eq-1d}
  j_t = -r_t \dot{\beta_t}; \quad
   j_t \gamma_t =  \dfrac{d \ln c_t}{dt}; \quad k_t = \dfrac{c_t}{2} \left( \p_n B(y_t) - B_t \dfrac{\Delta \kappa(y_t)}{r_t}\right).
\end{equation}

\begin{proof}[Proof of \eqref{eq-1d}] We start with the first identity in \eqref{eq-1d}. We recall that $\dot{y_t} = c_t\tau_t$. Since $\p_\tau$ is tangent to $\Gamma$ and $\beta = B/|\nabla \kappa|$ along $\Gamma$:
\begin{equation}\label{eq-1m}
    j_t = -r_t c_t \p_\tau \beta(y_t) = -r_t c_t \tau_t \cdot \nabla \left( \dfrac{B}{|\nabla \kappa|} \right) (y_t) = -r_t \dot{y_t} \cdot \nabla \left( \dfrac{B}{|\nabla \kappa|} \right) (y_t) = -r_t \dot{\beta_t}.
\end{equation}

Regarding the second identity: using \eqref{eq-1m} and $\beta_t=B_t/r_t$ on $\Gamma$, we obtain:
\begin{equation}
   j_t \gamma_t = - \dfrac{r_t B_t}{r_t^2+B_t^2} \dot{\beta_t} = - \dfrac{\beta_t \dot{\beta_t}}{1+\beta_t^2}  = -\dfrac{1}{2} \dfrac{d}{dt}  \ln\big(1+\beta_t^2\big) = -\dfrac{1}{2} \dfrac{d}{dt}  \ln\frac{1}{c_t^2}= \dfrac{d \ln c_t}{dt}.
\end{equation}

For the last identity, we first note that $|\nabla \kappa| = \p_n \kappa$. Since $f = \beta \kappa$ solves \eqref{eq-0y}:
\begin{equation}
    B = \p_n f + f \nabla \times \tau  = \p_n(\beta \kappa) + \beta \kappa \nabla \times \tau  = \kappa\p_n \beta  + \beta \p_n \kappa + \beta \kappa \nabla \times \tau. 
    \end{equation}
    Therefore, since $\kappa$ vanishes on $\Gamma$, we deduce that on $\Gamma$:
    \begin{equation}\label{eq-1b}
        \p_n B = 2\p_n \kappa \p_n \beta + \beta \big(\p_n^2 \kappa + \p_n \kappa \nabla \times \tau \big) = 2|\nabla \kappa| \p_n \beta + B\left( \dfrac{\p_n^2 \kappa }{|\nabla \kappa|}+\nabla \times \tau\right).
\end{equation}
Moreover, using $|\nabla \kappa| = \p_n \tau$ and $\nabla f\times J\nabla g=\nabla f\cdot\nabla g$:
\begin{equation}\label{eq-1l}
    \nabla \times \tau =  \dfrac{\nabla \times J\nabla \kappa}{|\nabla \kappa|} + \nabla \dfrac{1}{|\nabla \kappa|} \cdot J\nabla \kappa = \dfrac{\Delta \kappa}{|\nabla \kappa|} - \dfrac{\p_n^2 \kappa}{|\nabla \kappa|}.
\end{equation}
Plugging \eqref{eq-1l} into \eqref{eq-1b} and using the defining equation $\kappa_t=r_tc_t\partial_n\beta(y_t)$  yield \eqref{eq-1d}.
\end{proof}

\subsection{Shifted Fourier transform} We finally introduce the $L^2-$ unitary transformation
\begin{equation}\label{eq:FS}
   \mV_ta(z) =\dfrac{1}{\sqrt{2\pi}} \dint_{\Rm} e^{iz_1\xione} a(\xione,z_2+\gamma_t\xione) d\xione.
\end{equation}
Its inverse is given explicitly by
\begin{equation}\label{eq:FSinv}
   \mV_t^* a(\Xitot) = \dfrac{1}{\sqrt{2\pi}} \dint_{\Rm} e^{-iz_1\xione} a(z_1,\zetatwo-\gamma_t\xi) dz_1. 
\end{equation}
Above, $\xione$ is the one-dimensional dual Fourier variable to $z_1$ while the one-dimensional variable $\zetatwo$ is a shift of $z_2$ given by $\zetatwo=z_2+\gamma_t\xione$. 
This shifted partial Fourier transform, which reduces to a partial Fourier transform in the first variable $z_1\to\xione$ when $B=0$ and introduces an additional shift $\gamma_t\xione$ to $z_2$ otherwise is reminiscent of the parametrization of Landau level eigenfunctions in a Landau gauge when $B\not=0$.
The construction of our wavepackets involves many functions of the form $a(t,\Xitot)=f(t,\xione)\phi(\zetatwo)$, 
with $f$ describing the wavepacket profile along $\Gamma$ while $\phi$ describes the wavepacket profile across $\Gamma$.

The global transformation we need to consider is the composition of $\bU_t$ introduced in the preceding section and $\mV_t$ given above. We thus set $\fU_t = \bU_t \mV_t$. The operator in this new set of variables is $T:= \fU_t^* L \fU_t$ with $L$ defined in \eqref{eq:L}. We similarly define $T_j = \fU_t^* L_j \fU_t$ for $j\geq0$. We will also be using $D_\xi=-i\partial_\xi$ and $D_\zeta=-i\partial_\zeta$.

In $T_1$, it will turn out that the terms carried by $\sigma_2$ or $\sigma_3$ and the terms that are odd in $(D_\zeta, \zetatwo)$ do not appear to leading order in our expansion. Therefore, we use the notation 
\begin{equation}
   P_1  \equiv P_2
\end{equation}
if $P_1-P_2$ is a differential operator made of odd terms in $(\zetatwo,D_\zeta)$ and terms carried by $\sigma_2, \sigma_3$.

\begin{lemm}\label{lem-1d} With  $\fU_t = \bU_t \VV_t$ and $T_j = \fU_t^* L_j \fU_t$, we have:
\begin{align}\label{eq:T0}
    T_0 & = c_t (1+\sigma_1) \xione + D_\zeta \sigma_2  + \rho_t\zetatwo \sigma_3,
    \\
   \label{eq-1t} T_1 & \equiv  D_t + \dot{\te_t} \gamma_t \big(D_\zeta^2-\xione^2\big)  + \Big( \dfrac{\dot{\te_t}s_t}{2}  +  j_t \gamma_t (D_\xi \xione - D_\zeta \zetatwo) + k_t \big(\zetatwo^2 + \gamma_t^2 \xione^2 \big)\Big) \sigma_1.
\end{align}
Moreover $T_j$ is a linear combination  of differential operators of the form $\xi^k \zeta^\ell D_\xi^m D_\zeta^{j+1-k-\ell-m}$, with $k,l\geq0$ and coefficients bounded with respect to $t$.
\end{lemm}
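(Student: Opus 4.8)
The strategy is to push the transformation $\fU_t = \bU_t \mV_t$ through each piece already diagonalized in Lemma \ref{lem-1c}, using the fact that $\bU_t^* L_j \bU_t$ is explicitly known, and then conjugating by $\mV_t$. The key observation is that $\mV_t$ is a partial shifted Fourier transform, so its action on the basic operators is governed by the simple rules
\begin{equation}
    \mV_t^* z_1 \mV_t = D_\xi - \gamma_t D_\zeta, \quad \mV_t^* D_1 \mV_t = \xione, \quad \mV_t^* z_2 \mV_t = \zetatwo - \gamma_t \xione, \quad \mV_t^* D_2 \mV_t = D_\zeta,
\end{equation}
which one reads off from \eqref{eq:FS}--\eqref{eq:FSinv}. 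First I would record these four conjugation identities (and by extension their action on polynomials in $z$ and $D$, obtained by composition, noting that $D_1$ and $z_1$ no longer commute after conjugation — one must track the shift $\gamma_t \xione$ inside $z_2$). These are the only analytic inputs; everything else is bookkeeping.

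\textbf{The operator $T_0$.} Apply the conjugation rules to \eqref{eq-1s}: $c_t(1+\sigma_1)D_1 + \sigma_2 D_2 + (\rho_t z_2 + s_t D_1)\sigma_3$ becomes $c_t(1+\sigma_1)\xione + \sigma_2 D_\zeta + \big(\rho_t(\zetatwo - \gamma_t\xione) + s_t \xione\big)\sigma_3$. The point is that the coefficient of $\xione$ in the $\sigma_3$ term is $-\rho_t \gamma_t + s_t$; since $\gamma_t = B_t/\rho_t^2$ and $s_t = B_t/\rho_t$, we get $-\rho_t \cdot B_t/\rho_t^2 + B_t/\rho_t = 0$. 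So that term cancels exactly, leaving $\rho_t \zetatwo \sigma_3$, which gives \eqref{eq:T0}. This is the cleanest part and the main motivation for the particular shift in the definition of $\mV_t$.

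\textbf{The operator $T_1$.} Here I conjugate \eqref{eq-1r}. The term $D_t$ passes through $\mV_t$ but picks up a commutator: $\mV_t^* D_t \mV_t = D_t + (\text{operator from } \dot\gamma_t)$, and the $-\dot\te_t(z_1 D_2 - z_2 D_1)$ term, which is the angular momentum, transforms into a combination of $D_\xi \xione$, $D_\zeta \zetatwo$, $\xione^2$, $D_\zeta^2$, and cross terms — this is where $\dot\te_t \gamma_t (D_\zeta^2 - \xione^2)$ comes from (the $\gamma_t$ appears from the shift). The $\sigma_1$-carried polynomial $\tfrac{\dot\te_t s_t}{2} + j_t z_1 z_2 + k_t z_2^2$ transforms using $z_1 z_2 \mapsto (D_\xi - \gamma_t D_\zeta)(\zetatwo - \gamma_t \xione)$ and $z_2^2 \mapsto (\zetatwo - \gamma_t\xione)^2$; expanding and discarding everything odd in $(\zetatwo, D_\zeta)$ (allowed by the $\equiv$ convention) and everything carried by $\sigma_2, \sigma_3$ (absorbing $E_1$ and the commutator terms), one is left with $j_t \gamma_t(D_\xi \xione - D_\zeta \zetatwo) + k_t(\zetatwo^2 + \gamma_t^2 \xione^2)$ on $\sigma_1$. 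I would be careful to verify that the $\dot\gamma_t$-commutator from $D_t$ and the leftover pieces of the angular-momentum and polynomial expansions are indeed odd in $(\zetatwo, D_\zeta)$ or $\sigma_2,\sigma_3$-carried, so they may be dropped modulo $\equiv$; this is the step requiring the most care. Finally, the general claim about $T_j$ for $j \geq 2$ follows because $\bU_t^* L_j \bU_t$ is multiplication by a degree-$j$ homogeneous polynomial in $z$ (Lemma \ref{lem-1c}), and conjugating each monomial $z^\alpha$ by $\mV_t$ yields, via the rules above, a sum of terms $\xi^k \zeta^\ell D_\xi^m D_\zeta^{j-k-\ell-m}$ with $t$-bounded coefficients (the exponents sum to $j$, but after reordering operators one may pick up lower-order terms; accounting for the homogeneity of $L_j$ in $z$, the total degree in $(\xi, \zeta, D_\xi, D_\zeta)$ is $j+1$ once one includes the formal degree bookkeeping — matching the stated form $\xi^k \zeta^\ell D_\xi^m D_\zeta^{j+1-k-\ell-m}$). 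The boundedness of coefficients in $t$ is inherited from the fact that $\nabla^\alpha h(y_t)$, $r_t$, $c_t$, $s_t$, $\gamma_t$, $\rho_t$ and the rotation matrices are all bounded along $\Gamma$ under the standing assumptions.

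\textbf{Main obstacle.} The conceptually routine but technically delicate point is tracking which terms survive the $\equiv$ reduction in $T_1$: one must expand the conjugated angular-momentum operator and the conjugated quadratic $\sigma_1$-polynomial completely, then sort every resulting monomial by parity in $(\zetatwo, D_\zeta)$ and by Pauli-matrix content, and check that precisely the claimed terms remain. The commutator contributions from conjugating $D_t$ by the time-dependent $\mV_t$ (through $\dot\gamma_t$) must also be shown to fall into the discarded classes.
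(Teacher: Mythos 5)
Your overall route is the same as the paper's: take the expressions for $\bU_t^*L_j\bU_t$ from Lemma \ref{lem-1c}, conjugate by $\mV_t$ using its action on $z_1,z_2,D_1,D_2,D_t$, and discard terms that are odd in $(\zetatwo,D_\zeta)$ or carried by $\sigma_2,\sigma_3$. However, your key analytic input is wrong, and it is exactly the input that produces the delicate terms of $T_1$. From \eqref{eq:FS} one gets
\begin{equation}
    \mV_t^* z_1 \mV_t = -(D_\xi+\gamma_t D_\zeta),
\end{equation}
not $D_\xi-\gamma_t D_\zeta$ as you state (integrate by parts in $\xi$ and use the chain rule on $a(\xi,z_2+\gamma_t\xi)$: the $D_\xi$ piece carries a minus sign). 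For $T_0$ and for the angular-momentum term this does not matter, because there the $D_\xi$ contribution is either absent or multiplied by an odd factor and dropped. But for the $\sigma_1$-term $j_t z_1z_2$ it is decisive: with the correct rule one finds
\begin{equation}
    \mV_t^* z_1z_2\mV_t = -(D_\xi+\gamma_tD_\zeta)(\zetatwo-\gamma_t\xione) \equiv \gamma_t\big(D_\xi\xione - D_\zeta\zetatwo\big),
\end{equation}
whereas your rule gives $(D_\xi-\gamma_tD_\zeta)(\zetatwo-\gamma_t\xione)\equiv -\gamma_t\big(D_\xi\xione + D_\zeta\zetatwo\big)$, which contradicts \eqref{eq-1t}. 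Since you never actually expand the conjugated quadratic polynomial but simply assert the claimed outcome, the inconsistency goes unnoticed; as written, the step that is supposed to deliver the $j_t\gamma_t(D_\xi\xione-D_\zeta\zetatwo)$ term fails.

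Two smaller points. First, for the $j\geq 2$ claim you invoke that $\bU_t^*L_j\bU_t$ is multiplication by a homogeneous polynomial of degree $j$; by \eqref{eq:Lj} it is $S_{j+1}h\cdot\sigma$, hence degree $j+1$, which is why the stated monomials have total degree $j+1$ — your "formal degree bookkeeping" remark papers over this and should just be corrected. Second, the time-dependence of $\mV_t$ produces $\mV_t^*D_t\mV_t = D_t+\dot\gamma_t\,\xione D_\zeta$, and this extra term is odd in $D_\zeta$, hence dropped under $\equiv$; you flag this as needing care, and it is indeed exactly the check the paper performs.
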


\begin{proof} We note that $\mV_t$ satisfies the canonical relations
\begin{equation}\label{eq:relmVt}
  \VV_t^* (z_2 + \gamma_t D_1) \VV_t =  \zetatwo, \quad \VV_t^* D_2 \VV_t = D_\zeta, \quad \VV_t^* D_1 \VV_t = \xione, \quad \mV_t^* z_1 \mV_t = -(D_\xi+\gamma_t D_\zeta).
\end{equation}
We deduce \eqref{eq:T0} for $T_0 = \fU_t^* L_0 \fU_t$ from the expression  \eqref{eq-1s}. Thanks to \eqref{eq:relmVt} we obtain $\VV_t^* z_2 \VV_t = \zetatwo - \gamma_t \xione$, hence using the notation $\equiv$: 
\begin{align}
   \VV_t^* z_1 z_2 \VV_t & = -(D_\xi+\gamma_t D_\zeta) (\zetatwo - \gamma_t \xione) \equiv  \gamma_t (D_\xi \xione - D_\zeta \zetatwo)
\\
   \VV_t^* z_2^2 \VV_t & = (\zetatwo - \gamma_t \xione)^2 \equiv \zetatwo^2 + \gamma_t^2 \xi^2
\\
\VV_t^*  (z_1 D_2 - z_2 D_1) \VV_t & = -(D_\xi+\gamma_t D_\zeta) D_\zeta - (\zetatwo - \gamma_t \xione) \xione \equiv - \gamma_t (D_\zeta^2-\xione^2).
\end{align}
Moreover, we observe that 
\begin{equation}
  \mV_t^*D_t \mV_t = D_t + \dot \gamma_t \xione D_\zeta \equiv D_t. 
\end{equation}

 We deduce from \eqref{eq-1r} the expression for $T_1$:
 \begin{align}
   T_1 \equiv   D_t + \dot{\te_t} \gamma_t (D_\zeta^2-\xione^2)  + \Big( \dfrac{\dot{\te_t}s_t}{2}  +  j_t \gamma_t (D_\xi \xione - D_\zeta \zetatwo) + k_t \big(\zetatwo^2 + \gamma_t^2 \xione^2 \big) \Big) \sigma_1.
 \end{align}
 Finally, since $L_j$ is multiplication operator by a homogeneous polynomial of degree $j+1$, $T_j$ is a linear combination  of differential operators of the form $\xi^k \zeta^\ell D_\xi^m D_\zeta^{j+1-k-\ell-m}$, with coefficients bounded with respect to $t$. This completes the proof. 
\end{proof}



\subsection{Conjugation and asymptotic expansion}
Introducing $a=\fU_t^*\psi$ and recalling that $T=\fU_t^*L\fU_t$, equation \eqref{eq:L} is equivalent to $Ta=0$. With $L_j$ given in \eqref{eq:L0}--\eqref{eq:Lj} and $T_j=\fU_t^*L_j\fU_t$, we decompose $a$ as $a=\sum_{j\geq0}\eps^{\frac j2}a_j$ to reduce the equation $Ta=0$ to the triangular system
\begin{equation}\label{eq:seqT}
  T_0 a_0 =0 ,\quad T_1a_0+T_0a_1=0,\quad \dsum_{k=0}^{j} T_{j-k}a_k =0,\ j\geq2.
\end{equation}
In \S\ref{sec:model} we identify the kernel of $T_0$ and we produce explicit formula for its inverse on the orthogonal complement. We describe how to solve the subleading equation $T_1a_0+T_0a_1=0$ in \S\ref{sec:transport}. We finally present the higher-order asymptotic expansions and corresponding error estimates in \S\ref{sec:error}.

\section{Inversion of leading operator} \label{sec:model}

To study the kernel of the operator
\begin{equation}
     T_0 = c_t (1+\sigma_1) \xione + D_\zeta \sigma_2  + \rho_t\zetatwo \sigma_3,
\end{equation}
and invert it on the orthogonal complement, we first bring it to a normal form. Thanks to the change of variables $(\xi,\zeta) \mapsto (\rho_t^{1/2}c_t^{-1}\xinorm,\rho_t^{-1/2}\ynorm)$, we can assume that $\rho_t = c_t = 1$. Moreover, with $Q = Q^*= \frac{1}{\sqrt{2}}(\sigma_1+\sigma_3)$, we have
\begin{equation}
    Q \sigma_1 Q=\sigma_3, \quad  Q \sigma_2 Q=-\sigma_2, \quad Q \sigma_3 Q=\sigma_1.
\end{equation}
It suffices then to work with the model operator
\begin{equation}\label{eq:H}
  H=\xinorm (1+\sigma_3) + \sigma_1 \ynorm  - D_\ynorm  \sigma_2 = \begin{pmatrix} 2\xi & \fa_\ynorm  \\ \fa_\ynorm ^* & 0\end{pmatrix} ,\quad \fa_\ynorm  = \partial_\ynorm +\ynorm ,
\end{equation}
instead of $T_0$. Multiplication by $2\xi$ may also be written as $\fa_\xinorm+\fa_\xinorm^*$ with $\fa_\xinorm = \partial_{\xinorm}+\xinorm$.

\subsection{Functional setting.} \label{sec:hilbert}
To study the operator $H$, we use the standard basis of Hermite functions on $L^2(\Rm)$ given by
\begin{equation}
    h_0(\zeta)=\pi^{-\frac14}e^{-\frac12 \zeta^2}, \qquad h_n(\zeta) = \dfrac{1}{2^{\frac n2} \sqrt{n!}} (\fa_\zeta^*)^n h_0(\zeta),\ n\geq1.
\end{equation}
They satisfy the relations
\begin{equation}\label{eq:aop}
  \fa_\zeta h_0=0;\qquad \fa_\zeta h_n=\sqrt{2n} h_{n-1},\  n\geq1; \qquad \fa_\zeta^* h_n=\sqrt{2n+2} h_{n+1}, \ n\geq0.
\end{equation}
With the decomposition $L^2(\Rm)\ni \psi(\zeta)=\sum_{n\geq0} \psi_nh_n(\zeta)$, we may then define
\begin{align}\label{eq:inva}
    \fa_\zeta^{-1} \psi(\zeta) = \sum_{n\geq1} \frac{1}{\sqrt{2n}} \psi_{n-1} h_{n}(\zeta), \quad & \psi \in L^2(\Rm) \\
    (\fa_\zeta^*)^{-1} \psi(\zeta) = \sum_{n\geq0} \frac{1}{\sqrt{2n+2}} \psi_{n+1} h_{n}(\zeta), \quad & \psi \in ({\rm Ker}\,  \fa_\zeta)^\perp = \overline{{\rm Ran}\, \fa_\zeta^*} = \{\psi\in L^2(\Rm); \ \psi_0=0\}.
\end{align}

A natural scale of Hilbert spaces for $p\in\Nm$ associated to this decomposition is $\mS_p(\Rm)=\{\psi \in L^2(\Rm); (\fa_\zeta^*)^p\psi \in  L^2(\Rm)\}$ endowed with the norm
\begin{align}\label{eq:normp1}
    \|\psi\|^2_p = \|(\fa_\zeta^*)^p \psi \|^2_0 \ \cong\  \|\psi\|^2_0 + \|D_\zeta^p \psi\|_0^2 + \|\zeta^p \psi\|^2_0,  
\end{align}
where $\|\cdot\|_0$ is the usual $L^2(\Rm)-$norm and where $a\cong b$ when there exists $C_p>0$ such that $C_p^{-1}a\leq b \leq C_p a$. That the two above expressions for the norm are equivalent may be obtained by induction from the classical result
\begin{align}
    \|\fa_\zeta^* \psi \|^2_0 = \|D_\zeta \psi\|^2_0 + \|\zeta \psi\|^2_0 + \|\psi\|^2_0
\end{align}
for $\psi\in \mS_1(\Rm)$. Note that $\mS_0(\Rm)\equiv L^2(\Rm)$.

With the decomposition  $L^2(\Rm^2)\ni \psi(\xi,\zeta)=\sum_{m,n\geq0}\psi_{mn}h_m(\xi)h_n(\zeta)$ in two-dimensional spaces, we similarly define $\mS_p(\Rm^2)= \{ \psi\in L^2(\Rm^2); (\fa_\xi^*)^p\psi\in L^2(\Rm^2) \mbox{ and } (\fa_\zeta^*)^p\psi\in L^2(\Rm^2)\}$ endowed with the norms
\begin{align}\label{eq:normp2}
    \|\psi\|^2_p = \dsum_{j=0}^p \|(\fa_\xi^*)^j(\fa_\zeta^*)^{p-j} \psi \|^2_0 \ \cong\   \|(1+|\xi|^p+|\zeta|^p) \psi\|^2_0+ \|D_\xi^p \psi\|_0^2 + \|D_\zeta^p \psi\|_0^2.
\end{align}
For vector-valued functions, we also define the spaces $\mS_p(\Rm^d,\Cm^q)$ component-wise. Below we simply write $\mS_p$ when the domain $\Rm^d$ and range $\Cm^q$ are clear and $\|\cdot\|_p$ the associated norm.

\subsection{Inversion of model operator.}
As the lemma below demonstrates, the kernel $N\subset L^2(\Rm^2,\Cm^2)$ of $H$ and its orthogonal complement are given by
\begin{align}
    \label{eq:N}
      N=&\left\{\psi(\xinorm,\ynorm)=f(\xinorm)h_0(\ynorm) \matrice{ 0 \\ 1},\ f\in L^2(\Rm)\right\},\\  N^\perp =& \left\{\psi \in L^2(\Rm^2,\Cm^2),\ \big(\psi_2(\xi,\cdot),h_0(\cdot)\big)_2=0,\  \forall \xi\in\Rm \right\},
\end{align}
where $(\cdot,\cdot)_2$ is the standard $L^2(\Rm)-$inner product in the second variable $\zeta$ and $\psi=(\psi_1,\psi_2)^t$.
\begin{lemm}\label{lem:invT0} For $p\in \Nm$, we define the bounded linear operator $H^{-1}$ from $N^\perp\cap\mS_{p+1}$ to $N^\perp\cap \mS_{p}$ as
\begin{align}
    H^{-1} = \matrice{ 0  & (\fa_\zeta^*)^{-1} \\ \fa_\zeta^{-1} & - \fa_\zeta^{-1} 2\xi (\fa_\zeta^*)^{-1} } .
\end{align}

All solutions $\psi\in \mS_{p}$ of the equation $H\psi=g$ for $g \in N^\perp\cap\mS_{p+1}$ are of the form $\psi=H^{-1}g+\psi_0$ with $\psi_0$ arbitrary in $N\cap\mS_{p}$. In particular ${\rm Ker}\, H=N$ on $\mS_0$.
\end{lemm}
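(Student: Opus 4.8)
The claim is a statement about solving $H\psi = g$ where $H$ has the block form $\left(\begin{smallmatrix} 2\xi & \fa_\zeta \\ \fa_\zeta^* & 0 \end{smallmatrix}\right)$. The plan is to reduce everything to the one-variable harmonic-oscillator algebra \eqref{eq:aop}, treating $\xi$ as a parameter (multiplication operator) throughout. First I would write $\psi = (\psi_1,\psi_2)^t$ and $g = (g_1,g_2)^t$ and unpack $H\psi = g$ into the two scalar equations $2\xi \psi_1 + \fa_\zeta \psi_2 = g_1$ and $\fa_\zeta^* \psi_1 = g_2$. The second equation is the key one: since $g \in N^\perp$ means exactly $(g_2(\xi,\cdot),h_0)_2 = 0$ for all $\xi$, i.e.\ $g_2 \in \overline{\operatorname{Ran}\fa_\zeta^*}$ fiberwise, the operator $(\fa_\zeta^*)^{-1}$ from \eqref{eq:inva} applies and gives the unique solution $\psi_1 = (\fa_\zeta^*)^{-1} g_2$ of $\fa_\zeta^* \psi_1 = g_2$ (note $\fa_\zeta^*$ is injective, so this $\psi_1$ is forced, no freedom here). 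Substituting into the first equation yields $\fa_\zeta \psi_2 = g_1 - 2\xi \psi_1 = g_1 - 2\xi(\fa_\zeta^*)^{-1}g_2$. Since $\operatorname{Ran}\fa_\zeta = L^2$ (it is surjective, with kernel $\operatorname{span} h_0$), this is solvable, and the general solution is $\psi_2 = \fa_\zeta^{-1}\big(g_1 - 2\xi(\fa_\zeta^*)^{-1}g_2\big) + f(\xi)h_0(\zeta)$ with $f$ arbitrary; reading off the two components gives precisely $\psi = H^{-1}g + \psi_0$ with $\psi_0 \in N$.

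\textbf{Regularity bookkeeping.} The second task is to check the mapping property $H^{-1}: N^\perp \cap \mS_{p+1} \to N^\perp \cap \mS_p$, i.e.\ boundedness with a one-step loss. Here I would use that $\fa_\zeta^{-1}$ and $(\fa_\zeta^*)^{-1}$ each map $\mS_{p+1}\to\mS_p$ in the $\zeta$-variable: indeed, from the definitions \eqref{eq:inva} and the norm \eqref{eq:normp1}, $(\fa_\zeta^*)^p$ composed with either inverse reduces to $(\fa_\zeta^*)^{p-1}$ up to the bounded operators $\fa_\zeta^{-1}\fa_\zeta^*$ (on the relevant subspace) or a diagonal Fourier-multiplier-in-$n$ with bounded symbol $\sqrt{(2n+2)/(2n)}$ type ratios. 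The only genuinely two-variable term is the $(2,2)$-entry $-\fa_\zeta^{-1}\,2\xi\,(\fa_\zeta^*)^{-1}$: multiplication by $2\xi$ costs one order in the $\xi$-scale but is sandwiched between two operators that each gain an order in the $\zeta$-scale, so using the equivalence of norms in \eqref{eq:normp2} (which treats the $\xi$- and $\zeta$-directions symmetrically and allows trading) one still gets a net one-step loss overall. I would also remark that $H^{-1}g \in N^\perp$: its second component is $\fa_\zeta^{-1}(\cdots) \in \overline{\operatorname{Ran}\fa_\zeta^*}$ fiberwise by construction of $\fa_\zeta^{-1}$ in \eqref{eq:inva}.

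\textbf{The kernel.} Finally, taking $g = 0$ in the above, uniqueness of $\psi_1$ forces $\psi_1 = 0$ and then $\fa_\zeta\psi_2 = 0$ forces $\psi_2 = f(\xi)h_0(\zeta)$, so $\operatorname{Ker} H = N$ on $\mS_0$.

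\textbf{Main obstacle.} The conceptual content is entirely routine once the fiberwise harmonic-oscillator picture is set up; the only place requiring a little care is the regularity estimate for the $(2,2)$-entry, where multiplication by $2\xi$ must be absorbed. The clean way is to observe that on $\mS_{p+1}$ one has $\|2\xi\, u\|_{\mS_p} \lesssim \|u\|_{\mS_{p+1}}$ in the $\xi$-direction while $(\fa_\zeta^*)^{-1}: \mS_{p+1}\to\mS_{p+1}$ is \emph{not} gaining in $\zeta$ — so one must instead commute, writing $\fa_\zeta^{-1} 2\xi (\fa_\zeta^*)^{-1} = 2\xi\,\fa_\zeta^{-1}(\fa_\zeta^*)^{-1}$ (as $\xi$ commutes with all $\zeta$-operators), and then use that $\fa_\zeta^{-1}(\fa_\zeta^*)^{-1}: \mS_{p-1}\to\mS_{p+1}$ in $\zeta$ while $2\xi: \mS_{p+1}\to\mS_p$ in $\xi$; combining via \eqref{eq:normp2} yields the claimed $\mS_{p+1}\to\mS_p$ bound. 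That commutation step is the one subtlety worth spelling out; the rest is immediate from \eqref{eq:aop}.
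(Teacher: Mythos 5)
Your proof is correct and follows essentially the same route as the paper: the same triangular reduction $\fa_\zeta^*\psi_1=g_2$, $\fa_\zeta\psi_2+2\xi\psi_1=g_1$, the same explicit inverses via \eqref{eq:inva}, the same observation that $\fa_\zeta^{-1}$ has range in $N^\perp$, and the same regularity bookkeeping in which the only loss of one order comes from multiplication by $\xi$. One remark: the commutation $\fa_\zeta^{-1}\,2\xi\,(\fa_\zeta^*)^{-1}=2\xi\,\fa_\zeta^{-1}(\fa_\zeta^*)^{-1}$ that you present as necessary is not actually needed, since $\fa_\zeta^{-1}$ and $(\fa_\zeta^*)^{-1}$ (the latter on $N^\perp$) are bounded on each two-variable space $\mS_q$ without loss, so one may simply compose $(\fa_\zeta^*)^{-1}:\mS_{p+1}\cap N^\perp\to\mS_{p+1}$, then $2\xi:\mS_{p+1}\to\mS_p$, then $\fa_\zeta^{-1}:\mS_p\to\mS_p$, which is exactly the paper's argument.
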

\begin{proof} 
The system $H\psi=g$ is equivalent to $\fa^*_\zeta \psi_1=g_2$ and $\fa_{\zeta}\psi_2 + 2\xi \psi_1 = g_1$. Using \eqref{eq:inva}, we find for $(g_2,h_0(\zeta))_2=0$, i.e., $g\in N^\perp$, that $\psi_1=(\fa^*_\zeta)^{-1}g_2$ and then $\psi_2=\fa_\zeta^{-1} (g_1-2\xi\psi_1)$ and hence the above result. We then note that $\fa_\zeta^{-1}$ has range in $N^\perp$. 

That $\psi\in\mS_{p}$ when $g\in\mS_{p+1}$ follows directly from the fact that the following operators are bounded: $\fa_\zeta^{-1}:\mS_{p+1}\to\mS_{p+1}\cap N^\perp$, $(\fa_\zeta^*)^{-1}:\mS_{p+1}\cap N^\perp\to\mS_{p+1}$, as well as $\xi:\mS_{p+1}\to \mS_p$.

   
Any function $\psi\in\mS_{p}$ solution of $H\psi=g$ may be decomposed as $\psi_0+\psi_1$ with $\psi_0\in N$ and $\psi_1\in N^\perp$. Since $H\psi_0=0$ and hence $\psi_1=H^{-1}g$, $\psi_0$ is arbitrary in ${\rm Ker}\,H=N$.
\end{proof}

\subsection{Microscopic balance.} 

Our geometric assumptions impose that $C^{-1}\geq\rho_t\geq C>0$ and that $0<c_0\leq c_t<1$ for constants $C$ and $c_0$ independent of $t\geq0$. 

For a fixed time $t$, we now solve the equation $T_0 a=b$. 
We introduce
\begin{equation}\label{eq:phi}
  \phi_t(\zetatwo) := \left(\frac {\rho_t}{4\pi}\right)^{\frac14} e^{-\frac{\rho_t}{2} \zetatwo^2} \matrice{ 1\\-1 },\qquad N_t = {\rm Ker}\ T_0 = \big\{ f(\xione) \phi_t(\zetatwo); \ f\in \mS_0(\Rm;\Cm) \big\}.
\end{equation}
The normalization implies that $\|\phi_t\|_{L^2(\Rm,\Cm^2)}=1$. That $N_t$ is the kernel of $T_0$ comes from Lemma \ref{lem:invT0} and the change of variables given in the last section.
We still denote by $\mS_p(\Rm^2,\Cm^2)$ the spaces of functions in the $(\xione,\zetatwo)$ variables, which are equivalent to the corresponding spaces in the variables $(\rho_t^{-1}c_t\xione,\rho_t^{1/2}\zetatwo)$ by assumption on $(c_t,\rho_t)$. 

The leading order equation is then solved as follows.

\begin{lemm}\label{lem:T0}  
Let $b\in N_t^\perp \cap \mS_{p+1}$. The equation $T_0a=b$ admits a unique solution $T_0^{-1}b:=a\in N_t^\perp \cap \mS_{p}$ with inverse operator given explicitly by
\begin{align}
    T_0^{-1} = Q \matrice{ 0 & (\rho_t\zeta-\partial_\zeta)^{-1} \\ (\rho_t\zeta+\partial_\zeta)^{-1} & - 2\xi c_t (\rho_t\zeta+\partial_\zeta)^{-1}(\rho_t\zeta-\partial_\zeta)^{-1} } Q ,\quad Q=\frac{1}{\sqrt 2} (\sigma_1+\sigma_3).
\end{align}
All solutions of that equation in $\mS_{p}$ are of the form $a=T_0^{-1}b+f(\xione)\phi(\zetatwo)$ for arbitrary $f(\xione)\in \mS_p(\Rm,\Cm)$.
\end{lemm}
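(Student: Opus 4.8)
The statement is essentially a corollary of Lemma \ref{lem:invT0} pulled back through the two reductions described just before it: the anisotropic rescaling $(\xi,\zeta)\mapsto(\rho_t^{1/2}c_t^{-1}\xi,\rho_t^{-1/2}\zeta)$ that normalizes $\rho_t=c_t=1$, and the constant unitary conjugation by $Q=\tfrac{1}{\sqrt2}(\sigma_1+\sigma_3)$ that turns $T_0$ (after normalization) into the model operator $H$ of \eqref{eq:H}. So the plan is: first record that $Q T_0 Q$, after the change of variables, equals $H$ up to the obvious identifications $\fa_\zeta=\partial_\zeta+\zeta \mapsto (\rho_t\zeta+\partial_\zeta)$ and $\fa_\zeta^*\mapsto(\rho_t\zeta-\partial_\zeta)$, with $2\xi$ replaced by $2\xi c_t$; then transport the formula for $H^{-1}$ from Lemma \ref{lem:invT0} through $Q(\cdot)Q$ and through the diffeomorphism, which is exactly the displayed formula for $T_0^{-1}$; and finally invoke the mapping properties and the description of the kernel/solution set already proved in Lemma \ref{lem:invT0}.

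Concretely, I would proceed as follows. \textbf{Step 1.} Conjugate: since $Q^2=\mathrm{Id}$ and $Q$ is unitary and self-adjoint, $T_0 a=b$ is equivalent to $(QT_0Q)(Qa)=Qb$, and by the Pauli relations $Q\sigma_1Q=\sigma_3$, $Q\sigma_2Q=-\sigma_2$, $Q\sigma_3Q=\sigma_1$ one computes $QT_0Q = c_t(1+\sigma_3)\xi - D_\zeta(-\sigma_2) + \rho_t\zeta\,\sigma_1 = c_t(1+\sigma_3)\xi + \sigma_1\rho_t\zeta - D_\zeta\sigma_2$, which is exactly $H$ with $\xi\rightsquigarrow c_t\xi$ and $\fa_\zeta$ built from the weight $\rho_t$. \textbf{Step 2.} Rescale: the map $V_tg(\xi,\zeta)=g(\rho_t^{1/2}c_t^{-1}\xi,\rho_t^{-1/2}\zeta)$ (suitably normalized to be $L^2$-unitary) intertwines $\rho_t\zeta\pm\partial_\zeta$ with $\rho_t^{1/2}(\zeta\pm\partial_\zeta)$ and $2\xi c_t$ with $2\xi\rho_t^{1/2}$, so $V_t^{-1}(QT_0Q)V_t$ is a positive multiple of $H$; the prefactors cancel between the two $\fa$-type factors and the one $\xi$-factor in the $(2,2)$ entry, so $H^{-1}$ transports to the displayed $T_0^{-1}$ with the inverses $(\rho_t\zeta\pm\partial_\zeta)^{-1}$ appearing in place of $(\fa_\zeta^{*})^{-1},\fa_\zeta^{-1}$ and the coefficient $2\xi c_t$ in the corner. \textbf{Step 3.} Kernel and regularity: from \eqref{eq:N}, $\mathrm{Ker}\,H$ consists of $f(\xi)h_0(\zeta)(0,1)^t$; pulling this back through $V_t$ and $Q$ and using $Qh_0(\rho_t^{-1/2}\zeta)(0,1)^t \propto h_0^{\rho_t}(\zeta)(1,-1)^t$ with $h_0^{\rho_t}$ the ground state of $\rho_t\zeta\pm\partial_\zeta$, i.e. $\phi_t$ of \eqref{eq:phi}, gives $\mathrm{Ker}\,T_0=N_t$. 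The boundedness $N_t^\perp\cap\mS_{p+1}\to N_t^\perp\cap\mS_p$ follows from the corresponding statement in Lemma \ref{lem:invT0} together with the fact that the scale $\mS_p$ in the $(\xi,\zeta)$ variables is equivalent to the one in the rescaled variables (this equivalence is exactly the remark made right before the lemma, using $C^{-1}\le\rho_t\le C$ and $0<c_0\le c_t<1$). \textbf{Step 4.} Uniqueness/general solution: decompose any $\mS_p$-solution as $a=a^\perp+a^0$ with $a^0\in N_t$; then $T_0a^\perp=b$ forces $a^\perp=T_0^{-1}b$ by the injectivity of $T_0$ on $N_t^\perp$, and $a^0=f(\xi)\phi_t(\zeta)$ is arbitrary.

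The only genuinely nontrivial point is the bookkeeping in Step 2: one must check that the weights $\rho_t^{1/2}$ and the factor $c_t$ conspire so that $T_0^{-1}$ as written is literally the conjugate-and-rescale of $H^{-1}$ — in particular that the $(2,2)$ entry carries $2\xi c_t$ and not, say, $2\xi c_t\rho_t$ or $2\xi c_t/\rho_t$. This is a one-line dimensional-analysis check (each $\fa^{-1}$ contributes $\rho_t^{-1/2}$, the $\xi$-rescaling contributes $\rho_t^{1/2}c_t^{-1}$, so the corner scales as $c_t\cdot\rho_t^{1/2}c_t^{-1}\cdot\rho_t^{-1/2}\cdot\rho_t^{-1/2}\cdot\rho_t^{1/2}\cdot(\text{coefficient of }H^{-1})$, and one reads off $2\xi c_t$), but it is the place where an error would hide. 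Everything else — the unitarity of $Q$, the Pauli identities, the $\mS_p$-boundedness, and the kernel computation — is immediate from what precedes, so I would keep the write-up to essentially the three or four sentences sketched above.
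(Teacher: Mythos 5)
Your proposal is correct and follows essentially the same route as the paper: the paper's proof is precisely to invoke Lemma \ref{lem:invT0} together with the conjugation by $Q$ and the rescaling normalizing $(\rho_t,c_t)$, with the explicit matrix inverse read off exactly as you do (and the kernel pulled back to $N_t$, the $\mS_p$-bounds transported via the equivalence of norms under the change of variables). The only nitpick is the intermediate expression $-D_\zeta(-\sigma_2)$ in your Step 1, which should read $D_\zeta(-\sigma_2)=-D_\zeta\sigma_2$; your final formula for $QT_0Q$ is nevertheless the correct one.
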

\begin{proof}
This is a corollary of Lemma \ref{lem:invT0} and the invertible transform from $T_0$ to $H$.
The operator $T_0$ is thus invertible on the orthogonal complement of $\phi_t(\zetatwo)$ with $T_0a=b$ solvable if and only if $(b,\phi_t)_2=0$ in which case all solutions are given by $a=T_0^{-1}b+f(\xione)\phi_t$ for $f$ arbitrary in $\mS_p$. 
\end{proof}

The solution to the leading equation $T_0a_0=0$ is therefore given by 
\begin{equation}\label{eq:a0}
  a_0(t,\Xitot) = f_0(t,\xione) \phi_t(\zetatwo)
\end{equation}
with $f_0(t,\xione)$ arbitrary at this level. 

%
%
%

\section{Transport equation}\label{sec:transport}
Lemma \ref{lem:T0} states that the subleading equation in \eqref{eq:seqT}, $T_0a_1=-T_1a_0$, admits a solution if and only if $T_1a_0(t,\cdot) \in N_t^\perp$ for every $t$. From the expressions \eqref{eq:a0} of $a_0$ and \eqref{eq:phi} of $N_t$ and $\phi_t(\zeta)$, we deduce that $T_0a_1=-T_1a_0$ is solvable if and only if $\mT f_0 = 0$, where 
\begin{equation}\label{eq:mT}
    \mT f(t,\xione):=  \int_{\mathbb{R}} \phi_t(\zetatwo)\cdot T_1 [f(t,\xione) \phi_t(\zetatwo)] \ d\zetatwo.
\end{equation}
In this section we provide an explicit expression for $\mT$ and for the solutions to $\mT f_0 = 0$ and more generally $\mT f_0=g$. We also provide a functional setting to analyze the map $g\to f_0$ when the dispersion is strongest.

\subsection{Derivation of the transport operator}

\begin{lemm}\label{lem-1h} We have the identity
\begin{equation}\label{eq:transport}
\mT = D_t  - \dfrac{k_t}{2\rho_t}  - j_t \gamma_t \dfrac{\xione D_\xi +  D_\xi \xione}{2} -  \left(\dot{\te_t} \gamma_t + k_t\gamma_t^2  \right)\xione^2.
\end{equation}
\end{lemm}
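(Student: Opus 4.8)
The plan is a direct computation of $\mT f = \int_{\R}\phi_t(\zeta)\cdot T_1[f(t,\xi)\phi_t(\zeta)]\,d\zeta$ from the explicit Gaussian $\phi_t$ in \eqref{eq:phi} and the explicit form of $T_1$ in \eqref{eq-1t}. Write $\phi_t(\zeta)=g_t(\zeta)\,\mathbf{e}$ with $\mathbf{e}=\matrice{1\\-1}$ and $g_t(\zeta)=(\rho_t/4\pi)^{1/4}e^{-\rho_t\zeta^2/2}$; the normalization $\|\phi_t\|_{L^2}=1$ reads $\int g_t^2\,d\zeta=\tfrac12$, and the only other Gaussian moments needed are $\int\zeta^2 g_t^2\,d\zeta=\tfrac{1}{4\rho_t}$ and $\int(\partial_\zeta g_t)^2\,d\zeta=\tfrac{\rho_t}{4}$ (equivalently $\int\phi_t\cdot D_\zeta^2\phi_t\,d\zeta=\tfrac{\rho_t}{2}$ after one integration by parts). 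Since $\phi_t$ is real and $f$ enters $T_1[f\phi_t]$ linearly, $\mT$ is linear in $f$, so it suffices to push $T_1$ through $f(t,\xi)\phi_t(\zeta)$ and pair with $\phi_t$.

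Two reductions cut $T_1$ down to size. First, from the spinor identities $\sigma_1\mathbf{e}=-\mathbf{e}$, $\sigma_2\mathbf{e}=i\matrice{1\\1}$, $\sigma_3\mathbf{e}=\matrice{1\\1}$ one reads off $\mathbf{e}\cdot\sigma_2\mathbf{e}=\mathbf{e}\cdot\sigma_3\mathbf{e}=0$, $\mathbf{e}\cdot\mathbf{e}=2$, $\mathbf{e}\cdot\sigma_1\mathbf{e}=-2$; hence every term of $T_1$ carried by $\sigma_2$ or $\sigma_3$ (including all of the $E_1$ concealed in the $\equiv$ of \eqref{eq-1t}) pairs to zero against $\phi_t$, so we may use the representative displayed in \eqref{eq-1t}. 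Second, $g_t$ is even in $\zeta$, so any scalar operator in $(\zeta,D_\zeta)$ that is odd maps $g_t$ to an odd function, whose $\phi_t$-pairing integrates to zero; this disposes of the remaining $\equiv$. It then remains to pair $\phi_t$ with the images of $f\phi_t$ under $D_t$, under $\dot{\te_t}\gamma_t(D_\zeta^2-\xi^2)$, and under the two $\sigma_1$-terms $\big(\tfrac{\dot{\te_t}s_t}{2}+j_t\gamma_t(D_\xi\xi-D_\zeta\zeta)+k_t(\zeta^2+\gamma_t^2\xi^2)\big)\sigma_1$, each of which collapses to a scalar $g_t^2$-weighted $\zeta$-integral.

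Carrying this out term by term: the $D_t$ piece gives $D_t f$, the $\partial_t\phi_t$ contribution vanishing since $\partial_t\|\phi_t\|^2=0$; the piece $\dot{\te_t}\gamma_t(D_\zeta^2-\xi^2)$ gives $\dot{\te_t}\gamma_t\big(\tfrac{\rho_t}{2}-\xi^2\big)f$. On the $\sigma_1$-terms one picks up the overall sign $-1$ (from $\mathbf{e}\cdot\sigma_1\mathbf{e}=-2$ against $\mathbf{e}\cdot\mathbf{e}=2$): the constant $\tfrac{\dot{\te_t}s_t}{2}$ gives $-\tfrac{\dot{\te_t}s_t}{2}f$; the term $k_t(\zeta^2+\gamma_t^2\xi^2)$ gives $-2k_t\big(\tfrac{1}{4\rho_t}+\tfrac{\gamma_t^2\xi^2}{2}\big)f=-\tfrac{k_t}{2\rho_t}f-k_t\gamma_t^2\xi^2 f$; and, using $D_\zeta\zeta\,g_t=-i(1-\rho_t\zeta^2)g_t$ together with $\int(1-\rho_t\zeta^2)g_t^2\,d\zeta=\tfrac14$, the term $j_t\gamma_t(D_\xi\xi-D_\zeta\zeta)$ gives $-j_t\gamma_t\big(D_\xi\xi+\tfrac{i}{2}\big)f=-j_t\gamma_t\tfrac{\xi D_\xi+D_\xi\xi}{2}f$, the last equality being the commutator identity $D_\xi\xi=\xi D_\xi-i$. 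Adding the four contributions and cancelling $\dot{\te_t}\gamma_t\tfrac{\rho_t}{2}-\tfrac{\dot{\te_t}s_t}{2}=0$ via $\gamma_t\rho_t=s_t$ (from $\gamma_t=B_t/\rho_t^2$, $s_t=B_t/\rho_t$) yields precisely \eqref{eq:transport}.

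There is no conceptual obstacle: the lemma is a bookkeeping identity. The two points that demand care are (a) the operator ordering in the $j_t\gamma_t$ term, where the constant $+\tfrac{i}{2}$ coming from the $D_\zeta\zeta$-average is exactly what symmetrizes $D_\xi\xi$ into $\tfrac{\xi D_\xi+D_\xi\xi}{2}$, and (b) tracking the $\xi$-independent scalar pieces produced by $D_\zeta^2$, $D_\zeta\zeta$ and $\zeta^2$, so that the apparently extra term $\dot{\te_t}\gamma_t\rho_t/2$ is killed by $-\dot{\te_t}s_t/2$. Listing the spinor identities for $\mathbf{e}$ and the three Gaussian moments at the start makes both cancellations transparent.
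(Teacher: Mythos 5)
Your proof is correct and follows essentially the same route as the paper: reduce $T_1$ to the representative in \eqref{eq-1t} via the spinor pairings and $\zeta$-parity, evaluate the Gaussian moments, symmetrize $D_\xi\xione$ with the $\tfrac{i}{2}$ from the $D_\zeta\zetatwo$-average, and cancel $\dot\te_t\gamma_t\rho_t/2$ against $\dot\te_t s_t/2$ using $\gamma_t\rho_t=s_t$. The only (harmless) variation is that you dispose of the $D_t\phi_t$ contribution by invoking $\partial_t\|\phi_t\|^2=0$, whereas the paper computes $D_t\phi_t$ explicitly and observes the $\dot\rho_t$ terms cancel after integration — the two are equivalent.
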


\begin{proof} We first observe that 
\begin{equation}
    \left \langle \matrice{1 \\ -1}, \sigma_j \matrice{1 \\ -1} \right \rangle = -2 \delta_{1j}. 
\end{equation}
Therefore, the terms in $T_1$ that are carried by $\sigma_2, \sigma_3$ do not contribute to $\mT$. Likewise, we observe that
\begin{equation}
    \int_{\mathbb{R}} \phi(\zetatwo) \cdot \zetatwo \phi(\zetatwo) d\zetatwo = \int_{\mathbb{R}} \phi(\zetatwo) \cdot D_\zeta \phi(\zetatwo) d\zetatwo = 0. 
\end{equation}
Hence, for the purpose of computing $\mT$, we can ignore terms in $T_1$ carried by $\sigma_2$ or $\sigma_3$, and terms linear in $(\zetatwo,D_\zeta)$. In other words, we can replace $T_1$ in \eqref{eq:mT} by the right hand side of \eqref{eq-1t}:
\begin{equation}
    D_t + \dot{\te_t} \gamma_t \big(D_\zeta^2-\xione^2\big)  + \left( \dfrac{\dot{\te_t}s_t}{2}  +  j_t \gamma_t (D_\xi \xione - D_\zeta \zetatwo) + k_t \big(\zetatwo^2 + \gamma_t^2 \xione^2 \big)\right) \sigma_1.
\end{equation}

We note moreover that
\begin{equation}\label{eq-0x}
     D_\zeta \zetatwo\phi =  i(\rho_t \zetatwo^2-1) \phi, \quad D_\zeta^2 \phi(\zetatwo) = (\rho_t - \rho_t^2 \zetatwo^2) \phi(\zetatwo), \ \ \ \ D_t \phi(\zetatwo) = i \dfrac{\dot{\rho_t}}{2} \zetatwo^2 \phi(\zetatwo) - i \dfrac{\dot{\rho_t}}{4\rho_t} \phi(\zetatwo).
\end{equation}

Therefore, we deduce that 
\begin{equation}
\begin{array}{l}
   \mT =\Big(\dfrac{\rho_t}{\pi}\Big)^{\frac12} \dint_{\mathbb{R}}    e^{-\rho_t \zetatwo^2} 
   \Big( D_t + i \dfrac{\dot{\rho_t}}{2} \zetatwo^2 -i\frac1{4} \frac{\dot\rho_t}{\rho_t}
   + \dot{\te_t} \gamma_t \big(\rho_t   - \rho_t^2 \zetatwo^2-\xione^2\big) \\[3mm]  \qquad - \Big( \dfrac{\dot{\te_t}s_t}{2}  +  j_t \gamma_t (D_\xi \xione - i \rho_t \zetatwo^2 + i) + k_t \big(\zetatwo^2 + \gamma_t^2 \xione^2 \big)\Big) \Big)
        d\zetatwo.
\end{array}    
\end{equation}

Furthermore, we have
\begin{equation}
   \left(\frac {\rho_t}{\pi}\right)^{\frac12} \int_{\mathbb{R}} 
   e^{-\rho_t \zetatwo^2}  d\zetatwo = 1, \ \ \ \ \left(\frac {\rho_t}{\pi}\right)^{\frac12} \int_{\mathbb{R}} \zetatwo^2 
   e^{-\rho_t \zetatwo^2}  d\zetatwo = \dfrac{1}{2\rho_t}.
\end{equation}
Hence, after performing the integration and realizing that the coefficients involving $\dot\rho_t$ cancel out, we obtain the formula
\begin{align}
\label{eq-0r} \mT & = 
    D_t 
    + \dot{\te_t} \gamma_t \left(\dfrac{\rho_t}{2}-\xione^2\right) - \dfrac{\dot{\te_t}s_t}{2}  -  j_t \gamma_t \left(D_\xi \xione + \dfrac{i}{2}\right) - k_t \left(\dfrac{1}{2\rho_t} + \gamma_t^2 \xione^2 \right)
    \\ 
    \label{eq-0s} 
    & = D_t 
    - \dot{\te_t} \gamma_t \xione^2 -  j_t \gamma_t \dfrac{\xione D_\xi +  D_\xi \xione}{2} - k_t \left(\dfrac{1}{2\rho_t} + \gamma_t^2 \xione^2 \right)
    \\ 
    & = D_t 
    - \dfrac{k_t}{2\rho_t}  -  j_t \gamma_t \dfrac{\xione D_\xi +  D_\xi \xione}{2} -  \left(\dot{\te_t} \gamma_t + k_t\gamma_t^2  \right)\xione^2, 
\end{align}
where in the second line we used $\gamma_t \rho_t = s_t$.
\end{proof}

\subsection{Solving the transport equation} \label{sec-6.2}
To produce an explicit solution of the transport equation $\mT f = 0$ we define $(\lambda_t,\mu_t,\nu_t)$ such that
\begin{equation}\label{eq-1a}
   \lambda_t =  \int_0^t \dfrac{k_s}{2\rho_s}ds, \qquad \nu_t =  2\int_0^t \dfrac{c_0^2}{c_s^2} \left( \dot{\te_s} \gamma_s + k_s \gamma_s^2\right) ds, \qquad e^{\mu_t} = \frac{c_t}{c_0}.
\end{equation}

\begin{lemm}\label{lem-1g} The solution to $\mT f = 0$ is given by
\begin{equation}\label{eq-1n}
  f(t,\xione) =  \exp\left( i \lambda_t + i\dfrac{\nu_t}{2} \left(e^{\mu_t}\xi \right)^2 \right) e^{\frac{\mu_t}2} f\left(0, e^{\mu_t} \xi \right).
\end{equation}
\end{lemm}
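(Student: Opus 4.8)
The plan is to read $\mT f = 0$ as a first-order linear transport equation in $(t,\xione)$ and integrate it by the method of characteristics. First I would substitute $D_t = -i\p_t$, $D_\xi = -i\p_\xi$ and $\frac{\xione D_\xi + D_\xi \xione}{2} = -i\big(\xione\p_\xi + \tfrac12\big)$ into the expression \eqref{eq:transport} for $\mT$; after multiplying the resulting identity by $i$, the equation $\mT f = 0$ takes the form
\[
  \p_t f = j_t\gamma_t\,\xione\,\p_\xi f + \left( \dfrac{j_t\gamma_t}{2} + \dfrac{i k_t}{2\rho_t} + i\big(\dot{\te_t}\gamma_t + k_t\gamma_t^2\big)\xione^2 \right) f .
\]

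The key structural input is the middle identity in \eqref{eq-1d}, namely $j_t\gamma_t = \frac{d}{dt}\ln c_t$: it makes the transport field explicitly integrable. The characteristic curves, defined by $\dot{\xione}(t) = -\,j_t\gamma_t\,\xione(t)$, are then $\xione(t) = (c_0/c_t)\,\xione(0)$, so that $e^{\mu_t}\xione(t)$ is conserved along characteristics, where $e^{\mu_t} = c_t/c_0$ as in \eqref{eq-1a}. In particular the characteristic through a point $(t,\xione)$ meets $\{t=0\}$ at $e^{\mu_t}\xione$, which is the source of the argument $f\big(0,e^{\mu_t}\xione\big)$ in \eqref{eq-1n}, and along it $\xione(s)^2 = (e^{\mu_t}\xione)^2\,c_0^2/c_s^2$.

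It then remains to integrate the zeroth-order coefficient along a characteristic and match the three resulting factors. I expect the term $\frac{j_s\gamma_s}{2}$ to integrate to $\tfrac12\int_0^t \tfrac{d}{ds}\ln c_s\,ds = \tfrac12\ln(c_t/c_0) = \tfrac{\mu_t}{2}$, producing the amplitude $e^{\mu_t/2}$ --- equivalently, $f(0,\cdot)\mapsto e^{\mu_t/2}f(0,e^{\mu_t}\cdot)$ is the $L^2$-unitary dilation, consistent with the formal self-adjointness of $\mT - D_t$ in the $\xione$ variable; the term $\frac{i k_s}{2\rho_s}$ to integrate to $i\lambda_t$ by the definition of $\lambda_t$; and the term $i\big(\dot{\te_s}\gamma_s + k_s\gamma_s^2\big)\xione(s)^2 = i\big(\dot{\te_s}\gamma_s + k_s\gamma_s^2\big)(e^{\mu_t}\xione)^2 c_0^2/c_s^2$ to integrate to $i\tfrac{\nu_t}{2}(e^{\mu_t}\xione)^2$ by the definition of $\nu_t$. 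Multiplying the three factors against $f\big(0,e^{\mu_t}\xione\big)$ should give exactly \eqref{eq-1n}, and uniqueness of the solution with prescribed data at $t=0$ is the standard uniqueness for linear first-order equations with smooth coefficients; alternatively one can simply plug \eqref{eq-1n} back into $\mT f = 0$ and check. The only place that requires care --- and essentially the only obstacle --- is the sign bookkeeping produced by $D_t$, $D_\xi$ and the symmetric ordering in $\xione D_\xi + D_\xi \xione$, together with invoking \eqref{eq-1d} at the right moment so that the characteristics close in elementary form.
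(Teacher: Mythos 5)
Your proposal is correct: I checked the sign bookkeeping ($\frac{\xione D_\xi+D_\xi\xione}{2}=-i(\xione\p_\xi+\tfrac12)$), the characteristic flow $\xione(s)=(c_0/c_s)\,e^{\mu_t}\xione$ coming from $j_t\gamma_t=\frac{d}{dt}\ln c_t$, and the three integrals along characteristics, which do produce exactly $e^{\mu_t/2}$, $e^{i\lambda_t}$ and $e^{i\frac{\nu_t}{2}(e^{\mu_t}\xione)^2}$ given the definitions \eqref{eq-1a}; so the formula \eqref{eq-1n} follows, and uniqueness is standard.

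Your route differs in presentation from the paper's. You integrate the scalar transport equation directly by the method of characteristics, with the zeroth-order coefficient supplying an integrating factor. The paper instead works at the operator level: it recognizes $\tfrac12(\xione D_\xi+D_\xi\xione)$ as the generator of the unitary dilation group $U_\mu F(\xione)=e^{\mu/2}F(e^\mu\xione)$, conjugates $\mT$ by $U_{\mu_t}$ (using the same identity $\dot\mu_t=j_t\gamma_t$ from \eqref{eq-1d}) to kill the dilation term, and then observes that the remainder $D_t-\dot\lambda_t-\tfrac12\dot\nu_t\xione^2$ is a conjugation of $D_t$ by the phase $e^{i\lambda_t+i\nu_t\xione^2/2}$. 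The two computations are of course the same change of variables plus integrating factor, but the operator form is what the paper reuses downstream: the propagator $U(t,s)$ obtained this way is fed into Duhamel for the inhomogeneous problem $\mT f=g$ (Lemma \ref{lem:mT}) and into the dispersion-scaled norms of \S\ref{sec:functional}, where the phase $e^{i\frac12\nu\xi^2}$ and the dilation appear as explicit bounded operators. Your characteristics argument is more elementary and self-contained for the homogeneous statement, but if you wanted the stability estimates of the later sections you would essentially have to repackage it into the paper's conjugation form anyway.
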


\begin{proof}
1. We recall that the self-adjoint operator $\frac{1}{2} (\xione D_\xi +  D_\xi \xione)$ generates the semigroup of ($L^2-$unitary) dilations; that is, for $F$ independent of $t$:
\begin{equation}\label{eq-1k}
     \left(D_\mu - \dfrac{\xione D_\xi + D_\xi \xione}{2}\right) U_\mu F = 0, \quad U_\mu F(\xione) = e^{\frac\mu 2} F\big(e^\mu \xione\big).
\end{equation}
We note that $e^{\mu_0} =1$; moreover, thanks to \eqref{eq-1d}, we have $\dot{\mu_t} = \p_t \ln(c_t) = j_t \gamma_t$. Therefore, by the chain rule and \eqref{eq-1k}, we have
\begin{equation}
    U_{\mu_t}^{-1} \left(D_t - j_t \gamma_t \dfrac{\xione D_\xi + D_\xi \xione}{2}\right) U_{\mu_t} = D_t.
\end{equation}

2. We deduce, using $U_{\mu_t}^{-1} \xione^2 U_{\mu_t}=e^{-2\mu_t}\xione^2$, that  
\begin{align}
   U_{\mu_t}^{-1} \mT U_{\mu_t} & =  D_t - \dfrac{k_t}{2\rho_t} - e^{-2\mu_t} \left( \dot{\te_t} \gamma_t + k_t \gamma_t^2\right) \xione^2
   = D_t - \dot{\lambda_t} - \dfrac{1}{2}\dot{\nu_t} \xione^2  = e^{i\lambda_t + i\nu_t \frac{\xione^2}{2}} D_t e^{-i\lambda_t - i \nu_t \frac{\xione^2}{2}},
\end{align}
where we used the relations \eqref{eq-1a} for $\lambda_t$ and $\nu_t$. For the formula \eqref{eq-1k} for $U_\mu$, we deduce that the solution to $\mT f=0=\mT U_{\mu_t}e^{i\lambda_t + i \frac{\nu_t}{2} \xione^2} g$ is $D_tg=0$ and hence
\begin{equation}
    f(t,\cdot) = U_{\mu_t} e^{i\lambda_t + i \frac{\nu_t}{2} \xione^2} f(0,\cdot).
\end{equation}
Since $e^{\mu_t} = c_t/c_0$, we conclude that
\begin{equation}\label{eq-3r}
    f(t,\xione) = e^{\frac{\mu_t}2} \exp\left( i \lambda_t + i\frac{\nu_t}{2} \left( e^{\mu_t}\xi \right)^2 \right) f\left(0, e^{\mu_t}\xi \right).
\end{equation}
This completes the proof. 
\end{proof}

\subsection{Dispersive estimate}

In this section, we study the $L^\infty$-decay of the leading order solution of \eqref{eq:L}, $\psi_0 = \fU_t a_0$, where $a_0$ takes the form prescribed by \eqref{eq:a0} and \eqref{eq-1n}:
\begin{equation} \label{eq-00m}
    a_0 = e^{i\lambda_t} \tilde a_0 \matrice{1\\-1}, \quad  \tilde a_0(t,\xi,\zeta)= \left( \dfrac{\rho_t}{4\pi}\right)^{\frac14}  e^{\frac{\mu_t}2}  \exp\left(i\dfrac{\nu_t}{2} \left( e^{\mu_t}\xione \right)^2 -\frac{\rho_t}{2} \zetatwo^2 \right)  \hf\left(e^{\mu_t}\xione \right),
\end{equation}
where $\hf(\xi)$ denotes the Fourier transform of a function $f(z_1)$. 



\begin{lemm} \label{lem-1i} (i) With $\psi_0 = \fU_t a_0$ and $a_0$ given by \eqref{eq-00m}, we have whenever $B_t \neq 0$ or $\nu_t \neq 0$:
\begin{equation}\label{eq-3i}
    \psi_0(t,z) = e^{i\lambda_t} \bU_t \big( G_t *_1 f \big) 
    (e^{-\mu_t} z_1, e^{\mu_t} z_2) 
    \matrice{1\\-1},
\end{equation}
where $*_1$ denotes convolution with respect to the first variable and 
    \begin{equation}\label{eq-3h}
    G_t(e^{-\mu_t} z_1,e^{\mu_t}z_2) = 
    \left( \dfrac{\rho_t}{4\pi}\right)^{\frac14}
    \Big( \frac{e^{\mu_t}}{Q_t} \Big)^{\frac12}
     e^{-\frac12 \rho_tz_2^2}  e^{-\frac12 Q_t^{-1}(z_1+i s_t z_2)^2}, \quad Q_t = s_t\gamma_t -i e^{2\mu_t}\nu_t.
\end{equation}
We recall that $s_t=\gamma_t\rho_t$. 

(ii) In particular, there exists $C > 0$ such that as long as $\nu_t \neq 0$:
\begin{equation}\label{eq-3k}
    \sup_{z \in \R^2} \big| \psi_0(t,z) \big| \leq   C  \min \Big ( \dfrac{\| f \|_{L^1}}{|\nu_t|^{1/2}}  \,,\,\|\fhat\|_{L_1} \Big).
\end{equation}
\end{lemm}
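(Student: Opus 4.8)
The plan is to reduce both parts to the explicit Gaussian (oscillatory) integral defining $\mV_t a_0$, and then to exploit that the remaining map $\bU_t$ is a pointwise isometry on $\Cm^2$-valued functions.

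For part (i): since $\psi_0 = \fU_t a_0 = \bU_t\mV_t a_0$, I would first spell out $\mV_t a_0$ from \eqref{eq:FS} and \eqref{eq-00m}, obtaining $e^{i\lambda_t}$ times $\matrice{1\\-1}$ times
\[
   \Big(\tfrac{\rho_t}{4\pi}\Big)^{1/4} \tfrac{e^{\mu_t/2}}{\sqrt{2\pi}} \int_{\Rm} e^{iz_1\xione}\exp\Big(i\tfrac{\nu_t}{2}e^{2\mu_t}\xione^2 - \tfrac{\rho_t}{2}(z_2+\gamma_t\xione)^2\Big)\hf\big(e^{\mu_t}\xione\big)\,d\xione .
\]
Then I would expand the exponent in $\xione$: its $\xione^2$-coefficient collapses to $-\tfrac12 Q_t$ (using $\rho_t\gamma_t^2 = s_t\gamma_t$), its $\xione$-coefficient to $i(z_1+is_t z_2)$ (using $\rho_t\gamma_t = s_t$), leaving an overall factor $e^{-\rho_t z_2^2/2}$. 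Next I would substitute $\eta = e^{\mu_t}\xione$ to turn $\hf(e^{\mu_t}\xione)$ into $\hf(\eta)$, replace $\hf$ by the Fourier inversion formula, interchange the two integrations, and evaluate the Gaussian integral in $\eta$ via $\int_\Rm e^{-a\eta^2+ib\eta}\,d\eta = \sqrt{\pi/a}\,e^{-b^2/(4a)}$ with the principal branch. Collecting the $e^{\pm\mu_t}$, $\rho_t$ and $Q_t$ prefactors and undoing the substitutions should exhibit $\mV_t a_0$ as $e^{i\lambda_t}\matrice{1\\-1}$ times the convolution $(G_t *_1 f)(e^{-\mu_t}z_1, e^{\mu_t}z_2)$, with $G_t$ exactly as in \eqref{eq-3h}; applying $\bU_t$ then gives \eqref{eq-3i}. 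The one subtlety is that when $B_t = 0$ we have $\gamma_t = s_t = 0$ and $Q_t = -ie^{2\mu_t}\nu_t$ is purely imaginary, so the $\eta$-integral is only conditionally convergent and the interchange must be justified by the standard limiting-absorption argument (regularize $Q_t$ by adding a small positive real part, evaluate, and pass to the limit); when $B_t\neq 0$, $\mathrm{Re}\,Q_t = \gamma_t^2\rho_t > 0$ and all integrals converge absolutely; and when $\nu_t = 0$ but $B_t \neq 0$, $Q_t = s_t\gamma_t > 0$ is real and the same identity holds with a genuine Gaussian.

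For part (ii): first, $\bU_t = U_{2,\varphi_t}U_{3,\theta_t}\mR_{\theta_t}$ is a composition of constant unitary $2\times2$ matrices with the unitary pullback by $R_{\theta_t}$, so $|\bU_t g(z)| = |g(R_{\theta_t}z)|$ pointwise, whence $\|\psi_0(t,\cdot)\|_{L^\infty} = \|\mV_t a_0(t,\cdot)\|_{L^\infty}$. The bound by $\|\hf\|_{L^1}$ is then immediate from the displayed integral: put absolute values inside, use $e^{-\rho_t(z_2+\gamma_t\xione)^2/2}\leq 1$, substitute $\eta = e^{\mu_t}\xione$, and recall $\rho_t, e^{\pm\mu_t}$ are uniformly bounded in $t$. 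For the bound by $\|f\|_{L^1}/|\nu_t|^{1/2}$ I would use the convolution representation \eqref{eq-3i}: with $z' = R_{\theta_t}z$, $|\psi_0(t,z)| = \sqrt2\,|(G_t *_1 f)(e^{-\mu_t}z'_1,e^{\mu_t}z'_2)| \leq \sqrt2\,\|f\|_{L^1}\sup_{\Rm^2}|G_t|$, so it suffices to show $\sup|G_t| \leq C|\nu_t|^{-1/2}$. By \eqref{eq-3h} the modulus of $G_t$ is $(\rho_t/4\pi)^{1/4}e^{\mu_t/2}|Q_t|^{-1/2}$ times $\exp\big(-\tfrac{\rho_t}{2}z_2^2 - \tfrac12\mathrm{Re}(Q_t^{-1}(z_1+is_t z_2)^2)\big)$, and the decisive point is that this exponent is $\le 0$ for all real $z_1,z_2$. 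Expanding with $\overline{Q_t} = s_t\gamma_t + ie^{2\mu_t}\nu_t$, one finds the exponent equals $-\tfrac{1}{2|Q_t|^2}$ times $s_t\gamma_t z_1^2 - 2s_t e^{2\mu_t}\nu_t z_1 z_2 + (\rho_t|Q_t|^2 - s_t^3\gamma_t)z_2^2$; using $s_t = \gamma_t\rho_t$ one computes $\rho_t|Q_t|^2 - s_t^3\gamma_t = \rho_t e^{4\mu_t}\nu_t^2$, and then
\[
  -\tfrac{\rho_t}{2}z_2^2 - \tfrac12\mathrm{Re}\big(Q_t^{-1}(z_1+is_t z_2)^2\big) = -\dfrac{\rho_t}{2|Q_t|^2}\big(\gamma_t z_1 - e^{2\mu_t}\nu_t z_2\big)^2 \leq 0,
\]
the $z_1^2$, $z_1 z_2$ and $z_2^2$ terms collapsing into a perfect square precisely because the symmetric matrix of that quadratic form has vanishing determinant $s_t e^{4\mu_t}\nu_t^2(\gamma_t\rho_t - s_t) = 0$ (and nonnegative trace); the case $\gamma_t = 0$ is a regular limit. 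Hence $|G_t|\leq (\rho_t/4\pi)^{1/4}e^{\mu_t/2}|Q_t|^{-1/2}$, and since $|Q_t|\geq|\mathrm{Im}\,Q_t| = e^{2\mu_t}|\nu_t|$ while $\rho_t, e^{\pm\mu_t}$ are uniformly bounded, this is $\leq C|\nu_t|^{-1/2}$. The minimum of the two bounds is \eqref{eq-3k}.

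The step I expect to be the real obstacle is exactly the non-negativity of the exponent of $G_t$ in part (ii). A priori $G_t$ carries a factor $e^{-\frac12 Q_t^{-1}(z_1+is_t z_2)^2}$ whose real part is genuinely positive for some $(z_1,z_2)$, so that the boundedness of $G_t$ is not manifest; it holds only after this factor combines with the transverse Gaussian $e^{-\rho_t z_2^2/2}$, and the cancellation relies on the exact relation $s_t = \gamma_t\rho_t$ between the geometric parameters --- equivalently, on the vanishing determinant above. Everything else (the Gaussian bookkeeping of part (i), the pointwise unitarity of $\bU_t$, the Fourier-$L^1$ bound) is routine, the only remaining delicate point being the conditional convergence of the $\eta$-integral when $B_t = 0$, which is handled by regularization.
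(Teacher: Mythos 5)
Your proposal is correct and follows essentially the same route as the paper: write $\mV_t a_0$ as an explicit Gaussian integral, identify the convolution structure with the Gaussian kernel $G_t$ (with the same $Q_t=s_t\gamma_t-ie^{2\mu_t}\nu_t$), and then prove (ii) via Young's $L^1$--$L^\infty$ inequality together with the perfect-square identity $\rho_t z_2^2+\Re\big(Q_t^{-1}(z_1+is_tz_2)^2\big)=\rho_t|Q_t|^{-2}(\gamma_t z_1-e^{2\mu_t}\nu_t z_2)^2\ge 0$, the bound $|Q_t|\gtrsim|\nu_t|$, the direct $\|\hat f\|_{L^1}$ bound, and the fact that $\bU_t$ preserves sup-norms. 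Your added care about the Fresnel case ($Q_t$ purely imaginary when $B_t=0$) is a fine refinement of a point the paper treats implicitly.
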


\begin{proof} We recall that $\fU_t = \bU_t \mV_t$ and we first write a formula for $\mV_t \tilde a_0$. We have:
\begin{equation}
    \mV_t \tilde a_0(z) =  \left( \dfrac{\rho_t}{4\pi}\right)^{\frac14}e^{\frac{\mu_t}2} \int_\R e^{iz_1\xione}  \exp\left( i\dfrac{\nu_t}{2} \left( e^{\mu_t}\xione \right)^2 -\frac{\rho_t}{2} (z_2+\gamma_t \xi)^2 \right) \fhat\left(e^{\mu_t}\xione \right) \dfrac{d\xione}{\sqrt{2\pi}}.
\end{equation}
Hence,
\begin{align}
    \mV_t \tilde a_0(e^{\mu_t}z_1,\frac{z_2}{e^{\mu_t}}) &= \left( \dfrac{\rho_t}{4\pi}\right)^{\frac14}
    \int_\R e^{i e^{\mu_t} z_1\xione}  
    \exp\left( i\dfrac{\nu_t}{2} \left( e^{\mu_t}\xione \right)^2 -\frac{\rho_te^{-2\mu_t}}{2} (z_2+\gamma_t e^{\mu_t}\xi)^2 \right) 
    \fhat\left(e^{\mu_t}\xione \right) \dfrac{e^{\frac{\mu_t}2} d\xione}{\sqrt{2\pi}}    
    \\ \label{eq:secondbound}
    &=\left( \dfrac{\rho_t}{4\pi}\right)^{\frac14}
     \int_\R e^{iz_1\xione}  \exp\left( i\dfrac{\nu_t}{2} 
    \xione^2 -\frac{\rho_te^{-2\mu_t}}{2} (z_2+\gamma_t \xi)^2 \right) 
    \fhat\left(\xione \right) \dfrac{e^{-\frac{\mu_t}2}d\xione}{\sqrt{2\pi}}.
\end{align}
The left-hand side involves the inverse Fourier transform in $z_1$ of a product and may therefore be written as the convolution
\begin{align}\label{eq:convta0}
    \mV_t \tilde a_0(e^{\mu_t}z_1,z_2) = (f *_1 G_t)(z_1,e^{\mu_t}z_2)
\end{align}
where the Gaussian $G_t$ is given by
\begin{equation}\label{eq-3m}
    G_t(z) = \left( \dfrac{\rho_t}{4\pi}\right)^{\frac14}e^{\frac{-\mu_t}2} \int_\R e^{iz_1 \xi }  \exp\left( i\dfrac{\nu_t}{2} \xi^2 -\frac{\rho_te^{-2\mu_t}}{2} (z_2+\gamma_t \xi )^2 \right) \dfrac{d\xione}{\sqrt{2\pi}},
\end{equation}
so that 
\begin{align}
    G_t(z_1,e^{\mu_t}z_2) &= \left( \dfrac{\rho_t}{4\pi}\right)^{\frac14}e^{\frac{-\mu_t}2} e^{-\frac12 \rho_tz_2^2} \dint_{\Rm} e^{-\frac12 \tilde Q_t\xi^2} e^{i\xi(z_1+i\rho_t\gamma_t e^{-\mu_t} z_2)} \dfrac{d\xione}{\sqrt{2\pi}},\qquad \tilde Q_t = \rho_t\gamma_t^2e^{-2\mu_t} -i\nu_t 
    \\ \label{eq:Qt}
    &= \left( \dfrac{\rho_t}{4\pi}\right)^{\frac14}e^{\frac{-\mu_t}2} e^{-\frac12 \rho_tz_2^2} \tilde Q_t^{-\frac12} e^{-\frac12 \tilde Q_t^{-1}(z_1+i\rho_t\gamma_t e^{-\mu_t} z_2)^2},
\end{align}
as the Fourier transform of a Gaussian function when $\tilde Q_t\not=0$.  Since $Q_t=e^{2\mu_t}\tilde Q_t$, we find \eqref{eq-3h}. Then \eqref{eq-00m} follows from applying $\bU_t$ to $\mV_t a_0$ using \eqref{eq-00m} and  \eqref{eq:convta0}.

To prove the second part of the lemma, we first take $L^\infty$-norms on both sides of \eqref{eq:convta0} and we apply Young's $L^1-L^\infty$ convolution inequality  (in the variable $z_1$). This produces:
\begin{align}\label{eq-3q}
    \sup_{z \in \R} \big| \mV_t \tilde a_0(z) \big| & = \sup_{z_2 \in \R} \sup_{z_1 \in \R}\big| f *_1 G_t(z) \big| \leq \sup_{z_2 \in \R} \| f \|_{L^1} \sup_{z_1 \in \R} \big| G_t(z) \big| = \| f \|_{L^1}  \sup_{z \in \R} \big| G_t(z) \big|.
\end{align}
The Gaussian $|G_t|$ attains its maximum at $z = 0$ because the real part of the quadratic form in $z$ in \eqref{eq-3h} satisfies
\begin{align}
    \Re \Big(\rho_tz_2^2 + \frac{1}{ Q_t}(z_1+is_t z_2)^2\Big) = \frac{\rho_t}{(s_t\gamma_t)^2+\tilde \nu_t^2}  (\gamma_t z_1 - \tilde \nu_t z_2)^2 \geq0, \qquad \tilde\nu_t=e^{2\mu_t}\nu_t,
\end{align}
as we verify by an elementary computation using $\rho_t\gamma_t=s_t$.
Using \eqref{eq-3q}, we find
\begin{equation}
        \sup_{z \in \R} \big| \mV_t \tilde a_0(z) \big| \leq \| f \|_{L^1}  \big| G_t(0) \big| =  \left( \dfrac{\rho_t}{4\pi}\right)^{\frac14}\left( \dfrac{e^{\mu_t}}{|Q_t|} \right)^{\frac12} \| f \|_{L^1}.
\end{equation}
From \eqref{eq:secondbound}, we also obtain that $\sup_{z \in \R} \big| \mV_t \tilde a_0(z) \big|\leq C \|\hat f\|_{L^1}$.
To end up with \eqref{eq-3k}, we observe that $\bU_t$ preserves $L^\infty$-norms, that $|Q_t| \geq |\nu_t|$ and that $c_t$ is bounded above and below for all times $t$. 
\end{proof}

When $Q_t = 0$, we have $\nu_t = 0$, $c_t = 1$, $\rho_t = r_t$ and the formulas \eqref{eq:convta0} and \eqref{eq-3m} remain valid. But instead of being a Gaussian, $G_t$ is now a multiple of the Dirac mass, and
\begin{equation}
    G_t(z) =  c_0^{1/2} \left( \dfrac{r_t}{4\pi}\right)^{1/4}\exp\left( -\frac{r_t}{2} z_2^2 \right) \delta_0(z_1), 
    \qquad \psi_0(t,z) = c_0^{1/2} \left( \dfrac{r_t}{4\pi}\right)^{1/4} \bU_t e^{-\frac{r_t}{2} z_2^2} f(z_1).
\end{equation}




\subsection{Dispersion-dependent functional setting} 
\label{sec:functional}


The solution of $\mT f=0$ with given initial condition admits an explicit expression as we saw in Lemma \ref{lem-1g}. In the analysis of the asymptotic expansion of $\psi(t,z)$ in powers of $\eps$, we need to solve transport equations of the form $\mT f=g$ with time-dependent source terms. To quantify the stability of the inverse transport operators and that of other relevant transforms, we introduce the following functional setting.

We recall the functional spaces $\mS_p$ were defined in section \ref{sec:hilbert}. To handle the time-dependence for $t\in [0,\rT]$ of the wavepackets, we introduce the spaces for $p\in\Nm$ and $k=\lfloor \frac p2 \rfloor \in \Nm$ (i.e., $p=2k$ or $p=2k+1$) defined for an interval ${\rm I}\subset\Rm$ by
\begin{equation}\label{eq:CSp}
  \mCS_p({\rm I},\Rm^d,\Cm^q) = \cap_{r=0}^k C^r({\rm I}; \mS_{p-2r}(\Rm^d,\Cm^q))
\end{equation}
with norm given by the sum of the natural norms for the above spaces (see \eqref{eq:normsmCS} below). The spaces are constructed so that any derivative in time corresponds to a loss of order $2$ in the remaining variable. 


To quantify the effects of dispersion, we define 
\begin{align} \label{eq:anut}
    \anut= 1+\sup_{0\leq t\leq \rT} |\nu_{t}|  \quad \mbox{ and } \quad \vsa=\anut^{-\frac12}.
\end{align}
We saw in Lemma \ref{lem-1g} that dispersion resulted in a multiplication operator of the form $\phi_t(\xi)=e^{i\frac12 \nu(t)\xi^2}$. 
The operator $D_t \phi_t = \phi_t (D_t+\frac12 \nu'(t)\xi^2)$. This explains why the spaces $\mCS_p$ are constructed so that both $\partial_t$ and $\xi^2$ map  $\mCS_{p+2}$ to $\mCS_p$. 

Similarly, $\partial_\xi \phi_t = \phi_t(\partial_\xi+i\nu(t)\xi)$, so that (the operator of multiplication by) $\phi_t$ is large as an operator on $\mCS_1$ when $\anut$ is. 
Yet clearly, $\phi_t^2$ is comparable to $\phi_t$ in the same sense. Since the construction of our wavepackets requires repeated application of operators of the form $\phi_t$, we introduce scaled metrics on $\mS_p$ and $\mCS_p$ so that application of $\phi_t$ results in a bounded operation independent of $\anut$. 

This is simply achieved by replacing $\fa_\xi^*=-\partial_\xi+\xi$ by $\fa_{\xi\vsa}^*=-\vsa \partial_\xi + \vsa^{-1}\xi$ and endowing the spaces $\mS_p(\Rm)$ and $\mS_p(\Rm^2)$ respectively with the norms
\begin{align} \label{eq:normsmS}
 \|\psi\|^2_{p\vsa} = \|(\fa_{\xi\vsa}^*)^{p} \psi\|^2_0 \quad \mbox{ and } \quad \|\psi\|_{p\vsa}^2 =  \dsum_{j=0}^{p} \|(\fa_{\xi\vsa}^*)^j (\fa_\zeta^*)^{p-j}\psi \|^2_0.
\end{align}
We call $\mS_{p\vsa}$ the spaces $\mS_p$ endowed with these dispersion-scaled norms.

The space $\mCS_p({\rm I},\Rm^d,\Cm)$  are similarly endowed for $d=1,2$ with the norms (and called $\mCS_{p\vsa}$)
\begin{align}\label{eq:normsmCS}
     \|\psi\|^2_{p\vsa} = \sup_{t\in {\rm I}} \dsum_{r=0}^{\lfloor \frac p2 \rfloor} \|(\fa_{\xi\vsa}^*)^{p-2r} \partial_t^r \psi\|^2_0 \ \mbox{ and } \ \|\psi\|_{p\vsa}^2 =  \sup_{t\in {\rm I}} \dsum_{r=0}^{\lfloor \frac p2 \rfloor} \dsum_{j=0}^{p-2r} \|(\fa_{\xi\vsa}^*)^j (\fa_\zeta^*)^{p-2r-j} \partial_t^r\psi \|^2_0.
\end{align}
Spaces of vector-valued functions are similarly constructed componentwise.
Note that $\vsa^{p}\|f\|_p \lesssim  \|f\|_{p\vsa} \lesssim  \vsa^{-p} \|f\|_{p}$. Here and below, we use the notation $a\lesssim b$ to mean the existence of a $\vsa-$independent constant $C$ such that $a\leq Cb$.
Here is a number of dispersion-dependent estimates we will be using.
%
\begin{lemm}\label{lem:pvsa}
 All operator bounds below are meant to be $\vsa$-independent bounds.
 
 Any operator $B\in \{ \vsa D_\xi, \vsa^{-1}\xi, D_\zeta, \zeta\}$ is bounded from $\mS_{p+1,\vsa}$ to $\mS_{p\vsa}$ and from $\mCS_{p+1,\vsa}$ to $\mCS_{p\vsa}$.  The operator $\partial_t$ is bounded from $\mCS_{p+2,\vsa}$ to $\mCS_{p\vsa}$.
 
 Let $\phi(\xi)=e^{i\frac12 \nu \xi^2}$ with $|\nu|\lesssim \anut$. Then the operator of multiplication by $\phi(\xi)$ is bounded from $\mS_{p\vsa}$ to itself. 
 
 Let $\phi(t,\xi)=e^{i\frac12 \nu(t) \xi^2}$ with $\sup_{t\in {\rm I}}|\nu(t)|\lesssim \anut$ and $\sup_{t\in {\rm I}} |\nu^{(j)}(t)|\leq C_j$ for $j\geq1$. Then the operator of multiplication by $\phi(t,\xi)$ is bounded from $\mCS_{p\vsa}$ to itself. 
\end{lemm}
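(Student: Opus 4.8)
## Plan of proof

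The plan is to verify each of the four claims by reducing them to the elementary commutation relations between the operators $\fa_{\xi\vsa}^*$, $\fa_\zeta^*$ and the remaining symbols, using the defining norms \eqref{eq:normsmS} and \eqref{eq:normsmCS}. For the first assertion, I would argue that each $B\in\{\vsa D_\xi,\vsa^{-1}\xi,D_\zeta,\zeta\}$ is, up to a constant, a combination of a creation and an annihilation operator. Indeed $\vsa^{-1}\xi = \tfrac12(\fa_{\xi\vsa}^*+\fa_{\xi\vsa})$ and $\vsa D_\xi = \tfrac i2(\fa_{\xi\vsa}^*-\fa_{\xi\vsa})$, and similarly for $\zeta,D_\zeta$ in terms of $\fa_\zeta^*,\fa_\zeta$. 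Since $\fa_{\xi\vsa}$ and $\fa_\zeta$ are (uniformly in $\vsa$) bounded from $\mS_{p+1,\vsa}$ to $\mS_{p\vsa}$ and one commutes $\fa_{\xi\vsa}^*$ through the appropriate number of ladder operators (the commutator $[\fa_{\xi\vsa},\fa_{\xi\vsa}^*]=2$ is $\vsa$-independent), the boundedness on $\mS_{p\vsa}$ follows by induction on $p$ exactly as in the classical argument sketched after \eqref{eq:normp1}. The statement for $\mCS_{p\vsa}$ then follows termwise from the definition \eqref{eq:normsmCS}, since $B$ commutes with $\partial_t$. The boundedness of $\partial_t$ from $\mCS_{p+2,\vsa}$ to $\mCS_{p\vsa}$ is immediate from \eqref{eq:normsmCS}: applying $\partial_t$ shifts the index $r$ by one, i.e.\ $\lfloor (p+2)/2\rfloor = \lfloor p/2\rfloor+1$, so each term $\|(\fa_{\xi\vsa}^*)^{p-2r}\partial_t^r(\partial_t\psi)\|_0 = \|(\fa_{\xi\vsa}^*)^{(p+2)-2(r+1)}\partial_t^{r+1}\psi\|_0$ already appears in the $\mCS_{p+2,\vsa}$ norm.

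For the third claim I would compute $(\fa_{\xi\vsa}^*)^p (\phi \psi)$ and commute $\phi$ to the left. The key identity is $\fa_{\xi\vsa}^*\,\phi = \phi\,(\fa_{\xi\vsa}^* + i\nu\vsa\,\xi) = \phi(\fa_{\xi\vsa}^* + i\nu\vsa^2\,\vsa^{-1}\xi)$. By hypothesis $|\nu|\lesssim \anut = \vsa^{-2}$, so $|\nu\vsa^2|\lesssim 1$; hence $\fa_{\xi\vsa}^*\phi = \phi\, B'$ where $B' = \fa_{\xi\vsa}^* + (\text{$\vsa$-bounded scalar})\cdot\vsa^{-1}\xi$ is, by the first part of the lemma, bounded from $\mS_{p+1,\vsa}$ to $\mS_{p\vsa}$ with $\vsa$-independent norm. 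Iterating $p$ times and using that $|\phi|=1$ pointwise (so multiplication by $\phi$ is an isometry on $\mS_0=L^2$) gives $\|\phi\psi\|_{p\vsa}=\|(\fa_{\xi\vsa}^*)^p(\phi\psi)\|_0 = \|(B')^p\psi\|_0\lesssim\|\psi\|_{p\vsa}$. This is precisely the reason the dispersion-scaled norms were introduced: the scaling $\vsa$ is chosen so that the single $\xi$ produced by commuting through $\phi$ comes weighted by $\vsa$, matching the scaling built into $\fa_{\xi\vsa}^*$, and absorbing the large factor $\nu\sim\anut=\vsa^{-2}$.

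The fourth claim combines the third with the first. Writing out the $\mCS_{p\vsa}$ norm \eqref{eq:normsmCS}, one must bound $\|(\fa_{\xi\vsa}^*)^{p-2r}\partial_t^r(\phi(t,\cdot)\psi)\|_0$. Expanding $\partial_t^r(\phi\psi)$ by the Leibniz rule produces terms $\phi\cdot(\partial_t^{a}(\text{phase factors}))\cdot\partial_t^{r-a}\psi$; each time derivative of $\phi(t,\xi)=e^{i\nu(t)\xi^2/2}$ brings down a factor $\tfrac i2\nu^{(j)}(t)\xi^2$ plus products of lower derivatives, and commuting the resulting polynomial in $\xi$ together with the $\fa_{\xi\vsa}^*$'s to the left of $\phi$ creates, as before, only $\vsa$-bounded coefficients provided $|\nu^{(j)}|\leq C_j$ for $j\geq1$ and $|\nu|\lesssim\anut$ — note that a factor $\xi^2$ is $\vsa^2\cdot(\vsa^{-1}\xi)^2$, losing two orders in $\mS_{p\vsa}$ but gaining the $\vsa^2\sim\anut^{-1}$ needed to tame $\nu$. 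The resulting bound is a finite sum, with $\vsa$-independent constants, of terms $\|(\fa_{\xi\vsa}^*)^{p-2r'}\partial_t^{r'}\psi\|_0$ with $r'\leq r$, each controlled by $\|\psi\|_{p\vsa}$. I expect the only mildly delicate bookkeeping — and thus the main obstacle — to be this last Leibniz expansion: one must track carefully that every $\xi^{2j}$ coming from $j$ time-derivatives of the phase is paired with the factor $\nu^{(\cdot)}\vsa^{2j}$ so that no uncompensated power of $\anut$ survives, and that the loss of $2j$ orders in the Hermite scale is exactly what the two-orders-per-time-derivative structure of $\mCS_p$ is designed to absorb. Everything else is the standard harmonic-oscillator ladder bookkeeping already used implicitly in \S\ref{sec:hilbert}.
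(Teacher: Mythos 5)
Your proposal is correct and follows essentially the same route as the paper: the first claims via the ladder/quadratic-form structure of $\fa_{\xi\vsa}^*$ and the definition \eqref{eq:normsmCS}, the third by commuting $\fa_{\xi\vsa}^*$ through $\phi$ and absorbing $\vsa\nu\xi=(\vsa^2\nu)\,\vsa^{-1}\xi$ with $\vsa^2|\nu|\lesssim 1$, and the fourth by writing $\partial_t^r(\phi\psi)=\phi\psi_r$ with prefactors involving only $\nu^{(j)}$, $j\geq1$, and invoking the time-independent case. Only cosmetic slips: the commutator sign should be $-i\nu\vsa\xi$, and in the time-dependent part the factor $\vsa^{2}$ accompanying $\xi^{2}$ is not what ``tames $\nu$'' (the prefactors contain only the bounded derivatives $\nu^{(j)}$; $\nu$ itself is handled solely in the commutation through the exponential), but the estimates close all the same.
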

\begin{proof}
The first statement comes from the construction of the spaces since $\fa^*_{\xi\vsa}$ controls $\vsa\partial_\xi$ and $\vsa^{-1}\xi$ in the sense that \begin{align}
    \|\fa^*_{\xi\vsa} \psi\|_0^2 = \vsa^2\|\partial_\xi \psi\|_0^2 + \vsa^{-2}\|\xi \psi\|_0^2 + \|\psi\|^2_0.
\end{align}
One obtains from \eqref{eq:normsmCS} for $p\geq2$ that $\|\partial_t\psi\|^2_{p\vsa}\lesssim \|\psi\|^2_{p-2,\vsa}$ and hence the bound on $\partial_t$.

Consider now the operator of multiplication by $\phi(\xi)$ in one dimension $d=1$. We wish to show that 
\begin{align}
    \|\phi\psi\|_{p\vsa}^2 =  \|(\fa_{\xi\vsa}^*)^p (\phi\psi)\|^2_0 \lesssim  \|(\fa_{\xi\vsa}^*)^p \psi\|^2_0 = \|\psi\|_{p\vsa}^2.
\end{align}
This holds when $p=0$. Assume it holds for $p-1\geq0$. Then
\begin{align}
    \|\phi\psi\|_{p\vsa}^2 =  \|(\fa_{\xi\vsa}^*)^{p-1} (\fa_{\xi\vsa}^*)(\phi\psi)\|^2_0 =\|(\fa_{\xi\vsa}^*)^{p-1} (\phi\psi_1)\|^2_0 \lesssim \|\psi_1\|_{p-1,\vsa}^2
\end{align}
by induction hypothesis, where $\psi_1(\xi)=(\fa_{\xi\vsa}^*-i\vsa \nu \xi)\psi$. By construction of the functional spaces and the above result for the operator $\vsa^{-1}\xi$ knowing that $\vsa |\nu|\lesssim \vsa^{-1}$, we find $\|\psi_1\|_{p-1,\vsa}^2\lesssim \|\psi\|_{p,\vsa}^2$ and the result is proved. The same proof applies in two dimensions $d=2$ as well using the norm for $\mS_p(\Rm^2)$ in \eqref{eq:normsmS} and that the commutator $[\fa_\zeta^*,\phi]=0$.

The proof in the time-dependent setting uses that
\begin{align}
    \partial_t^r (\phi\psi)= \phi \big(\partial_t+i\frac12 \nu'(t)\xi^2\big)^r \psi = \phi \psi_r,\qquad \psi_r=\big(\dsum_{j=0}^r \nu_j(t) \xi^{2(r-j)} \partial_t^j \big) \psi
\end{align}
for smooth and bounded functions $\nu_j(t)$ independent of $\anut$ by assumption on $\nu(t)$. Assume dimension $d=1$ as $d=2$ is treated similarly.  Using that $\xi^{2(r-j)} \partial_t^j$ maps $\mCS_{p\vsa}$ to $\mCS_{p-2r,\vsa}$ for each $0\leq j\leq r$ and the bounds proved above in the time-independent setting, we find
\begin{align}
    \|(\fa_{\xi\vsa}^*)^{p-2r} \partial_t^r \phi\psi\|_0 = \|(\fa_{\xi\vsa}^*)^{p-2r} \phi \psi_r\|_0 \lesssim \|\psi_r\|_{p-2r,\vsa} \lesssim \|\psi\|_{p\vsa}.
\end{align}
The explicit expression of the norms in \eqref{eq:normsmCS} and the above estimate conclude the proof of the lemma.
\end{proof}

\subsection{Stability of the transport and other operators}
\label{sec:invtrans}

We then have the following stability result for the transport equation $\mT f=g$. We recall that $\vsa=\anut^{-\frac12}$.

\begin{lemm}\label{lem:mT}
 Let $p\in \Nm$. The solution $f_0(t,\xione)$ of $\mT f_0=g$ on $[0,\rT]\times\Rm$ with initial condition $f_0(0,\xione)=\fhat(\xione)\in \mS_{p\vsa}(\Rm,\Cm)$ and source term $g\in \mCS_{p\vsa}(\Rm,\Cm)$ satisfies the estimate
  \begin{equation}
    \|f_0\|_{p\vsa} \leq C_{p} \big(\|\fhat\|_{p\vsa} + \aver{\rT} \|g\|_{p\vsa}\big)
 \end{equation}
where $\aver{\rT}=1+\rT$ and $C_{p}$ is independent of $\rT$. 
\end{lemm}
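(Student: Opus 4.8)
The plan is to combine the explicit solution formula from Lemma \ref{lem-1g} with a Duhamel representation and the dispersion-scaled operator bounds from Lemma \ref{lem:pvsa}. First I would introduce the two-parameter propagator $U(t,\tau)$ associated to $\mT = D_t - L_t$, where the solution of $\mT f = 0$ with data prescribed at time $\tau$ is $f(t,\cdot) = U(t,\tau) f(\tau,\cdot)$. As in Lemma \ref{lem-1g}, but with the initial time $0$ replaced by $\tau$, this propagator is given explicitly by
\begin{equation}
  U(t,\tau) F(\xi) = \exp\!\Big(i(\lambda_t - \lambda_\tau) + i\tfrac{\nu_t - \nu_\tau}{2}\big(e^{\mu_t-\mu_\tau}\xi\big)^2\Big)\, e^{\frac{\mu_t-\mu_\tau}{2}} F\big(e^{\mu_t-\mu_\tau}\xi\big),
\end{equation}
that is, a real dilation by the bounded factor $e^{\mu_t-\mu_\tau} = c_t/c_\tau$ followed by multiplication by a unimodular quadratic phase $\phi_{t,\tau}(\xi) = e^{i(\lambda_t-\lambda_\tau) + i\frac{\nu_t-\nu_\tau}{2}(e^{\mu_t-\mu_\tau}\xi)^2}$. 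Duhamel's formula then gives $f_0(t,\cdot) = U(t,0)\fhat + \int_0^t U(t,\tau) g(\tau,\cdot)\,d\tau$.

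Next I would bound $U(t,\tau)$ on the dispersion-scaled spaces $\mCS_{p\vsa}$. The dilation by $c_t/c_\tau$ is bounded with $\vsa$-independent norm on every $\mS_{p\vsa}$ and $\mCS_{p\vsa}$ because $c_t$ is bounded above and below and the spaces are built from $\fa_{\xi\vsa}^*$, whose dilation-conjugate differs by bounded multiplicative factors. The phase $\phi_{t,\tau}$ has the form $e^{i\frac12 \nu(t)\xi^2}$ with $|\nu(t)| = |\nu_t - \nu_\tau| e^{2(\mu_t-\mu_\tau)} \lesssim \anut$ and with time-derivatives $|\nu^{(j)}(t)| \leq C_j$ uniformly (these are fixed geometric quantities, independent of $\rT$ and of $\vsa$), so the last two statements of Lemma \ref{lem:pvsa} apply and give that multiplication by $\phi_{t,\tau}$ is bounded from $\mCS_{p\vsa}$ to itself, uniformly in $\tau, t \in [0,\rT]$. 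Hence $\|U(t,\tau)F\|_{p\vsa} \leq C_p \|F\|_{p\vsa}$ with $C_p$ independent of $\rT$ and $\vsa$. Applying this to the Duhamel formula,
\begin{equation}
  \|f_0\|_{p\vsa} \leq \sup_{t\in[0,\rT]}\Big( \|U(t,0)\fhat\|_{p\vsa} + \int_0^t \|U(t,\tau) g(\tau,\cdot)\|_{p\vsa}\, d\tau\Big) \leq C_p\big(\|\fhat\|_{p\vsa} + \rT\, \|g\|_{p\vsa}\big),
\end{equation}
which is the claimed estimate with $\aver{\rT} = 1 + \rT$.

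The one technical point that needs care — and which I expect to be the main obstacle — is the time-regularity bookkeeping hidden in the $\mCS_{p\vsa}$ norm: the norm controls $\lfloor p/2\rfloor$ time-derivatives, each costing two orders in the $\xi$-variable, and one must check that differentiating the Duhamel integral (which brings a boundary term $g(t,\cdot)$ and an integral of $\partial_t U(t,\tau) g(\tau,\cdot)$) stays within budget. Here $\partial_t U(t,\tau)$ produces a factor $\partial_t \phi_{t,\tau}$ which contributes $\dot\lambda_t + \dot\nu_t e^{2(\mu_t-\mu_\tau)}\xi^2 + \ldots$, i.e.\ at worst a multiplication by $\xi^2$ with bounded coefficient — exactly the loss of two orders that the definition of $\mCS_p$ in \eqref{eq:CSp} is designed to absorb, since $\xi^2 : \mCS_{p\vsa} \to \mCS_{p-2,\vsa}$ with $\vsa$-independent norm by Lemma \ref{lem:pvsa} (using $\vsa^{-1}\xi$ twice). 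Iterating, the $r$-th time derivative of the propagator maps $\mCS_{p\vsa}$ to $\mCS_{p-2r,\vsa}$ boundedly, so each of the finitely many terms arising from $\partial_t^r$ of the Duhamel integral is controlled by $\|g\|_{p\vsa}$ up to a factor $\aver{\rT}$, and the supremum over $t \in [0,\rT]$ closes the estimate. Collecting the contributions of all $r \leq \lfloor p/2\rfloor$ and both spatial dimensions (the argument is identical in $d=1$ and $d=2$, using the two-variable norm in \eqref{eq:normsmS} and $[\fa_\zeta^*,\phi_{t,\tau}]=0$) yields the constant $C_p$.
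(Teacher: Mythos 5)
Your route is the paper's own: extend the explicit solution of Lemma \ref{lem-1g} to data prescribed at an arbitrary time, represent the source term by Duhamel, bound the resulting two-time solution operator on $\mCS_{p\vsa}$ by splitting it into a smooth modulation, a bounded dilation, and a quadratic-phase multiplication handled by Lemma \ref{lem:pvsa}, with the time integration supplying the factor $\aver{\rT}$.

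Two slips, however, need repair. First, your explicit propagator is slightly off: composing the formula of Lemma \ref{lem-1g} at times $t$ and $\tau$ gives the quadratic phase $\exp\bigl(i\tfrac{\nu_t-\nu_\tau}{2}(e^{\mu_t}\xione)^2\bigr)$, i.e.\ the coefficient $(\nu_t-\nu_\tau)e^{2\mu_t}=\tilde\nu_{\tau,t}$ of the paper, not $(\nu_t-\nu_\tau)e^{2(\mu_t-\mu_\tau)}$ as you wrote (your version violates the cocycle identity $U(t,\tau)U(\tau,0)=U(t,0)$); this is benign for the estimates since the two differ by the bounded factor $e^{2\mu_\tau}$. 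Second, and more substantively, your justification for invoking Lemma \ref{lem:pvsa} — that the time derivatives of the phase coefficient are ``fixed geometric quantities, independent of $\rT$ and of $\vsa$'' — is false in general: differentiating $(\nu_t-\nu_\tau)e^{2(\mu_t-\mu_\tau)}$ (or the corrected $(\nu_t-\nu_\tau)e^{2\mu_t}$) in $t$ produces the term $2\dot\mu_t(\nu_t-\nu_\tau)e^{2(\mu_t-\mu_\tau)}$, which is of size $\anut$ whenever $c_t$ is non-constant along the trajectory; the same oversight reappears in your final paragraph, where you assert that $\partial_t\phi_{t,\tau}$ is ``at worst $\xi^2$ with bounded coefficient.'' The paper sidesteps exactly this point by placing the quadratic phase \emph{before} the dilation, i.e.\ with coefficient $\tilde\nu_{\tau,t}e^{-2\mu_{\tau,t}}=(\nu_t-\nu_\tau)e^{2\mu_\tau}$ (its parenthetical ``with $\nu(t)$ replaced by $\nu(t)e^{-2\mu(t)}$''), whose $t$-derivatives are uniformly bounded because $\mu_\tau$ is frozen; alternatively one can observe that in the scaled norms multiplication by $\xione^2$ costs only $\vsa^{2}=\anut^{-1}$, which absorbs the offending factor $\anut$. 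With either repair the remainder of your argument closes exactly as in the paper.
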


\begin{proof}
We first adapt Lemma \ref{lem-1g} to handle volume sources and define
\[
  \lambda_{s,t} = \int_s^t \frac{k_\tau}{2\rho_\tau}d\tau,\qquad \tilde\nu_{s,t} = 2 \dint_s^t \frac{c_t^2}{c_\tau^2}\big( \dot\theta_\tau\gamma_\tau + k_\tau \gamma_\tau^2)d\tau,\qquad e^{\mu_{s,t}} = \frac{c_t}{c_s},
\]
to obtain that the solution to $\mT f=0$ with initial condition $f(s,\cdot)$ is given by
\begin{equation}\label{eq:mTstot}
    f(t,\xione) =  \exp\left( i \lambda_{s,t} + i\frac{\tilde \nu_{s,t}}{2} \xione^2 \right)  e^{\mu_{s,t}/2} f\left(s, e^{\mu_{s,t}}\xione \right).
\end{equation}
By an application of the Duhamel principle, the solution to $\mT f=g$ for $0\leq t\leq T$ with $f(0,\xione)=0$ is thus given explicitly by
\begin{equation}\label{eq:solmTsource}
  f(t,\xione)=i\dint_0^t \exp\Big(i\lambda_{s,t}+i\frac12 \tilde \nu_{s,t} \xione^2\Big) e^{\mu_{s,t}/2} g(s, e^{\mu_{s,t}}\xione) ds.
\end{equation}

Consider the solution in \eqref{eq:mTstot} at a fixed time $s$ with the explicit time-dependence $\lambda(t)=\lambda_{s,t}$, $e^{\mu(t)}=e^{\mu_{s,t}}$ and $\nu(t)=\tilde \nu_{s,t}$ to simplify notation. We prove the lemma for the operator
\begin{align}
    f(t,\xi)  \to  e^{i\lambda(t)} e^{\frac{\mu(t)}2} e^{i\frac12 \nu(t)\xi^2} f(t,e^{\mu(t)}\xi).
\end{align}

The term $e^{i\lambda(t)}$ generates a smooth in time modulation. Since $\mu(t)$ is smooth and bounded above and below by positive constants independent of $t$, $e^{\mu(t)/2}$ is also smooth.  The operator $f(t,\xi)\to e^{\frac{\mu(t)}2}e^{i\lambda(t)} f(t,\xi)$ is therefore bounded in the norms $\|\cdot\|_{p\vsa}$.

All other time dependent coefficients are smooth and uniformly bounded in time independent of $\rT$ except for $\nu(t)$ that may grow linearly with $\nu'(t)$ uniformly bounded. Consider the operator $f(t,\xi)\mapsto h(t,\xi)=f(t,e^{\mu(t)}\xi)$ with $\fhat\in \mCS_{p\vsa}$. Plugging $h(t,\xi)$ into the definition of the norms \eqref{eq:normsmCS}, we directly obtain that $\|h\|_{p\vsa}\lesssim \|f\|_{p\vsa}$.

The final transformation (with $\nu(t)$  replaced by $\nu(t)e^{-2\mu(t)}$)
\begin{align}
     f(t,\xi) \mapsto h(t,\xi) = f(t,\xi) e^{i\frac12 \nu(t) \xi^2}
\end{align}
was analyzed in  Lemma \ref{lem:pvsa}. This concludes the analysis of the map from $\fhat(\xi)$ to $f_0(t,\xi)$. 

The volume source term $g(t,\xi)$ is treated similarly by the Duhamel principle, with an additional possible integration in time that provides the extra multiplication by $\aver{\rT}$. This concludes the proof of the lemma.
\end{proof}

We conclude this section with a summary of the operators we introduced to construct approximations of the Dirac equation and some estimates they satisfy.

The terms in $T_0$ and $T_1$ that contribute to the construction of the leading term $a_0(t,\Xitot)$ in the formal expansion $a=\sum_{j\geq0} \eps^{\frac j2} a_j$ were given in Lemma \ref{lem-1d}. Constructing higher-order terms $a_j$ and proving convergence results require estimates on the operators $T_j$, which are constructed as in the proof of Lemma \ref{lem-1d} and given explicitly by
\begin{equation}\label{eq:T1explicit}
  T_1 = D_t + \mA + \tilde T_1,\qquad T_j=\tilde T_j,\quad j\geq2,
\end{equation}
with $\fU_t^*D_t \fU_t=D_t+\mA$ where
\begin{equation} \label{eq:mA}
  \mA:=- \frac 12(\dot\varphi_t\sigma_2+\dot\theta_t(-s_t\sigma_1+c_t\sigma_3)-\dot\theta_t (-D_\xione D_\zetatwo -\zetatwo\xione+\gamma_t(\xione^2-D_\zetatwo^2))) + \dot\gamma_t \xione D_\zetatwo,
\end{equation}
and for $j\geq1$,
\begin{equation}\label{eq:Tjexplicit}
  \tilde T_j =\dsum_{|\alpha|=j+1}\mV_t^* z^\alpha \mV_t\nu_\alpha\cdot\sigma, \qquad \Cm^3\ni 
  \nu_\alpha = \frac{1}{\alpha!} \tilde R_{2,\varphi_t}\tilde R_{3,\theta_t} (R_{-\theta_t}\nabla)^\alpha h(y_t).
\end{equation}
We verify for completeness that $\mV_t^* z^\alpha \mV_t =(\zetatwo-\gamma_t\xione)^{\alpha_2}(-1)^{\alpha_1} (D_\zetatwo+\gamma_tD_\xione)^{\alpha_1}$.
 
We recall that $(j_t,k_t)=(\nu_{111},\nu_{021})$ while $\nu_{201}=0$ was important to obtain an explicit expression for the transport solution since the transport operator involves no term of the form $D^2_{\xione}$. We now summarize the estimates we will be needing.



\begin{lemm}\label{lem:boundsT}
  We may write $T_0^{-1}=T_{01}^{-1}+T_{02}^{-1}$ and $T_1=T_{11}+T_{12}$ such that the following operators are bounded with $\anut-$independent bounds:
  \begin{align}
    \vsa^{-1}T_{01}^{-1}:& \qquad N_t^\perp\cap\mCS_{p+1,\vsa} \to \mCS_{p\vsa},  & 
    T_{02}^{-1}:& \qquad N_t^\perp\cap \mCS_{p\vsa} \to \mCS_{p\vsa}, \\ 
    \vsa T_{11}:& \qquad  \mCS_{p+1,\vsa} \to \mCS_{p\vsa},  & 
    T_{12}:& \qquad  \mCS_{p+2,\vsa} \to \mCS_{p\vsa}, \\
    \vsa^{j+1} T_{j}:& \qquad  \mCS_{p+j+1,\vsa} \to \mCS_{p\vsa} ,\quad j\geq2,  & 
    \aver{T}^{-1} \mT^{-1} :& \qquad  \mCS_{p\vsa} \to \mCS_{p\vsa}.
  \end{align}
\end{lemm}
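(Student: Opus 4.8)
The strategy is to verify each displayed bound by splitting the relevant operator into the "dangerous" part (where dispersion-induced factors $\anut$ could, in principle, appear and must be compensated by powers of $\vsa$) and a benign part (with $\anut$-independent coefficients), and then to apply the toolkit already assembled in Lemma \ref{lem:pvsa} and Lemma \ref{lem:mT}. The guiding principle is the one recorded before Lemma \ref{lem:pvsa}: the scaled creation operator $\fa_{\xi\vsa}^*$ controls precisely $\vsa D_\xi$, $\vsa^{-1}\xi$ and the identity, while $\fa_\zeta^*$ controls $D_\zeta$, $\zeta$ and the identity; and $\partial_t$ costs two units of regularity. So every monomial $\xi^a \zeta^b D_\xi^c D_\zeta^d \partial_t^e$ is bounded from $\mCS_{p+a+b+c+d+2e,\vsa}$ to $\mCS_{p\vsa}$ \emph{after} inserting the weight $\vsa^{c-a}$ (i.e. $\vsa$ for each $D_\xi$ and $\vsa^{-1}$ for each $\xi$ coming from the scaling mismatch), provided the geometric coefficient in front is $\anut$-independent.

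\textbf{Step 1: the operators $T_j$, $j\geq 2$, and the splitting of $T_1$.} From \eqref{eq:Tjexplicit} each $\tilde T_j$ is a finite linear combination of operators $\mV_t^* z^\alpha \mV_t \, \nu_\alpha\cdot\sigma$ with $|\alpha| = j+1$, and the identity $\mV_t^* z^\alpha \mV_t = (\zetatwo - \gamma_t\xione)^{\alpha_2}(-1)^{\alpha_1}(D_\zetatwo + \gamma_t D_\xione)^{\alpha_1}$ shows it is a differential operator of total order $j+1$ in $(\xione, \zetatwo, D_\xione, D_\zetatwo)$, whose coefficients (built from $\tilde R_{2,\varphi_t}\tilde R_{3,\theta_t}(R_{-\theta_t}\nabla)^\alpha h(y_t)$ and powers of $\gamma_t$) are bounded in $t$ and $\anut$-independent. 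Every monomial carries at most $j+1$ factors, and each factor $D_\xione$ or $\xione$ contributes a $\vsa^{\pm 1}$ when matched to $\fa_{\xi\vsa}^*$; bounding the worst case $\xione^{j+1}$ crudely by $\vsa^{-(j+1)}$ gives the stated bound $\vsa^{j+1}T_j:\mCS_{p+j+1,\vsa}\to\mCS_{p\vsa}$ via the first statement of Lemma \ref{lem:pvsa}. For $T_1 = D_t + \mA + \tilde T_1$ we set $T_{11}$ to collect $\mA$ together with $\tilde T_1$ minus its $D_\zetatwo^2$, $\xione^2$ pieces — i.e. the first-order (in the combined count) part — while $T_{12}$ collects $D_t$ and the genuinely second-order pieces ($\gamma_t(\xione^2 - D_\zetatwo^2)$ in $\mA$ and any order-two monomials of $\tilde T_1$). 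One checks from \eqref{eq:mA} and \eqref{eq:Tjexplicit}$|_{j=1}$ that $T_{11}$ is order one with bounded $\anut$-independent coefficients, hence $\vsa T_{11}:\mCS_{p+1,\vsa}\to\mCS_{p\vsa}$; and $T_{12}$ is order two (or a single $\partial_t$), hence $T_{12}:\mCS_{p+2,\vsa}\to\mCS_{p\vsa}$, using the $\partial_t$ bound from Lemma \ref{lem:pvsa}.

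\textbf{Step 2: the splitting of $T_0^{-1}$.} From Lemma \ref{lem:T0}, after the normalizing change of variables $\rho_t = c_t = 1$ and the conjugation by $Q$, the inverse is the matrix with entries built from $\fa_\zeta^{-1}$, $(\fa_\zeta^*)^{-1}$ and the composition $\fa_\zeta^{-1}\,2\xi\,(\fa_\zeta^*)^{-1}$; the first two entries gain one full unit of regularity \emph{in the $\zeta$-direction only} with $\anut$-independent norm (they do not touch $\xi$), so they land in $T_{02}^{-1}$, which maps $N_t^\perp\cap\mCS_{p\vsa}\to\mCS_{p\vsa}$ (in fact into $\mCS_{p+1,\vsa}$ in the $\zeta$-direction, but the weaker statement suffices). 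The remaining entry contains one factor of $\xi$; matched against $\fa_{\xi\vsa}^*$ this factor costs $\vsa^{-1}$, which is exactly the $\vsa$ prefactor in $\vsa^{-1}T_{01}^{-1}:N_t^\perp\cap\mCS_{p+1,\vsa}\to\mCS_{p\vsa}$. One must also track the harmless change of variables $(\xi,\zeta)\mapsto(\rho_t^{1/2}c_t^{-1}\xi,\rho_t^{-1/2}\zeta)$ and check, as the paper already notes, that it induces an equivalence of the $\mS_p$-norms with $\anut$-independent constants because $c_t,\rho_t$ are bounded above and below; the same for the $\mCS$-norms since the change of variables is $t$-dependent but smoothly and boundedly so (its time-derivatives are bounded, so it preserves $\mCS_{p\vsa}$ by the time-dependent part of Lemma \ref{lem:pvsa}).

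\textbf{Step 3: the bound on $\mT^{-1}$.} This is the restatement of Lemma \ref{lem:mT}, now phrased in the scaled norms: $\aver{T}^{-1}\mT^{-1}:\mCS_{p\vsa}\to\mCS_{p\vsa}$ is exactly the estimate $\|f_0\|_{p\vsa}\le C_p(\|\fhat\|_{p\vsa} + \aver{\rT}\|g\|_{p\vsa})$ with zero initial data, divided by $\aver{\rT} = \aver{T}$. So nothing new is needed beyond invoking Lemma \ref{lem:mT}. \textbf{The main obstacle} I anticipate is not any single estimate but the bookkeeping in Step 1–2: one must confirm that the coefficient appearing in front of the \emph{top-order} monomial in each of $T_{11}$, $T_{12}$, $T_j$ is genuinely $\anut$-independent — in particular that $\gamma_t$, $j_t$, $k_t$, $\dot\theta_t$, $\dot\varphi_t$, $\dot\gamma_t$ are all bounded in $t$ (which follows from the geometric hypotheses and the formulas \eqref{eq-1d}), so that the only place $\anut$ enters is through explicit $\nu_t$-dependence, which is already neutralized by the $\vsa$-scaling of the norms and by Lemma \ref{lem:pvsa}. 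Once this is verified the six bounds follow by assembling the pieces as above.
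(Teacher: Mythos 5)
Your Steps 2 and 3 are essentially the paper's own argument: the decomposition of $T_0^{-1}$ that isolates the entry $-2\xi c_t(\rho_t\zeta+\partial_\zeta)^{-1}(\rho_t\zeta-\partial_\zeta)^{-1}$ (conjugated by $Q$) as $T_{01}^{-1}$ and leaves the purely $\zeta$-directional inverses in $T_{02}^{-1}$ is exactly the one used there, and the last bound is, as you say, a restatement of Lemma \ref{lem:mT}. The treatment of $T_j$, $j\geq2$, by monomial counting with the weights of Lemma \ref{lem:pvsa} is also the paper's route; note only that the extremal monomial is $\gamma_t^{j+1}D_\xi^{j+1}$, coming from $(D_\zeta+\gamma_t D_\xi)^{\alpha_1}$ in \eqref{eq:Tjexplicit}, not $\xi^{j+1}$: powers of $\xi$ are the cheap factors in the $\vsa$-scaled norms, while each $D_\xi$ costs $\vsa^{-1}$.

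The genuine gap is your splitting of $T_1$. The paper sorts the terms of $T_1$ by the power of $D_\xi$ they carry: $T_{11}$ collects every contribution that is linear in $D_\xi$ (and one checks that no contribution is quadratic in $D_\xi$), while $T_{12}$ is the $D_\xi$-free remainder together with $D_t$. You instead sort by total differential order and put ``$D_t$ and the genuinely second-order pieces'' into $T_{12}$. But $T_1$ contains second-order monomials carrying one factor of $D_\xi$: the $\dot\theta_t D_\xi D_\zeta$ term inside $\mA$ in \eqref{eq:mA}, and the $\gamma_t D_\xi$-terms produced by $\mV_t^*z^\alpha\mV_t$ with $\alpha_1=1$ (visible as $j_t\gamma_t D_\xi\xi$ in \eqref{eq-1t}). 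Each factor $D_\xi$ costs $\vsa^{-1}=\anut^{1/2}$ in the dispersion-scaled norms --- Lemma \ref{lem:pvsa} only bounds $\vsa D_\xi$ --- so under your split these terms land in $T_{12}$, which carries no compensating power of $\vsa$, and the claimed $\anut$-independent bound $T_{12}:\mCS_{p+2,\vsa}\to\mCS_{p\vsa}$ fails (test it on $e^{i\nu\xi^2/2}\chi(\xi)$ with $\nu\sim\anut$). Your description of $T_{11}$ is also internally inconsistent: it ``collects $\mA$'', which contains second-order terms, yet is called the first-order part. The repair is precisely the principle you announce in your preamble and then abandon in Step 1: the only expensive letter in these norms is $D_\xi$, so the decomposition of $T_1$, and the placement of the single power of $\vsa$ in the statement, must be organized by the number of $D_\xi$ factors, not by total order.
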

\begin{proof}
From Lemma \ref{lem:T0} and the decomposition
\begin{align}
    T_{01}^{-1} = Q \matrice{ 0 & 0 \\ 0 & - 2\xi c_t (\rho_t\zeta+\partial_\zeta)^{-1}(\rho_t\zeta-\partial_\zeta)^{-1} } Q ,\quad T_{02}^{-1}= T_0^{-1}-T_{01}^{-1},
\end{align}
and Lemma \ref{lem:pvsa}, we obtain the above first two bounds.

We define $T_1=T_{11}+T_{12}$ with $T_{11}$ the contribution that is linear in $D_\xi$ while $T_{12}$ accounts for the rest (no contribution in $T_1$ is quadratic in $D_\xi$). The above corresponding bounds then follow from Lemma \ref{lem:pvsa} for quadratic expressions in $D_\xi$, $D_\zeta$, $\xi$, and $\zeta$. The same lemma is used to bound $T_j$ for $j\geq2$. The final estimate is a repeat of Lemma \ref{lem:mT}.
\end{proof}

%
%
%
%
%
%
%
%
%
%
%

%
\section{Asymptotic expansion and error estimates} \label{sec:error}
%

We recall that our objective is to construct approximations of solutions $\Psi(t,x)$ of the Dirac equation \eqref{eq:D1}. Our first step was to perform a gauge transformation $\tilde\Psi(t,x)=e^{-i\chi(t,x)/\eps}\Psi(t,x)$ replacing the Dirac operator $\Di$ by $\tilde\Di+R$ in \eqref{eq:D2}. Since $R$ in \eqref{eq:R} is a negligible perturbation to arbitrary order in $\eps$, our second step was to look for wavepackets $\tilde\Psi(t,x)$ in the kernel of $\eps D_t+\tilde\Di$. 

Wavepackets in natural coordinates $\psi(t,z)=\eps^{\frac12} S\tilde\Psi (t,z)$ with the scaling $S$ defined in \eqref{eq:S} then solve $L\psi=0$ in \eqref{eq:L}. Further transformations resulted in the definition of $a(\Xitot)=\fU_t^* \psi(\Xitot)$, where $\fU_t=\bU_t\mV_t$ with $\bU_t$ defined in \eqref{eq:bUt} and $\mV_t$ in \eqref{eq:FS}. The problem $L\psi=0$ is then equivalent to $Ta=0$ for $T=\fU_t^* L \fU_t$. Writing $L=\sum_{j\geq0} \eps^{\frac j2}L_j$ with $L_j$ presented in \eqref{eq:L0}-\eqref{eq:Lj}, we have a corresponding expansion $T=\sum_{j\geq0} \eps^{\frac j2}T_j$ with $T_j=\fU_t^* L_j \fU_t$.

\subsection{Construction of the asymptotic wavepacket.} \label{sec:construction}
Using the notation recalled above, we now construct approximations $a^J=\sum_{j=0}^J \eps^{\frac j2} a_j$ of $a(t,\Xitot)$ solution of $Ta=0$. Plugging the expansion for $a^J$ in the equation $Ta^J=0$, using $T=\sum_{j\geq0}\eps^{\frac j2} T_j$, and equating like powers of $\eps$ gives the sequence of equations \eqref{eq:seqT}, which we recall here:
\begin{equation}\label{eq:seqT2}
  \sum_{k=0}^j T_k a_{j-k}=0,\qquad 0\leq j\leq J.
\end{equation}
We solve these equations in turn. 

The leading order equation $T_0a_0=0$ combined with Lemma \ref{lem:T0} shows that $a_0(t,\Xitot)=f_0(t,\xione)\phi_t(\zetatwo)$ with $\phi_t$ defined in \eqref{eq:phi}. The compatibility condition for the next equation $T_0a_1=-T_1a_0$ implies $\mT f_0=0$ with the transport operator $\mT$ defined in Lemma \ref{eq:transport}. For a fixed initial condition $f_0(0,\xione)=\fhat(\xione)$, Lemma \ref{lem:mT} provides a unique solution $f_0(t,\xione)$ of $\mT f_0=0$ and a leading term $a_0(t,\Xitot)=f_0(t,\xione)\phi_t(\zetatwo)$.

Consider next the construction of $a_1$, solution to $T_0a_1+T_1a_0=0$. It is given according to Lemma \ref{lem:T0} by
\[
  a_1= -T_0^{-1} T_1 a_0 + f_1 \phi_t
\]
with $f_1$ arbitrary at this stage. We will consider regularity properties in a lemma below.


We now extend the construction to higher-order approximations.
%
%
%
%
Let $j\geq2$ and assume $a_{k}$ for $0\leq k\leq j-1$ constructed except for $f_{j-1}$.  In order to define $a_j$, we impose the compatibility condition
\[
   \Big( \sum_{k=1}^j T_k a_{j-k},\phi_t\Big)_2=0\quad\mbox{ or equivalently }\quad \mT f_{j-1} = g_{j-1},\quad g_{j-1}=- \Big(\sum_{k=2}^j T_k a_{j-k},\phi_t\Big)_2.
\]
We recall that $(\cdot,\cdot)_2$ is the standard inner product in $L^2(\Rm,\Cm^2)$ in the $\zetatwo$ variable. By lemma \ref{lem:mT}, this is a well-posed transport equation for $f_{j-1}$. Then, by lemma \ref{lem:T0},
\[
  a_j= T_0^{-1} \big( -\sum_{k=1}^j T_k a_{j-k} \big)+ f_j \phi_t. 
\]  
This constructs $a_j$ for $0\leq j\leq J$ iteratively and we set $f_J=0$ for concreteness. 
This concludes the construction of the approximation
\begin{align}
    a^J=\sum_{j=0}^J\eps^{\frac j2}a_j
\end{align}
of formal order $\eps^{\frac{J+1}2}$ of $a$ solution of $Ta=0$. 

Note that the initial conditions for $a_j(0,\Xitot)$ for $j\geq1$ are defined implicitly by the above construction. Only $\fhat(\xione)$ in the initial condition $a_0(0,\Xitot)=\fhat(\xione)\phi_t(\zetatwo)$ is prescribed. Our construction aims to propagate wavepackets that belong to an appropriate (non-dispersive) branch of continuous spectrum of $\tilde\Di$. The initial condition for $a^J$ ensures that the latter belongs to that branch with sufficient accuracy.

The terms of the above expansion satisfy the following estimate: 
\begin{lemm}\label{lem:regaj}
Let $a_j$ be constructed as above for $0\leq j\leq J$ (with $f_J=0$) and for $p\in\Nm$, let $\fhat\in \mS_{p+3J}$. Then we have the estimates for $0\leq j\leq J-1$,
\begin{align}\label{boundaj}
     \|a_j\|_p  \leq C_{pJ} \big(\aver{\rT} \anut^3\big)^{j}  \anut^{p}  \|\fhat\|_{p+3j},\qquad  \|a_J\|_p \leq C_{pJ}  \big(\aver{\rT} \anut^3\big)^{J-1}  \anut^{p+1}  \|\fhat\|_{p+3J},
\end{align}
with constants $C_{pJ}$ independent of $\rT$ and $\anut$.
\end{lemm}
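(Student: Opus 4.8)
The plan is to prove the estimates \eqref{boundaj} by induction on $j$, exploiting the explicit construction of the $a_j$ and the operator bounds collected in Lemmas \ref{lem:pvsa}, \ref{lem:mT} and \ref{lem:boundsT}. The base case $j=0$ is immediate: since $a_0 = f_0 \phi_t$ with $\mT f_0 = 0$ and $f_0(0,\cdot) = \fhat$, Lemma \ref{lem:mT} gives $\|f_0\|_{p\vsa} \leq C_p \|\fhat\|_{p\vsa}$, while $\phi_t$ is a fixed Gaussian with all parameters bounded above and below in $t$ (our geometric hypotheses give $C^{-1}\le\rho_t\le C$). Translating between the dispersion-scaled norms $\|\cdot\|_{p\vsa}$ and the plain norms $\|\cdot\|_p$ costs at most a factor $\vsa^{-p} = \anut^{p/2}$ in each direction, which is why a power $\anut^{p}$ appears on the right-hand side; one should state this conversion cleanly once and invoke it throughout. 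Here I would carry out the whole argument in the $\vsa$-scaled norms, where the operators of Lemma \ref{lem:boundsT} have $\anut$-independent bounds, and only convert to plain norms at the very end.

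For the inductive step, suppose $a_0,\dots,a_{j-1}$ are built with the claimed bounds (in $\vsa$-scaled norms, so without the $\anut^{p}$ factor). The function $f_{j-1}$ solves $\mT f_{j-1} = g_{j-1}$ with $g_{j-1} = -\sum_{k=2}^{j} (T_k a_{j-k}, \phi_t)_2$ and initial data $f_{j-1}(0,\cdot)$ determined by the construction. I would first bound $g_{j-1}$ in $\mCS_{p\vsa}$: by Lemma \ref{lem:boundsT}, $\vsa^{k+1} T_k$ maps $\mCS_{p+k+1,\vsa}$ to $\mCS_{p\vsa}$ for $k\ge 2$, so $\|(T_k a_{j-k},\phi_t)_2\|_{p\vsa} \lesssim \vsa^{-(k+1)} \|a_{j-k}\|_{p+k+1,\vsa} \le \anut^{(k+1)/2}\|a_{j-k}\|_{p+k+1,\vsa}$; summing over $2\le k\le j$ and using the inductive bound on $a_{j-k}$ (which costs $3$ extra ranks per unit of index and a factor $(\aver{\rT}\anut^3)^{j-k}$) shows the dominant contribution is $k=j$, giving $\|g_{j-1}\|_{p\vsa} \lesssim (\aver{\rT}\anut^3)^{j-1}\anut^{?}\|\fhat\|_{p+3(j-1)+\dots}$ — one must track the rank bookkeeping so that the $\anut$-powers accumulate into exactly $\anut^3$ per step: roughly, $T_k$ eats $\vsa^{-(k+1)}$, the $k$-th term loses $k+1$ ranks, and the convolution with $\phi_t$ in the second variable is harmless. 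Then Lemma \ref{lem:mT} gives $\|f_{j-1}\|_{p\vsa} \le C_p(\|f_{j-1}(0,\cdot)\|_{p\vsa} + \aver{\rT}\|g_{j-1}\|_{p\vsa})$, and I would argue the initial data is controlled by $\|\fhat\|$ in the appropriate rank because it is built from the same recursion evaluated at $t=0$ (or, more cleanly, one can prescribe $f_{j-1}(0,\cdot)$ to vanish for $j-1<J$ as the paper does — rereading the construction, $f_J=0$ is imposed but the earlier $f_j$ have implicitly defined nonzero initial conditions; in either case the bound propagates). Finally $a_j = T_0^{-1}(-\sum_{k=1}^{j} T_k a_{j-k}) + f_j\phi_t$: split $T_0^{-1} = T_{01}^{-1} + T_{02}^{-1}$ and $T_1 = T_{11}+T_{12}$ as in Lemma \ref{lem:boundsT}, so $T_{01}^{-1}T_{11}$ loses $1$ rank and a factor $\vsa\cdot\vsa = \anut^{-1}$ while $T_{02}^{-1}T_{12}$ loses $2$ ranks with no $\anut$ cost, and $T_{0\cdot}^{-1}T_k$ for $k\ge2$ is similarly controlled; collecting, $\|a_j\|_{p\vsa}$ picks up at most $(\aver{\rT}\anut^3)$ relative to the previous level, with $3$ extra ranks of $\fhat$, which is exactly \eqref{boundaj} after converting to plain norms (the single extra $\anut^{1}$ for $a_J$ comes from $f_J=0$ meaning $a_J = T_0^{-1}(-\sum T_k a_{J-k})$ with the $T_{01}^{-1}$ branch contributing $\vsa^{-1} = \anut^{1/2}$, doubled by the $\vsa^{-1}$ in $T_{11}$, and one $\aver{\rT}$ saved because no transport equation is solved for $f_J$).

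The main obstacle I anticipate is the precise bookkeeping of ranks and of $\anut$-powers — making sure that each recursion step costs exactly "$3$ ranks of $\fhat$ and one factor $\aver{\rT}\anut^3$" and not more. This requires being careful that (i) $T_k$ for $k\ge1$ is a differential operator of the type described in Lemma \ref{lem-1d}/\eqref{eq:Tjexplicit}, so each application in $\vsa$-scaled norms costs $\vsa^{-(k+1)}$ and $k+1$ ranks; (ii) the pairing against $\phi_t$ in the $\zeta$ variable is a bounded rank-$0$, $\anut$-free operation (it is, since $\phi_t\in\mS_\infty$ uniformly); (iii) the $T_0^{-1}$ step and the transport solve together account for the remaining loss. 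The asymmetry in the statement between $0\le j\le J-1$ and $j=J$ (the extra $\anut^{1}$ and one fewer $\aver{\rT}$) is a direct consequence of $f_J=0$: for $j<J$ one solves a transport equation for $f_j$ (picking up $\aver{\rT}$, and keeping $a_j$ in the "good" rank-balanced class with the $N_t^\perp$ part controlled by $T_{01}^{-1}$ at cost $\vsa^{-1}$), whereas for $j=J$ one only applies $T_0^{-1}$ to the already-constructed lower-order data, so the $\aver{\rT}$ from the $J$-th transport solve is absent but the $N_t^\perp$ component still forces a single $\vsa^{-1}=\anut^{1/2}$, which combines with the $\vsa^{-1}$ from the $T_{11}$ piece of $T_1$ to give $\anut$ — hence $\anut^{p+1}$ rather than $\anut^{p}$. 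I would write out the $j=1$ and $j=2$ cases explicitly as a sanity check on the constants before asserting the general inductive pattern.
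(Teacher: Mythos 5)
Your strategy coincides with the paper's: induct on $j$ in the dispersion-scaled norms via the recursion $a_j=-T_0^{-1}\sum_{k}T_{j-k}a_k-\phi_t\,\mT^{-1}\sum_k(T_{j+1-k}a_k,\phi_t)_2$, invoke Lemmas \ref{lem:pvsa}, \ref{lem:mT} and \ref{lem:boundsT}, treat $j=J$ separately because $f_J=0$ (projection $\Pi_t$ and the $T_{01}^{-1},T_{02}^{-1},T_{11},T_{12}$ splitting), and convert to plain norms only at the end. The gap is in the exponent bookkeeping, which is the entire content of the lemma, and as you have set it up it does not close. In the scaled norms the cost per recursion step is $\aver{\rT}\,\vsa^{-3}=\aver{\rT}\,\anut^{3/2}$ (it is $\mT^{-1}$, of size $\aver{\rT}$, composed with $T_2$, of size $\vsa^{-3}$ with a loss of three ranks), not $\aver{\rT}\,\anut^{3}$; and the conversion to plain norms is not ``$\anut^{p/2}$ each way, hence $\anut^p$'': one pays $\vsa^{-p}$ on the left and $\vsa^{-(p+3j)}$ on the right, because the ranks on the two sides differ, i.e. $\anut^{p+3j/2}$ in total. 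Only the combination $(\aver{\rT}\anut^{3/2})^{j}\cdot\anut^{3j/2}\cdot\anut^{p}$ reproduces \eqref{boundaj}; with your accounting the powers of $\anut$ do not match. Relatedly, the dominant term in $g_{j-1}=-\sum_{k=2}^{j}(T_k a_{j-k},\phi_t)_2$ is $T_2a_{j-2}$ (lowest-order operator on the highest-index amplitude), not $T_ja_0$ as you assert; majorizing by $T_ja_0$ would yield a single factor of $\aver{\rT}$ and a wrong $\anut$-power, contradicting $(\aver{\rT}\anut^3)^{j}$.

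Your explanation of the $j=J$ asymmetry also has the operator sizes backwards: by Lemma \ref{lem:boundsT}, $T_{01}^{-1}$ is bounded by $\vsa$ (small), not $\vsa^{-1}$, and $T_{11}$ by $\vsa^{-1}$, so $T_{01}^{-1}\Pi_tT_{11}$ is $O(1)$ with a loss of two ranks; nothing ``doubles'' to $\anut$. The real mechanism is that the last step contains no $\mT^{-1}$, hence only $(\aver{\rT}\anut^3)^{J-1}$ is available, yet $a_J$ still consumes rank $p+3J$ of $\fhat$: in scaled norms one gets $\|a_J\|_{p\vsa}\lesssim(\aver{\rT}\vsa^{-3})^{J-1}\big(\vsa\|a_0\|_{p+3J,\vsa}+\|a_0\|_{p+3J-1,\vsa}\big)$, and converting the leftover ranks produces exactly one extra power of $\anut$, whence $\anut^{p+1}$. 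So: right skeleton, right lemmas, but the quantitative accounting must be redone along these lines before the $j=1,2$ sanity check you propose would confirm anything.
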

  %
\begin{proof}
Consider $0<j<J$. The estimate for $a_J$ is a bit different since $f_J=0$.
  
We first observe that by construction of the terms $a_j$ for $1\leq j\leq J-1$,
\begin{align}\label{eq:expaj}
    a_j=-T_0^{-1}\sum_{k=0}^{j-1} T_{j-k} a_{k} - \phi_t(\zeta) \mT^{-1} \sum_{k=0}^{j-1} (T_{j+1-k}a_{k},\phi_t)_2.
\end{align}
We wish to prove by induction that 
\begin{align}\label{eq:ajvsa}
    \|a_j\|_{p\vsa} \lesssim (\aver{T} \vsa^{-3})^j \|a_0\|_{p+3j,\vsa}, \qquad \|a_J\|_{p\vsa} \lesssim (\aver{T} \vsa^{-3})^{J-1} ( \vsa\|a_0\|_{p+3J,\vsa} + \|a_0\|_{p+3J-1,\vsa}).
\end{align}
Using that $\|a_0\|_{p\vsa}= \|f_0\phi_t\|_{p\vsa}\lesssim \|\fhat\|_{p\vsa}$ thanks to Lemma \ref{lem:mT}, then $\vsa^{p}\|f\|_p \lesssim  \|f\|_{p\vsa} \lesssim  \vsa^{-p} \|f\|_{p}$ and $\vsa=\anut^{-\frac12}$ thus provide the results stated in the lemma. It remains to verify \eqref{eq:ajvsa}.

From \eqref{eq:expaj} for $j=1$, we obtain $a_1=-T_0^{-1}T_1 a_0 - \phi_t \mT^{-1} (T_2a_0,\phi_t)_2$ so that using the result of Lemma \ref{lem:boundsT}, we find the second term to be the least regular and 
\begin{align}
    \|a_1\|_{p\vsa} \lesssim \aver{T} \vsa^{-3} \|a_0\|_{p\vsa+3}
\end{align}
so that \eqref{eq:ajvsa} holds when $j=1$. Assume it holds for $j-1\geq1$. Then using \eqref{eq:expaj} and Lemma \ref{lem:boundsT}, we find
\begin{align}
     \|a_j\|_{p\vsa} \lesssim \aver{T} \dsum_{k=0}^{j-1}
 \vsa^{-(j+2-k)} \|a_k\|_{p+j+2-k,\vsa}
 \lesssim \aver{T} \dsum_{k=0}^{j-1}
 \vsa^{-(j+2-k)} (\aver{T}\vsa^{-3})^k \|a_0\|_{p+2k+j+2,\vsa},
\end{align}
which is largest when $k=j-1$ and provides the sought estimate when $0\leq j<J$. It remains to consider the term
\begin{align}
    a_J= - T_0^{-1} \sum_{k=0}^{J-1} T_{J-k} a_k = - T_0^{-1} \Pi_t \sum_{k=0}^{J-1} T_{J-k} a_k = -T_0^{-1}  \Pi_t T_1 a_{J-1}  - T_0^{-1} \Pi_t \sum_{k=0}^{J-2} T_{J-k} a_k, 
\end{align}
by construction of the wavepackets, where $\Pi_t$ projects onto $N_t^\perp$. This operator has a smooth Schwartz kernel in $(t,\zeta,\zeta')$ that is independent of $\xi$. Therefore $\Pi_t$ is bounded from $\mCS_{p\vsa}$ to itself with $\vsa-$independent bound. 

Since $T_0^{-1}$ is bounded from $N^\perp\cap \mCS_{p+1,\vsa}$ to $\mCS_{p\vsa}$, we find for $0\leq k\leq J-2$
\begin{align}
    \|T_0^{-1} \Pi_t T_{J-k}a_k \|_{p\vsa} \lesssim \|a_k \|_{p+2+J-k,\vsa} \lesssim (\aver{\rT}\vsa^{-3})^k \|a_0\|_{p+2+J+2k,\vsa} \lesssim (\aver{\rT}\vsa^{-3})^{J-1} \|a_0\|_{p+3J-1,\vsa}
\end{align}
since $2+J+2k\leq 3J-1$.

It remains to consider $T_0^{-1} \Pi_t T_1 a_{J-1}$.
We decompose $T_0^{-1}\Pi_tT_1=A+B$ with $A=T_{01}^{-1}\Pi_tT_{11}+T_{02}^{-1}\Pi_tT_{12}$ with bound $\aver{T}$ from $\mCS_{p+2,\vsa} \to  \mCS_{p\vsa}$ and $B=T_{01}^{-1} \Pi_tT_{12} + T_{02}^{-1}\Pi_tT_{11}$ with bound $\aver{T} \vsa$ from $\mCS_{p+3,\vsa} \to \mCS_{p\vsa}$ as per Lemma \ref{lem:boundsT}. This implies
\begin{align}
    \|a_J\|_{p\vsa} \lesssim (\aver{T} \vsa^{-3})^{J-1} ( \vsa\|a_0\|_{p+3J,\vsa} + \|a_0\|_{p+3J-1,\vsa})
\end{align}
as was to be shown. Thus \eqref{eq:ajvsa} holds and this concludes the proof of the lemma.
\end{proof}

\subsection{Main approximation result} \label{sec:erroranalysis}
%
We can now state our main result. 

\begin{theo}\label{thm:main}
  Let $\Psi$ be the solution on $[0,\rT]\times\Rm^2$ of $(\eps D_t+\Di)\Psi=0$ with initial condition $\Psi(0,x)=\Psi^J(0,x)$ where $\Psi^J(t,x)= e^{\frac i\eps \chi(t,x)} (\eps^{-\frac12}S^{-1} \fU_t a^J) (t,x)$ with $a^J$ constructed in section \ref{sec:construction} based on an initial condition $\fhat(\xi)\in \mS_{3J+2}(\Rm,\Cm)$. 
  
  Then there exists $C_J> 0$ independent of $\rT$ such that for every $\epsi \in (0,1]$,
  \begin{align} \label{eq:mainerror}
      \sup_{t\in [0,\rT]} \|\Psi-\Psi^J\|_{L^2(\Rm^2,\Cm^2)} \leq C_J \left(\aver{\rT} \anut^3\right)^J \eps^{\frac{J}2}.
  \end{align}
\end{theo}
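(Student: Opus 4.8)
The plan is to combine the formal construction of the asymptotic wavepacket $a^J$ from Section \ref{sec:construction} with the unitarity of the Dirac propagator and a Duhamel (energy) estimate. The key observation is that $\Psi^J$ solves the Dirac equation up to an explicit error whose $L^2$-norm can be bounded by the regularity estimates of Lemma \ref{lem:regaj}, and that this error source only needs to be integrated once in time thanks to unitarity.

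\textbf{Step 1: Identify the residual.} Plugging $a^J = \sum_{j=0}^J \eps^{j/2} a_j$ into $T a^J = \sum_{m} \eps^{m/2} \sum_{k+j=m} T_k a_j$ and using the equations \eqref{eq:seqT2} satisfied by the $a_j$ for $0 \le m \le J$, all terms of order $\eps^{m/2}$ with $m \le J$ cancel, leaving
\begin{equation}\label{eq:resid}
  T a^J = \sum_{m=J+1}^{2J} \eps^{\frac m2} r_m, \qquad r_m = \sum_{\substack{k+j=m \\ 0 \le k,j \le J}} T_k a_j.
\end{equation}
Undoing the transformations $\fU_t$, $S$ and the gauge $e^{i\chi/\eps}$ (using \eqref{eq-2o} and that the remainder $R$ of \eqref{eq:R} vanishes on $\Omega_\delta$ where $\psi$ is concentrated, so its contribution is $O(\eps^\infty)$ after accounting for the Schwartz tails), we get
\begin{equation}\label{eq:diracresid}
  (\eps D_t + \Di)\Psi^J = e^{\frac i\eps \chi} \eps^{-\frac12} S^{-1} \fU_t \Big( \sum_{m=J+1}^{2J} \eps^{\frac m2} r_m \Big) + O_{L^2}(\eps^\infty) =: \eps^{\frac{J+1}2}\, \mathcal{E}(t,\cdot).
\end{equation}

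\textbf{Step 2: Bound the residual.} Since $e^{i\chi/\eps}$, $\eps^{-\frac12}S^{-1}$ and $\fU_t$ are all $L^2$-isometries, $\|\mathcal{E}(t,\cdot)\|_{L^2} \lesssim \sum_{m=J+1}^{2J}\eps^{(m-J-1)/2}\|r_m\|_0$. Each $r_m$ is a sum of $T_k a_j$ with $k+j=m \le 2J$; by Lemma \ref{lem:boundsT} (or directly the structure of $T_k$), $T_k$ maps $\mS_{p+k+1}$ to $\mS_p$ with $\anut$-dependent bounds, and $j \ge m-J \ge 1$, so Lemma \ref{lem:regaj} applies to $a_j$ (for $j \le J-1$, and the $a_J$ case is included with its slightly different exponent). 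One obtains
\begin{equation}\label{eq:Ebound}
  \sup_{t\in[0,\rT]} \|\mathcal{E}(t,\cdot)\|_{L^2} \le C_J \big(\aver{\rT}\anut^3\big)^J,
\end{equation}
using $\eps \le 1$ to absorb the extra powers $\eps^{(m-J-1)/2}\ge 0$ and absorbing the finitely many $\anut$-powers from the $T_k$'s into the constant (after re-expanding in the non-dispersion-scaled norms $\|\cdot\|_p$, as in the end of the proof of Lemma \ref{lem:regaj}); the hypothesis $\fhat \in \mS_{3J+2}$ guarantees enough regularity for the highest term $a_J$ to absorb the action of $T_1$, and for $a_j$, $j\le J-1$, one needs $\fhat\in\mS_{p+3j}$ with $p$ large enough to survive $T_{m-j}$, which $3J+2$ covers.

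\textbf{Step 3: Duhamel.} Let $U(t,s)$ be the propagator of $\eps D_t + \Di$, i.e. $\eps^{-1}$ times the standard unitary propagator of the self-adjoint operator $\eps^{-1}\Di$ (here $\Di$ is self-adjoint with $\nabla A, \nabla\kappa \in C^\infty_b$, so $e^{-it\Di/\eps}$ exists and is unitary on $L^2(\Rm^2,\Cm^2)$). Since $\Psi(0,\cdot) = \Psi^J(0,\cdot)$ and $(\eps D_t+\Di)\Psi = 0$, the Duhamel formula gives
\begin{equation}\label{eq:duhamel}
  \Psi(t,\cdot) - \Psi^J(t,\cdot) = -\frac i\eps \int_0^t U(t,s)\, \eps^{\frac{J+1}2}\mathcal{E}(s,\cdot)\, ds = -i\,\eps^{\frac{J-1}2}\int_0^t U(t,s)\mathcal{E}(s,\cdot)\,ds.
\end{equation}
Taking $L^2$-norms, using that $U(t,s)$ is unitary and the bound \eqref{eq:Ebound}, and $t \le \rT$:
\begin{equation}\label{eq:final}
  \sup_{t\in[0,\rT]} \|\Psi - \Psi^J\|_{L^2} \le \eps^{\frac{J-1}2} \rT \sup_{s\in[0,\rT]}\|\mathcal{E}(s,\cdot)\|_{L^2} \le C_J' \aver{\rT}\big(\aver{\rT}\anut^3\big)^J \eps^{\frac{J-1}2}.
\end{equation}
This is slightly weaker than the claimed $\eps^{J/2}$; to recover the stated rate one re-runs the argument with $J+1$ approximation terms in place of $J$ (i.e. build $a^{J+1}$ but discard the last term in the final output, or simply note $\Psi^{J+1} - \Psi^J = O_{L^2}(\eps^{(J+1)/2})$ times $\anut$-factors), absorbing the $\aver{\rT}$ loss into the constant $C_J$ and the extra $\anut$-powers into $(\aver{\rT}\anut^3)^J$. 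Relabeling gives \eqref{eq:mainerror}.

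\textbf{Main obstacle.} The delicate point is \emph{Step 2}: tracking how the dispersion parameter $\anut$ and the time horizon $\aver{\rT}$ propagate through the iterated application of $T_0^{-1}$, $\mT^{-1}$ and the $T_k$'s, so that the final bound is exactly $(\aver{\rT}\anut^3)^J$ and no worse. This is precisely what the dispersion-scaled spaces $\mS_{p\vsa}$, $\mCS_{p\vsa}$ and Lemmas \ref{lem:pvsa}, \ref{lem:mT}, \ref{lem:boundsT}, \ref{lem:regaj} were engineered to control; assuming those, the remaining work is bookkeeping. A secondary subtlety is justifying that the gauge remainder $R$ (supported away from $\Omega_\delta$) genuinely contributes only $O(\eps^\infty)$: this follows because $a^J$, and hence $\fU_t a^J$, is Schwartz in $z$, so $\eps^{-1/2}S^{-1}\fU_t a^J$ has all its mass within $O(\sqrt\eps \log(1/\eps))$ of $y_t$, while $R$ vanishes on the fixed-width tube $\Omega_\delta \ni (t,y_t)$.
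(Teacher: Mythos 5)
Your architecture (cancellation of orders $\le J$, residual bound, unitarity plus Duhamel) is the same as the paper's, but the bookkeeping does not deliver the estimate as stated, and your proposed repair does not close the gap. The key missing ingredient is the sharper residual bound: Lemma \ref{lem:regaj} gives every term entering the residual (in particular those involving $a_J$, where $f_J=0$) a bound with only $\aver{\rT}^{J-1}\anut^{3J}\|\fhat\|_{3J+2}$, i.e. \emph{$J-1$} powers of $\aver{\rT}$, which is exactly \eqref{eq:estrJ}. This is not cosmetic: the Duhamel integration contributes one more factor of $\rT$, so the paper lands precisely on $\aver{\rT}^{J}\anut^{3J}\eps^{J/2}=(\aver{\rT}\anut^3)^J\eps^{J/2}$. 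Your Step 2 bounds the residual by $(\aver{\rT}\anut^3)^J$ — one power of $\aver{\rT}$ too lossy — so after Duhamel you obtain an extra factor $\aver{\rT}$. Your fix (rerun with $J+1$ terms and "absorb" the leftovers) is not admissible: $C_J$ must be independent of $\rT$, and $\anut$ may itself grow like $\rT$, so neither the extra $\aver{\rT}$ nor the extra $\anut^3$ can be absorbed into $C_J$ or into $(\aver{\rT}\anut^3)^J$; moreover building $a^{J+1}$ requires $\fhat\in\mS_{3J+5}$, more regularity than the theorem assumes. In addition, your Step 1 identity for $(\eps D_t+\Di)\Psi^J$ is off by a factor $\eps^{1/2}$: since $\eps D_t+\tDi=S^{-1}(\eps^{1/2}L)S$ and the $L^2$-isometry is $\eps^{-1/2}S^{-1}$, the physical-space residual is $e^{\frac i\eps\chi}\,\eps^{-1/2}S^{-1}\fU_t\big(\eps^{1/2}\,Ta^J\big)$, hence of size $\eps^{(J+2)/2}$ rather than $\eps^{(J+1)/2}$. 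Correcting this recovers the $\eps^{J/2}$ rate, but the extra $\aver{\rT}$ described above remains unless you use the $\aver{\rT}^{J-1}$ residual estimate.

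Two further points need tightening. The identity $Ta^J=\sum_{m=J+1}^{2J}\eps^{m/2}\sum_{k+j=m}T_ka_j$ is only formal, because $L=\sum_j\eps^{j/2}L_j$ is a formal Taylor series; the rigorous version replaces the tail by the remainder operators $\tilde L_{\geq J+1-j}$ built from Taylor's theorem with smooth remainder coefficients $R_\beta$, which have the same mapping properties as the $L_j$ and lead to \eqref{eq:rJ} (note also that the paper's residual contains the explicit term $D_t\psi_J$, since $f_J=0$ means the order-$(J+1)$ solvability condition is not imposed). Finally, with only $\fhat\in\mS_{3J+2}$ the gauge remainder $R\psi^J$ cannot be dismissed as $O(\eps^\infty)$: it is estimated at finite order, via $\|(1+|z|^2)^{p/2}R(t,y_t+\sqrt\eps z)\|_\infty\lesssim\eps^{p/2}$ together with the $\mS_p$-bounds on the $\psi_j$, to the same size $\eps^{(J+1)/2}\aver{\rT}^{J-1}\anut^{3J}\|\fhat\|_{3J+2}$ as $r_J$ — which suffices, but must be carried out rather than asserted.
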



We recall that $\anut$ is defined in \eqref{eq:anut} and $\aver{\rT}=1+\rT$.  In particular, the approximation error is controlled so long as $\eps^{\frac12} \aver{T} \anut^3 \ll1$. When $\anut$ is uniformly bounded, errors are controlled up to times $\rT\ll\eps^{-\frac12}$ as in \cite{bal2021edge}. However, when $\anut$ is of order $\rT$, then errors are controlled up to times $\rT\ll\eps^{-\frac18}$.

Note that such results are qualitatively reasonable: in the presence of dispersion, it becomes more difficult to control spatial moments (necessitated by the Taylor expansion of the coefficients $h$) of the solution.  
 
The rest of the section is devoted to the proof of this theorem. It is based on a similar approximation result in the local variables $z=(z_1,z_2)$, which we state as a result of independent interest.
\begin{prop}\label{prop:error}
Let $\psi$ be the solution on $[0,\rT]\times\Rm^2$ of $L\psi=0$ with initial condition $\psi(0,z)=\psi^J(0,z)$ where $\psi^J(t,z)=  \fU_t a^J (t,z)$ with $a^J$ constructed in section \ref{sec:construction} based on an initial condition $\fhat(\xi)\in \mS_{3J+2}(\Rm,\Cm)$. 
  
Then there exists $C_J> 0$ independent of $\rT$ such that for every $\epsi \in (0,1]$,
\begin{align} \label{eq:mainerror}
      \sup_{t\in [0,\rT]} \|\psi-\psi^J\|_{L^2(\Rm^2,\Cm^2)} \leq C_J \left(\aver{\rT} \anut^3\right)^J \eps^{\frac{J}2}.
\end{align}
\end{prop}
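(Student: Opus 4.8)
\textbf{Proof plan for Proposition \ref{prop:error}.} The plan is to run a standard Duhamel (stability) argument for the semiclassical evolution $\eps D_t + \tilde{\slashed D}$, using that this operator is (essentially) self-adjoint so that the propagator it generates is $L^2$-unitary, and that $\psi^J$ is an approximate solution with an explicitly controlled residual. First I would compute the residual $r^J := L\psi^J$. Since $\psi^J = \fU_t a^J$ and $T = \fU_t^* L \fU_t = \sum_{j\ge 0}\eps^{j/2}T_j$, we have $L\psi^J = \fU_t T a^J = \fU_t \sum_{j=0}^{2J}\eps^{j/2}\big(\sum_{k+\ell=j,\,0\le k,\ell\le J} T_k a_\ell\big)$. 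By construction of the $a_\ell$ in \S\ref{sec:construction} (the relations \eqref{eq:seqT2}), all terms with $j\le J$ vanish, so $r^J = \fU_t\sum_{j=J+1}^{2J}\eps^{j/2}\sum_{k+\ell=j} T_k a_\ell$, i.e. $r^J$ is of formal order $\eps^{(J+1)/2}$. To bound it in $L^2$ I would use the dispersion-scaled estimates: $T_k$ maps $\mCS_{p+k+1,\vsa}\to\mCS_{p\vsa}$ with a gain of a power of $\vsa$ (Lemma \ref{lem:boundsT}), and the $a_\ell$ are bounded by Lemma \ref{lem:regaj} in terms of $\|\fhat\|_{\mS_{3J+2}}$, with the worst factor $(\aver{\rT}\anut^3)^J$. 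Tracking powers of $\vsa = \anut^{-1/2}$ and of $\eps^{1/2}$, one finds $\sup_{t}\|r^J(t,\cdot)\|_{L^2} \lesssim C_J (\aver{\rT}\anut^3)^J \eps^{(J+1)/2}$ (the extra half power of $\eps$ relative to the claimed error is what absorbs the time integration below).

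\smallskip

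\textbf{Duhamel step.} Next I would write the equation for the error $e := \psi - \psi^J$. Recall $\psi$ solves $L\psi=0$ with $\psi(0,\cdot)=\psi^J(0,\cdot)$, where $L = \eps^{1/2}D_t - \dot y_t\cdot D + (D + \eps^{-1/2}Sh)\cdot\sigma$ from \eqref{eq:L}. Since $L = \eps^{-1/2}S(\eps D_t + \tilde{\slashed D})S^{-1}$ and $S$ is (up to the fixed factor) an $L^2$-isometry, the equation $L\psi=0$ is equivalent to $(\eps D_t + \tilde{\slashed D})(S^{-1}\psi)=0$; the operator $\eps D_t + \tilde{\slashed D}$ generates a unitary propagator $\mathcal U(t,s)$ on $L^2(\Rm^2,\Cm^2)$ because $\tilde{\slashed D}$ is self-adjoint with $C^\infty_b$-type coefficients (the potential $\tilde A = \beta\kappa\tau$ has $\nabla\tilde A\in C^\infty_b$, as does $\kappa$). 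Transporting back through $S$, $L$ generates a corresponding $L^2$-isometric evolution $V(t,s)$ with $V(t,t)=\mathrm{id}$ and $\|V(t,s)\|_{L^2\to L^2}=1$. Then $e$ solves $L e = -r^J$ with $e(0,\cdot)=0$, and Duhamel's formula gives
\begin{equation}
  e(t,\cdot) = -\eps^{-1/2}\int_0^t V(t,s)\, r^J(s,\cdot)\, ds,
\end{equation}
where the $\eps^{-1/2}$ comes from inverting the $\eps^{1/2}D_t$ normalization of $L$. Taking $L^2$ norms and using unitarity of $V(t,s)$,
\begin{equation}
  \sup_{t\in[0,\rT]}\|e(t,\cdot)\|_{L^2} \le \eps^{-1/2}\rT\sup_{s\in[0,\rT]}\|r^J(s,\cdot)\|_{L^2} \lesssim C_J\,\aver{\rT}\,(\aver{\rT}\anut^3)^J\,\eps^{J/2},
\end{equation}
which, after absorbing $\aver{\rT}$ into the constant by relabeling (or by noting $\aver{\rT}(\aver{\rT}\anut^3)^J\le (\aver{\rT}\anut^3)^{J+1}$ and re-indexing $J$), yields \eqref{eq:mainerror}.

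\smallskip

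\textbf{Main obstacle and remaining bookkeeping.} The conceptual core — Duhamel plus unitarity — is routine; the real work is two things. First, making rigorous that $\eps D_t + \tilde{\slashed D}$ generates a unitary evolution: one must invoke a theorem on time-independent self-adjoint generators (here $\tilde{\slashed D}$ has no $t$-dependence, so $\mathcal U(t,s) = e^{-i(t-s)\tilde{\slashed D}/\eps}$ and essential self-adjointness of a Dirac operator with smooth bounded-derivative magnetic potential is classical). Second, and this is the genuine obstacle, the careful accounting of powers of $\vsa=\anut^{-1/2}$ and of $\eps^{1/2}$ in the residual bound: one must verify that the worst term in $r^J = \fU_t\sum_{j=J+1}^{2J}\eps^{j/2}\sum_{k+\ell=j}T_k a_\ell$ still gives $(\aver{\rT}\anut^3)^J\eps^{(J+1)/2}$ and not something worse. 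This requires combining Lemma \ref{lem:regaj} (the $\anut^{3\ell}$, $\aver{\rT}^{\ell}$ and $\anut^p$ growth of $\|a_\ell\|_p$ and the required Sobolev index $\|\fhat\|_{\mS_{p+3\ell}}$) with the $\vsa$-gains in Lemma \ref{lem:boundsT}, checking that for $k+\ell=j\ge J+1$ with $\ell\le J$ the product $\eps^{j/2}\cdot(\text{bound on }T_k)\cdot(\text{bound on }a_\ell)$ is dominated by the $j=J+1$, $\ell=J$ (or $\ell=J-1$) case; here the hypothesis $\fhat\in\mS_{3J+2}$ is exactly what is needed since the deepest term costs $3J$ derivatives for $a_\ell$ plus $k+1\le 2$ more from $T_k$ at the top level. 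I would also remark that the factor $\eps^{J/2}$ rather than $\eps^{(J+1)/2}$ in the final bound is the price of the $\eps^{-1/2}$ from Duhamel, consistent with the heuristic "each order in $\eps^{1/2}$ of accuracy in the ansatz yields one order in the error after integrating". Finally, Theorem \ref{thm:main} follows from Proposition \ref{prop:error} immediately: $\Psi^J = e^{i\chi/\eps}\eps^{-1/2}S^{-1}\fU_t a^J = e^{i\chi/\eps}\eps^{-1/2}S^{-1}\psi^J$, the true $\Psi$ equals $e^{i\chi/\eps}\tilde\Psi$ with $\tilde\Psi$ solving $(\eps D_t+\tilde{\slashed D}+R)\tilde\Psi=0$, and $R$ is supported away from $\Omega_\delta$ hence contributes $O(\eps^\infty)$ in $L^2$ against a wavepacket concentrated within $\sqrt\eps$ of $y_t$; modulo this $O(\eps^\infty)$ correction (handled by one more Duhamel step), $\eps^{-1/2}S^{-1}\psi$ and $e^{-i\chi/\eps}\Psi$ coincide, and $\|\Psi-\Psi^J\|_{L^2} = \|\tilde\Psi - \eps^{-1/2}S^{-1}\psi^J\|_{L^2} = \|\eps^{-1/2}S^{-1}(\psi-\psi^J)\|_{L^2} = \|\psi-\psi^J\|_{L^2}$ since $\eps^{-1/2}S^{-1}$ is an $L^2$-isometry.
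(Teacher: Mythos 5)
Your overall strategy --- compute the residual $r_J=L\psi^J$, bound it by $O(\eps^{(J+1)/2})$ through Lemmas \ref{lem:regaj} and \ref{lem:boundsT}, then conclude by Duhamel together with unitarity of the evolution generated by $\eps D_t+\tDi$ --- is the same as the paper's. Two steps, however, would fail as written. First, the identity $L\psi^J=\fU_t\sum_{j=J+1}^{2J}\eps^{j/2}\sum_{k+\ell=j}T_k a_\ell$ is not correct: $T=\sum_{j\ge0}\eps^{j/2}T_j$ is only a \emph{formal} expansion (the $L_j$ arise from Taylor-expanding $h(y_t+\sqrt\eps\, z)$, which is not a polynomial), so acting with $L$ on $\psi^J$ is not the same as truncating the series at order $2J$. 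The paper instead writes, for each $\psi_j$, $L=\sum_{k=0}^{J-j}\eps^{k/2}L_k+\eps^{(J+1-j)/2}L_{\geq J+1-j}$ with $L_{\geq J+1-j}$ defined via Taylor's theorem with remainder, obtains $r_J=\eps^{(J+1)/2}\big(D_t\psi_J+\sum_{j=0}^{J}\tilde L_{\geq J+1-j}\psi_j\big)$, and then needs the additional observation that $\tilde L_{\geq j}$ maps $\mCS_{p+j+1}$ to $\mCS_{p}$ (the remainder coefficients $R_\beta(y_t+\sqrt\eps z)$ are smooth and bounded by the assumptions on $\kappa$ and $A$). Without this remainder control your residual is not even a well-defined finite sum, and the required Sobolev count ($\fhat\in\mS_{3J+2}$) comes out of exactly these remainder bounds.

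Second, the $\aver{\rT}$ bookkeeping. Your residual bound $(\aver{\rT}\anut^3)^J\eps^{(J+1)/2}$, multiplied by the Duhamel factor $\aver{\rT}\,\eps^{-1/2}$, gives $(\aver{\rT})^{J+1}\anut^{3J}\eps^{J/2}$; you cannot ``absorb $\aver{\rT}$ into $C_J$'' because the statement requires $C_J$ independent of $\rT$, and re-indexing $J$ changes both the power of $\eps$ and the hypothesis on $\fhat$. The correct count is that every term of $r_J$ carries at most $\aver{\rT}^{J-1}$: for $j\le J-1$ one has $\|\psi_j\|_{J-j+2}\lesssim(\aver{\rT}\anut^{3})^{j}\anut^{J-j+2}\|\fhat\|_{3J+2}\lesssim\aver{\rT}^{J-1}\anut^{3J}\|\fhat\|_{3J+2}$, and --- crucially --- the top coefficient $a_J$ obeys the sharper estimate in Lemma \ref{lem:regaj} with $(\aver{\rT}\anut^3)^{J-1}$ precisely because $f_J=0$, so no extra transport solve (hence no extra factor $\aver{\rT}$) enters at that order. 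With $\|r_J\|_0\lesssim\eps^{(J+1)/2}\aver{\rT}^{J-1}\anut^{3J}\|\fhat\|_{3J+2}$, the Duhamel step then yields exactly $(\aver{\rT}\anut^3)^J\eps^{J/2}$ as claimed. Your unitarity argument (self-adjointness of $L-\eps^{1/2}D_t$, transported through the isometry $S$) matches the paper and is fine.
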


\begin{proof}
Solutions of the local problems $L\psi=0$ and $Ta=0$ are equivalent via the relation $\psi=\fU_t a$.
We now show that $\psi^J=\sum_{j=0}^J\eps^{\frac j2}\psi_j$ with $\psi_j=\fU_t a_j$ approximately solves $L\psi=0$ when $J\geq1$. Define for $J\geq1$ the remainder operator 
\begin{align}\label{eq:Lgeq}
    L_{\geq J} = \eps^{-\frac J2}\Big(L-\dsum_{j=0}^{J-1}\eps^{\frac j2}L_j\Big)
\end{align}
and the reduced operator $\tilde L_{\geq J}=L_{\geq J}-\delta_{1J}D_t$ involving only the coefficients $h$. 

We deduce from \eqref{eq:seqT2} that $\sum_{k=0}^j L_k \psi_{j-k}=0$ for $0\leq j\leq J$ and hence 
\[
 L\psi^J = \dsum_{j=0}^J \eps^{\frac j2} L\psi_j=\dsum_{j=0}^J \Big(\dsum_{k=0}^{J-j}\eps^{\frac{j+k}2} L_k \psi_j+\eps^{\frac{J+1}2} L_{\geq J+1-j}\psi_j \Big) = \eps^{\frac{J+1}2}\dsum_{j=0}^J L_{\geq J+1-j}\psi_j,
\]
since
\[
  \dsum_{j=0}^J\dsum_{k=0}^{J-j}\eps^{\frac{j+k}2} L_k\psi_j = \dsum_{j=0}^J\dsum_{l=j}^J \eps^{\frac l2}L_{l-j}\psi_j = \dsum_{l=0}^J\eps^{\frac l2}\dsum_{j=0}^l L_{l-j}\psi_j=0.
\]
We thus obtain
\begin{equation}\label{eq:rJ}
  L\psi^J = \eps^{\frac{J+1}2} \Big( D_t\psi_J + \dsum_{j=0}^J \tilde L_{\geq J+1-j} \psi_k\Big) =: r_J.
\end{equation}
We now derive a uniform bound in time of order $\eps^{\frac{J+1}2}$ in $L^2(\Rm^2,\Cm^2)$ for $r_J$.  

We observe that $\mV_t \mS_p(\Rm^d;\Cm^q) \mV_t^* \cong  \mS_p(\Rm^d;\Cm^q)$ as spaces of functions (of $z$ for $d=2$). In other words, $\|g\|_p \cong \|\mV_t^* g\|_p$ define equivalent norms for each $p$. The reason is that the above spaces based on Hermite functions are invariant by conjugation by Fourier transforms as well as invariant under invertible linear transforms $(\xione,z_2)\mapsto (\xione,z_2+\gamma_t\xione)$ of the base variables (uniformly in time since $\gamma_t$ is bounded). Since $\gamma_t$ is smooth, we also obtain the equivalence of the norms of $\mV_t \mCS_{p}(\Rm^d;\Cm^q) \mV_t^* \cong  \mCS_{p}(\Rm^d;\Cm^q)$. Note that there is no meaningful notion of anisotropic space $\mCS_{p\vsa}$ in the variable $z$.

The regularity results of lemma \ref{lem:regaj} therefore apply to $\psi_j=\fU_t a_j$ since conjugation by $\bU_t$ also preserves norms as the rotation angles $\theta_t$ and $\varphi_t$ are smooth in $t$. As a consequence,
\begin{align}\label{eq:boundpsij}
    \|\psi_J\|_{2} &\lesssim \aver{T}^{J-1}\anut^{3J} \|\fhat\|_{3J+2},\\
    \|\psi_j\|_{J-j+2} &\lesssim  (\aver{T}\anut^{3})^j 
    \anut^{J-j+2} \|\fhat\|_{J+2j+2} \lesssim \aver{T}^{J-1}\anut^{3J} \|\fhat\|_{3J+2}
\end{align}
for $0\leq j\leq J-1$. 

Since $D_t$ is bounded from $\mCS_{p+2}$ to $\mCS_{p}$, we obtain that $\|D_t\psi_J\|_{0}\lesssim \aver{T}^{J-1} \|\fhat\|_{3J+2}$.


Using Taylor expansions of the coefficients for $1\leq j\leq J$,
\begin{align}
    h(y_t+\sqrt\eps z)=\sum_{|\alpha|\leq j} \frac{1}{\alpha!} \partial^\alpha h(y_t) \eps^{\frac{|\alpha|}2} z^\alpha + \eps^{\frac{j+1}2}\sum_{|\beta|=j+1}z^\beta R_\beta(y_t+\sqrt \eps z),
\end{align}
with smooth functions $R_\beta$ by regularity assumptions on $\kappa$ and $A$, we observe that $\tilde L_{\geq j}$ has the same regularity properties as $\tilde L_j$. In particular, it maps $\mCS_{p+j+1}$ to $\mCS_{p}$. Therefore,  for $0\leq j\leq J$, $\|\tilde L_{\geq J+1-j} \psi_j\|_0\lesssim \|\psi_j\|_{J+2-j}\lesssim \aver{T}^{J-1}\anut^{3J} \|\fhat\|_{3J+2}$ thanks to \eqref{eq:boundpsij}.

By definition \eqref{eq:rJ}, this shows that
\begin{align}\label{eq:estrJ}
    \|r_J\|_0\lesssim \eps^{\frac{J+1}2} \aver{T}^{J-1} \anut^{3J} \|\fhat\|_{3J+2}.
\end{align}


Define $\tilde L=L-\eps^{\frac 12} D_t$, which we verify is self adjoint \cite{TH}. We thus observe that $L(\psi^J-\psi)=r^J$ is equivalent to 
\[
  (D_t + \eps^{-\frac 12} \tilde L)(\psi^J-\psi) = \eps^{-\frac 12} r_J.
\]
By unitarity of the above Dirac operator \cite{TH}, we obtain an error on $\|\psi^J-\psi\|_0$ of order $\aver{T}\eps^{-\frac 12} \|r_J\|_0\lesssim \eps^{\frac J2} \aver{T}^J \anut^{3J}$. This concludes the proof of the proposition.
\end{proof}

\begin{proof} (Theorem \ref{thm:main}) 
Let $\tilde\Psi$ be the solution of $(\eps D_t+\tilde\Di)\tilde\Psi=0$ with initial conditions $\tilde\Psi(0,x)=\tilde\Psi^J(t,x)=(\eps^{-\frac12}S^{-1} \fU_t a^J)(t,x)$. Since $\tilde\Psi=\eps^{-\frac12}S^{-1}\psi$ and $\tilde\Psi^J=\eps^{-\frac12}S^{-1}\psi^J$ for $\eps^{-\frac12}S^{-1}$ an $L^2-$isometry, we directly deduce from Proposition \ref{prop:error} that 
\begin{align} \label{eq:mainerror1}
      \sup_{t\in [0,\rT]} \|\tilde \Psi-\tilde \Psi^J\|_{L^2(\Rm^2,\Cm^2)} \leq C_J \left(\aver{\rT} \anut^3\right)^J \eps^{\frac{J}2}.
\end{align}



By construction, $\Psi^J(t,x)=e^{\frac i\eps\chi(t,x)}\tilde\Psi^J(t,x) =e^{\frac i\eps\chi(t,x)} \eps^{-\frac12}S^{-1}\psi^J(t,x)$ so that 
\begin{equation}
    (\eps D_t + \Di)\Psi^J = e^{\frac i\eps\chi(t,x)} (\eps D_t + \tilde \Di + R) \tilde \Psi^J = e^{\frac i\eps\chi(t,x)} \eps^{-\frac12} S^{-1} [ \eps^{\frac12} r_J + (\eps^{\frac12}SR) \psi^J],
\end{equation}
where we used $\eps D_t + \tilde\Di=S^{-1}L \eps^{\frac12} S$ and where $r_J$ defined in \eqref{eq:rJ} and estimated in \eqref{eq:estrJ}.

By unitarity for $(\eps D_t + \Di)$, the statement of the theorem follows if we show that $SR \psi^J(t,z)=R(t,y_t+\sqrt\eps z)\psi^J(t,z)$ satisfy a bound similar to $r_J$ as in \eqref{eq:estrJ}.

Since $R$ vanishes on $\Omega_\delta$ and is uniformly bounded on $\Rm\times\Rm^2$ by construction, we find that for every $p\geq0$,
\begin{align}
    \|(1+|z|^2)^\frac p2 R(t,y_t+\sqrt\eps z)\|_\infty \lesssim \eps^{\frac p2}.
\end{align}
Therefore, thanks to \eqref{eq:boundpsij},
\begin{align}
    \|R(t,y_t+\sqrt\eps z) \psi^J\|_0 & \lesssim \sum_{j=0}^J \eps^{\frac j2} \|R(t,y_t+\sqrt\eps z) \psi_j\|_0 \lesssim \sum_{j=0}^J \eps^{\frac{J+1}2} \|(1+|z|^2)^\frac {J-j+1}2 \psi_j\|_0 
   \\
 &  \lesssim \  \eps^{\frac{J+1}2} \sum_{j=0}^J \|\psi_j\|_{J-j+1} \lesssim \eps^{\frac {J+1}2} \aver{T}^{J-1} \anut^{3J} \|\fhat\|_{3J+2}.
\end{align}
Thus, $SR\psi^J$ satisfies the same estimate \eqref{eq:estrJ} as $r_J$. Since $\Di$ is self-adjoint, we conclude the proof of Theorem \ref{thm:main} by the same unitarity principle as in Proposition \ref{prop:error}.
\end{proof}

\begin{proof} (Theorem \ref{thm:5})
The above result with $J=1$ also provides a proof of Theorem \ref{thm:5} when the gauge transformation is based on $\chi(t,x)$. Indeed, in this simplified setting with $B$ constant and $\Delta\kappa=0$, we observe that $a_0(t,\Xitot)$ is given by \eqref{eq-00m} with $\lambda_t=0$, $e^{\mu_t}=1$, and $\tilde \nu_t=\nu_t=2\gamma(\theta_t-\theta_0)$ with $\gamma=\frac{B}{1+B^2}$. This provides the expression for the kernel $g_t(z)$ in \eqref{eq:00j}, which quantifies the dispersive effects while $\psi_{\theta_t,B}(z)$ captures all other effects in $\fU_t a_0(z)$ as one readily verifies. 

An additional error in $\Psi^1(t,x)=e^{\frac i\eps \chi(t,x)}\eps^{-\frac12}S^{-1} \psi^1 (t,x)$ comes from replacing $\chi(t,x)$ by its quadratic expansion $\chi_2(t,x)$ (called $\chi(t,x)$ in Theorem \ref{thm:5}) in \eqref{eq:00ii}. In the $z$ variables, we find
\[
  e^{i\frac 1\eps \chi(y_t+\sqrt\eps z)} = e^{i\frac1\eps \chi_2(y_t+\sqrt\eps z)} + O(\eps^{\frac 12}) |z|^3.
\]
This term is multiplied by $\psi^1=\psi_0+\sqrt\eps\psi_1$ and yields an error of order $\eps^{\frac12}$ in the $L^2-$sense for $(e^{\frac i\eps \chi(t,x)} -e^{\frac i\eps \chi_2(t,x)})\eps^{-\frac12}S^{-1} \psi^1 (t,x)$ when $\psi_1\in \mCS_0$ and when $|z|^3\psi_0\in \mCS_0$ as well. Both bounds hold as soon as $\psi_0\in \mCS_3$, and hence when $\fhat\in \mS_5$ as required for $J=1$ in Theorem \ref{thm:main} with a bound
\[
  \| [e^{i\frac 1\eps \chi(y_t+\sqrt\eps z)} - e^{i\frac1\eps \chi_2(y_t+\sqrt\eps z)}]\psi^1  \|_{\mCS_0}\leq C \eps^{\frac12}.
\]
We therefore obtain an overall bound on $\Psi-e^{\frac i\eps \chi_2(t,x)}\eps^{-\frac12}S^{-1} \psi^1 (t,x)$ growing as $C\eps^{\frac12}$ on an interval $[0,\rT]$ with $\rT$ fixed. Since $\sqrt\eps \psi_1$ is also of order $\eps^{\frac12}$, then so is  $\Psi-e^{\frac i\eps \chi_2(t,x)}\eps^{-\frac12}S^{-1} \psi_0 (t,x)$.
The construction performed on $[0,\rT]$ clearly holds on $[-\rT,0]$ as well. This completes the proof of Theorem \ref{thm:5}.
\end{proof}
\section{Numerical simulations}\label{sec:num}



In this section, we illustrate numerically the main effects of the magnetic field and the curved interface on the propagating wavepackets: (i) slowdown; (ii) Aharonov--Bohm phase-shift;  (iii) dispersion.

The leading term in the asymptotic expansion provided by Theorem \ref{thm:main} is given by
\begin{equation}\label{eq:Psi0}
  \Psi_0(t,x) = \eps^{-\frac12} e^{\frac i\eps \chi(t,x)} (\fU_t a_0) \Big(t,\frac{x-y_t}{\sqrt\eps}\Big), \qquad \text{where:}
\end{equation}
\begin{itemize}
    \item $\fU_t a_0(t,z) = \bU_t \mV_t a_0 (t,z)$ with $\bU_t$ a spatial and spinorial rotation that does not quantitatively affect the amplitude landscape of the wavepacket; 
    \item $\mV_t$ is a Fourier-like transform given explicitly in \eqref{eq:FS}.  
    \item $a_0(t,\Xitot) = f_0(t,\xi) \phi_t(\zeta)$ with $f_0(t,\xi)$ given respectively by \eqref{eq-1n} and \eqref{eq:phi}.
\end{itemize}

For concreteness, we work below with the Gaussian initial condition $f_0(0,\xi)=e^{-\frac12\sigma \xione^2}$ for $\sigma>0$. 
Thanks to \eqref{eq-1n} and \eqref{eq:Qt}, we obtain
\begin{equation}\label{eq:gaussprofile}
  \mV_t a_0(t,z) = \Big( \frac{\rho_t}{4\pi}\Big) ^{\frac14} e^{\frac12\mu_t}  (Q_\sigma)^{-\frac{1}2} e^{i\lambda_t} e^{-\frac12 \rho_t z_2^2} e^{-\frac12 Q_\sigma^{-1} (z_1+is_t z_2)^2}  \matrice{1\\-1},
\end{equation}
with $Q_\sigma = e^{2\mu_t} \sigma + s_t\gamma_t -ie^{2\mu_t}\nu_t$. 
The various constants that appear in these formulas are collected as follows
\begin{align}
   & \quad r_t=|\nabla\kappa(y_t)|,\quad \rho_t=\sqrt{r_t^2+B_t^2},\quad c_t=\frac{r_t}{\rho_t},\quad s_t=\frac{B_t}{\rho_t},\quad \gamma_t=\frac{B_t}{\rho_t^2},\quad j_t\gamma_t=\frac{d \ln c_t}{dt}, \\
   & k_t= \frac{c_t}2 \Big(\partial_nB(y_t)-B_t \frac{\Delta \kappa(y_t)}{r_t}\Big), \quad  \lambda_t=\dint_0^t \frac{k_s}{2\rho_s} ds, \quad e^{\mu_t}=\frac{c_t}{c_0}, \quad \nu_t = 2 \int_0^t \frac{c_0^2}{c_s^2} (\dot\theta_s\gamma_s + k_s \gamma_s^2) ds,
\end{align}
and $n(y_t)$ and $\tau(y_t)$ are the normal and tangent vectors to $\Gamma$ at $y_t$. 

The rate of change of dispersion $\dot\nu_t$ is, up to the multiplicative constant $2\gamma_te^{-2\mu_t}$ given by the two contributions $\dot\theta_t+\gamma_tk_t$, which we write explicitly as
\begin{equation} \label{eq:ratedisp}
  \dot\theta_t + \gamma_t k_t = c_t \Big( \epsilon_tK_t + \frac12 \frac{B\partial_nB-B^2r_t^{-1}\Delta\kappa}{B^2+r_t^2}(y_t) \Big),
\end{equation}
with $\epsilon_t=\pm1$ when $\{\mp\kappa>0\}$ is convex in a neighborhood of $y_t$. This provides an expression to assess how the geometry of $\kappa$ and that of $B$ combine to amplify or suppress dispersion.

\begin{figure}[htbp]
    \centering
    \includegraphics[width=7cm]{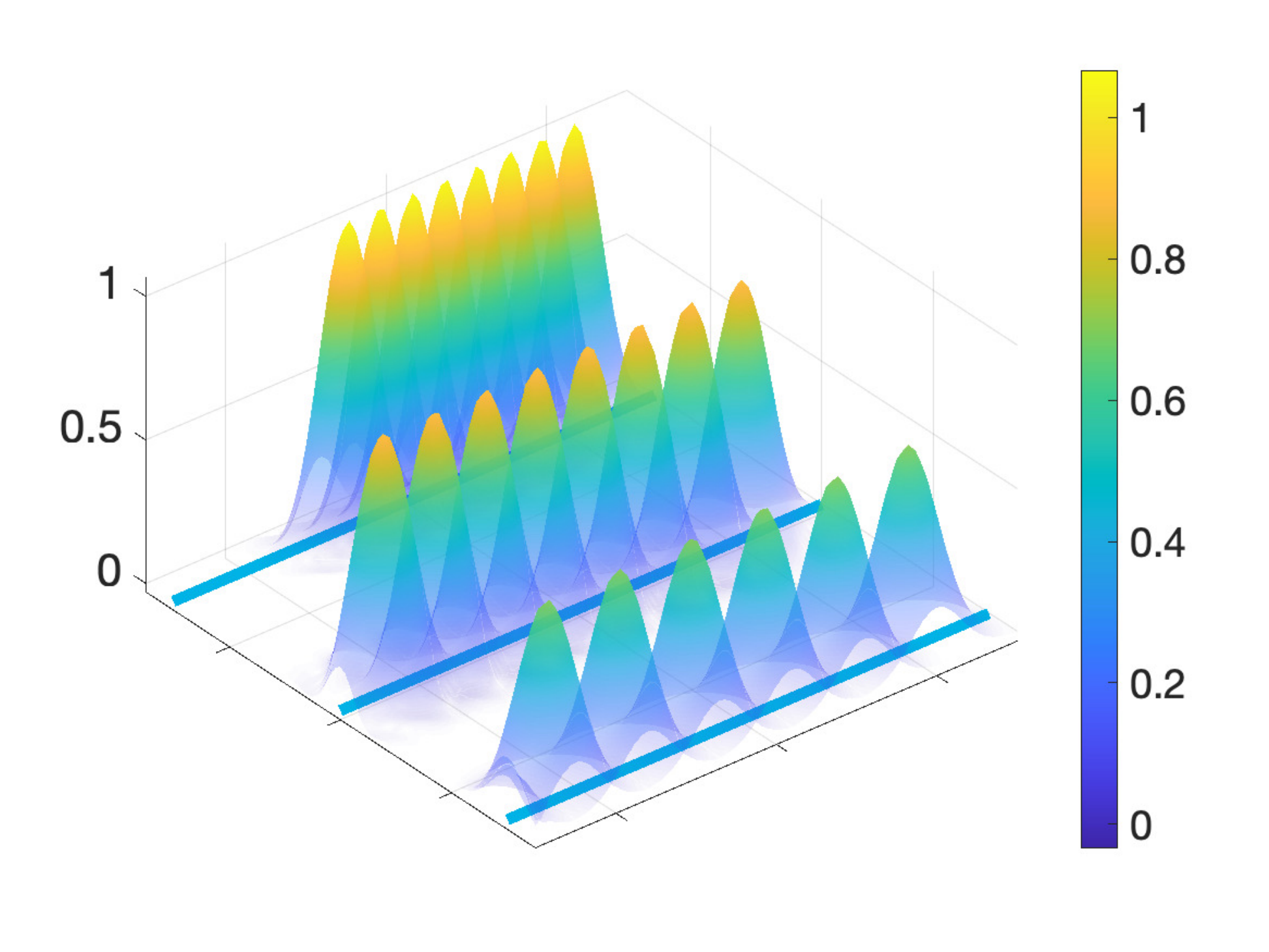} 
    \includegraphics[width=7cm]{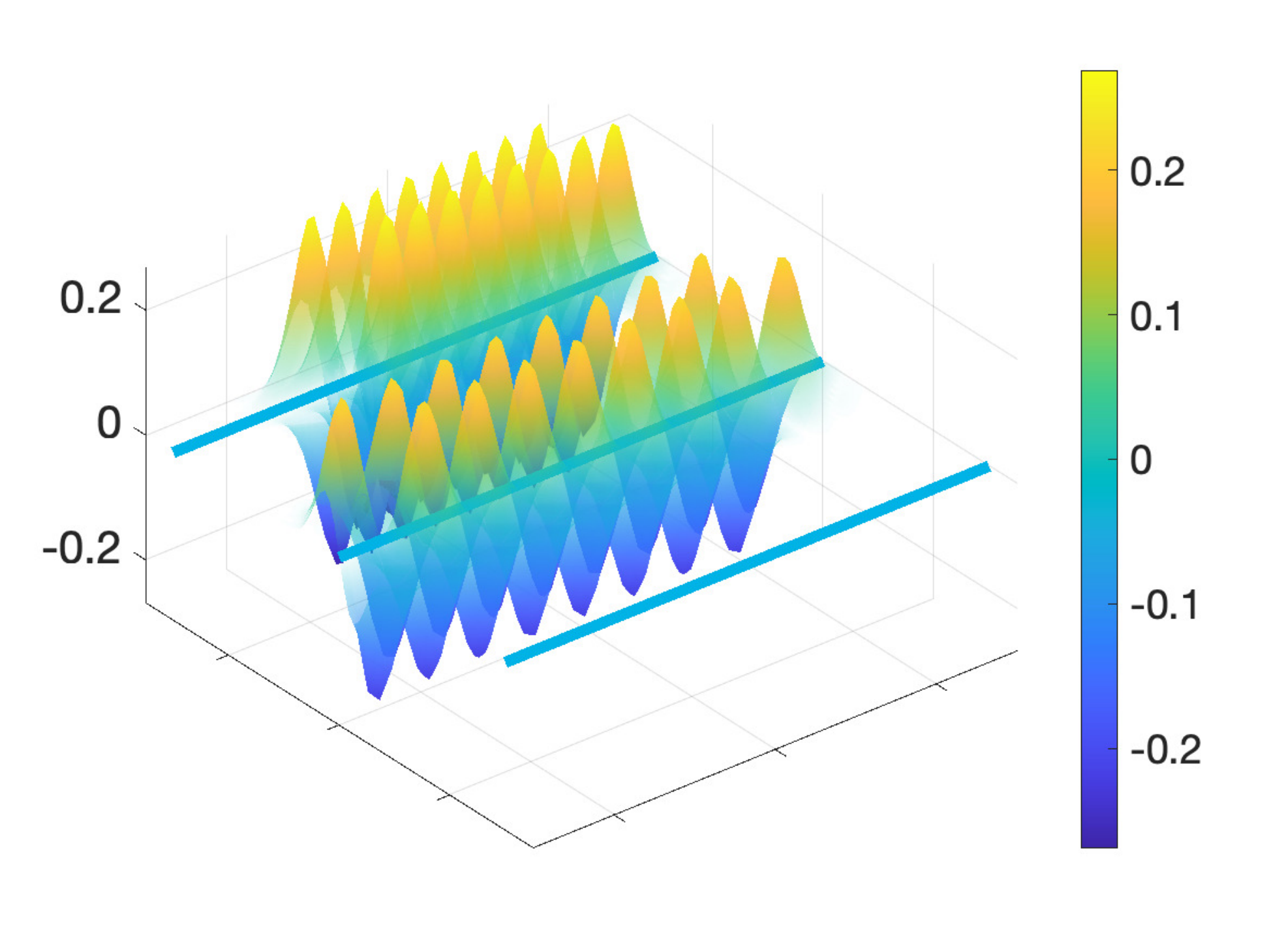}\\
    \includegraphics[width=7cm]{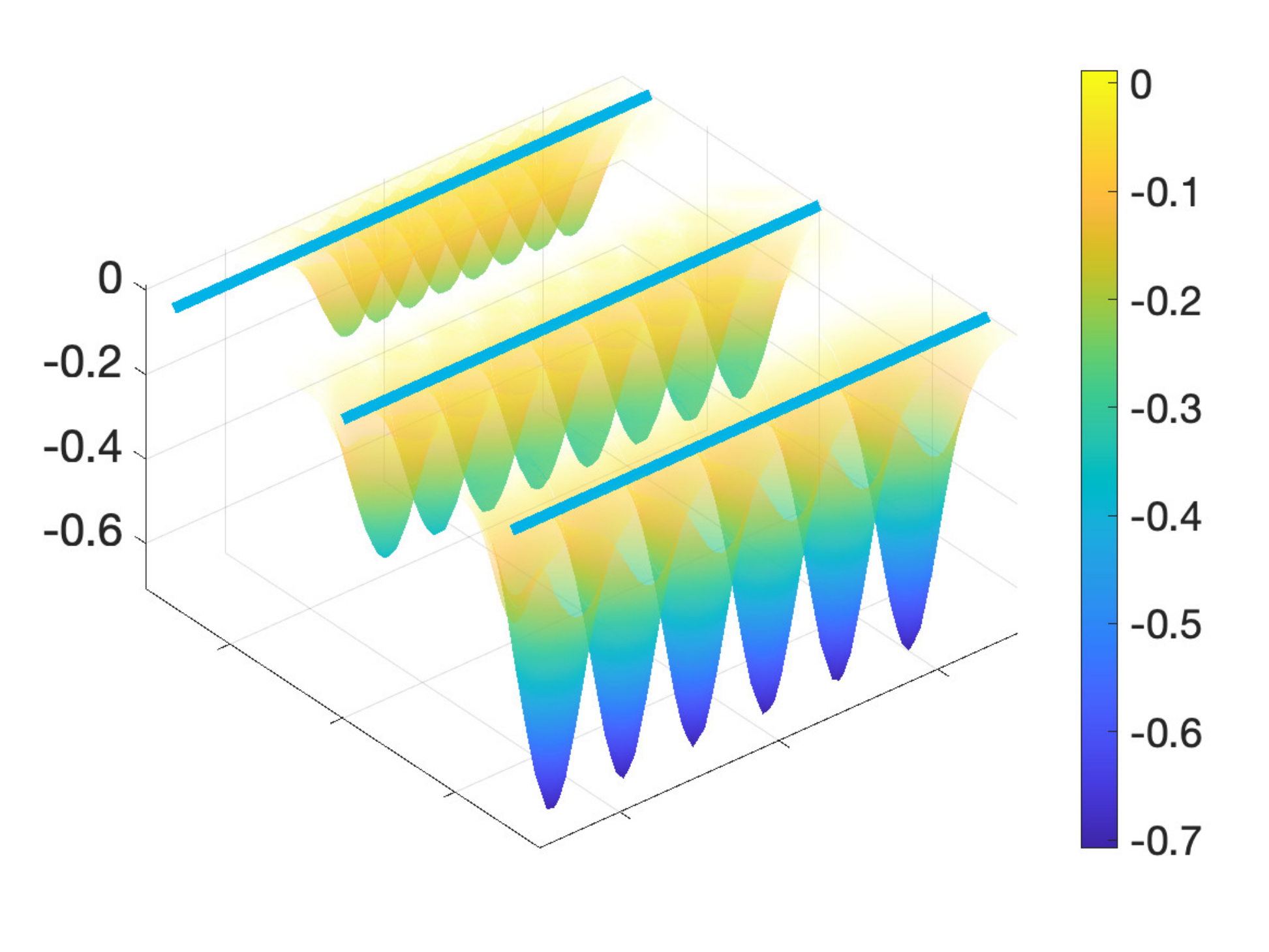} 
    \includegraphics[width=7cm]{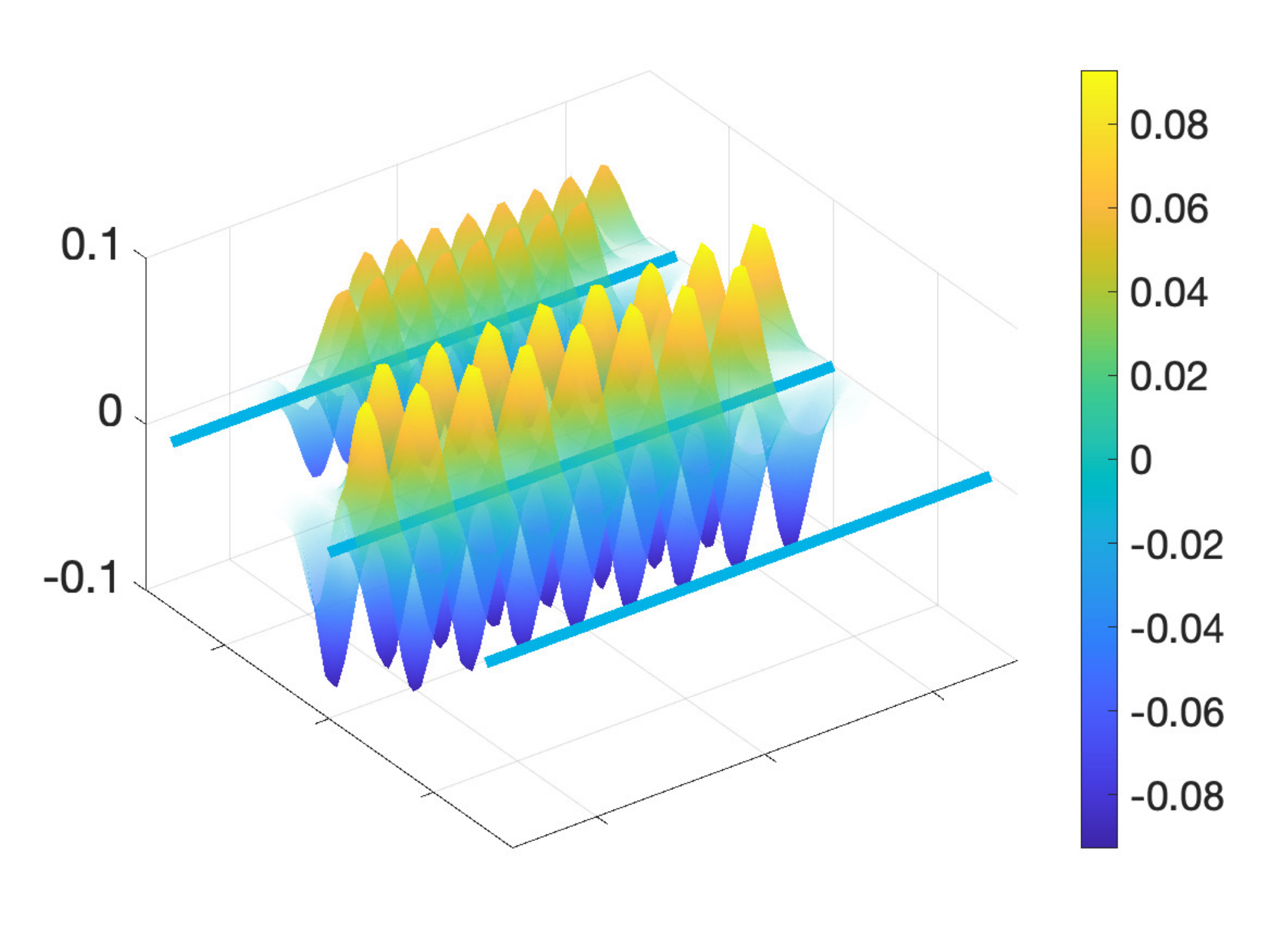}
        \caption{$\epsi = 0.05$. Snapshots of real part (left figures) and imaginary part (right figures) of first (top figures) and second (bottom figures) component of the wavepacket with straight interface $\kappa(x)=x_2$. The wavepacket is propagating towards the reader, for constant magnetic fields $B=1.5,0.75,0$ (from left to right in each individual figure). 
        } \label{fig:slowdown}
\end{figure}

\subsection{Magnetic slowdown in constant B-field}\label{sec:slowdown} The most visible impact of the magnetic field is the slowdown of the wavepacket: it propagates at speed $c_t = (1+B_t^2)^{-\frac12}$, which is strictly less than $1$ whenever $B_t \neq 0$. This is confirmed by the results shown in Figure \ref{fig:slowdown}.

In this simulation, a constant magnetic field for a straight interface given by $\kappa(x)=x_2$ is modeled by $\tilde A=A=-Bx_2e_1=Bx_2\tau$ (with therefore $\chi=0$ and $\beta=B$). This implies $j_t=k_t=0$ so that $e^{\mu_t}=1$ and $\nu_t=0$. The terms $\rho=\sqrt{1+B^2}$ and $\gamma=\frac{B}{1+B^2}$ are constant. The wavepacket velocity is constant and given by
\[
  \dot y_t = -\frac{1}{\sqrt{1+B^2}} e_1 =- ce_1.
\]
The wavepacket in \eqref{eq:gaussprofile} takes the simpler form
\[
  \mV_t a_0(t,z) = 
  \Big(\frac{\rho}{4\pi}\Big)^{\frac14} \Big(\frac{1}{\sigma+\rho\gamma^2}\Big)^{\frac12} e^{-\frac12 \left(\rho z_2^2 + \frac{1}{\sigma+\rho\gamma^2}(z_1+isz_2)^2\right)} \matrice{1\\-1}.
\]
The oscillations of the spinor components are clearly visible in Figure \ref{fig:slowdown}.

\subsection{Aharonov--Bohm effect for a circular interface} This effect emerges in the phase $\chi(t,x)/\eps$ in \eqref{eq:Psi0} when $\Gamma$ is a loop. 
Once $y_t$ completes an exact full rotation around $\Gamma$, the wavepacket acquires the phase shift
\begin{align}\label{eq-3c}
    \dfrac{\chi(t,y_t)}{\epsi} = \dfrac{1}{\epsi}\int_0^t \dot{y_s} A(y_s) ds = \dfrac{1}{\epsi}\int_\Gamma A,
\end{align}
see \eqref{eq-2g}. This is a gauge-independent quantity involving the magnetic flux $\Phi = \int_\Gamma A$ enclosed by $\Gamma$, and of order $1/\epsi$.

\begin{figure}[t]
    \centering
     \includegraphics[width=6cm]{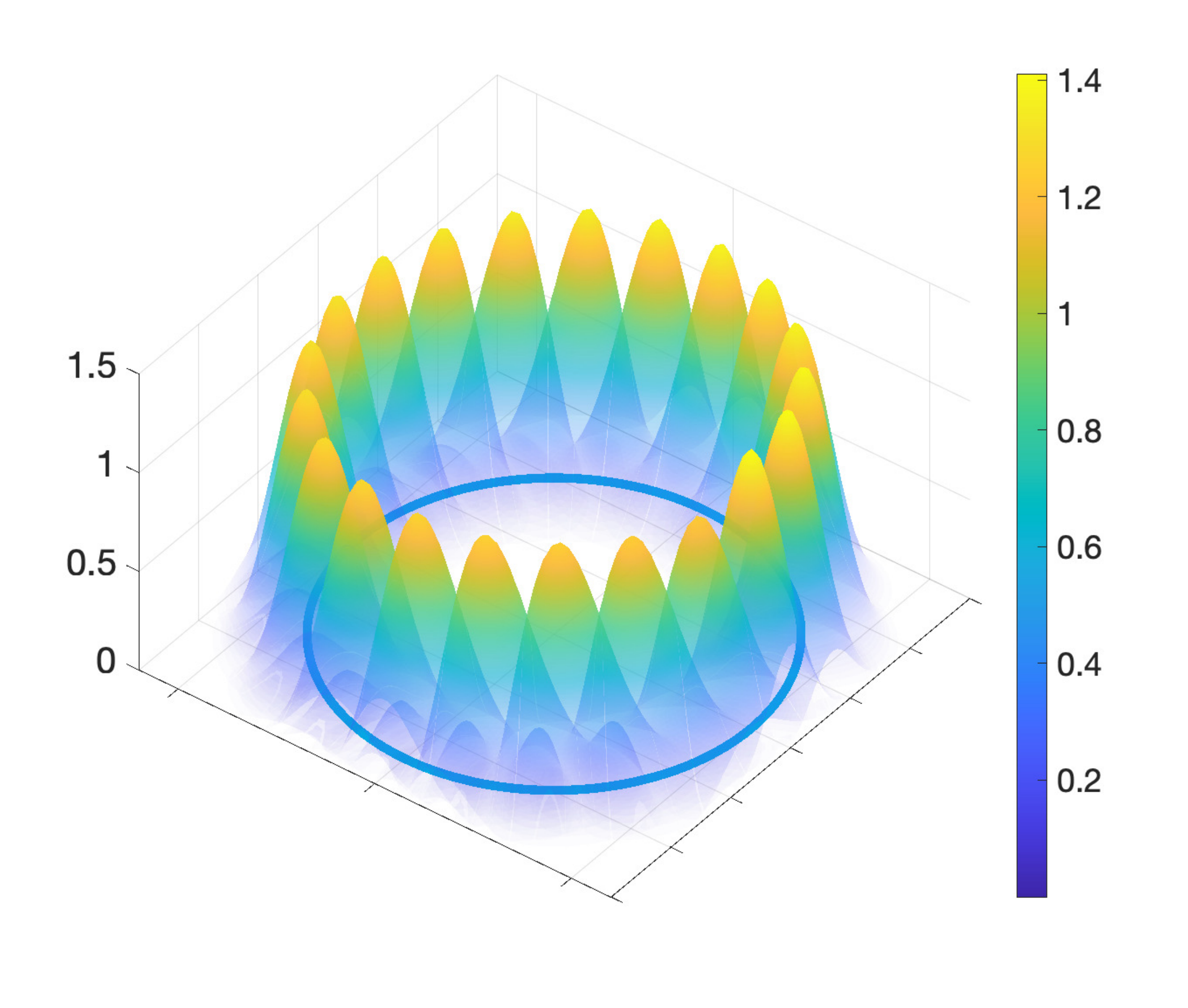}\includegraphics[width=6cm]{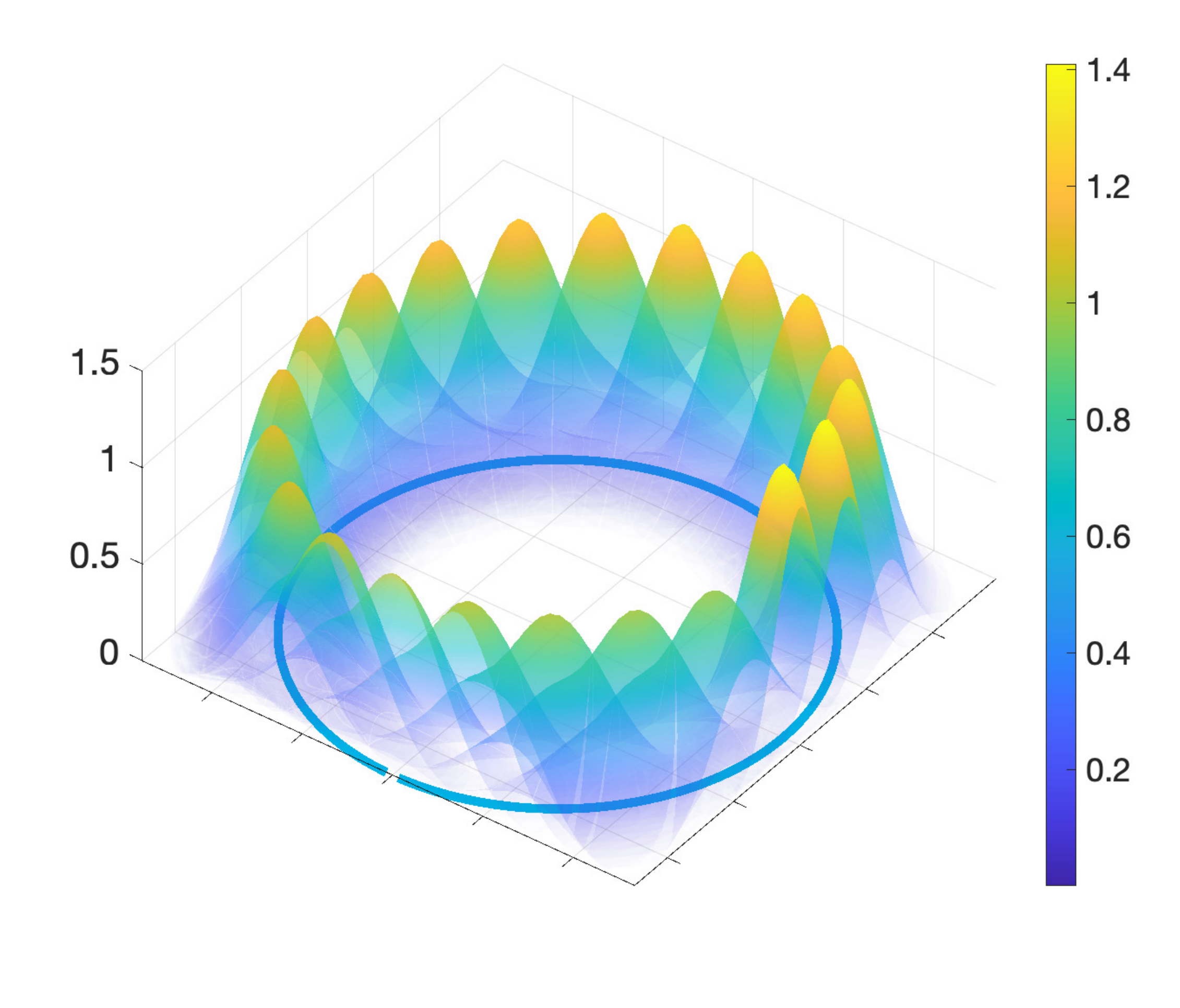}\\
    \includegraphics[width=6cm]{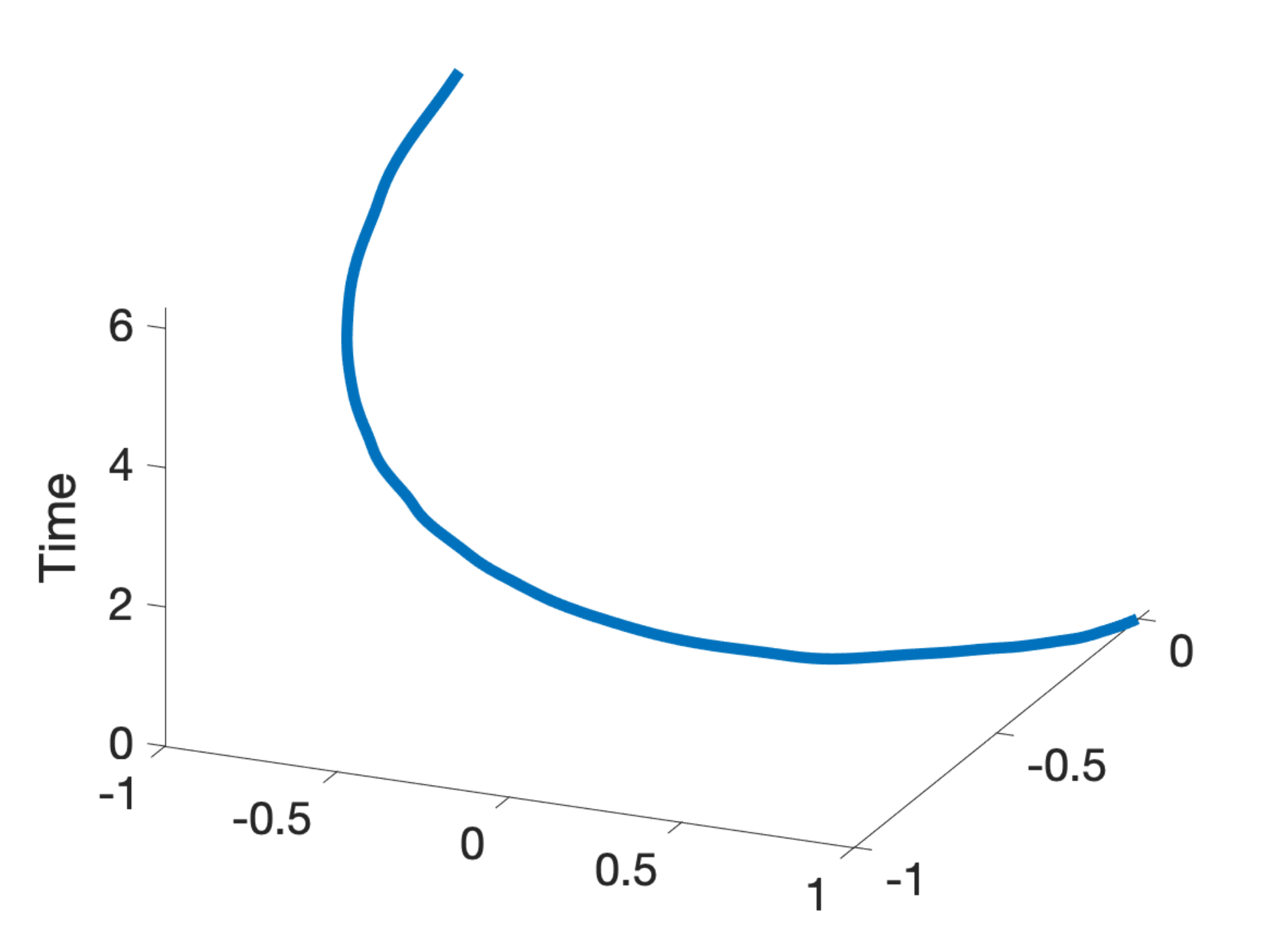}\includegraphics[width=6cm]{phase_AB_alexis.pdf}
    \caption{Aharonov--Bohm effect with $\varepsilon=3/40$, $\kappa(x) = \frac{\vert x \vert^2-1}{2}$ and $A(x) = \frac{\Phi}{2\pi |x|}$. The left and right panels correspond to $\Phi=0$ and $\Phi=2\pi$, respectively. The bottom panels are plots of $t \mapsto (\cos \varphi_t,\sin \varphi_t,t)$, where $\varphi_t$ is the phase of the top spinor component. The case $\Phi=2\pi$ induces $1/\epsi \simeq 13$ revolutions of the phase as the wavepacket travels once around the circle.
    }
    \label{fig:AB-phase}
\end{figure}

Consider a  magnetic vector potential with flux $\Phi>0$ given in polar coordinates by 
\begin{equation}
\label{eq:vec_pot}
    A(r,\theta) = \frac{\Phi}{2\pi r} {e}_{\theta}
\end{equation}
and a circular interface $\Gamma$ given by $|x|=R>0$.
Note that $B=\nabla\times A=\Phi\delta_0$ vanishes away from the origin and in particular in the vicinity of $\Gamma$. However, the wavepacket still feels a magnetic effect:  after a full revolution around $\Gamma$, it acquires according to \eqref{eq-3c} a measurable phase-shift $\Phi/\epsi$ that cannot be gauged away. This is the Aharonov--Bohm effect.

\begin{figure}[t]
    \centering
    \includegraphics[width=7.5cm]{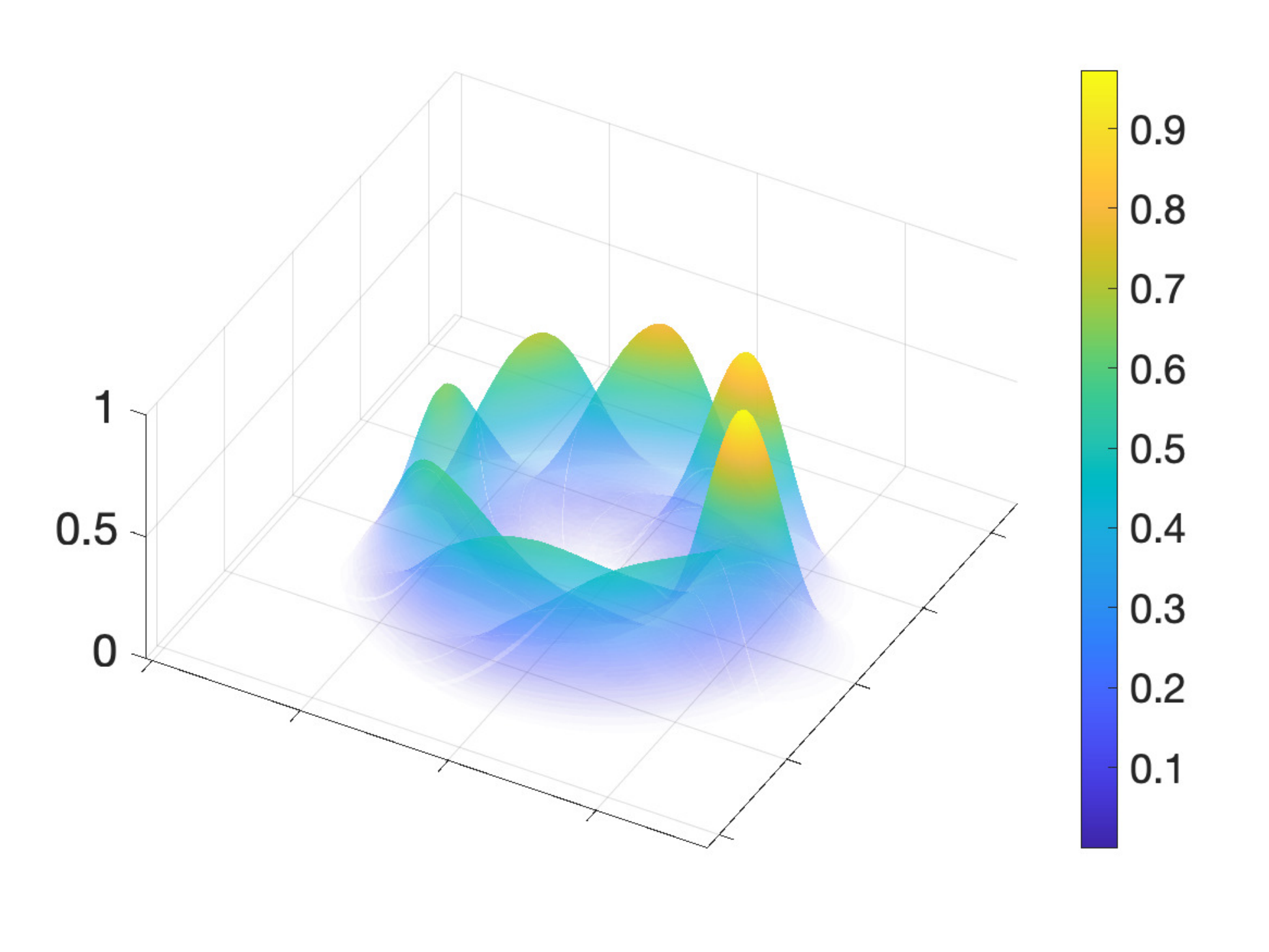} 
    \includegraphics[width=7.5cm]{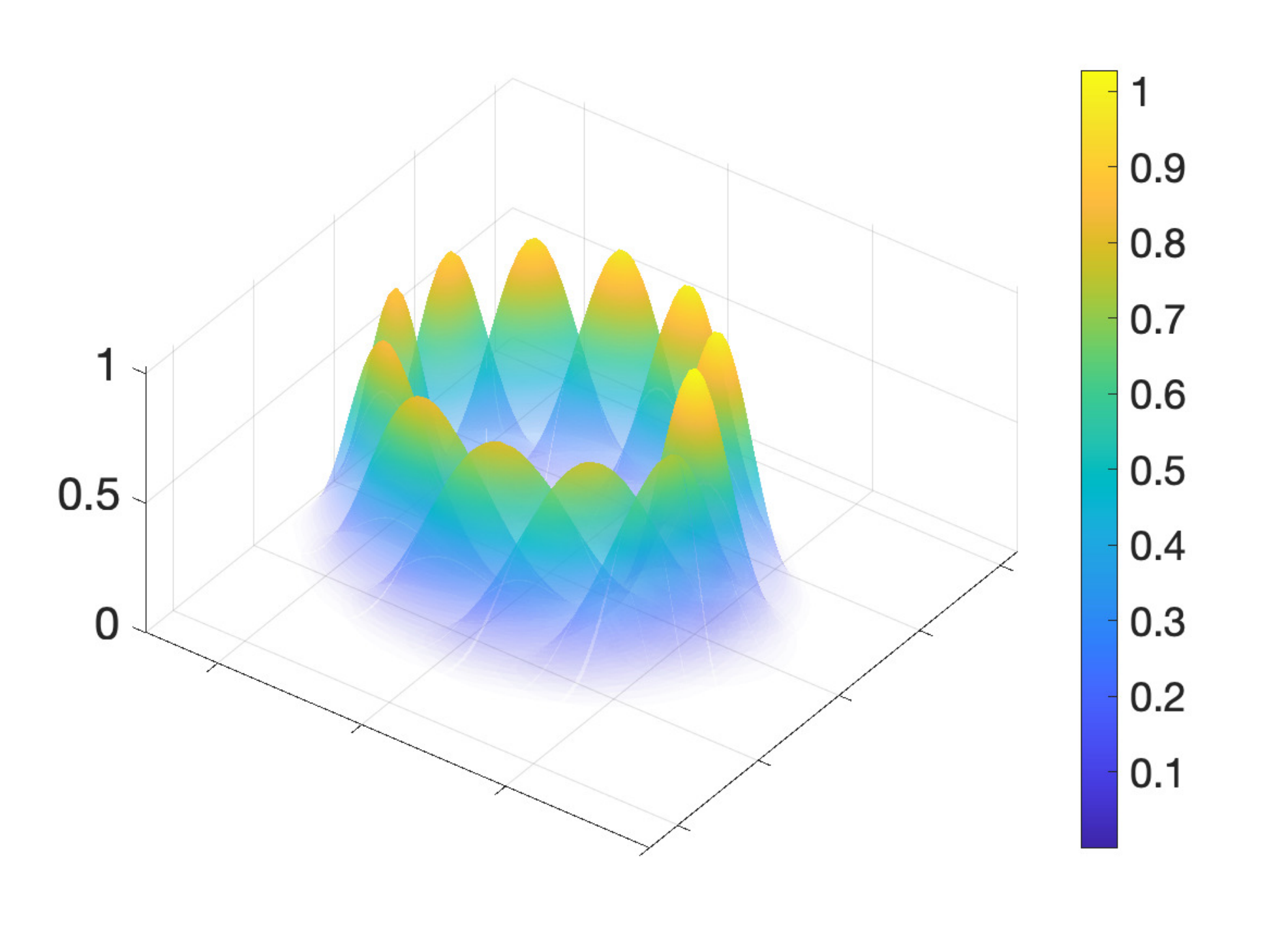}
    \caption{$\epsi = 0.05$. Snapshots showing one revolution of wavepacket, starting at 4 o'clock, on circular edge in different constant magnetic fields $B=1/\sqrt{2}$ (left, enhanced spreading) and $B=3/2$ (right, reduced spreading) with common interface $\kappa(x) =\frac{\vert x\vert^2-1}{2}.$ 
        }
    \label{fig:dispcurve}
\end{figure}

\subsection{Dispersive and phase effects in closed interfaces}\label{sec-7.3}

The coefficient $Q_\sigma$ in \eqref{eq:gaussprofile} controls the dispersion. The only term there that can grow with $t$ is 
\begin{equation}
    \nu_t e^{2\mu_t} = 2 \int_0^t \dfrac{c_t^2}{c_s^2} \left( \dot{\te_s} \gamma_s + k_s \gamma_s^2 \right) ds.
\end{equation}

Consider now a circle of radius $R$ and a choice of domain wall $\kappa(x)=\frac{|x|^m-R^m}{mR^{m-1}}$ for $m>0$ with $|\nabla\kappa|=1=r$ and $\Delta \kappa =m|x|^{-1}$ equal to $mR^{-1}$ on $\Gamma$.  Assume $B$ constant so that all coefficients are independent of time and given by
\[
  c=\frac{1}{\sqrt{1+B^2}}, \quad k=-\frac{cBm}{2R}, \quad  \dot\theta=\frac{c}{R}, \quad \dot\theta+\gamma k = \frac{c}{R}\left(1-\frac{mB^2}{2(1+B^2)}\right).
\]
We thus observe that $\nu_t=2t\gamma(\dot\theta+\gamma k)$ grows linearly in time provided that $\dot\theta+\gamma k \neq 0$. In this case, the resulting wavepacket decreases like $t^{-1/2}$, in a way depending on $B$ (higher magnetic fields, however, do not necessarily enhance dispersion). This was predicted in Lemma \ref{lem-1i} and  \eqref{eq:gaussprofile} and is numerically confirmed in Figure  \ref{fig:dispcurve}. 

In the other hand, when $m=\frac{2(1+B^2)}{B^2}$, the resulting domain wall prevents dispersion; see Figure \ref{fig:nodisp}. This can be of interest in application where one wants to slow down propagation without losing on coherence.

\begin{figure}[htbp]
    \centering
    \includegraphics[height=5cm]{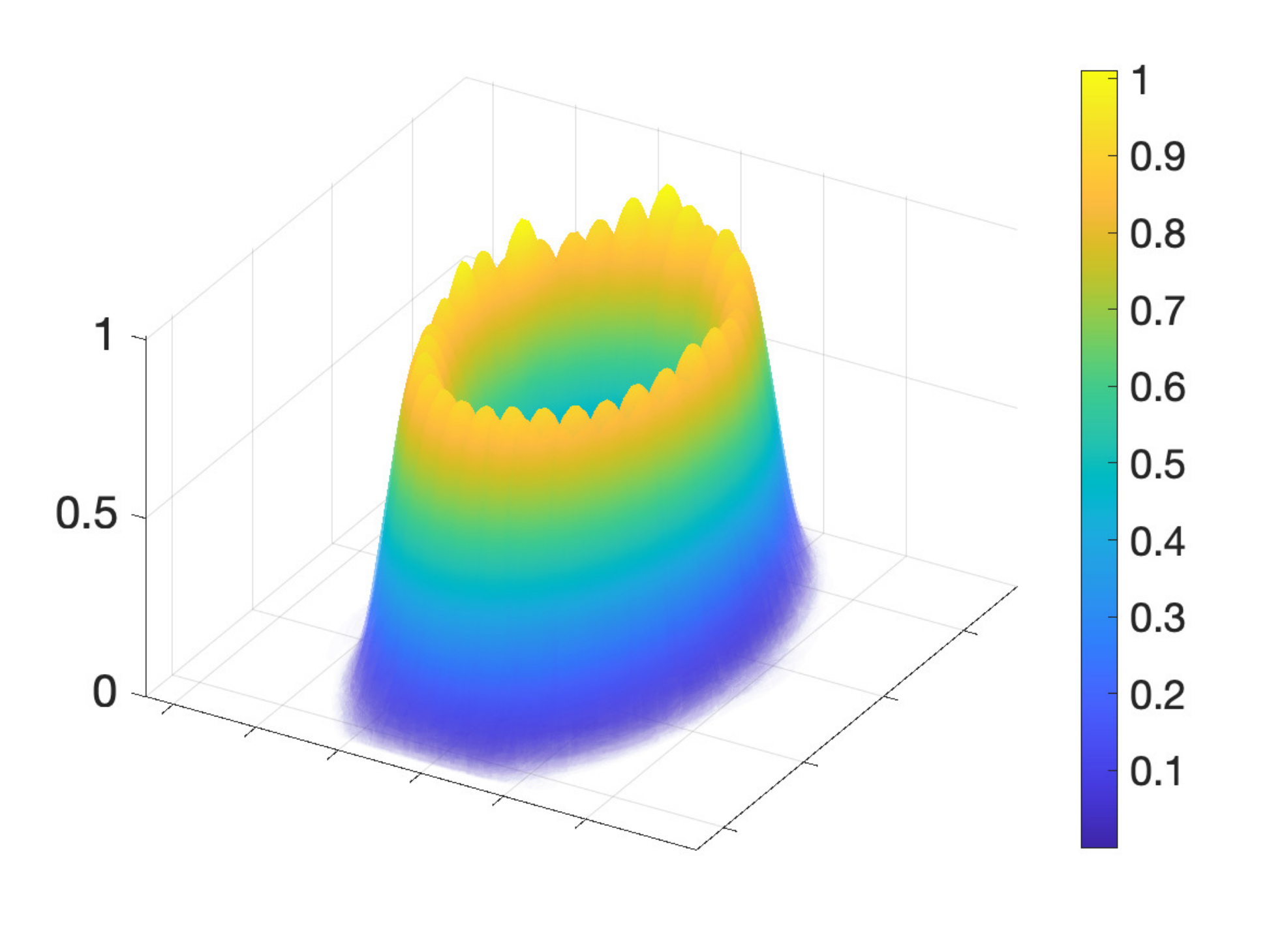}
    \includegraphics[height=5cm]{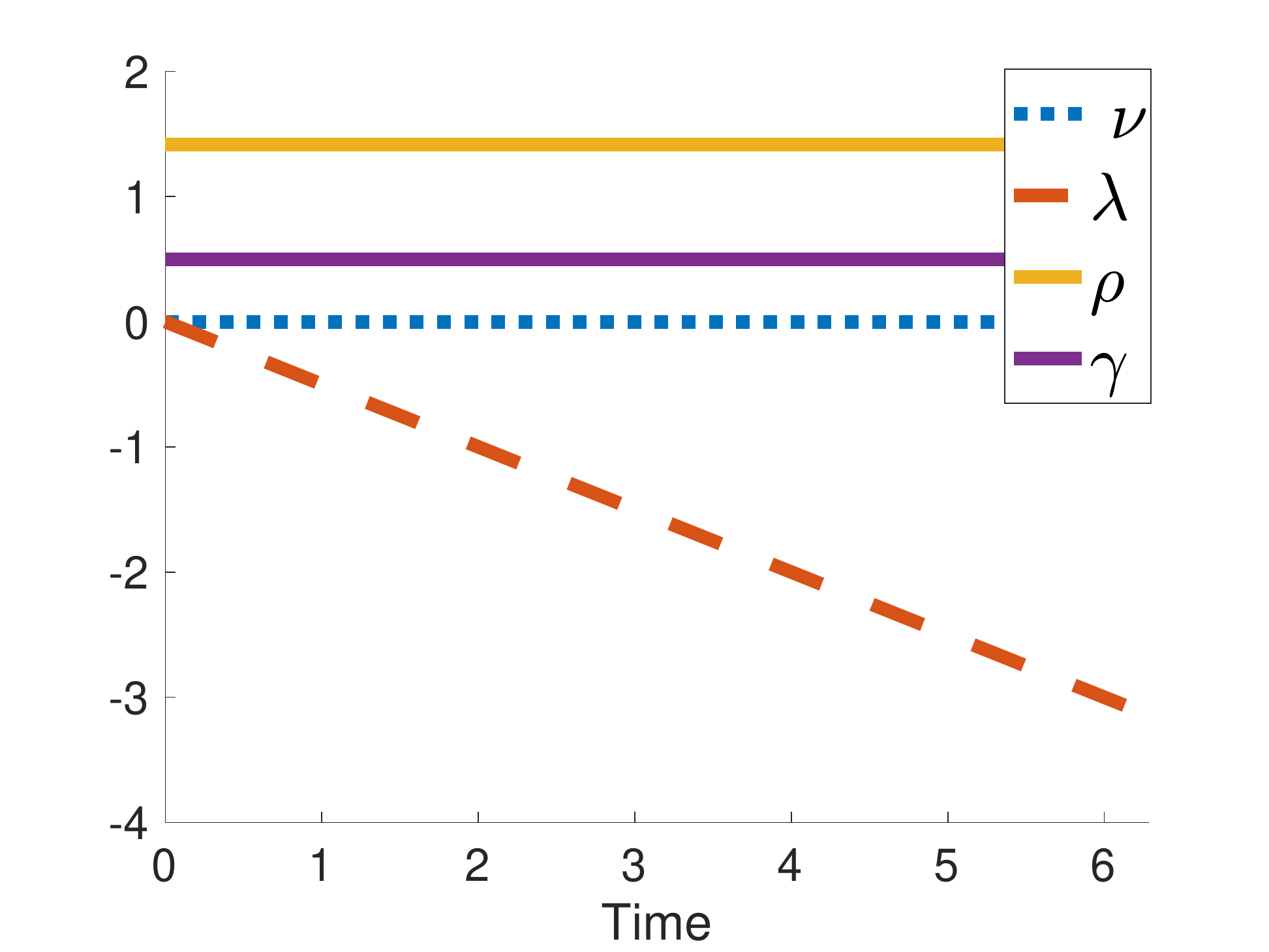}
    \caption{Homogeneous magnetic field strength $B=1$ with $\kappa(x) =\frac{\vert x\vert^4-1}{4}$. Snapshots showing an almost dispersion-free propagation around the circle.}
    \label{fig:nodisp}
\end{figure}


In this scenario we can explicitly construct the local gauge $\chi$. We have $n=e_r$ and $\tau=-e_\theta$, by using the defining relations $\partial_\tau\chi=A_\tau$ on $\Gamma$ followed by $\partial_n\chi=A_n$ across it. We have
\begin{equation}
    A=Bx_1e_2 = \frac {Br}2\big( (1+\cos 2\theta) e_\theta + \sin 2\theta e_r\big) 
\end{equation}
so that integrating along $\Gamma$, then across $\Gamma$, we obtain \begin{equation}
    \chi(R,\theta) = \frac{BR^2}2 \left(\theta+\frac{\sin 2\theta}2\right), \qquad \chi(r,\theta) = BR^2\frac {\theta} 2 + Br^2 \frac{\sin 2\theta}4.
\end{equation}
This shows that $\chi(r,\theta)$ is not globally defined as a continuous function on $\Rm^2$: the term $BR^2\frac {\theta} 2$ jumps after each revolution. The increment $\pi B R^2$ is the magnetic flux: we retrieve a Aharonov--Bohm effect. 

\subsection{Dispersive and (limited) compression effect in varying magnetic fields} 

We consider here the setting of a flat interface (with $\kappa(x)=x_2$) but with a varying magnetic field. We then observe compression and dispersive effects consistent with \eqref{eq:gaussprofile}.

\begin{figure}[htbp]
    \centering
    \includegraphics[width=8cm,height=6cm]{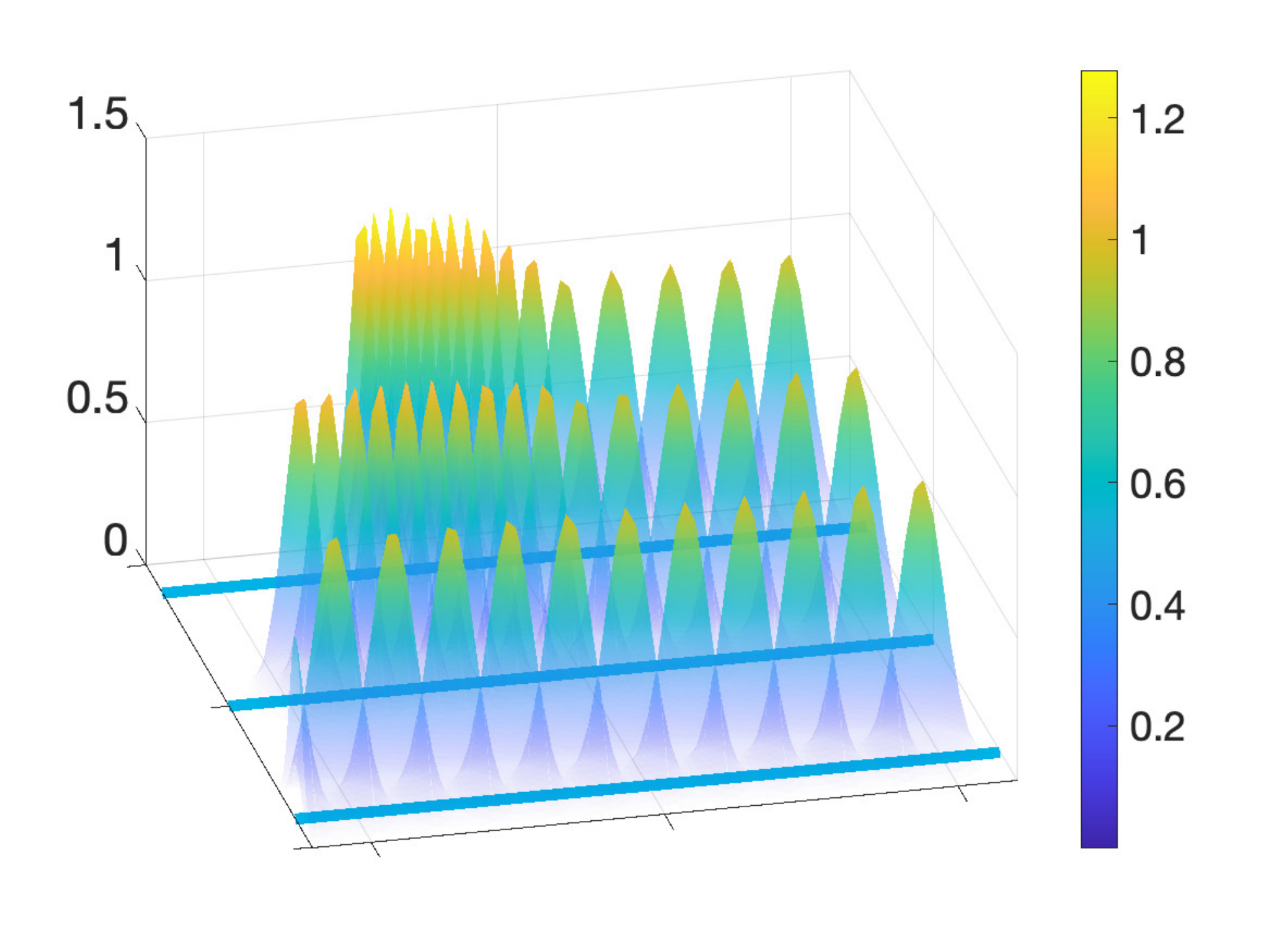} \includegraphics[width=7cm]{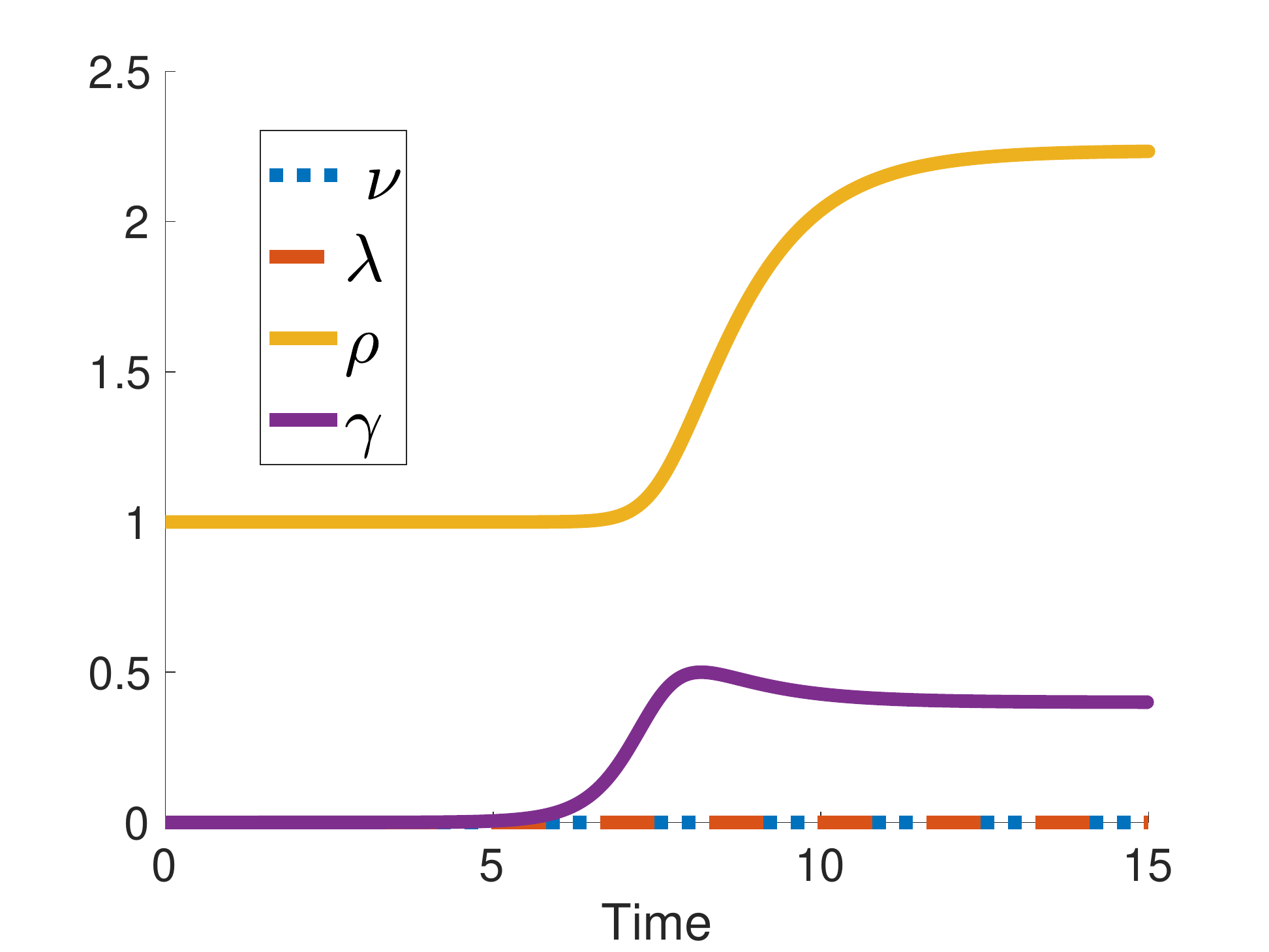}
    \caption{Snapshots of wavepacket, starting on the right, for a potential $A_1(x) = -B_0x_2 (1-\tanh(x_1-2))$ with $B_0 =0,1,2$ from bottom to top, which corresponds to a magnetic field $B(x) = B_0 (1-\tanh(x_1-2))$, on a straight interface $\kappa(x)=x_2$. Right figure with $B_0=1.$}
    \label{fig:varyB2}
\end{figure} 
In Figure \ref{fig:varyB2}, the intensity of the magnetic field increases as the wavepacket propagates along $\Gamma$. We thus expect an increase in $\rho_t$ and as a result a compression of the wavepacket. This is confirmed by the numerical simulations of Figure \ref{fig:varyB2}.


\begin{figure}[htbp]
    \centering
    \includegraphics[width=8cm,height=6cm]{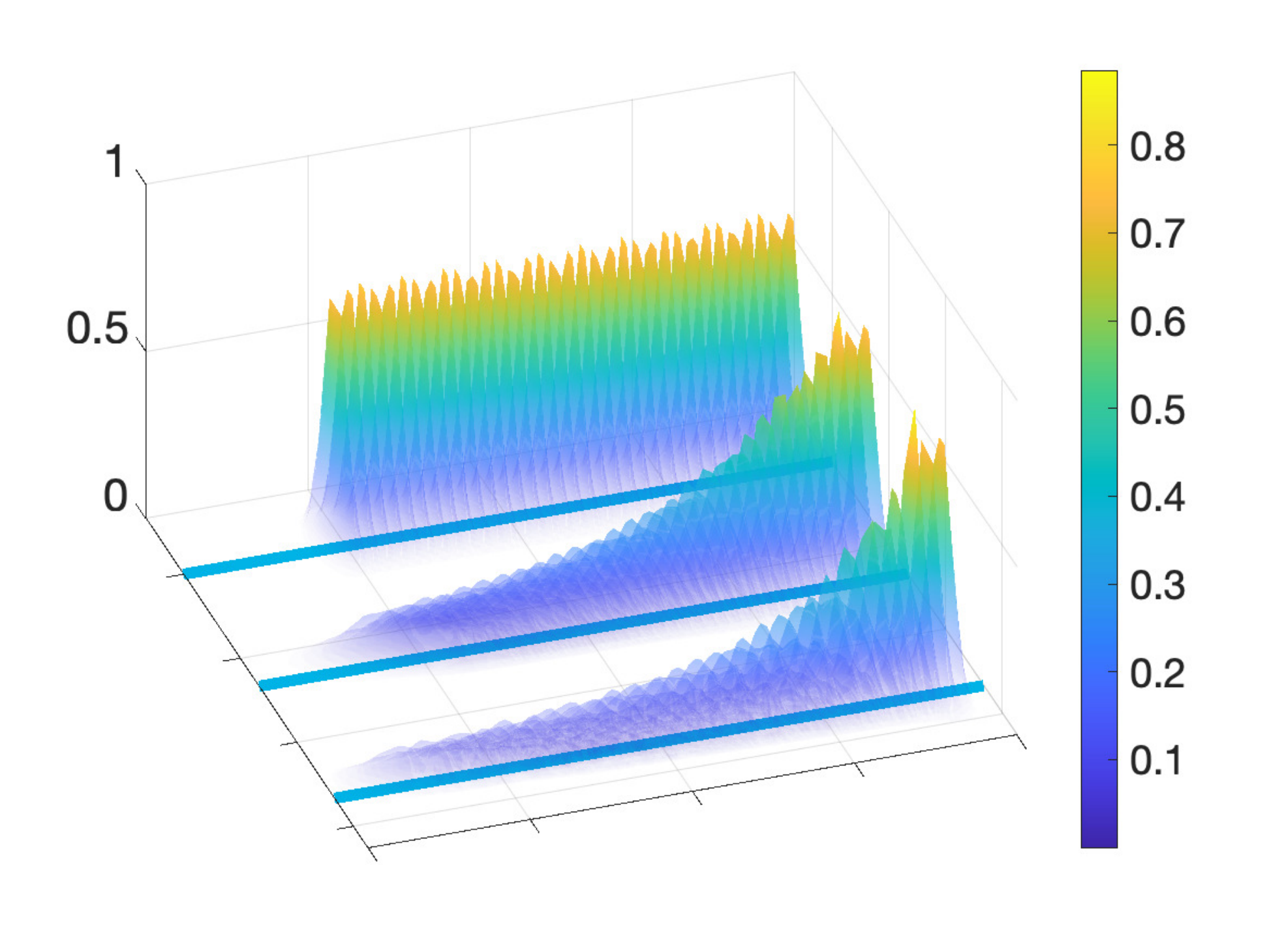} 
    \includegraphics[width=8cm,height=6cm]{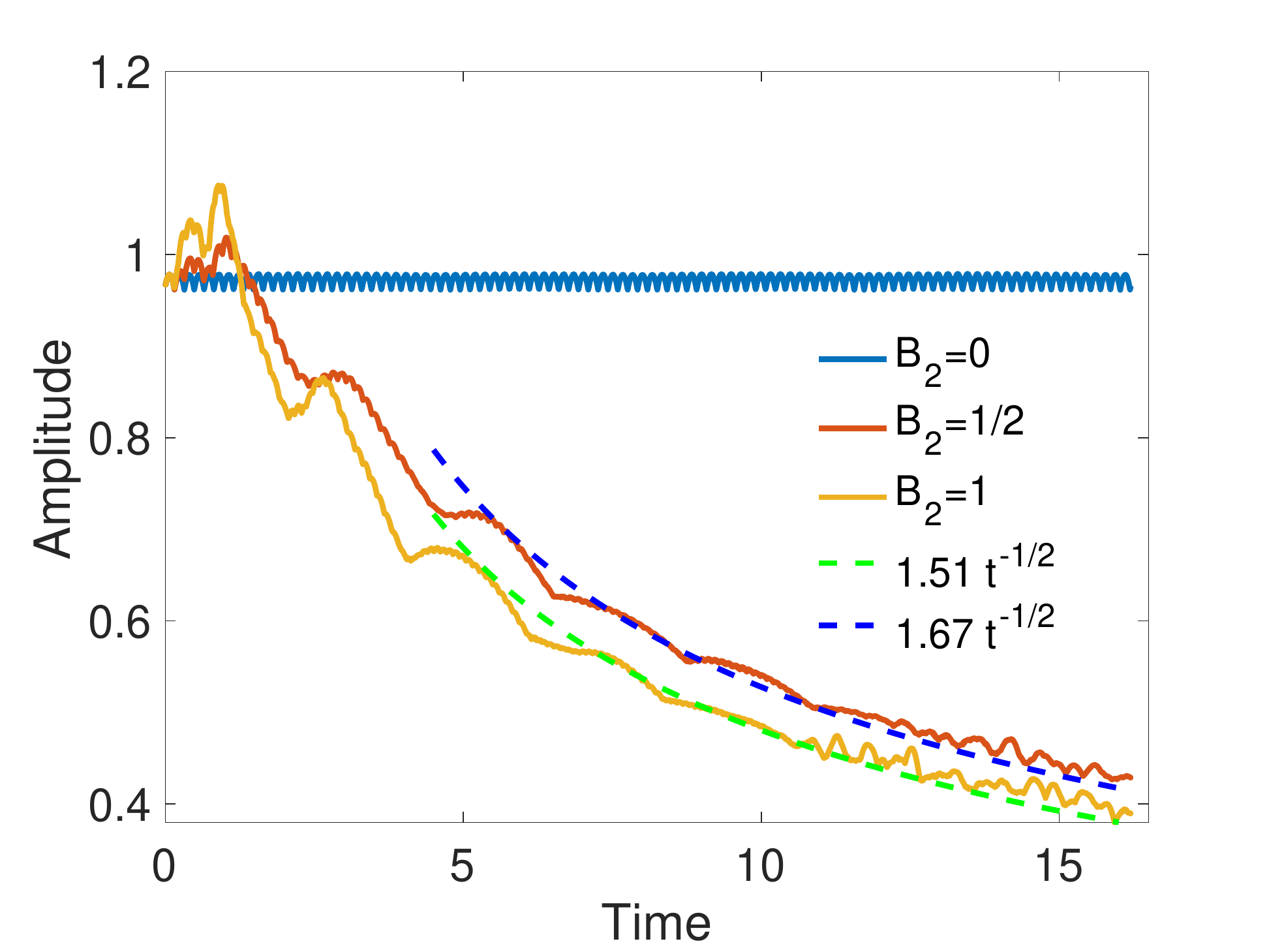} \\
    \caption{$\varepsilon=0.075$, $B=1+4B_2x_2$ with $B_2 =0,0.5,1$ and $\kappa(x)=x_2$. The left figure shows the snapshots of the three wavepackets, starting on the right, and on the right, we see the power-law decay in the non-constant magnetic field. The right figure shows the $L^{\infty}$-decay of the amplitudes. 
   } \label{fig:varyingB}
\end{figure} 

We next consider the setting of a magnetic field that increases transversely to $\Gamma$: $B(x)=1+4B_2x_2$. Set $A=\tilde A = -(1+2B_2x_2) x_2 e_1$, resulting in $\beta=1+2B_2x_2$; in particular $\partial_n\beta=2B_2$ and $k=k_t=cB_2$ is constant. This shows that $\nu_t$ grows linearly with time. The wavepacket decays like $t^{-1/2}$, as confirmed numerically in Figure \ref{fig:varyingB}.

The amplitude drop generated by dispersion is, however, reversible. For instance, the magnetic field $B(x)=1+4\cos(2\pi x_1 / 15)x_2$  generates time-dependent oscillations in $\nu_t$: this coefficient is proportional to $\sin(2\pi t/15)$. There is no dispersion for $t \in 15\Z$, as shown in Figure \ref{fig:decom_com}. 

\begin{figure}[htbp]
    \centering
    \includegraphics[height=4.5cm]{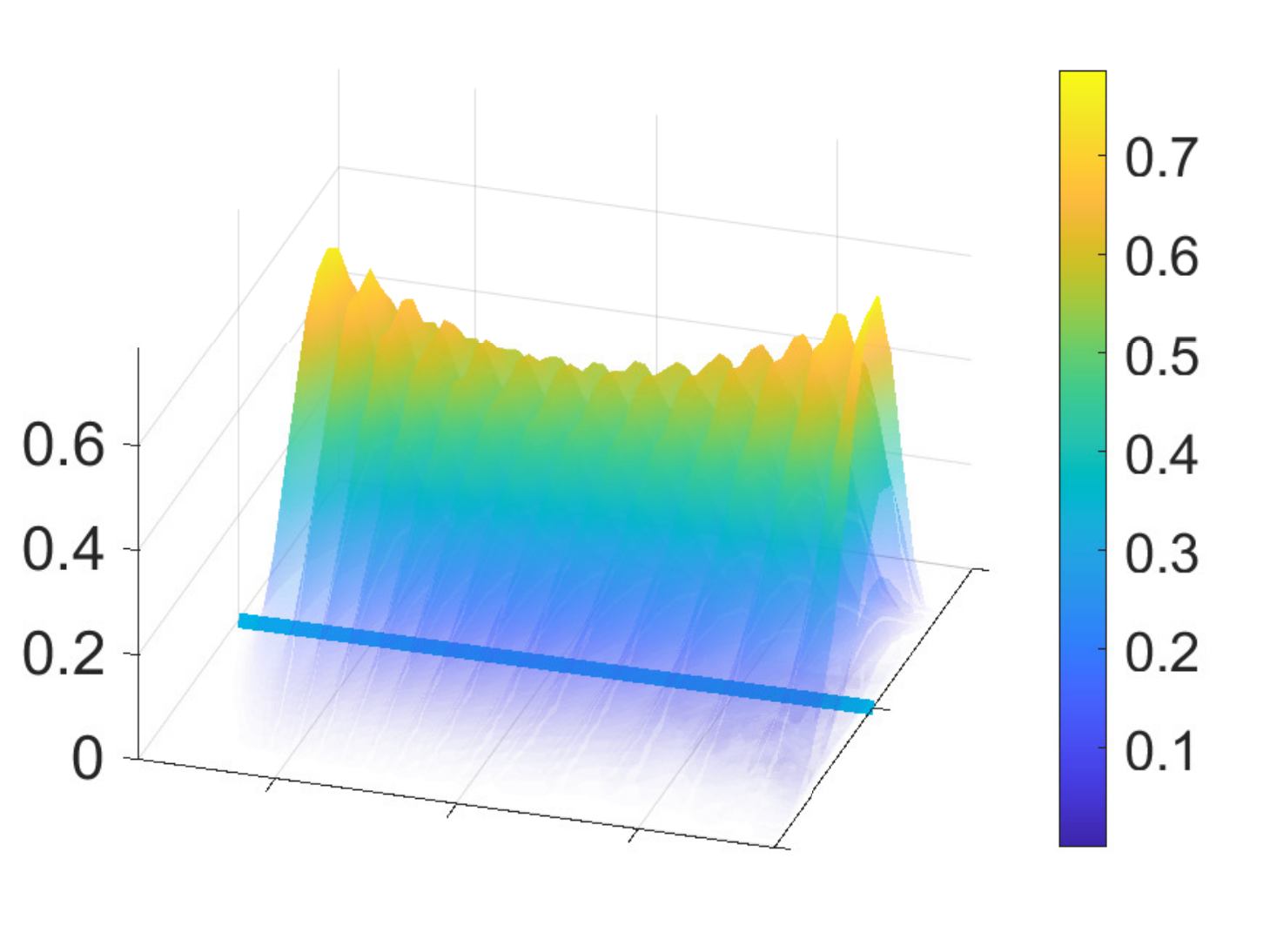} 
    \includegraphics[height=4.5cm]{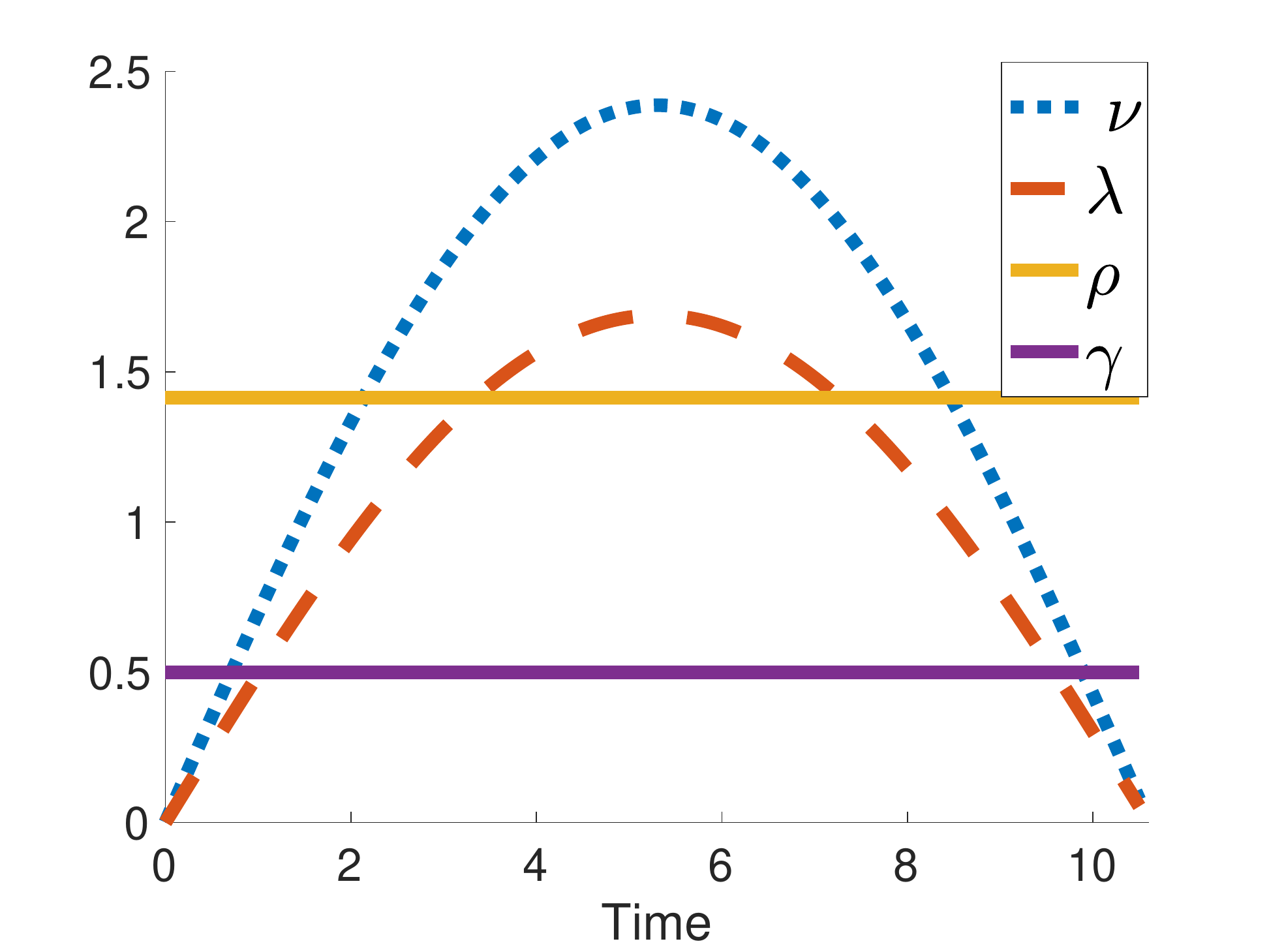}
    \caption{$\varepsilon=0.1$, snapshots of a wavepacket, starting on the right, for a straight interface $\kappa(x)=x_2$ with $B(x)=1+4\cos(\tfrac{2\pi x_1}{15})x_2$, a periodic modulation of the field in Figure \ref{fig:varyingB}, sampled over half a period of the cosine. The reversible amplitude drop caused by dispersion is clearly visible.}
    \label{fig:decom_com}
\end{figure}

\FloatBarrier



\bibliographystyle{amsxport} 
\bibliography{edgemode}

\end{document}

\newpage 

{\bf OLD MATERIAL FROM INTRODUCTION}

We refer the reader to \cite{bal2021edge} for the relations between the interface wavepackets we consider here and the fields of topological insulators and more generally topological phases of matter. \gb{Add more references here.} A classification of operators including the above Dirac operator \eqref{eq:Dirac} with magnetic potential written in an appropriate gauge may also be found in \cite[Chapter 7]{bal2021topological}.

This appropriate gauge is introduced such that the leading term, in powers of the semiclassical parameter $\eps$, of the gauged Dirac operator \eqref{eq:Dirac} admits an explicit inverse. Implementing the latter first requires us to introduce a number of transformations of the spinor $\psi$. The spatial and spinorial rotations already used in the absence of a magnetic field in \cite{bal2021edge} are followed by another spinorial rotation and a final shifted partial Fourier transform mixing spatial and the Fourier dual variables. These transformations are presented in detail in section \ref{sec:local}. The local spatial and spinorial rotations are denoted by the unitary $\bU_t$ while the nonlocal shifted partial Fourier transform is given by $\mV_t$. Denoting them collectively by the unitary transformation $\fU_t=\bU_t\mV_t$, the wavepacket thus takes the form
\begin{equation}\label{eq:atopsiM}
  \psi_M(t,x) = e^{\frac i\eps \chi(x)} \eps^{-\frac12} (\fU_t a) \Big( t,\frac{x-y_t}{\sqrt\eps}\Big)
\end{equation}
with $a=a(\xi_1,\zeta_2)$ solving an equation $Ta=0$ equivalent to \eqref{eq:D1}. The sets of variables we introduce are $z$  a spatial rotation of $\frac{x-y_t}{\sqrt\eps}$ so that $z_2$ models signed distance to $\Gamma$, $\xi_1$ the dual Fourier variable to $z_1$, and finally $\zeta_2=z_2+\gamma_t\xi_1$, for a time dependent function $\gamma_t\in\Rm$, which takes the peculiar form of a translation of the spatial variable $z_2$ by the Fourier variable $\xi_1$. 

We next implement in this set of variables a standard asymptotic expansion procedure. The operator $T$ can formally be written as $T=\sum_{j\geq0}\eps^{\frac j2}T_j$ and similarly $a=\sum_{j\geq0}\eps^{\frac j2}a_j$. Equating like powers of $\eps$ in $Ta=0$ first yields the local equilibrium problem $T_0a_0=0$. As in \cite{bal2021edge}, the kernel of $T_0$ is infinite dimensional of the form $a_0(t,\xi_1,\zeta_2)=f(t,\xi_1) \phi(\zeta_2)$ with $f(t,\xi_1)$ arbitrary and $\phi$ a profile given in section \ref{sec:model}. The leading term $f(t,\xi_1)$ is obtained from the next-order constraint $T_1a_0+T_0a_1=0$ by means of the infinite-dimensional set of constraints such an equation requires to admit solutions. These constraints form the transport equation $\mT f=0$ for $f$, which is analyzed in section \ref{sec:transport}.


\medskip

A salient feature of the transport operator is that the leading solution $a_0(t,\xi_1,\zeta_2)$ takes the following explicit form
\begin{equation}\label{eq:a0intro}
  a_0(t,\xi_1,\zeta_2)=  \mu_t^{\frac12} f_i(\mu_t\xi_1) e^{i\lambda_t} e^{i\frac12\nu_t\xi_1^2}  \Big(\frac{\rho_t}{4\pi}\Big)^{1/4} e^{-\frac12 \rho_t \zeta_2^2} \matrice{1\\-1},
\end{equation}
for time dependent real-valued coefficients $(\lambda_t,\mu_t,\nu_t,\rho_t)$ \gb{slightly different from the notation used in \eqref{eq:finvT} with $\lambda_t=\lambda_0(t;0)$, $\mu_t=\mu_1(t;0)$ and $\nu_t=2\mu_2(t;0)$} \slb{also this subscript $f_i$ does not look completely obvious to me in this context.} that depend on $\kappa$ and $A$.  Here, $f_i(\xi_1)$ is the initial profile of the wave packet in the Fourier variable $\xi_1$. Mass conservation implies that the $L^2(\Rm^2)$ norm of $a_0$ is independent of time $t$. Both $\rho_t$ and $\mu_t$ are bounded above and below by positive constants while $\lambda_t$ and $\nu_t$ may take arbitrary (bounded) values a priori.

When $B=0$, then $(\lambda_t,\mu_t,\nu_t)=(0,1,0)$ and $\zeta_2=z_2$. Denoting by $\check f_i$ and $\check a_0$ the inverse Fourier transform of $f_i$ and $a_0$ in the first spatial variable, we find the same expression as in \cite{bal2021edge}, namely
\begin{equation}\label{eq:a0introB0}
  \check a_0(t,z)=  \check f_i(z_1)  \Big(\frac{\rho_t}{4\pi}\Big)^{1/4} e^{-\frac12 \rho_t z_2^2} \matrice{1\\-1}.
\end{equation}

The structure of the  wavepacket $\mV_t a(z) $ in the spatial variables $z$ when $B\not=0$ is significantly more complicated.
Defining $\tilde\nu_t=\nu_t\mu_t^{-2}$ and $\tilde\rho_t=\rho_t\mu_t^{-2}$, we find that it takes the form
\begin{equation}\label{eq:Vaofz}
    (\mV_ta) (\mu_tz_1,\mu_t^{-1}z_2) =\Big(\frac{\tilde \rho_t}{4\pi}\Big)^{1/4} \frac{e^{i\lambda_t}}{\sqrt{2\pi}}\dint_\Rm  f_i(\xi_1) e^{i\frac12 \tilde \nu_t \xi_1^2} e^{-\frac12 \tilde\rho_t(z_2+\gamma_t\xi_1)^2} e^{iz_1\xi_1} d\xi_1 \matrice{1\\-1}.
\end{equation}

To illustrate the effects of the magnetic field on the propagating wavepacket, we assume a Gaussian profile for the initial condition $f_i(\xi_1)=e^{-\frac12 \sigma \xi_1^2}$ and then \eqref{eq:Vaofz} becomes
\begin{equation}\label{eq:Vaofzgaussian}
  (\mV_ta) (\mu_tz_1,\mu_t^{-1}z_2) =  \Big(\frac{\tilde \rho_t}{4\pi}\Big)^{1/4} \frac{e^{i\lambda_t}}{\sqrt{2\pi Q_t}} e^{-\frac12 [\tilde\rho_tz_2^2+\frac 1{Q_t} (z_1+i\tilde\rho_t\gamma_t z_2)^2]}\matrice{1\\-1},\  Q_t:=\sigma+\tilde\rho_t\gamma_t^2-i\tilde \nu_t.
\end{equation}
The term under square brackets in the above exponential is a quadratic form in $z$ whose real part is given by
\[
  z\cdot \matrice{{\rm Re}\frac1{Q_t} & - {\rm Im} \frac1{Q_t} \tilde \rho_t\gamma_t \\ * & \tilde \rho_t - (\tilde\rho_t \gamma_t)^2 {\rm Re}\frac 1{Q_t}} z = {\rm Re}\frac1{Q_t} (z_1-\varsigma_t z_2)^2 + \frac{ \sigma \tilde\rho_t} {\sigma+\tilde\rho_t\gamma_t^2} z_2^2 ,
  \qquad \varsigma_t= \frac{\tilde \nu_t\tilde\rho_t \gamma_t}{\sigma+\tilde\rho_t\gamma_t^2}.
\]
In the absence of a magnetic field, $B=0$, the above quadratic form is $\sigma^{-1}z_1^2 + \rho_t z_2^2$, which is consistent with \eqref{eq:a0introB0} for a gaussian initial profile.

The main effects of the presence of the magnetic field on the wavepacket propagation are thus: (i) a slowdown of the wavepacket speed $|\dot y_t|$ as shown in \eqref{eq:yt} below; (ii) a change in the directions of gaussian decay as given by $P_t$ above\slb{where?} and a rescaling $(z_1,z_2)\to (\mu_t^{-1}z_1,\mu_tz_2)$; and (iii) a compression/spreading of the wavepacket amplitude by a factor $(\tilde \rho_t)^{1/4} Q_t^{\frac{-1}2}$ partially caused by dispersive-like effects generated by the term $e^{\frac i2\tilde\nu_t\xi_1^2}$.

While $\rho_t=|\nabla\kappa(y_t)|$ also varies when $B=0$, it is amplified by the magnetic field $\rho_t=\sqrt{|\nabla\kappa(y_t)|^2+B^2(y_t)}$. When $B=0$, we find that $Q_t=\sigma$ is independent of time.

Let us consider the implications the above form has on the initial conditions for the wavepacket. These cannot be arbitrary as only the non-dispersive mode of the Dirac operator is being considered. While $\mu_0=1$ and $\nu_0=\lambda_0=0$, the shift $\gamma_0$ may not vanish and we have
\[
(\mV_0a) (z) = \Big(\frac{\rho_0}{4\pi}\Big)^{1/4} \frac{1}{\sqrt{2\pi}}\dint_\Rm  f_i(\xi_1)  e^{-\frac12 \rho_0(z_2+\gamma_0\xi_1)^2} e^{iz_1\xi_1} d\xi_1 \matrice{1\\-1}.
\]
When $f(\xi_1)=e^{-\frac12\sigma\xi_1^2}$, we find 
\begin{equation}\label{eq-1q}
(\mV_0a) (z) = \Big(\frac{\rho_0}{4\pi}\Big)^{1/4} (\sigma+\rho_0\gamma_0^2)^{-\frac12} e^{-\frac12 \rho_0z_2^2 - \frac12 (\sigma+\rho_0\gamma_0^2)^{-1}(z_1+i\rho_0\gamma_0z_2)^2}\matrice{1\\-1}.
\end{equation}
When $\gamma_0=0$, we retrieve a profile $\sigma^{-1}z_1^2+\rho_0z_2^2$ as expected. When $\gamma_0\not=0$, then the profile of the initial wavepacket must have the above form at leading order in $\eps$.

\medskip

The expansion in powers of $\eps$ can be carried out explicitly to arbitrary orders provided that the coefficients $(\kappa,A)$ are sufficiently smooth. This is presented in section \ref{sec:error}. We also show in that section that for appropriate initial conditions, the error between the solution $\psi$ and its approximation $\psi^J$ involving $J$ terms is of order $\eps^{\frac{J+1}2}$ uniformly on bounded domains in time. Although we do not consider long time asymptotics in detail,  we expect as in \cite{bal2021edge} the wavepacket description to hold for times that are significantly smaller than $\eps^{-\frac12}$.

\medskip 

The main effects observed in the presence of magnetic perturbations, such as the macroscopic slowdown and the aforementioned dispersion and compression effects, are confirmed by numerical simulations presented in section \ref{sec:num}.



%
%
%
%
%
%
%
%
%
%
%
%
%
%
%
%
%
%
%
%
%
%
%
%
%
%
%
%
%
%
%
%
%
%
%
%
%
%
%
%
%
%
%
%
%
%
%
%

\newpage

{\bf OLD MATERIAL}

\medskip

\subsubsection{Fourier-translation conjugation}

We finally introduce the $L^2-$ unitary transformation
\begin{equation}\label{eq:FS}
   \mV_ta(z) = {\color{blue} (2\pi)^{-\frac12} }\dint_{\Rm} e^{iz_1\xi_1} a(\xi_1,z_2+\gamma_t\xi_1) d\xi_1.
\end{equation}
Its inverse is given explicitly by
\begin{equation}\label{eq:FSinv}
   \mV_t^* a(\xi_1,\zeta_2) = {\color{blue} (2\pi)^{-\frac12} }\dint_{\Rm} e^{-iz_1\xi_1} a(z_1,\zeta_2-\gamma_t\xi_1) dz_1.
\end{equation}
This corresponds to a shifted partial Fourier transform, which reduces to a partial Fourier transform in the first variable when $B=0$. Moreover, it satisfies the canonical relations
\begin{equation}\label{eq:relmVt}
  \VV_t^* (z_2 + \gamma_t D_1) \VV_t =  \zeta_2, \quad \VV_t^* D_2 \VV_t = D_2, \quad \VV_t^* D_1 \VV_t = \xi_1, \quad \mV_t^* z_1 \mV_t = -(D_1+\gamma_t D_2).
\end{equation}
\gb{Normalization? This is not quite a unitary transform otherwise.}
\gb{I prefer to use the notation $\zeta_2$ rather than $\xi_2$, which looks like the dual variable to $z_2$.}

\subsection{Conjugation and normal form}

We are now ready to transform the equation $L\psi=0$ to a form that is amenable to analysis. We introduce
\begin{equation}\label{eq:tfs}
   \fU_t = \bU_t \mV_t,\qquad \bU_t = \mR_{\te_t} U_{3,\te_t} U_{2,\varphi_t} = U_{3,\te_t}  \mR_{\te_t} U_{2,\varphi_t}.
\end{equation}
The transform $\bU_t$ in \eqref{eq:tfs} implements spatial and spinorial rotations while $\mV_t$ is necessary only when $B\not=0$. We then introduce
\begin{equation}\label{eq:fts2}
  a= \fU_t^* \psi,\quad T := \fU_t^* L \fU_t,\quad T_j:=\fU_t L_j \fU_t,\ \ j\geq0.
\end{equation}
so that $L\psi=0$ is equivalent to
\begin{equation}
    \label{eq:normal}
    Ta=0.
\end{equation}

The decomposition in powers of $\sqrt\eps$ with $T_j=\fU^*L_j\fU$ yields the sequence of constraints:
\begin{equation}\label{eq:cstT}
  T_0 a_0=0,\quad T_1 a_0 + T_0 a_1=0,\qquad \dsum_{j=0}^k T_{k-j}a_j=0,
\end{equation}
for $0\leq k\leq J$. 

It remains to compute the operators $T_j=\fU_t L_j \fU_t$. This requires introducing the following notation. For each multi-index $\alpha$, we define recalling \eqref{eq:coefst}, \eqref{eq:rotations23} and \eqref{eq:mR} the 3-vector
\begin{equation}\label{eq:calphabis}
 \Cm^3 \ni \nu_\alpha := \dfrac{1}{\alpha!} \tilde R_{2,\varphi_t} \tilde R_{3,\theta_t}\partial^\alpha (\mR_{\theta_t}^*h)(R_{\theta_t} y_t).
\end{equation}
We denote by $\nu_{\alpha j}$ the $j$th component of $\nu_\alpha$ for $1\leq j\leq 3$.
\gb{This one should work and is reasonably simple.}

We next define the operators
\begin{equation}\label{eq:Phialphabis}
 \Phi_\alpha := \mV_t^* z^\alpha \mV_t,\qquad \Phi_\alpha f(\xi_1,\zeta_2) = (\zeta_2-\gamma_t\xi_1)^{\alpha_2}(-1)^{\alpha_1} (D_1+\gamma_tD_2)^{\alpha_1} f(\xi_1,\zeta_2),
\end{equation}
where we observe that $[D_1+\gamma_t D_2,\zeta_2-\gamma_t\xi_1]=0$, and finally
\begin{equation} \label{eq:mA}
  \mA:=- \frac 12(\dot\varphi_t\sigma_2+\dot\theta_t\sigma_{3,B}) -\dot\theta_t L_z + \dot\gamma_t \xi_1 D_2,
\end{equation}
with 
\[
  \sigma_{3,B}= (U^*_{2,\varphi_t}\sigma U_{2,\theta_t})_3 = (\tilde R^*_{2,\theta_t}\sigma)_3 =-s_t\sigma_1+c_t\sigma_3,
  \quad L_z = -D_1D_2-\zeta_2\xi_1+\gamma_t(\xi_1^2-D_2^2).
\]
\gb{Not sure we need the notation $\sigma_{3,B}$ but it's there for the moment.}
\begin{lemm}\label{lem:LtoT}
 We have the following relations (as operators defined on $\mS(\Rm^2,\Cm^2)$):
 \begin{equation}\label{eq:opT} \begin{array}{rclrcl}
   \fU^* (D\cdot\sigma) \fU &=& c_t\xi_1\sigma_1 + D_2\sigma_2 + s_t\xi_1\sigma_3, \ &\ \fU^* (-\dot y_t\cdot D) \fU &=& c_t \xi_1,\\ [2mm]
   \tilde T_j :=\fU^* \tilde L_j \fU,  &=& \dsum_{|\alpha|=j+1} \Phi_\alpha \nu_\alpha\cdot\sigma,  & \ \fU^* D_t \fU &=& D_t+\mA, 
   \end{array}
 \end{equation}
for $j\geq0$. We verify that $\nu_{1,0}=0$, $\nu_{0,1}=\rho_t(0,0,1)^t$, and $\nu_{2,0,1}=\nu_{2,0,2}=0$.
This implies
\begin{equation}\label{eq:T0}
 T_0 =\fU^*L_0\fU = c_t(1+\sigma_1)\xi_1 + \sigma_2 D_2 + \rho_t \zeta_2 \sigma_3,
\end{equation}
\begin{equation}\label{eq:decT}
  T_1 = \fU^* L_1 \fU = D_t+\mA + \tilde T_1,\qquad T_j=\fU^* L_j \fU = \tilde T_j,\ \ j\geq2.
\end{equation}

\end{lemm}

\begin{proof}
 \noindent {\em (i) Differentiation operators.} We first verify that 
 \[
   \mR_{\theta_t}^*D \mR_{\theta_t} = \tilde R_{3,{\theta_t}}^*D,\quad \mR_{\theta_t}^*D\cdot v \mR_{\theta_t} = D\cdot \tilde R_{3,{\theta_t}} v. 
\]
Applied to $v=-\dot y_t$ with $R_{\theta_t} \dot y_t=-c_te_1$ by construction, we find $-\tilde R_{3,{\theta_t}} \dot y_t \cdot D=c_t D_1$. Passing to the Fourier variables, this proves the second equality in \eqref{eq:opT}.  

Applied to $v=\sigma$, we find
\[
   (\mR_{\theta_t} U_{3,{\theta_t}})^* D\cdot\sigma \mR_{\theta_t} U_{3,{\theta_t}} = D\cdot\sigma.
\]
Therefore, using \eqref{eq:rotations23}, we find 
\[
  (U_{2,\varphi_t} U_{3,{\theta_t}}\mR_{\theta_t} U_{3,{\theta_t}})^* D\cdot\sigma (U_{2,\varphi_t} U_{3,{\theta_t}}\mR_{\theta_t} U_{3,{\theta_t}}) = (c_t\sigma_1+s_t\sigma_3) D_1 + \sigma_2 D_2.
\]
Passing to the Fourier variable (with a shift acting as identity) gives the first equality.

\medskip
\noindent {\em (ii) Multiplication operators.} We wish to understand how the operators $S_j$ are modified by the unitary transforms.  As a multiplication operator, $Sf$ transforms as
\[
  \mR_{\theta_t}^*  Sf \mR_{\theta_t} (z) = f(y_t+\sqrt \eps R_{-{\theta_t}} z) = (\mR_{\theta_t}^* f)(R_{\theta_t}y_t+\sqrt\eps z).
\]
Thus, we find for a smooth function $f$, 
\[
  \mR_{\theta_t}^* S_j f \mR_{\theta_t} (z) = \dsum_{|\alpha|=j} \frac1{\alpha!} z^\alpha\partial^\alpha(\mR_{\theta_t}^* f) (R_{\theta_t}y_t).
\]
Note that when $j=0$, then the above is $f(y_t)$, while when $j=1$, it is
\[
  z\cdot\nabla(\mR_{\theta_t}^* f) (R_{\theta_t}y_t) = R_{-\theta_t} z\cdot R_{-\theta_t}\nabla(\mR_{\theta_t}^* f) (R_{\theta_t}y_t) = R_{-\theta_t} z \cdot\nabla f(y_t).
\]
With $f$ replaced by $h$, we have after spinorial rotation using \eqref{eq:rotations23} that 
\[
 \dsum_{|\alpha|=j} (U_{3,{\theta_t}} \mR_{\theta_t} U_{2,\varphi_t})^*    [\frac{1}{\alpha!}z^\alpha \partial^\alpha h(y_t)\cdot\sigma ] U_{3,{\theta_t}} \mR_{\theta_t} U_{2,\varphi_t} = 
 \sum_{|\alpha|=j} \tilde R_{2,\varphi_t}\tilde R_{3,{\vartheta_t}} \frac{1}{\alpha!}z^\alpha \partial^\alpha (\mR_{\theta_t}^* h)(R_{\theta_t}y_t) \cdot\sigma 
\]
This equals $\sum_{|\alpha|=j} z^\alpha \nu_\alpha\cdot\sigma$ by definition of the coefficients $\nu_\alpha$.

It remains to conjugate by $\mV_t$ and verify that \eqref{eq:Phialpha} holds to obtain the third equality. The expression \eqref{eq:Phialpha} follows from the transformations:
\begin{equation}\label{eq:trV}
  \mV_t^* z_1 \mV_t = -(D_1+\gamma_t D_2),\qquad \mV_t^* z_2 \mV_t =\zeta_2 -\gamma_t\xi_1,\qquad \mV_t^* D_z^\alpha \mV_t  = \xi_1^{\alpha_1}D_{\zeta_2}^{\alpha_2}.
\end{equation}
\gb{This is morally the same thing as \eqref{eq:relmVt}.We need $z_1$ as well as it appears in $\nu_{111}$.}

\medskip
\noindent{\em (iii) Time differentiation.}  We apply $\fU^* D_t \fU$ and differentiate all time-dependent coefficients ${\theta_t}$, $\varphi_t$, and $\gamma_t$. We compute for $\varphi \in \{{\theta_t},\varphi_t\}$
\[
 \mR_{\theta_t}^*D_t\mR_{\theta_t}=D_t-\dot{\theta_t} Jz\cdot D_z,\quad U_{j,\varphi}^*D_t U_{j,\varphi} = D_t-\frac12\dot\varphi \sigma_j,\quad \mV_t^*D_t \mV_t = D_t + \dot \gamma_t \xi_1 D_2.
\]
\gb{One $D_t$ missing earlier.}
Thus, applying the above for $(j,\varphi)=(3,{\theta_t})$ and next $(j,\varphi)=(2,\varphi_t)$, we find
\[
  (U_{3,{\theta_t}} \mR_{\theta_t} U_{2,\varphi_t})^* D_t(U_{3,{\theta_t}} \mR_{\theta_t} U_{2,\varphi_t}) =D_t-\dot{\theta_t} Jz\cdot D - \frac12 \dot{\theta_t} \tilde R^*_{2,\varphi_t}\sigma_3 - \frac12 \dot\varphi_t \sigma_2.
\]
We find from \eqref{eq:rotations23} that $\tilde R_{2,\varphi_t}^*\sigma_3=-s_t\sigma_1+c_t\sigma_3$. 

It remains to consider the angular momentum operator $Jz\cdot D$ to obtain $L_z$ after conjugation by $\mV_t$ following \eqref{eq:trV}. This completes the fourth equality in \eqref{eq:opT}.

\medskip
\noindent{\em (iv) Relevant coefficients.}  \alexis{I really think more details are needed here} The above constructions imply
\[
  \nu_{10}=0, \quad \nu_{01} = (R_{\theta_t} \nabla \kappa)_2 \tilde R_{2,\varphi_t} \tilde R_{3,{\theta_t}} \tilde h = \rho_t (0,0,1)^t,\quad \Phi_{01}=(\zeta_2-\gamma_t\xi_1)
\]
so that 
\[
   T_0 = c_t\xi_1 + (\xi_1,D_2)\cdot \tilde R^*_{2,\varphi_t}\sigma + \rho_t(\zeta_2-\gamma_t\xi_1)\sigma_3= c_t(1+\sigma_1)\xi_1 + \sigma_2 D_2 + \rho_t \zeta_2 \sigma_3,
\]
since using \eqref{eq:spinrot}
\[
  (\xi_1,D_2)\cdot \tilde R^*_{2,\varphi_t}\sigma = \xi_1(c_t\sigma_1+s_t\sigma_3)+D_2\sigma_2 = c_t\sigma_1\xi_1 + \rho_t\gamma_t\xi_1\sigma_3+D_2\sigma_2.
\]
This shows that \eqref{eq:T0} holds while \eqref{eq:decT} is clear. 

Finally, we observe that $\nu_{2,0,j=1,2}=0$ because $\nabla^2h=\tilde h\nabla^2\kappa+$terms that do not involve $z_1^2$ and because
\[
  \tilde R_{2,\varphi_t}\tilde R_{3,{\theta_t}} \tilde h= r_t^{-1}\rho_t (0,0,1)^t,
\]
with vanishing first and second components. In particular, $\nu_{201}=0$.
\end{proof}

The relevant terms for the following transport operator are $\nu_{111}$ and $\nu_{021}$. We find that 
\begin{equation}\label{eq:contT1}
  T_1 = D_t +\dot\theta_t (D_1D_2+\zeta_2\xi_1+\gamma_t(D_2^2-\xi_1^2) )+\dot\gamma_t \xi_1D_2 + \Big( \frac12 \dot\theta_t s_t  + \Phi_{11} \nu_{111} + \Phi_{02} \nu_{021} \Big)\sigma_1 + \check T_1,
\end{equation}
with $\check T_1$ an operator with components solely on $\sigma_2$ and $\sigma_3$ (hopefully clear enough) that does not contribute to the construction of the leading term $a_0$.
\\ \gb{DO WE AGREE WITH THIS EXPRESSION?} \alexis{It looks ok to me, but I think we should just spell it out explicitly. It is not that bad, see \eqref{eq-1t}. } \gb{Sure, after removing all odd terms in $(\zeta_2,D_2)$, the expressions are a bit more palatable. Again, the idea is not to have an explicit expression for $T_1$ but rather to try as much as possible to explain where the terms come from.}

\subsection{Transport operator}


In the asymptotic expansion of $a$ in powers of $\eps$, we will need to invert a transport operator defined as follows.

We first introduce the wavepacket envelope in the spinorial and $\zeta_2=z_2+\gamma_t\xi_1$ variables given by
\begin{equation}\label{eq:phibis}
  \phi(\zeta_2) := \big(\frac {\rho_t}\pi\big)^{1/4} e^{-\frac12 \rho_t \zeta_2^2} \phi_0,\quad \phi_0:=\frac{1}{\sqrt2} \begin{pmatrix} 1\\-1 \end{pmatrix}.
\end{equation}
We then define the following transport operator for $f(t,\xi)$ a smooth function:
\begin{equation}\label{eq:mT}
  \mT f(t,\xi_1):=  \int_{\mathbb{R}} \phi(\zeta_2)\cdot T_1 [f(t,\xi_1) \phi(\zeta_2)] \ d\zeta_2.
\end{equation}
The leading term $a_0$ in the expansion for $a$ only involves the operators $T_0$ and $\mT$. The latter has the following explicit expression.
\begin{lemm}
   The operator $\mT$ in \eqref{eq:mT} is given explicitly by
  \[ \mT = D_t+\tau_0(t)+\tau_1(t)\frac12(\xi_1 D_1+D_1\xi_1)  + \tau_2(t) \xi_1^2,
  \]
  where
  \[
   \tau_0(t) = i\dfrac{\dot\rho_t}{4\rho_t} - \dfrac{\nu_{021}}{2\rho_t},
   \qquad 
   \tau_1(t) = -\nu_{111}\gamma_t ,\qquad 
   \tau_2(t)= -\gamma_t \dot\theta_t -\nu_{021} \gamma_t^2 .
  \]
  \gb{Updated computation.}
\end{lemm}
\begin{proof}
Since $(\phi_0,\sigma_j\phi_0)=-\delta_{1,j}$ for $1\leq j\leq 3$, only the contributions for $T_1$ in \eqref{eq:contT1} that are proportional to identity or to $\sigma_1$ contribute.

By symmetry, the terms in $T_1$ proportional to $D_1D_2$, $\xi_1 D_2$, and $\zeta_2\xi_1$ involve odd integrals in $\zeta_2$ that vanish. 

We thus find from \eqref{eq:contT1} and \eqref{eq:mT} that
\[
 \mT f (t,\xi_1)= \big(\frac \pi{\rho_t}\big)^{\frac12} \int_{\mathbb{R}} 
   e^{-\frac12\rho_t \zeta_2^2}
    \Big(D_t  + \dot\theta_t\gamma_t(D_2^2-\xi_1^2) 
    - \big( \frac12\dot\theta_t s_t + \Phi_{11}\nu_{111} + \Phi_{02} \nu_{021} \big) 
    \Big)
    \Big(f(t,\xi_1) e^{-\frac12\rho_t \zeta_2^2} \Big) d\zeta_2.
\]
\gb{There was a sign problem in the preceding version. The contributions in $\sigma_1$ pick a negative sign with a negative current in this set of variables.}

We find
\[
  D_t \Big(f(t,\xi_1) e^{-\frac12\rho_t \zeta_2^2} \Big) = (D_t f + \frac i2\dot \rho_t\zeta_2^2 f)(t,\xi) e^{-\frac12\rho_t \zeta_2^2}.
\]
The contributing operators $\Phi_\alpha$ are explicitly
\[
  \Phi_{11}=-(\zeta_2-\gamma_t\xi_1)(D_1+\gamma_tD_2),\qquad\Phi_{02}=(\zeta_2-\gamma_t\xi_1)^2.
\]
We observe that the commutator $[(\zeta_2-\gamma_t\xi_1),(D_1+\gamma_tD_2)]=0$, which allows us to recast 
\[
  \Phi_{11} = -\frac12 (\zeta_2-\gamma_t\xi_1)(D_1+\gamma_tD_2) - \frac12 (D_1+\gamma_tD_2)(\zeta_2-\gamma_t\xi_1).
\]
Using the same symmetry arguments as above, the non-vanishing contributions for these two operators are:
\[
   \check \Phi_{11} = \gamma_t\frac12(-\zeta_2D_2-D_2 \zeta_2+\xi_1D_1+D_1\xi_1),\qquad \check  \Phi_{02} = \zeta_2^2+\gamma_t^2\xi_1^2. 
\]
It remains to compute the integrals in $\zeta_2$.
Let $K(z,D_z)$ be an operator and define
\begin{equation}\label{eq:intr}
  {\rm r}(K,r) = \Big(\dint_{\Rm} e^{-r z^2} dz\Big)^{-1} \Big(\dint_{\Rm} e^{-\frac12 r z^2} K(z,D_z)e^{-\frac12 r z^2} dz\Big).
\end{equation}
Using, e.g., integrations by parts, we compute ${\rm r}(z^2,r)=\frac{1}{2r}$, ${\rm r}(D_z^2,r)=\frac r2$, ${\rm r}(zD_z,r)=-\frac1{2i}$ and ${\rm r}(D_zz,r)=\frac1{2i}$ so that ${\rm r}(zD_z+D_zz,r)=0$.
\\\gb{DO WE AGREE WITH THIS? The computation of ${\rm r}(D^2_z,r)$ comes from $D_z^2 Exp = (r- r^2z^2)Exp$. We recall that $D=-i\partial$.} \slb{I agree with the computations of the $r$ quantities} \\
The above is used with $z=\zeta_2$ and $r=\rho_t$. The contribution involving $\zeta_2D_2+D_2\zeta_2$ therefore vanishes. The contribution involving $D_2^2$ provides a term $\dot\theta_t\gamma_t\frac12\rho_t$, which, somewhat surprisingly, cancels out with the contribution $-\frac12 \dot\theta_t s_t$. 

From this, we deduce that 
\[
  \mT=D_t+\tau_0(t)+\tau_1(t)\frac12(\xi_1 D_1+D_1\xi_1) + \tau_2(t) \xi_1^2,
\]
with \gb{(this is a repeat from the lemma)}
\[
 \tau_0 = i\dfrac{\dot\rho_t}{4\rho_t} -
   \nu_{021} \dfrac{1}{2\rho_t},\quad \tau_1 = -\nu_{111}\gamma_t ,\quad \tau_2= -\gamma_t\dot\theta_t -\nu_{021} \gamma_t^2 .
\]
We verify that $\nu_{02j}=0$ for $j=1,2$ when $B=0$ (since then $\tilde A=0$ in the vicinity of $\Gamma$) so that $\tau_1=\tau_2=0$ and $\tau_0 = i\frac{\dot\rho_t}{4\rho_t}$ as expected.
\end{proof}

\gb{Next TBD: Introduce functional settings and inverses of the leading and transport operators. We finally have a section on the asymptotic expansion.}

\newpage

\gb{We need a result stating the transformations for the full $\fU$. We can then split the proof into $U$ and $\mV_t$ but these are internal details I believe. I'm not sure this helps that much in the end.}

\subsection{Conjugate operators.} \alexis{I recommend to separate this section in at least two lemmas. Lemma 1: calculation of $\fU^* L_0 \fU$. Lemma 2: calculation of $\fU^* L_1 \fU$. There seems to be some mistake in the "Multiplicative operator" part in the proof of Lemma \ref{lem:LtoT} and this motivates this separation. Alternatively we can combine Lemma \ref{lem-1a}-\ref{lem-1b} in a single Lemma.} Here we conjugate the operator $L_0$ to a model operator $T_0$. The conjugation operator $\bU_t$ will be the composition of a physical rotation (with angle $\theta_t$) and spinorial rotations (with angles $\te_t, \varphi_t$, about axis $e_3, e_2$):
\begin{equation}
    \bU_t = \mR_{\te_t} U_{3,\te_t} U_{2,\varphi_t}.
\end{equation}
\alexis{I prefer separating the "physical / spinoral rotation" conjugation operator from the "Fourier-like" one. The reason is that $\bU_t$ does not change the wave profile; while the Fourier-type transform is a technical tool that helps solve our problem. Is this a shared opinion?} \gb{I do not mind. The spinorial rotations are also used as a technical tool though. }

\begin{lemm}\label{lem-1a} With $\bU_t$ as above, we have:
\begin{equation}\label{eq-0e}
    \bU_t^* L_0 \bU_t = c_t(1+\sigma_1)D_1+\sigma_2D_2+\rho_t(z_2+\gamma_t D_1)\sigma_3.
\end{equation}
\end{lemm}

\begin{proof} 1. Differentiation part. Since $\mR_{\te_t}$ is the pullback by $r_{3,\te_t}$ with the last variables muted, we have
 \begin{equation}\label{eq-0a}
     \mR_{\te_t}^*D \mR_{\te_t} = \tilde R_{3,\theta_t}^*D, \quad \mR_{\theta_t}^* D \cdot v \mR_{\te_t} = D\cdot \tilde R_{3,\te_t} v. 
 \end{equation}
We next specialize \eqref{eq-0a} to $v=-\dot y_t$. Since  $R_{\te_t} \dot y_t= - c_t e_1$ by construction, we obtain
\begin{equation}\label{eq-0d}
    \mR_{\theta_t}^*\big( - \dot{y_t} \cdot D \big) \mR_{\te_t} = c_t D_1.
\end{equation}
Spinorial rotations commute with derivatives, hence we conclude that $\fU c_t D_1 \fU = c_t D_1$.

Then we specialize \eqref{eq-0a} to $v = \sigma$. This produces
\begin{equation}
    \mR_{\theta_t}^* D \cdot \sigma \mR_{\te_t} = D\cdot \tilde R_{3,\theta} \sigma = D \cdot  U_{3,\te_t} \sigma U_{3,\te_t}^*.
 \end{equation}
Therefore, we deduce that
\begin{equation}\label{eq-0b}
  U_{2,\varphi_t}^*  U_{3,\te_t}^* \mR_{\theta_t}^* D \cdot \sigma \mR_{\te_t} U_{3,\te_t} U_{2,\varphi_t} = D\cdot U_{2,\varphi_t}^* \sigma U_{2,\varphi_t} =  D\cdot R_{2,\varphi_t}^* \sigma = (c_t \sigma_1 + s_t \sigma_3) D_1 + \sigma_2 D_2.
 \end{equation}

2. Multiplicative part operators. Let $\tilde h$ be the function such that $h = \kappa \tilde h$; in particular $S_1 h = \tilde{h}(y_t) \cdot S_1 \kappa$. Since spinorial rotations commute with the scalar $S_1 \kappa$,
\begin{equation}
    \bU_t^* S_1 h \cdot \sigma \bU_t = \RR_{\te_t}^* S_1 \kappa \RR_{\te_t} \  U_{2,\varphi_t}^*  U_{3,\te_t}^*\tilde{h}(y_t) \cdot  \sigma   U_{3,\te_t}U_{2,\varphi_t}.
\end{equation}
We now compute separately $\mR_{\theta_t}^* S_1 \kappa \cdot \sigma \mR_{\te_t}$ and $U_{2,\varphi_t}^*  U_{3,\te_t}^* \tilde{h}(y_t) \cdot  \sigma   U_{3,\te_t}U_{2,\varphi_t}$. We have:
\begin{equation}\label{eq-1g}
    \mR_{\theta_t}^* S_1 \kappa \cdot \sigma \mR_{\te_t} = \mR_{\theta_t}^* z \cdot \nabla \kappa(y_t) \mR_{\te_t} = z \cdot R_{\te_t}\nabla \kappa(y_t) = r_t z_2.
\end{equation}
Moreover using that $\tilde h (y_t) = (-r_t^{-1} B_t \tau_t, 1)^t$, we have
\begin{equation}
U_{3,\te_t}^* \tilde{h}(y_t) \cdot  \sigma   U_{3,\te_t} =  R_{3,\te_t} \tilde h(y_t) \cdot \sigma = \beta_t r_t \sigma_1 + \sigma_3 = \dfrac{B_t \sigma_1 + r_t \sigma_3}{r_t} = \dfrac{\rho_t}{r_t} (s_t \sigma_1 + c_t \sigma_3),
\end{equation}
and by \eqref{eq:rotations23}, $U_{2,\varphi_t}^* (s_t \sigma_1 + c_t \sigma_3) U_{2,\varphi_t} = \sigma_3$. 
It follows that
\begin{equation}\label{eq-1h}
    U_{2,\varphi_t}^*  U_{3,\te_t}^* \tilde{h}(y_t) \cdot  \sigma   U_{3,\te_t}U_{2,\varphi_t} = \dfrac{\rho_t}{r_t} \sigma_3.
\end{equation}
From \eqref{eq-1g} and \eqref{eq-1h}, we conclude that:
\begin{equation}\label{eq-0c}
    U_{2,\varphi_t}^* U_{3,\te_t}^* \tilde h(y_t) \cdot \sigma U_{3,\te_t}U_{2,\varphi_t} = \rho_t z_2 \sigma_3.
\end{equation}
Summing \eqref{eq-0d}, \eqref{eq-0b}, \eqref{eq-0c}, and using the relation $\rho_t \gamma_t = c_t$ yields \eqref{eq-0e}. 
\end{proof}

The next lemma computes $\bU_t^* L_1 \bU_t$: 

\begin{lemm}\label{lem-1b} With $\bU_t$ as above, we have:
\begin{equation}\label{eq-0ee}
    \bU_t^* L_1 \bU_t = D_t - Jz \cdot D_z - \frac{\dot{\te_t}}{2} \sigma_3 - \frac{\dot{\varphi_t}}{2} \sigma_2  + s_t \sigma_1 - c_t \sigma_3 + q_t \cdot \sigma
\end{equation}
where $q_t$ is a $\R^3$-valued quadratic form in $z$ whose first two components $q_{t,1}$ and $q_{t,2}$ have no term in $z_1^2$.
\end{lemm}

\alexis{Eventually the only term that matters in \eqref{eq-0e} is the part of $q_t \cdot \sigma$ that is carried by $\sigma_1$. So for the purpose of applying Lemma \ref{lem-1b} we can just simplify \eqref{eq-0ee} to \eqref{eq-0g} below:
\begin{equation}\label{eq-0g}
    \bU_t^* L_1 \bU_t = D_t - Jz \cdot D_z + s_t \sigma_1 + j_t z_1z_2 \sigma_1 + k_t z_2^2 \sigma_1 + p_t,
\end{equation}
where $p_t$ is a time-dependent polynomial of degree $2$ in $z$, carried by $\sigma_2$ and $\sigma_3$; and $j_t, k_t$ are two coefficients. }

\begin{proof} 1. Time derivative. We first compute $\bU_t^* D_t \bU_t$. We observe that $\dot{R_{\te_t}} = - \dot{\te_t} J R_{\te_t}$. Thus  we have:
\begin{equation}\label{eq-0f}
   \mR_{\te_t}^* D_t\mR_{\te_t} = D_t + \dot{\te_t}\dot{R_{\te_t}} R_{\te_t}^* z D_z = D_t - \dot{\te_t} J z D_z.
\end{equation}
The spinorial $U_{3,\te_t}$ and $U_{2,\varphi_t}$ rotations commute with the second term in \eqref{eq-0f}. We moreover have
\begin{equation}
    U_{3,\te_t}^* D_t U_{3,\te_t} = D_t - \frac{\dot{\te_t}}{2} \sigma_3, \quad U_{2,\varphi_t}^* D_t U_{2,\varphi_t} = D_t - \frac{\dot{\varphi_t}}{2} \sigma_2, \quad U_{2,\varphi_t}^* \sigma_3 U_{2,\varphi_t} = -s_t \sigma_1 + c_t \sigma_3.
\end{equation}
We deduce that
\begin{equation}
    \bU_t^* D_t \bU_t = D_t - Jz D_z - \frac{\dot{\te_t}}{2} \sigma_3 - \frac{\dot{\varphi_t}}{2} \sigma_2  + s_t \sigma_1 - c_t \sigma_3. 
\end{equation}

2. Multiplicative component. We compute $\bU_t^* S_2 h \cdot \sigma \bU_t$. We recall that $h = \kappa \tilde h$, therefore (using $\kappa(y_t) = 0$):
\begin{equation}
    S_2 h = 2 (S_1 \kappa) (S_1 \tilde{h})  + (S_2 \kappa) \tilde{h}(y_t). 
\end{equation}
Using \eqref{eq-1g} and that spinorial rotations commute with scalars, we obtain:
\begin{equation}\label{eq-1i}
    \bU_t^* 2(S_1 \kappa) (S_1 \tilde{h}) \dot \sigma \bU_t = 2\mR_{\te_t}^* (S_1 \kappa) (S_1 \tilde{h}) \dot \sigma \mR_{\te_t} \cdot \bU_{\te_t}^* \mR_{\te_t}^* (S_1 \tilde{h}) \mR_{\te_t} \dot \sigma \bU_{\te_t} = 2 r_t z_2 \mR_{\te_t}^* (S_1 \tilde{h}) \mR_{\te_t} \cdot \tilde R_{2,\varphi_t}^* \tilde R_{3,\te_t}^* \sigma.
\end{equation}
This is a traceless hermitian-valued quadratic form in $z$, with no term in $z_1^2$.

Likewise, using  \eqref{eq-1h} and that spinorial rotations commute with scalars, we obtain:
\begin{equation}\label{eq-1j}
    \bU_t^*  (S_2 \kappa) \tilde{h}(y_t) \cdot \sigma \bU_t = 2\mR_{\te_t}^* (S_2 \kappa) \mR_{\te_t} \dfrac{\rho_t}{r_t} \sigma_3.
\end{equation}
This is a traceless hermitian-valued quadratic form carried by $\sigma_3$. 

From \eqref{eq-1i} and \eqref{eq-1j}, we conclude that
\begin{equation}
    \bU_t^* S_2 h \cdot \sigma \bU_t = q \cdot \sigma,
\end{equation}
where $q_t$ is a $\R^3$-valued quadratic form in $z$ whose first two coordinates $q_{t,1}$ and $q_{t,2}$ have no term in $z_1^2$. This completes the proof. \end{proof}

\alexis{The advantage of organizing the Lemmas as above is that we get direct access to the relevant quantities in the special cases considered by Simon. For instance, in the geometric setup (\S4 of our previous paper) and a constant magnetic field, which may be the most important setup:
\begin{align}
 \bU_t^* L_1 \bU_t & \equiv   D_t - Jz \cdot D_z + \dfrac{B}{\sqrt{1+B^2}} \sigma_1. 
\end{align}
}

The explicit structure of the operators $\bU_t^* L_j \bU_t$ for $j \geq 2$ will prove irrelevant in leading-order results. We simply mention that for $j \geq 2$, $\bU_t^* L_j \bU_t$ is a multiplicative operator (by a Hermitian traceless matrix depending on $t$ and $j$-multilinearly on $z$). 

\subsection{Fourier-like conjugation of $\bU_t^* L_0 \bU_t$} We introduce a Fourier-type transformation:
\begin{equation}
    \VV_t a(z) = \int_\R e^{i z_1 \xi_1} a\big( \xi_1,z_2+\gamma_t \xi_1) d\xi_1.
\end{equation}
This corresponds to a shifted partial Fourier transform; it is, together with its inverse, a bounded operator on $\SSS(\R^2)$. Moreover, it satisfies the canonical relations
\begin{equation}
  \VV_t^* (z_2 + \gamma_t D_1) \VV_t =  \xi_2, \quad \VV_t^* D_2 \VV_t = D_2, \quad \VV_t^* D_1 \VV_t = \xi_1, \quad 
\end{equation}
We deduce that 
\begin{equation}
    T_0 :=  \VV_t^*\bU_t^* L_0 \bU_t \VV_t = c_t(1+\sigma_1)\xi_1+\sigma_2D_2+\rho_t\xi_2\sigma_3.
\end{equation}
\alexis{To do: write the lemma for $T_0, T_1$}

\newpage

\begin{rem} \it (OLD MATERIAL)
Our objective is to introduce 
\[
  Ta =0,
\]
where $a=\fU^* \psi$ and $T=\fU^* L\fU$.

We will show that 
Then, after a first spinorial rotation as in the non-magnetic case, we will get
\[
  \check L_0 =
  c_t(1+\sigma_1)D_1+\sigma_2D_2+\rho_t(z_2+\gamma_tD_1)\sigma_3.
\]
This is an operator that we do not know how to invert without yet another change of variables unless $\gamma_t=0$ (non-magnetic case). Define $\xi_1$ the dual Fourier variables to $z_1$ and $\zeta_2=z_2+\gamma_t\xi_1$. In these variables, we will find (with $D_1=D_{z_1}$ and $D_2=D_{\zeta_2}$)
\begin{equation}\label{eq:T0bis}
 T_0 = c_t(1+\sigma_1)\xi_1 + \sigma_2 D_2 + \rho_t \zeta_2 \sigma_3.
\end{equation}
This operator may now be inverted by means of harmonic oscillators. The easiest use of the local problem is therefore to reduce all quantities to these variables. The implementation of the transform leading to the above equation is
\begin{equation}\label{eq:fUbis}
    \fU = U_{3,\theta_t} \mR_{\theta_t} U_{2,\varphi_t} \mV_t,\quad \mV_t = {\mathcal F}^{-1}_{\xi_1\to z_1} \mS_t^{-1}  ,\quad \mS_t^{-1} a(\xi_1,z_2)= a(\xi_1,z_2+\gamma_t\xi_1).
\end{equation}
Thus, $\mS_t a(\xi_1,\zeta_2)= a(\xi_1,\zeta_2-\gamma_t\xi_1)$. Here, we use the angles $\theta_t$ and $\varphi_t$ introduced in \eqref{eq:geomB} and below \eqref{eq:Rtheta}.

Recall that $T=\fU^*L\fU$.
The decomposition in powers of $\eps$ with $T_j=\fU^*L_j\fU$ yields the sequence of constraints:
\begin{equation}\label{eq:cstTbis}
  T_0 a_0=0,\quad T_1 a_0 + T_0 a_1=0,\qquad \dsum_{j=0}^k T_{k-j}a_j=0,
\end{equation}
for $0\leq k\leq J$. 

To compute $T_j$ for $j\geq0$ explicitly, we define for each multi-index $\alpha$: 
\begin{equation}\label{eq:calphabis}
 \Cm^3 \ni \nu_\alpha := \dfrac{1}{\alpha!} \tilde R_{2,\varphi_t} \tilde R_{3,\theta_t}(R_{\theta_t}\partial)^\alpha h(y_t),
\end{equation}
\alexis{As Simon noted, this does not seem to work unless $j=1$ -- see below.}
\gb{I'll have a look. I agree we primarily need $j=0,1$.}

as well as the operators
\begin{equation}\label{eq:Phialphabis}
 \Phi_\alpha := \mV_t^* z^\alpha \mV_t,\qquad \Phi_\alpha f(\xi_1,\zeta_2) = (\zeta_2-\gamma_t\xi_1)^{\alpha_2}(-1)^{\alpha_1} (D_1+\gamma_tD_2)^{\alpha_1} f(\xi_1,\zeta_2),
\end{equation}
where we observe that $[D_1+\gamma_t D_2,\zeta_2-\gamma_t\xi_1]=0$, and
\begin{equation} \label{eq:mA}
  \mA:=- \frac 12(\dot\varphi_t\sigma_2+\dot\theta_t\sigma_{3,B}) -\dot\theta_t L_z + \dot\gamma_t \xi_1 D_2,
\end{equation}
with 
\[
  \sigma_{3,B}= (U^*_{2,\varphi_t}\sigma U_{2,\theta_t})_3 = (\tilde R^*_{2,\theta_t}\sigma)_3 =-s_t\sigma_1+c_t\sigma_3,
  \quad L_z = -D_1D_2-\zeta_2\xi_1+\gamma_t(\xi_1^2-D_2^2).
\]

We collect the required transformations in the following Lemma:
\begin{lemm}\label{lem:LtoTbis}
 We have the following relations (as operators defined on $\mS(\Rm^2,\Cm^2)$):
 \begin{equation}\label{eq:opTbis} \begin{array}{rclrcl}
   \fU^* (D\cdot\sigma) \fU &=& c_t\xi_1\sigma_1 + D_2\sigma_2 + s_t\xi_1\sigma_3, \ &\ \fU^* (-\dot y_t\cdot D) \fU &=& c_t \xi_1,\\ [2mm]
   \tilde T_j :=\fU^* \tilde L_j \fU &=& \dsum_{|\alpha|=j+1} \Phi_\alpha \nu_\alpha\cdot\sigma,  \ & \ \fU^* D_t \fU &=& D_t+\mA, 
   \end{array}
 \end{equation}
for $j\geq0$. Moreover, $\nu_{1,0}=0$, $\nu_{0,1}=\rho_t(0,0,1)^t$, and $\nu_{2,0,1}=\nu_{2,0,2}=0$. This implies that $T_0=\fU^*L_0\fU$ is given by \eqref{eq:T0}. Moreover, we have for $j\geq1$ the decomposition
\begin{equation}\label{eq:decTbis}
  T_j := \fU^* L_j \fU= \delta_{1j} (D_t+\mA) + \tilde T_j.
\end{equation}
\end{lemm}

\begin{proof}
 \noindent {\em (i) Differentiation operators.} We first verify that 
 \[
   \mR_{\theta_t}^*D \mR_{\theta_t} = \tilde R_{3,{\theta_t}}^*D,\quad \mR_{\theta_t}^*D\cdot v \mR_{\theta_t} = D\cdot \tilde R_{3,{\theta_t}} v. 
\]
Applied to $v=-\dot y_t$ with $R_{\theta_t} \dot y_t=-c_te_1$ by construction, we find $-\tilde R_{3,{\theta_t}} \dot y_t \cdot D=c_t D_1$. Passing to the Fourier variables, this proves the second equality in \eqref{eq:opT}.  

Applied to $v=\sigma$, we find
\[
   (\mR_{\theta_t} U_{3,{\theta_t}})^* D\cdot\sigma \mR_{\theta_t} U_{3,{\theta_t}} = D\cdot\sigma.
\]
Therefore, using \eqref{eq:rotations23}, we find 
\[
  (U_{2,\varphi_t} U_{3,{\theta_t}}\mR_{\theta_t} U_{3,{\theta_t}})^* D\cdot\sigma (U_{2,\varphi_t} U_{3,{\theta_t}}\mR_{\theta_t} U_{3,{\theta_t}}) = (c_t\sigma_1+s_t\sigma_3) D_1 + \sigma_2 D_2.
\]
Passing to the Fourier variable (with a shift acting as identity) gives the first equality.

\medskip
\noindent {\em (ii) Multiplication operators.} We wish to understand how the operators $S_j$ are modified by the unitary transforms.  Note that as a multiplication operator,
\[
  \mR_{\theta_t}^*  Sf \mR_{\theta_t} (z) = f(y_t+\sqrt \eps R_{-{\theta_t}} z) = (\mR_{\theta_t}^* f)(R_{\theta_t}y_t+\sqrt\eps z).
\]
Thus, we find for a smooth function $f$, \slb{I am not sure I understand the following line?} \gb{I modified; please check new version.} \slb{I am actually not quite sure how to go from the second to the third identity, but perhaps I am getting the notation wrong... } \alexis{Agreed with Simon; the part below does not seem right unless $j=1$. Because of this I think it would make sense to split Lemma \ref{lem:LtoT} in at least two lemmas, one that features formula for $\fU^* L_0 \fU$ and the other one that features formula for $\fU^* L_1 \fU$. The explicit expressions of the operators $\fU^* L_j \fU, j \geq 2$ are not important and can just be stated in an equation.}
\[
 \dsum_{|\alpha|=j}\mR_{\theta_t}^* z^\alpha \partial^\alpha f(y_t) \mR_{\theta_t} =\dsum_{|\alpha|=j} (R_{-{\theta_t}}z)^\alpha \partial^\alpha f(y_t)  = \dsum_{|\alpha|=j} z^\alpha (R_{\theta_t} \partial)^\alpha f(y_t) .
\]
With $f$ replaced by $h$, we have after spinorial rotation using \eqref{eq:rotations23} that 
\[
 \dsum_{|\alpha|=j} (U_{3,{\theta_t}} \mR_{\theta_t} U_{2,\varphi_t})^*    [z^\alpha \partial^\alpha h(y_t)\cdot\sigma ] U_{3,{\theta_t}} \mR_{\theta_t} U_{2,\varphi_t} = \dsum_{|\alpha|=j}
  z^\alpha \tilde R_{2,\phi_t} (R_{\theta_t}\partial)^\alpha (\tilde R_{3,{\theta_t}} h)(y_t) \cdot \sigma.
\]
Therefore,
\[
 \dsum_{|\alpha|=j} (U_{3,{\theta_t}} \mR_{\theta_t} U_{2,\phi_t})^*     \left(\frac{1}{\alpha!} z^\alpha \partial^\alpha h(y_t) \right) U_{3,{\theta_t}} \mR_{\theta_t} U_{2,\phi_t} =\dsum_{|\alpha|=j} z^\alpha \nu_\alpha\cdot\sigma.
  \]
It remains to conjugate by $\mV_t$ and verify that \eqref{eq:Phialpha} holds to obtain the third equality. This follows from the transformations:
\begin{equation}\label{eq:trVbis}
  \mV_t^* z_1 \mV_t = -(D_1+\gamma_t D_2),\qquad \mV_t^* z_2 \mV_t =\zeta_2 -\gamma_t\xi_1,\qquad \mV_t^* D_z^\alpha \mV_t  = \xi_1^{\alpha_1}D_{\zeta_2}^{\alpha_2}.
\end{equation}

\medskip
\noindent{\em (iii) Time differentiation.}  We apply $\fU^* D_t \fU$ and differentiate all time-dependent `constants' ${\theta_t}$, $\phi_t$, and $\gamma_t$. We compute for $\varphi \in \{{\theta_t},\phi_t\}$
\[
 \mR_{\theta_t}^*D_t\mR_{\theta_t}=D_t-\dot{\theta_t} Jz\cdot D_z,\quad U_{j,\varphi}^*D_t U_{j,\varphi} = D_t-\frac12\dot\varphi \sigma_j,\quad \mV_t^*D_t \mV_t = \dot \gamma_t \xi_1 D_2.
\]
Thus, applying the above for $(3,{\theta_t})$ and then $(2,\phi_t)$, we find
\[
  (U_{3,{\theta_t}} \mR_{\theta_t} U_{2,\phi_t})^* D_t(U_{3,{\theta_t}} \mR_{\theta_t} U_{2,\phi_t}) =D_t-\dot{\theta_t} Jz\cdot D - \frac12 \dot{\theta_t} \tilde R^*_{2,\phi_t}\sigma_3 - \frac12 \dot\phi_t \sigma_2.
\]
We find from \eqref{eq:rotations23} that $\tilde R_{2,\phi_t}^*\sigma_3=-s_t\sigma_1+c_t\sigma_3$. 

It remains to consider the angular momentum operator $Jz\cdot D$ to obtain $L_z$ after conjugation by $\mV_t$ following \eqref{eq:trV}. This completes the fourth equality in \eqref{eq:opT}.

\medskip
\noindent{\em (iv) Relevant coefficients.} 
The above constructions imply
\[
  \nu_{1,0}=0, \quad \nu_{0,1} = (R_{\theta_t} \nabla \kappa)_2 \tilde R_{2,\phi_t} \tilde R_{3,{\theta_t}} \tilde h = \rho_t (0,0,1)^t,\quad \Phi_{0,1}=(\zeta_2-\gamma_t\xi_1)
\]
so that 
\[
   T_0 = c_t\xi_1 + (\xi_1,D_2)\cdot \tilde R^*_{2,\phi_t}\sigma + \rho_t(\zeta_2-\gamma_t\xi_1)\sigma_3= c_t(1+\sigma_1)\xi_1 + \sigma_2 D_2 + \rho_t \zeta_2 \sigma_3,
\]
since using \eqref{eq:spinrot}
\[
  (\xi_1,D_2)\cdot \tilde R^*_{2,\phi_t}\sigma = \xi_1(c_t\sigma_1+s_t\sigma_3)+D_2\sigma_2 = c_t\sigma_1\xi_1 + \rho_t\gamma_t\xi_1\sigma_3+D_2\sigma_2.
\]
This shows that \eqref{eq:T0} holds while \eqref{eq:decT} is clear. 

Finally, we observe that $\nu_{2,0,j=1,2}=0$ because $\nabla^2h=\tilde h\nabla^2\kappa+$terms that do not involve $z_1^2$ and because
\[
  \tilde R_{2,\phi_t}\tilde R_{3,{\theta_t}} \tilde h= r_t^{-1}\rho_t (0,0,1)^t,
\]
with vanishing first and second components.
\end{proof}

\end{rem}


\newpage

\medskip
\noindent{\bf Model operator.}
Introduce the change of variables $(\xi_1,\zeta_2)=(\rho_tc_t^{-1}\xi,\rho_t^{-\frac12}y)$ and $(\Lambda f)(\xi,y)=f(\xi_1,\zeta_2)$. Then $\rho_t^{-\frac12}Q^*\Lambda^{-1}T_0\Lambda Q=H$ for the composition of spinorial rotations $Q=U_{1,\frac\pi2}U_{3,\frac\pi2}U_{1,\frac\pi2}$ with
\[
  H=\xi (1+\sigma_3) + \sigma_1 y  -\sigma_2 D_y = \begin{pmatrix} \fa_\xi+\fa_\xi^* & \fa_y \\ \fa_y^* & 0\end{pmatrix} ,\quad \fa_y = \partial_y+y,\ \ \fa_\xi = \partial_{\xi}+\xi.
\]
Let $h_n(x)$ for $n\geq0$ the Hermite functions forming an o.n.b. of $L^2(\Rm)$ with $h_0(x)=\pi^{-1/4}e^{-\frac12x^2}$. We have
\[
  \fa h_n=\sqrt{2n} h_{n-1},\quad \fa^* h_n=\sqrt{2n+2} h_{n+1},\quad \fa=\partial_x+x.
\]
We denote by $h_{m,n}(\xi,y)=h_m(\xi)\times h_n(y)$ and decompose $\psi(\xi,y)=\sum_{m,n\geq0}\psi_{m,n}h_{m,n}(\xi,y)$ for $\psi\in L^2(\Rm^2,\Cm^2)$ and a similar decomposition for $g\in L^2(\Rm^2,\Cm^2) $. For $p\geq0$, we introduce the Hilbert spaces $\mS_p(\Rm^2,\Cm^2)\subset L^2(\Rm^2,\Cm^2) $ associated with the norm
\[
   \|\psi\|^2_p = \dsum_{j=1}^2 \dsum_{m,n\geq0} \aver{m,n}^{p} |\psi_{m,n,j}|^2, \quad \aver{m,n}=\sqrt{1+m^2+n^2}.  
\]

We define $L^2(\Rm^2,\Cm^2)  \supset N=\{\psi(\xi,y)=f(\xi)h_0(y)\phi_1\}$ for $f\in L^2(\Rm)$ and $\phi_1=\frac1{\sqrt2}(0,1)^t$ while   $N^\perp=\{\phi \in L^2(\Rm^2,\Cm^2) ;\ \phi_{m,0,2}=0,\ m\geq0\}$. 

Then we have the following result.
\begin{lemm}
  Let $\psi$ and $g$  be decomposed as above. Assume that $g_{0,m,2}=0$ for all $m\geq0$. Then the equation
  \[
     H\psi=g
  \]
  admits the family of solutions given for all $m\geq0$ by
  \[
    \psi_{m,n,1}=\frac{g_{m,n+1,2}}{\sqrt{2n+2}},\ n\geq0; \quad \psi_{m,n,2} = \frac{g_{m,n-1,1}}{\sqrt{2n}}  - \frac {\sqrt{2m+2}g_{m+1,n,2}+\sqrt{2m} g_{m-1,n,2}}{2n} n\geq1,
  \]
  with $\psi_{m,0,2}=f_m$ arbitrary in $l^2$. 
  Moreover, the operator $H^{-1}$ is defined from $N^\perp\cap \mS_{p+1}$ to $\mS_{p}$ by choosing $f_m=0$ above and bounded by $C_p<\infty$ in the same sense. 
\end{lemm}
\begin{proof}
   We find that $N={\rm Ker}\ H$ and that $g_{0m2}=0$ is a necessary condition for the above equation to admit a solution. The expression for the inverse is then clear from equating $H\psi=g$ in the basis of Hermite functions since it gives rise to a triangular system. Solutions are then defined up to the addition of an element in $N$. 
   
   We then find that the map from $g_{m,n,2}$ to $\psi_{m,n,1}$ is bounded by a constant $C_p$ such that 
   \[
      \frac{\aver{n+1}^p}{n+1} \leq C_p \aver{n}^p, \quad \aver{n}=(1+n^2)^{\frac12}.
   \]
   The map from $g_{mn}$ to $\psi_{mn2}$ is then uniformly bounded by a constant times $m$ and this gives the $p-$dependent bound from $\mS_{p+1}(\Rm^2,\Cm^2)$ to $\mS_p(\Rm^2,\Cm^2)$. 
\end{proof}
In the inversion of $H$, we could replace the Hilbert space by a weighted norm in $\aver{n,\xi}$ instead of the more constraining $\aver{n,m}$. The reason for the above choice of spaces $\mS_p(\Rm^2,\Cm^2)$ is that the operators $D_\xi$, $\xi$, $D_y$, and $y$ are all bounded operators from $\mS_{p+1}(\Rm^2,\Cm^2)$ to $\mS_p(\Rm^2,\Cm^2)$, whereas the operator $D_\xi$ would not be bounded if $\aver{n,m}$ was replaced by $\aver{n,\xi}$.

All operators appearing in this paper are naturally bounded from (time-dependent extensions of) $\mS_q(\Rm^2,\Cm^2)$ to $\mS_p(\Rm^2,\Cm^2)$ for appropriate values of $p$ and $q$. 

\medskip
\noindent{\bf Microscopic balance.}
We now solve the equation (at a fixed time $t$ so $C^{-1}\geq\rho_t\geq C>0$ is a constant)
\[
  T_0a =b.
\]
We introduce
\begin{equation}\label{eq:phibis}
  \phi(\zeta_2) := \big(\frac {\rho_t}\pi\big)^{1/4} e^{-\frac12 \rho_t \zeta_2^2} \phi_0,\quad \phi_0:=\frac{1}{\sqrt2} \begin{pmatrix} 1\\-1 \end{pmatrix}.
\end{equation}
The normalization implies that $\|\phi\|_{L^2(\Rm,\Cm^2)}=1$. We denote by $N_t$ the kernel of $T_0$ given by the space of vectors $f(\xi_1)\phi(\zeta_2)$ where $\phi$ depends implicitly on time. We also denote by $\mS_p(\Rm^2,\Cm^2)$ as above the spaces of functions in the $(\xi_1,\zeta_2)$ variables.
\begin{lemm}\label{lem:T0}
 The equation $T_0a=b$ is well posed for $b\in N_t^\perp \cap \mS_{p+1}$ with $a\in\mS_p(\Rm^2,\Cm^2)$ given up to $f(\xi_1)\phi(\zeta_2)$. The corresponding operator $T_0^{-1}$ obtained by selecting $f=0$ is bounded from $N_t^\perp \cap \mS_{p+1}$ to $\mS_{p}$.
\end{lemm}
\begin{proof}
This is a corollary of the previous result knowing that the time dependent transform from $T_0$ to $H$ preserves the above functional spaces. 

The operator $T_0$ is invertible on the orthogonal complement of $\phi(\zeta_2)$. We have $T_0a=b$ solvable if and only if $(b,\phi)_2=0$ (inner product in $L^2(\Rm,\Cm^2)$ in the $\zeta_2$ variable) in which case all solutions are given by $a=T_0^{-1}b+f(\xi_1)\phi$ for $f$ arbitrary with the required regularity (say in $\mS_p$). 
\end{proof}

The solution to the leading equation $T_0a_0=0$ is therefore
\[
  a_0(t,\xi_1,\zeta_2) = f_0(t,\xi_1) \phi(\zeta_2)
\]
with $f_0(t,\xi_1)$ arbitrary at this level. We observe that $\phi_0\cdot \sigma_j\phi_0=-\delta_{1j}$, which corresponds to the current only along the $z_1$ axis.


\medskip
\noindent{\bf Transport equation.}
The next equation is
\[
 T_1 a_0+T_0 a_1=0
\]
which according to Lemma \ref{lem:T0} admits a solution iff 
\[
  (a_0,T_1 a_0)_{2}=0
\]
which may be recast as $\mT f_0(t,\xi_1) =0$, where we introduced the transport operator
\begin{equation}\label{eq:mTbis}
  \mT f_0(t,\xi_1):=  \int_{\mathbb{R}} \phi(\zeta_2)\cdot T_1 [f_0(t,\xi_1) \phi(\zeta_2)] \ d\zeta_2.
\end{equation}

We denote by $\mS_p(\Rm,\Cm^n)$ 
the one-dimensional analogue of $\mS_p(\Rm^2,\Cm^n)$.

To handle the time-dependence for $t\in [0,\rT]$, we introduce the spaces for $p\in\Nm$ and $q=\lfloor \frac p2 \rfloor$ (i.e., $p=2q$ or $p=2q+1$) defined by
\[
  \mC \mS_p(\Rm,\Cm) = \cap_{j=0}^q C^j([0,\rT]; \mS_{p-2j}(\Rm,\Cm))
\]
with norm given by the sum of the above norms. The spaces are constructed so that any derivative in time is equivalent to a loss of order $2$ in the remaining variable.

\begin{lemm}
   The operator $\mT$ in \eqref{eq:mT} is given explicitly by
  \[ \mT = D_t+\tau_0(t)+\tau_1(t)\frac12(\xi_1 D_1+D_1\xi_1)  + \tau_2(t) \xi_1^2,
  \]
  where
  \[
    \tau_0(t) = i\dfrac{\dot\rho_t}{4\rho_t} +  \frac12 \dot{\theta_t} (s_t+\gamma_t\rho_t)
   +\nu_{021} \dfrac{1}{2\rho_t},\qquad \tau_1(t) = -\nu_{111}\gamma_t ,\qquad \tau_2(t)= -\dot\theta +\nu_{021} \gamma_t^2 .
  \]
  \slb{ I think the proof suggests $\tau_2(t) = -\gamma_t\dot{\theta}+ \nu_{021} \gamma_t^2$ and in $\tau_0$ observe that $\gamma_t\rho_t = s_t,$ so this also looks a bit suspicious.}
  For $0\leq s\leq t$, define $\lambda_j(t;s)=\int_s^t \tau_j(u)du$ for $j=0,1,2$, and 
  \[ b_0(t;s)=-(i\lambda_0+\lambda_1)(t;s),\quad b_1(t;s)=e^{-\lambda_1(t;s)},\quad b_2(t;s)=e^{-2\lambda_1(t;s)}\lambda_2(t;s).\]
  The equation 
  \[
    \mT f=0,\quad f(0,\xi_1)=f_i(\xi_1)
  \]
  has a unique solution given by 
  \[
   f(t,\xi_1)=\mL^{-1} f_i(t,\xi_1) := f_i(b_1(t;0)\xi_1) e^{b_0(t;0)+ib_2(t;0)\xi_1^2},
  \]
  with $\mL^{-1}$ bounded from $\mS_p(\Rm,\Cm)$ to $\mC\mS_p(\Rm,\Cm)$ by a constant $C=C(p,\rT)$.
  The equation 
  \[
    \mT f=g,\quad f(0,\xi_1)=0
  \]
  has a unique solution given by
  \[
    f(t,\xi_1) = (\mT^{-1}g)(t,\xi_1) := \int_0^t g(s,b_1(t;s)\xi_1) e^{b_0(t;s)+ib_2(t;s)\xi_1^2} ds.
  \]
  The operator $\mT^{-1}$ is bounded from $\mC\mS_p(\Rm,\Cm)$ to itself by a constant $C=C(p,\rT)$.
\end{lemm}
%
The above lemma considers times $\rT$ of order $O(1)$. We obtain that $C(\rT)\leq C(1+\rT)$ when the coefficients $b_j(t)$ are uniformly bounded and $b_1(t)$ is uniformly bounded above and below by positive constants.  
%
\begin{proof}
{\em Operator $\mT$.} We have
\[
 \mT f (t,\xi_1)= \big(\frac \pi{\rho_t}\big)^{\frac12} \int_{\mathbb{R}} e^{-\frac12\rho_t \zeta_2^2}\Big(D_t  
 +\frac12  s_t\dot\theta  - (\xi_1^2-D_2^2)\gamma_t \dot\theta + \dsum_{|\alpha|=2} c_{\alpha,1} \Phi_\alpha \Big)\Big(f(t,\xi_1) e^{-\frac12\rho_t \zeta_2^2} \Big)d\zeta_2.
\]
We find
\[
  D_t \Big(f(t,\xi_1) e^{-\frac12\rho_t \zeta_2^2} \Big) = (D_t f + \frac i2\dot \rho_t\zeta_2^2 f)(t,\xi) e^{-\frac12\rho_t \zeta_2^2}.
\]
Since $\nu_{201}=0$ the operator $\Phi_{20}$ does not contribute.  The remaining $\Phi_\alpha$ terms are therefore
\[
  \Phi_{11}=-(\zeta_2-\gamma_t\xi_1)(D_1+\gamma_tD_2),\qquad\Phi_{02}=(\zeta_2-\gamma_t\xi_1)^2.
\]
The non-vanishing contributions in the integral are 
\[
   \check \Phi_{11} = -\gamma_t\frac12(\zeta_2D_2+D_2 \zeta_2+\xi_1D_1+D_1\xi_1),\qquad \check  \Phi_{02} = \zeta_2^2+\gamma_t^2\xi_1^2. 
\]
Let $K(z,D_z)$ be an operator and define
\[
  {\rm r}(K,r) = \Big(\dint_{\Rm} e^{-r z^2} dz\Big)^{-1} \Big(\dint_{\Rm} e^{-\frac12 r z^2} K(z,D_z)e^{-\frac12 r z^2} dz\Big).
\]
We compute ${\rm r}(z^2,r)=\frac{1}{2r}$, ${\rm r}(D_z^2,r)=\frac r2$, ${\rm r}(zD_z,r)=-\frac1{2i}$ and ${\rm r}(D_zz,r)=\frac1{2i}$ so that ${\rm r}(zD_z+D_zz,r)=0$.
Therefore,
\[
  \mT=D_t+\tau_0(t)+\tau_1(t)\frac12(\xi_1 D_1+D_1\xi_1) + \tau_2(t) \xi_1^2,
\]
with 
\[
 \tau_0 = i\dfrac{\dot\rho_t}{4\rho_t}  +  \tilde\tau_0,\ \ \tilde\tau_0=\frac12 \dot\theta (s_t+\gamma_t\rho_t)
   +\nu_{021} \dfrac{1}{2\rho_t},\quad \tau_1 = -\nu_{111}\gamma_t ,\quad \tau_2= -\gamma_t\dot\theta +\nu_{021} \gamma_t^2 .
\]
We verify that $\nu_{02j}=0$ for $j=1,2$ when $B=0$ (since then $\tilde A=0$ in the vicinity of $\Gamma$) so that $\tau_1=\tau_2=0$ and $\tau_0 = i\frac{\dot\rho_t}{4\rho_t}$ as expected.

This shows that $a_0$ is in  $\mS(\Rm^2_{\xi_1,\zeta_2})$ uniformly on compact intervals in time (we need $b_1(t)$ bounded away from $0$).

\medskip\noindent{\em Inverse of $\mT$}. Replace $\xi_1$ by $x$ and consider the ansatz
\[
  f(t,x)=g(b_1(t)x) E(t,x),\quad E(t,x):=\exp(b_0(t)+ib_2(t)x^2).
\]
We find
\[
  E^{-1} \frac{xD_x+D_xx}2E=(2x^2b_1-\frac i2),\quad E^{-1} D_t E = -i g'(b_1x) x b_1' + (-ib_0'+x^2b_1')g(b_1x)
\]
and
\[
  \frac{xD_x+D_xx}2 g(b_1x)= -i b_1x g'(b_1x) - \frac i2 g(b_1x).
\]
Eliminating time-dependent coefficients in front of $xg'(b_1x)E$, $g(b_1x)E$ and $x^2g(b_1x)E$ yields
\[
  b_1'+\tau_1 b_1=0,\quad -ib_0'+\tau_0-i\tau_1 =0 ,\quad b_2'+2b_2\tau_1 + \tau_2 =0. 
\]
We denote by $b_j(t;s)$ the solutions of the above equations with initial conditions $b_0(s;s)=b_2(s,s)=0$ while $b_1(s;s)=1$. More precisely, introduce
$\lambda_j(t;s)=\int_s^t \tau_j(u) \ du$ and $\tilde\lambda_0(t;s) = \int_s^t \tilde \tau_0(u) \ du$. Then
\[
  b_1(t;s)=e^{-\lambda_1(t;s)},\quad b_2(t;s)=e^{-2\lambda_1(t;s)}\lambda_2(t;s),\quad b_0(t;s)=-(i\lambda_0+\lambda_1)(t;s).
\]
Then we obtain that for an initial condition at $t=s$,
\[
  f(t,\xi_1;s) = f_i(b_1(t;s)\xi_1) e^{b_0(t;s) + i b_2(t;s)\xi_1^2} = \rho_t^{-1/4}f_i(b_1(t;s)\xi_1) e^{i\tilde\lambda_0(t;s)-\lambda_1(t;s) + i b_2(t;s)\xi_1^2}.
\]
When $B=0$, we retrieve the result with $\tilde \lambda_0=\lambda_1=b_2=0$. 

The non-homogeneous problem is solved by a standard Duhamel principle.

Over finite times $0<t<\rT$, the coefficients are uniformly bounded and $b_1$ uniformly so from below by a positive constant. From the explicit expression for $\mL^{-1}$ and $\mT^{-1}$, we observe that scaling in $\xi_1$ by $b_1(t;s)$ is stable from $\mS_p(\Rm,\Cm)$ to itself. We also observe that multiplication by $e^{i b_2(t;s)\xi_1^2}$ is stable in the same sense. This comes from the estimate
\[
  |\partial_\xi(f(\xi) e^{i\beta\xi^2})| \leq C (|\xi f(\xi)| + |f'(\xi)|),
\]
and its equivalent expression for higher-order derivatives.
Similarly, we observe that for $\beta(t)$ smooth on $[0,T]$,
\[
  |D_t^\alpha (f(\xi) e^{i\beta(t)\xi^2})| \leq C \sum_{\gamma\leq\alpha} \xi^{2\gamma} |D_t^{\alpha-\gamma}f|.
\]
This gives the various bounds in $\mC\mS_p(\Rm,\Cm)$. 
\end{proof}

\medskip

{\bf END OLD MATERIAL.}

\newpage


\newpage

\section{Explicit examples}\label{sec:4.1}

We consider here explicit examples where we can explicitly compute the parameters $\lambda_t$ and $\mu_t$, and confirm our findings with numerics. We assume that we are in the setup of the geometric condition $r_t=1$. 

\subsection{Constant magnetic field and straight interface.} The simplest case is $\kappa(x) = x_2$, $A(x) = Bx_2 e_1$. We assume that $r_t = 1$, the interface is straight, i.e. $\te_t = 0$, $\rho_t = \sqrt{1+B^2}$ is constant, $\lambda_t = 0$ and $\mu_t = 0$. There is no dispersion nor rescaling, the wavepacket propagates coherently. See Figure \ref{fig:2} below

\subsection{Straight interface, varying magnetic field} We now consider $\kappa(x) = x_2$ and $A$ varying (non-linearly). We are in the setup of \eqref{eq:00a}. We have $r_t = 1$ and $\te_t = 0$. Therefore $y_t$ solves the ODE 
\begin{equation}
    \dot{y_t} = -\dfrac{1}{\sqrt{1+B(y_t)^2}} e_1.
\end{equation}
We next find $\lambda_t = \rho_t - \rho_0$ and 
\begin{equation}
    \dot{\mu_t} = \dfrac{\p_2 B(y_t)}{2} \cdot \dfrac{B_t^2}{(1+B_t^2)^2} e^{4\rho_t-4\rho_0}.
\end{equation}
To highlight some phenomena, we assume that $B(x_1,0)$ (the leading order magnetic field experienced by the wavepacket) is constant: $B(x_1,0)= B_0$. Then $B_t = B_0$, $\rho_t = \rho_0$, $y_t = -\rho_0^{-1} te_1$ $\lambda_t = 0$; $\lambda_t = 0$ (no rescaling); and $\mu_t$ solves the equation:
\begin{equation}
    \dot{\mu_t} = \dfrac{\p_2 B(-\rho_0^{-1} t,0)}{2} \cdot \dfrac{B_0^2}{(1+B_0^2)^2}. 
\end{equation}
This can be pretty much anything. For instance, with $B(x) = B_0 + x_2 B_2$, we end up with $\mu_t = \dot{mu}_0 t$. We have dispersion in $t^{-1/2}$. We resume the main finding: while $B$ is constant along the interface, transverse variations of $B$ generate dispersion. See Figure \ref{fig:2}.

\begin{figure}[b]
    \centering
    \includegraphics[width=8cm,height=6cm]{straight3.pdf} 
    \includegraphics[width=8cm,height=6cm]{straight4.pdf}
    \caption{\label{fig:2}$\varepsilon=0.075$, $B=1+4B_2x_2$ with $B_2 =0,0.5,1,1.5.$}
    \label{fig:my_label}
\end{figure}

\subsection{Curved interface with constant magnetic field.} We assume that $\kappa$ satisfies \eqref{eq:00a} and that $A$ is linear; hence $B$ is constant. We find $r_t = 1$, $\rho_t = \rho_0 = \sqrt{1+B^2}$, $\lambda_t = 0$, $k_t = 0$, $c_t = 0$. 
Moreover,
\begin{equation}\label{eq:00f}
    \dot{\mu_t} = - \dfrac{\dot{\te_t}B}{2+2B^2} \ \ \ \Rightarrow \ \ \ \mu_t =  \dfrac{\te_0-\te_t}{2} \dfrac{B}{1+B^2}.
\end{equation}
Since $\te_t$ also depends on $B$, the dependence of $\mu_t$ in $B$ is not fully explicit. We can nonetheless do better e.g. in the case of the circle. 

There, $\te_t = \rho_0^{-1} t$. Thus \eqref{eq:00f} yields: 
\begin{equation}
   \mu_t = \dfrac{B}{2(1+B^2)^{3/2}} t. 
\end{equation}
We see that a larger magnetic field does not necessarily generate more dispersion. In fact the dispersion -- weirdly! -- is maximal when $B=1/\sqrt{2}$. We find a dispersion of order $(Bt)^{-1/2}$ for small $B$ and $B^{-1} t^{-1/2}$ for large $B$. See Figure \ref{fig:3}.

\begin{figure}[t]
    \centering
    \includegraphics[width=18cm,height=5cm]{mag_circ2.pdf} \\
    \includegraphics[width=13cm,height=5cm]{mag_circ3.pdf}
    \caption{\label{fig:3} $\epsi = 0.05$. Circles in different magnetic fields of varying strength. $B=0,0.35,1/\sqrt{2}$ (left to right on top row) and $,1.25,1.75$ (left to right on bottom row).}
    \end{figure}

\begin{figure}[b]
    \centering
    \includegraphics[width=8cm]{free_phase.pdf}
    \includegraphics[width=8cm]{AB.pdf}\\
    \includegraphics[width=8cm]{free.pdf}
    \includegraphics[width=8cm]{AB_phase.pdf}
    \caption{Aharonov--Bohm effect with $\varepsilon=0.02$. On the left the evolution with the vector potential \eqref{eq:vec_pot} $\Phi=0$ and on the right with $\Phi=1$}
    \label{fig:AB-phase}
\end{figure}

\subsection{Aharonov--Bohm effect for the circle} We assume now that 

We also study the Aharonov--Bohm effect for the state propagation along a circular interface. In this case, the magnetic vector potential that we consider is given by 
\begin{equation}
\label{eq:vec_pot}
    A(r,\varphi) = \frac{\Phi}{2\pi r} \hat{e}_{\varphi},
    \end{equation}
    where $\Phi>0$. The curl of $A$ then vanishes, i.e. this magnetic vector potential does not introduce a magnetic field. However, the Aharonov--Bohm effect teaches us that the presence of the vector potential induces a shift in the phase of the wavefunction.

\subsection{Non-curved interface, varying magnetic field} 

    \begin{figure}
    \includegraphics[width=8cm,height=7cm]{mag_tanh.pdf}\
        \includegraphics[width=8cm,height=7cm]{mag_tanh2.pdf} \\
            \includegraphics[width=8cm,height=7cm]{tanhprofile.pdf}\
        \includegraphics[width=8cm,height=7cm]{tanhprofile_2.pdf}
    \caption{\label{fig:1}$\varepsilon = 0.075.$ The left figure shows the propagation through a constant magnetic field $B_0 \in\{-1.5,..,1.5\}$ with vector potential $A(x)=-B_0x_2 e_1.$ The right figure considers a vector potential $A(x)=-\sin(B_0x_2)e_1$ with magnetic field $B=B_0\cos(x_2)$ This latter produces also a constant magnetic field on the straight segment 
    but a non-constant magnetic field when entering the curved segment of the profile. This leads to a visible drop in amplitude.}
    \label{fig:my_label}
\end{figure}

\newpage

{\bf End of current version}


\newpage

\section{Wavepackets with magnetic potential (OLD)}

This article is the second article in a series of articles. It continues our analysis of wavepackets along weakly curved interfaces. After constructing in our first article a wavepacket that stably propagates according to a two-dimensional Dirac equation with an interface described as the nodal set of a function $\kappa \in C^{\infty}(\R^2;\R)$, we now study the effect of an additional electromagnetic potential. This includes a magnetic fields with vector potentials $A\in C^{\infty}(\mathbb{R}^2,\mathbb{R}^2),$ inducing a magnetic field $B=\partial_1 A_2 - \partial_2 A_1,$ and an electric potential $V \in C^{\infty}(\mathbb{R}^2,\mathbb{R}).$ The electromagnetic Dirac operator with interface then takes the form
\[ H = \left[ \begin{matrix} \sqrt{\varepsilon} V(\epsi x) + \kappa(\epsi x) & -i\partial_{x_1} -A_1( \varepsilon x)-\partial_{x_2} +i A_2(\varepsilon x) \\
-i\partial_{x_1} -A_1( \varepsilon x)+\partial_{x_2} +i A_2(\varepsilon x)  &  \sqrt{\varepsilon} V(\epsi x) - \kappa(\epsi x) \end{matrix} \right].\]
with semiclassical parameter $\epsi \ll 1.$
We use a function $\kappa \in C^{\infty}(\mathbb{R}^2,\mathbb{R})$ to describe the domain wall as the zero level set of $\kappa$
\[ \Gamma_{\varepsilon}= \{ x \in \mathbb{R}^2 ; \kappa(\epsi x)=0\}. \]
The dynamics is then described by the corresponding Dirac equation
\begin{equation}
\label{eq:original}
  \Big[\eps D_t + \eps^{\frac{1}{2}} V(\eps x) + (D_x-  A(\eps x) )\cdot \sigma + \kappa(\eps x) \sigma_3 \Big]\psi_{m} (t,x) =0
\end{equation}
with derivatives $D_z=\frac{1}{i} \partial_z$ and Pauli matrices $\sigma_j.$
The above equation at the \emph{microscopic} scale may be recast at the macroscopic or semiclassical scale $X=\eps x$ as
\begin{equation}\label{eq:diraceps}
  \Big[\eps D_t + \sqrt{\varepsilon}V(X)+(\eps D_X-  A(X) )\cdot \sigma + \kappa(X) \sigma_3 \Big]\psi_{M}(t,X)  =0
\end{equation}
for $\psi_M(t, X) \equiv \psi_m(t, x)$.

We then construct a curve $t \mapsto y_t$ taking values in $\kappa^{-1}(0)$ as follows
\[\label{eq:6a}
  \Rm^2\supset \Sm^1\ni \varphi(y_t)=\dfrac{\nabla \kappa(y_t)}{|\nabla \kappa(y_t)|},\quad \dot y_t = r(t)J\varphi(y_t), \quad J=\left[\begin{matrix} 0 & -1 \\ 1 & 0 \end{matrix}\right],
\]
and shall write in the sequel $v^{\perp}:=Jv.$ \alexis{In the first paper, $r(t)$ is $|\nabla \kappa(y_t)|$. To have uniform notations, we could (i) use here  $v_t, \nu_t$ or $\lambda_t$ for $|\dot{y_t}|$ ($|\dot{y_t}|$ is a speed) or (ii) replace $|\nabla \kappa(y_t)|$ in the both paper by $\rho_t$ or $\sigma_t$ ($r_t$ appears in $e^{-r_t x^2/2}$). I like $\sigma_t$ but it can conflict with the Pauli matrices $\sigma_j$; we could also bold these. What do you prefer?}

\gb{It's hard to use $\sigma_t$. We can use $r_t=|\nabla\kappa_t|$ here as well.}

We initialize the trajectory with some $y(0)\in \kappa^{-1}(0)$. Since we allow the norm of $\dot y$ to possibly vary, we thus have the free parameter $r(t)\not=0$ to choose. 
For the wave packet following \eqref{eq:diraceps}, we thus make the ansatz
\begin{equation}\label{eq:diracansatz}
   \psi_{m}(t,x) = e^{i\frac{\chi(\eps x)}{\eps}} \psi\left( t,\frac{\eps x-y_t}{\eps^{\frac{1}{2}}}\right) \equiv \psi_M(t,X)= e^{i\frac{\chi(X)}{\eps}} \psi\left( t,\frac{X-y_t}{\eps^{\frac{1}{2}}}\right)
\end{equation}
where $\chi(x)$ is a gauge field and $\psi(t,x)$ describes the wavepacket at time $t$ centered at $y_t$.
\\[3mm]

\smallsection{Notation}
The Pauli matrices are denoted by $\sigma_i.$ Operator adjoints are denoted by $T^*$ and pullbacks by $f^{\star}.$
We introduce operators
\begin{equation}
\label{eq:SandT}
 S_yz= y+\eps^{\frac{1}{2}} z,\quad T_yX=\frac{X-y}{\eps^{\frac{1}{2}}}\text{ and }S_t:=S_{y_t}, T_t:=T_{y_t},
\end{equation}
with $y=y_t$ for $(y_t)$ defined in \eqref{eq:6a}. 
This gives the following correspondence between the representations of the wavepacket in \eqref{eq:diracansatz}
\[
   \psi_M(t,X) = ( T_{y_t}^\star \psi ) (t,X),\quad \psi(t,z) = (S^\star_{y_t} \psi_M ) (t,z).
\]
For a smooth function $f\in{\mathcal S}(\Rm^2)$ we formally have modulo $\eps^\infty$ the decomposition with multi-index formalism $\alpha=(\alpha_1,\alpha_2)$
\begin{equation}
    \begin{split}
    \label{eq:asymptotic_exp}
  S_y^\star f (z) &= \dsum_{\alpha \in \mathbb N_0^2} \dfrac{z^{\alpha}}{\alpha!} \partial^\alpha (S_y^*f)(0) = \dsum_{\alpha \in \mathbb N_0^2} \dfrac{\eps^{\frac {|\alpha|}2} z^{\alpha}}{\alpha!} \partial^\alpha f(y) = : \dsum_{j\geq0} \eps^{\frac {|\alpha|}2} S^\star_{y,j} f(z)\text{ with}\\
  S_{y,j}^\star f(z) &:= \dsum_{|\alpha|=j} \dfrac{ z^{\alpha}}{\alpha!} \partial^\alpha f(y).
    \end{split}
\end{equation}  

\alexis{Trying to improve readability: 
\begin{enumerate}
    \item Why not use $S_t / S_{y_t}$ instead of $S_t^* /  S_{y_t}^*$? We never use the function $S_y z$ later in the text.
    \item Next level: why not use $S$ instead of $S_t^* / S_{y_t}^*$, as well as $S_j$ for $S_{t,j}^* / S_{y_t,j}^*$?
\end{enumerate}}

\gb{Yes, we can use that notation. The idea of the pullbacks $\star$ is to avoid introducing a notation for the change of variables and another one for the associated pullback on functions or any other object. I'm not using pullbacks in general but here we have so many transforms that I wanted to give it a try. If we end up with few transforms, then we can indeed give them a name and be done, I agree.}

{\color{blue}(SB): It is also unfortunate to have both tau being the shift and this tangent vector, but I am not sure the tangent vector is really needed} \alexis{I think we could not need the tangent vector -- it seems that one can require that $A-\nabla \chi$ vanishes on $\kappa^{-1}(0)$ and is orthogonal to $\nabla \kappa$ everywhere (no need for tangent vectors here).}
We also introduce shifts
\[
\tau_{w}(z_2) = z_2-w \text{ such that }  \tau^\star_{w}f(z_2) = f(\tau_w z_2) = f(z_2-w).
\]
We define the spatial rotation $R_\theta$ with  $\theta=\theta(t)$ such that for the curve $(y_t)$ defined in \eqref{eq:6a}, $R_\theta\varphi(y_t)=e_2$, where $\varphi(y)=\frac{\nabla\kappa(y)}{\vert \nabla \kappa(y)\vert }$ and $\hat v=\frac{v}{|v|}$.

Associated to the spatial rotation there is a counter-acting spinorial rotation $U_{j,\theta} = e^{-i \frac{\theta}{2} \sigma_j}$ for $j =1,2,3.$ In particular, \[ U_{3,\theta}=e^{-i\frac\theta2\sigma_3}= \cos\Big(\frac{\theta}{2}\Big) -i \sin\Big(\frac{\theta}{2}\Big)\sigma_3,\] such that $U_{3,\theta}^* \sigma U_{3,\theta}= R_{-\theta} \sigma,$ where $\sigma$ is treated like a vector.
Then, for $\Ub_{\theta}:=R_{\theta}^{\star}\otimes U_{\theta}$, we have $\Ub_\theta^* (D\cdot\sigma) \Ub_\theta=R_{-\theta} D \cdot R_{-\theta} \sigma = D\cdot\sigma$.
\\
Let $r(t)=|\nabla\kappa(y_t)|$ be the strength of the domain wall. 
\\[2mm]
The magnetic field along the curve is $B(t)=B(y_t)$. We define the magnetic rotation angle by $\theta_B=\arctan B_1$ for $B_1=B/r$ and introduce
\[
  r= \alpha_B=\cos\theta_B = \frac{r}{\sqrt{r^2+B^2}} ,\ \beta_B=\sin\theta_B,\ r_B=\sqrt{r^2+B^2},\ \gamma_B=\frac {\beta_B}{r_B} = \dfrac{B}{\sqrt{r^2+B^2}}.
\]
We use the vector product $\cdot$ also for products with Pauli matrices $\sigma_{1,2,3}$. 
{\color{blue}\subsection{The straight line}

We consider a Hamiltonian with interface $\mathbb R e_1$ described by a function $\kappa(x) = x_2$ and constant magnetic field $B$ with vector potential $A(x)=(-Bx_2,0)$, which after conjugating by the Fourier transform in $x_1 $ reads
\[ H = \begin{pmatrix} \epsi x_2 & \xi + B \epsi x_2 -\partial_{x_2} \\
 \xi_1 + B \epsi x_2 +\partial_{x_2} & -\epsi x_2  \end{pmatrix}.\]
 We are then interested in analyzing the zero modes of that operator, i.e. solutions $H\Psi=0.$
This analysis is equivalent to studying the solutions of the ODE
 $$ \partial_{x_2}\Psi(\xi_1,x_2) = \begin{pmatrix} \xi_1 +B \epsi x_2 & \epsi x_2 \\ \epsi x_2 & -\xi_1-B\epsi x_2 \end{pmatrix} \Psi(\xi_1,x_2).$$
 
This ODE does not look too bad, but I also do not see any obvious way to derive a solution right away. We may be able to say more once we have understood better the structure of the solutions....
 }

\subsection{Scaling and reformulation.}

While $X=\eps x$, in \eqref{eq:diraceps} describes the \emph{macroscopic scale}, and $x$ in \eqref{eq:original} the \emph{microscopic scale}, the \emph{natural scale} of the wavepacket is the intermediate one $z =\eps^{-\frac12}X=\eps^{\frac12}x$. We then have the correspondence
\[
   \psi_M(t,X) = e^{i\frac{\chi(X)}{\eps}} ( T_{t}^\star \psi ) (t,X),\quad \psi(t,z) = (S^\star_{t} e^{-i\frac\chi\eps}\psi_M ) (t,z),
\]
with $S_t$ and $T_t$ defined in \eqref{eq:SandT}.
We then deduce from \eqref{eq:diraceps} and \eqref{eq:diracansatz} the equation
\begin{equation}\label{eq:psiu}
   \Big[ \eps^{\frac{1}{2}} D_t +S_t^*V(z)+ [D_z-\underbrace{\eps^{-\frac12}S_t^\star A_r(z)}_{(a)}] \cdot \sigma - \underbrace{\dot y_t\cdot D_z}_{(b)} + \underbrace{\eps^{-\frac12}S_t^\star \kappa(z)\sigma_3}_{(c)} \Big] \psi(t,z)=0.
\end{equation}
To symmetrize the notation, we introduce the vector valued spatial function 
\begin{equation}\label{eq:h}
h(X)=(-A_{r_1},-A_{r_2},\kappa)^t(X).
\end{equation}
The above equation is then written as
\begin{equation}\label{eq:psiu2}
   L \psi(t,z)=0 ,\quad 
  L :=  \eps^{\frac12} D_t+S_t^*V(z) +  (\sigma-\dot y_t) \cdot D_z +  \eps^{-\frac12}S_t^\star h (z)\cdot\sigma .
\end{equation}
 \\

\noindent 

\section{Gauges and symmetries}

\subsection{Gauge field construction.}

\subsubsection{Detailed construction of the gauge field $\chi$.}\label{sec:7.1}

Let $y_s$ be an arclength ($|\dot y|=1$) parametrization of a curve $\Gamma\subset\kappa^{-1}(0)$ and let $n=n_s=J\dot{y_s} = \dot{y_s}^\perp$ be the normal vector.

For an initial point $y_0\in\Gamma$, we fix our initial condition of the gauge field $\chi(y_0)=0$. Let $\tau:=\dot{y}$ be the tangent vector associated with the curve. 
The next Lemma allows us to globally define a gauge potential $\chi$ and a corresponding vector potential $A_r(X)=(A-\nabla\chi)(X)$ such that $A_r(y_s)=0$. \alexis{My understanding is that Guillaume's phase is stronger than that, and that it is important for what follows. To make things work, one needs:
\begin{equation}\label{eq:00b}
    A_r = 0 \text{ on } \Gamma = \kappa^{-1}(0); \ \ \text{and} \ \ A_r \perp \nabla \kappa \text{ everywhere (not just along $\Gamma$).}
\end{equation}
I think the statement of the lemma needs to be edited to obtain \eqref{eq:00b}.}
{\color{blue}(SB): I don't think the proof shows $A_r \perp \nabla \kappa$, actually.} \alexis{OK; well the proof should be improved to show it -- I think one needs $A_r \perp \nabla \kappa$ in \eqref{eq:6e}.}

\begin{lemm}
Let $(y_s)$ be an arclength parametrization of a smooth curve $\Gamma$. There exists a gauge potential $\chi \in C_c^{\infty}(\mathbb{R}^2,\mathbb R)$, supported in a cylindrical neighbourhood of $\Gamma$, with $\chi(y_0)=0$ for some $y_0 \in \Gamma$, such that for all $s$ parametrizing $\Gamma$
\[
   \nabla \chi(y_s) = A(y_s). 
\]

In particular, let $A_r: = A-\nabla \chi$, then
\begin{equation}\label{eq:Bfieldcurve}
\tau\cdot A_r(x)=-B(y_s)z_2 + O(z_2^2) \text{ for }x=y_s +z_2 n_s,
\end{equation}
i.e. the gauge field $\chi$ gauges away all the components of $A$ in the vicinity of the curve except those that contribute to the physically relevant $B$-field.
\end{lemm}

\begin{proof}
We start with the construction of a gauge field for a curve that we assume as smooth, simple, connected, and open for the moment. We define $\chi(y_s)$ along the curve $\Gamma$ by the ODE
\[
    (\tau \cdot \nabla \chi)(y_s) = \tau \cdot A(y_s).
\]
Let now $x$ belong to a cylindrical vicinity of the curve $\Gamma$, i.e., points $x=y_s+z_2 n_s$ for $|z_2|$ small such that $(s,z_2)\to x$ is a diffeomorphism on the cylinder. For such points, we solve the ODEs  across the curve now knowing the value of $\chi(y_s)$
\[
  (n\cdot\nabla\chi)(x)=n\cdot A(x)
\]
to define $\chi$ for $x$ in the cylinder. We define $(n,\tau)(x):=(n,\tau)(y_s)$.

Finally, we smoothly extend $\chi$ by $0$ outside of this cylinder.  Moreover, from first order expansion in $x$ at $y_s$
\begin{equation}
\label{eq:firstid}
  (\tau\cdot A_r)(x) = (\tau\cdot A_r)(y_s) + z_2(\nabla  (\tau\cdot A_r)(y_s))_n + O(z_2^2)
  = z_2 ((n\cdot\nabla A_r) (y_s))_{\tau} + O(z_2^2),
\end{equation}
where in the last step we used in the product rule that $A_r(y_s)=0$ so that the curvature $n\cdot\nabla \tau$ does not appear here. We thus obtain $(A_{r})_{\tau}(x)=z_2 (n\cdot\nabla A_{r})_{\tau}(y_s)+O(z_2^2)$. Along the curve the derivative $(\tau \cdot \nabla) A_r=0$ vanishes, since $A_r \equiv 0$ on the curve. Using this in the first equality
\begin{equation}
\label{eq:secondid}
 ( (n\cdot \nabla) A_{r}(y_s))_{\tau} =  ( (n\cdot \nabla) A_{r}(y_s))_{\tau} - ((\tau\cdot \nabla) A_{r}(y_s))_n  =  -(\nabla \times A_r)(y_s) = -B(y_s)
\end{equation}
using that $\nabla\times\nabla\chi=0$ in the last step. Hence, combining \eqref{eq:firstid} with \eqref{eq:secondid} yields \eqref{eq:Bfieldcurve}.

\medskip

It remains to extend the construction to a closed curve $\Gamma$. We observe that $\chi$ cannot uniquely be defined unless $\int_{\Gamma} A_\tau(y_s) ds=0$. To avoid making that assumption, we decompose the curve $\Gamma=\Gamma_1\cup\Gamma_2$ with $\Gamma_j$ open and connected and such that there exist two point $y_1$ and $y_2$ in the two disjoint components of $\Gamma_1\cap\Gamma_2$.

We then construct $\chi_j$ on $\Gamma_j$ as above by prescribing $\chi_j(y_j)=0$. 

We now consider two Ans\"atze 
\[
  \psi_{Mj}(t,X) = e^{i\frac {\chi_j(X)}{\eps}} (T^\star_{y_t} \psi)(t,X).
\]
Let $T_1$ be the disjoint union of time intervals such that $y_t$ for increasing times joins $y_1$ to $y_2$ in $\Gamma_1$ and $T_2$ the complementary union of time intervals such that $y_t$ joins $y_2$ to $y_1$. We then define
\[\label{eq:6y}
  \psi_M(t,X) := \psi_{Mj}(t,X) \quad \mbox{ for } \quad t\in T_j.
\]
At the times where $y_t\in\{y_1,y_2\}$, we switch from one representation to another one by multiplying by a gauge field $e^{i\frac {(\chi_i-\chi_j)(X)}{\eps}}$ for $i\not=j$. Multiple closed loops are treated in an analogous manner.
\end{proof}

\begin{rem}
Note that the amplitude $\psi(t,z)$ is not affected by the gauge transforms. Note also that we will no longer be able to get a global $L^2$ estimate of $\psi_M$ minus its asymptotic expansion. Only $\psi_M$ up to a $U(1)$ gauge transform will be well approximated; but again this is all that matters physically.
\end{rem}

\begin{example}[Straight line]
We consider an interface $\mathbb R v_{\theta}$ for some $v_{\theta} \in \mathbb R^2 \backslash \{0\}.$
Take a function $f(t)=\chi(z_0+v_{\theta}^{\perp} \cdot t),$ for $t \in \mathbb R$. Thus, we have that, for a magnetic vector potential $A(z) = (-Bz_2,0)$,  $f(t):=\chi(z_0+v_{\theta}^{\perp}t)$ solves, by the orthogonality condition $\langle (A-\nabla \chi)(z), v_{\theta}^{\perp} \rangle =0,$ the differential equation with $z=z_0 + v_{\theta}^{\perp}t,$
\begin{equation}
    \begin{split}
        f'(t) &=\langle v_{\theta}^{\perp}, \nabla \chi(z_0+v_{\theta}^{\perp} \cdot t) \rangle=-B z_2 \langle v_{\rhoa}^{\perp}, e_1 \rangle =B ((z_0)_2 +(v_{\theta})_1t)(v_{\theta})_2.
    \end{split}
\end{equation} 
Integrating yields $$\chi(z_0+v_{\theta}^{\perp} \cdot t) =C_{z_0, v_{\theta}}+ (v_{\theta})_2(z_0)_2 Bt+ \frac{(v_{\theta})_1(v_{\theta})_2 Bt^2}{2} \text{ for some }C_{z_0} \in \mathbb R.$$ 
For $(v_{\theta})_1$ not equal to zero, we can then choose, consistent with the above formula,
$$\chi(z) = -\frac{Bz_1^2 (v_{\theta})_2}{2(v_{\theta})_1}.$$
Then $\nabla \chi(tv_{\theta})=-Bt(v_{\theta})_2e_1=A(v_{\theta}t)$ which fulfills the second requirement.

If $(v_{\theta})_1=0$, then the orthogonality constraint implies that $-Bz_2=\partial_1 \chi(z)=0$ everywhere. And in addition, we also require $\partial_2\chi(0,z_2)=0.$ Thus, $$\chi(z)=-Bz_1z_2$$ fulfills both requirements.
\end{example}

\begin{example}[Ellipse]
We continue by discussing the circular interface in a magnetic field.
We find that the function $f(t)=\chi(\alpha\cos(t),\beta\sin(t)),$ for $\alpha,\beta >0$ and $t \in [0,2\pi)$ where we use the general parametrization of an ellipse, which satisfies
\begin{equation}
    \begin{split}
        f'(t) &=- \alpha\partial_1 \chi(\alpha \cos(t),\beta \sin(t)) \sin(t)+ \beta \partial_2 \chi(\alpha\cos(t),\beta \sin(t)) \cos(t)\\
        &=-\alpha A_1(\alpha\cos(t),\beta \sin(t))\sin(t)+\beta A_2(\alpha\cos(t),\beta\sin(t))\cos(t),
    \end{split}
\end{equation} 
which we can always integrate.
In the special case of a magnetic vector potential in symmetric gauge $A(x) = \frac{B \vert x \vert}{2}e_{\varphi}$ and a circle or radius $\alpha$, this reduces to 
$f'(t) = \frac{\alpha^2 B}{2}.$
Thus, $f(t) = \frac{\alpha^2Bt}{2},$ which means that as $\chi$ we can choose
\[\chi(r,\varphi)= \frac{ \alpha^2 B \varphi}{2}. \]
This function is in general multi-valued at $\varphi=0$, which can be overcome by suitably patching segments of $\chi$ together. Hence, $A_r-\nabla \chi$ is orthogonal to $e_r.$
\end{example}
%

\gb{Yes, we can probably go for a good approximation of $\psi_M$, not only of $|\psi_M|$. The construction of $\chi$ is done uniquely up to a constant. Let $\Gamma$ be closed and $y_0\in\Gamma$ written as $[y_0,y_0)$ in some appropriate sense. Solve $\tau\cdot\nabla \chi=\tau\cdot A$ along the curve starting at $y_0$ with two values of $\chi_{|\Gamma}$ on $y_0\pm 0\tau$. Solve for $\chi$ orthogonally to the curve either with a fixed $n$ as I have done or (better yet) following $\nabla\kappa$. This generates a global $\chi$ in the vicinity of $\Gamma$ that jumps along a curve (straight line in my case) normal to $\Gamma$. Moreover, the jump across that curve is constant and given by the total magnetic field enclosed by the curve $\delta\chi:=\int_\Gamma \tau\cdot A dl$. \alexis{Ok, thanks.}

If we do not want to quantize $\eps$, we define $\chi_{1+2k}$ on $\Gamma_1$ and $\chi_{2k}$ on $\Gamma_2$ such that $\chi_{2k+2}=\chi_{2k+1}$ in $V_2$ and $\chi_{2k+3}=\chi_{2k+2}=\chi_{2k+1}+\delta\chi$ in $V_1$.  Then globally, $\chi_{k+2}=\chi_k+\delta\chi$ wherever these are defined and $\psi_{M_j}=e^{i\chi_j(X)/\eps} (T^\star_{y_t}\psi)(t,X)$ should indeed be continuous and  generate a $\eps^\infty$ residual in the Dirac equation. \alexis{Ok -- very good, thanks}

The above works with one caveat. $\chi$ above cannot be constructed globally in $\Rm^2$. It can be constructed locally in a small vicinity of $\Gamma$ and then needs to be multiplied by $\phi(X)$, a function equal to $1$ in the vicinity of $\Gamma$ and  compactly supported in a smaller vicinity of $\Gamma$. Then $\phi(X)\chi_M(X)$ jumps by $\phi(X)\delta\chi$. So, no quantization of $\eps$ works uniformly in $X$. After multiplication by $\phi(X)$, we have $\chi_{k+2}=\chi_k+\phi(X)\delta\chi$ where these functions are defined. But $\psi_{M_j}=e^{i\chi_j(X)/\eps} (T^\star_{y_t}\psi)(t,X)$ should still be continuous and  generate a $\eps^\infty$ residual in the Dirac equation.
\alexis{sure, that's fine.} }
\gb{Second caveat: $\chi_M(X)$ also need to be smooth along the curve. We construct $\chi_j$ on $\Gamma_{j \mod 2}$ and make sure it goes to $0$ smoothly.} \aw{Probably this is exactly what you are saying but you can make the ansatz of a time-dependent $\chi(\epsilon x,t)$ where the time dependence is simply to switch gauges in each half of the ring each period. Divide the ring into left and right halves and assume the wavepacket starts at the top of the circle and propagates counterclockwise. Then $\chi(X,0)$ is the gauge which is smooth except for a jump along the positive y axis. Suppose the wavepacket rounds the bottom of the circle at $t = T/2$. Between this time and $t = T$ when the wavepacket gets back to the top of the circle e.g. at $t = 3T/4$, set $\chi$ to update in the left side of the circle such that the discontinuity is switched to the negative y axis (along the circle, the gauge on the left will then be exactly the previous one plus the flux enclosed in the circle). Then once you get past the top of the circle again you switch in the right side of the circle. The trick is that although this will create extra $\de_t \chi$ terms in the expansion, the support of this term will always be where the wavepacket is exponentially small it can be neglected without affecting the error analysis. Hagedorn actually multiplies his wavepackets by a smooth cutoff centered at the wavepacket center to do something very similar.}
\vspace{5mm}

\section{Analysis of bent interfaces.}

\alexis{I am trying to go through the notes; I am too quickly stuck because of typos. For instance, somebody needs to fix the operators $S, S_t, S_y, S_{y,j}, S_j$?}{\color{blue}(SB): Tried to clarify this now a bit in the notation section}

\alexis{Thanks! Another notational conflict: sometimes $\sigma$ is $\sigma_{1,2,3}$, sometimes it is only $\sigma_{1,2}$. I think we should fix this. Maybe not use $\sigma$ as $\sigma_{1,2,3}$?}

\subsection{Transport equations}
We recall that both $\kappa(X)$ and $A_r(X)$, and hence $h(X)$, defined in \eqref{eq:h}, vanish along the curve $t\to y_t$. This justifies introducing the following expansion of the operator $L$, defined in \eqref{eq:psiu2} by $L = \dsum_{j=0}^{J-1} \eps^{\frac j2} L_{j} + \eps^{\frac J2} L_{\geq J}\text{ in \eqref{eq:psiu2}  for all }J\geq2$ as 
\begin{equation}
    \begin{split}
    \label{eq:transport_eqs}
   L_0 &=  D\cdot(\sigma-\dot y_t)  +  (S^\star_{y_t,1} h)\cdot \sigma \\
  L_j &=\delta_{j1} D_t  +S^\star_{y_t,j-1} V+ (S^\star_{y_t,j+1} h) \cdot \sigma ,\quad j\geq1 ,\text{ and }\\
  L_{\geq j}  \ &=S^\star_{y_t,\ge {j-1}} V+\   (S^\star_{y_t,\geq j+1} h) \cdot \sigma  ,\quad j\geq2.
    \end{split}
\end{equation}
with operators $S_{y,j}$ as defined in   \eqref{eq:asymptotic_exp}.
We observe that the above operators all map the space of Schwartz functions to itself. \alexis{in the non-magnetic paper we used the notation $\epsi D_t + H \sim \sum_j \epsi^{j/2} T_j$ instead of $\epsi D_t + H \sim \sum_j \epsi^{j/2} L_j$ here. Let's aim for uniform notations and change $T_j$ to $L_j$ here or there -- which one do you prefer? (Frankly I prefer $\epsi D_t + H \sim \sum_j \epsi^{j/2} L_j$)}

\begin{lemm}
Let $a_0,..,a_n\in \mathcal S(\mathbb{R}^2; \mathbb C^2)$ be the solutions to 
\[ \sum_{l=0}^j L_{j-l}a_j = 0 ,\text{ for all }0\le j\le n.\]
The function $\psi_n(t,X)=e^{i \frac{\chi(X)}{\varepsilon}} \sum_{j=0}^n \epsi^{j/2}(T_{t}^*a_j)(X)$, with $T_t$ as in \eqref{eq:SandT}, then solves
\[ (\varepsilon D_t+H)\psi_n(t) = \mathcal O_{L^2}(\varepsilon^{n/2+1}).\]
\end{lemm}

\subsection{Spatial rotation.} To align the tangent vector of the interface at every time step, we now perform $t-$dependent rotations of the spatial variables $u$ and the spinor $\psi$. This will then allow us to construct explicit inversions to the above equations. 

The first rotation aims to ensure that $\nabla\kappa\not=0$ points in the $e_2$ direction with the same orientation. Let $R_\te$ be the rotation by an angle $\theta$ in clockwise direction
 \[    R_\theta = \begin{pmatrix} \cos\theta & \sin \theta \\ -\sin \theta & \cos\theta\end{pmatrix}. \]
We take $\te = \te(t)$ such that $R_\theta \nabla \kappa(y_t) = |\nabla \kappa(y_t)| e_2$. 


The function $S_t^\star\kappa(z)=R^\star_\theta \kappa_0(z)$ for a model domain wall $\kappa_0(z)$ such that $\nabla \kappa_0(z)=|\nabla\kappa_0(z)|e_2$. We therefore aim to apply the unitary transform $F\mapsto R^\star_{-\theta} F R^\star_\theta$. The spatial rotation is then counter-acted by a rotation of the spinor $\psi$.  We use that
\[
  R^\star_{-\theta} D R^\star_\theta = R_{-\theta} D, \qquad R_{-\theta}\sigma = e^{i\frac\theta 2\sigma_3}\sigma  e^{-i\frac\theta 2\sigma_3}, \qquad R_{-\theta}D\cdot R_{-\theta}\sigma = D\cdot\sigma.
\]
\alexis{In the first paper we use the notation $\RR_\te$ instead of $R_\te^*$. Let's try to uniformize.}

Using the spinorial rotation $U_{3,\theta} =  e^{-i\frac\theta 2\sigma_3}$, we then recall the definition of $\Ub_\theta = R^\star_\theta \otimes  U_{3,\theta}.$ \alexis{We could write down (and maybe prove in an appendix) the following relation:
\begin{equation}\label{eq:6b}
    \epsi_{jk\ell} = 1 \ \ \ \Rightarrow \ \ \ U_{k,\theta}^{-1} \sigma_j U_{k,\theta} = \cos \te \sigma_j +\sin \te \sigma_\ell.
\end{equation}
}
We then have the following unitarily equivalent formulation of $L$ in \eqref{eq:psiu2}:
 
\begin{lemm} \alexis{Not sure about this lemma? What about $\Ub_\theta^* D_t \Ub_\theta?$} 

By conjugating the differential operator $L$ in \eqref{eq:psiu2} by $\Ub_{\theta}$, we find
\begin{equation}
\begin{split}
\Ub_{\te}^* L \Ub_{\te} = 
   \eps^{\frac{1}{2}}  D_t +R^*_{-\theta}S_t^* VR_{\theta}^*+  (D-   \eps^{-\frac12} R_\theta  R^\star_{-\theta}S_t^\star A_r)\cdot \sigma - R_\theta \dot y\cdot D + \eps^{-\frac12} R^\star_{-\theta}S_t^\star \kappa \sigma_3.  
\end{split}
\end{equation}
\alexis{Also can't we simplify the term $\dot{y} \cdot D$? We should have $\dot{y} \cdot D = - r(t) D_1$ where $r=r(t)$ is defined in \eqref{eq:6a}, right?}
\end{lemm}
\begin{proof}
We first recall that we have
\[
   R^\star_{-\theta}S^\star \kappa(z) = \kappa\big(y_t + \eps^{\frac{1}{2}} R_{-\theta} z \big),\qquad 
   R_\theta R^\star_{-\theta}S_t^\star A_r(z) = (R_\theta A_r)\big(y_t + \eps^{\frac{1}{2}} R_{-\theta} z\big).
\]
\begin{equation}
\begin{split}
\Ub^*_{\theta}   \Big[\eps^{\frac{1}{2}}  D_t +R_{\theta}(S_t^* V(z)R_{\theta}^*)+  (D-   \eps^{-\frac12}   R^\star_{-\theta}S_t^\star A_r)\cdot \sigma - R_\theta \dot y\cdot D + \eps^{-\frac12} R^\star_{-\theta}S_t^\star \kappa \sigma_3  \Big]  \Ub_{\theta} \tilde \psi(t,z)=0
\end{split}
\end{equation}
\end{proof}
{\color{blue}(SB): 
\begin{equation}
    \begin{split}
        U_{\theta}^{-1} D_t U_{\theta}  
        &= D_t +(-i) \left( \cos(\theta/2) + i \sigma_3 \sin(\theta/2) \right) \Big(\dot{\theta}/2\Big)\Big(-\sin(\theta/2) -i \cos(\theta/2) \sigma_3\Big)\\
        &=D_t -\frac{\dot{\theta}}{2}\sigma_3.
    \end{split}
\end{equation}
Similarly, $\dot{R}_{\theta}R_{-\theta}=\dot{\theta}(t)\begin{pmatrix} 0 & 1 \\ -1 & 0 \end{pmatrix}.$ \alexis{$= -J \dot{\te}$. So maybe there should be a $-$ sign below?} This implies that $ R_{-\theta}^* D_t R_{\theta}= \dot{\theta}(t) z \cdot JD$, hence:
\[ \Ub^*_{\theta} D_t \Ub_{\theta} = D_t - \frac{\dot{\theta}}{2}\sigma_3 + \dot{\theta}(t)(z \cdot JD ).\]
}

\subsection{Magnetic rotation.}
We find using Taylor expansion in the first equality and {\color{blue}(SB):not sure what happens in the second one?-Details?} \alexis{Here you actually need that $A_r \perp \nabla \kappa$ -- at least that's my understanding.}
\begin{equation}
    \label{eq:6e}
  \eps^{-\frac12} (R_\theta A_r) ( y_t + \eps^{\frac{1}{2}} R_{-\theta} y) = y\cdot R_\theta\nabla R_\theta A_r(y_t) + O(\eps^{\frac{1}{2}}) = (-y_2 B(y_t),0)+\mathcal O(\sqrt{\epsi})
\end{equation}
based on the construction of $\chi(X)$ with $B_t=B(y_t)$.
\\[3mm]
{\color{blue}(SB): It is a bit unfortunate that since we are using the Taylor expansion, the leading order equation is actually not going to be given by $L_0$, but by another Taylor expanded version of $L_0$ and so all these operators in \eqref{eq:transport_eqs} seem a bit redundant at first glance}.

\subsection{Leading equation} 
{\color{blue}(SB):Made some changes here that were suggested in the discussion but never executed, I believe:}
The leading-order equation for $\tilde\psi$ is given by \alexis{Is $\tilde{\psi}_0$ equal to $\Ub_\theta a_0$? If yes, a notation like $\tilde{a}_0$ may be more intuitive.}
\[\label{eq:6q}
  \Big[(D+B_t(z_2,0))\cdot\sigma - R_\theta\dot y_t \cdot D + rz_2 \sigma_3\Big]  \tilde\psi_0(t,z)=0,
\]
with $r=r(t) =  |\nabla\kappa (y_t)|$. \alexis{It used to be $\kappa_1$? It's ok to use $r(t)$ now but then one needs to change $|\dot{y}|$ to e.g. $v_t$. Another possibility is keep $r(t) = |\dot{y}|$ but use $|\nabla\kappa (y_t)| = \sigma_t$, and bold $\sigma_{1,2,3}$. The notation $\kappa_1 = |\nabla\kappa (y_t)|$ was not ideal..}
This may be recast as
\begin{equation}\label{eq:leadingterm}
    [(D_1+Bz_2)\sigma_1 + D_2\sigma_2 + r(t) D_1 + r z_2\sigma_3] \tilde\psi_0 =0.
\end{equation}
The presence of the domain wall $z_2$ in front of $r\sigma_3+B\sigma_1$ is best handled by a second spinorial rotation. We recast the above as $r(\sigma_3+B_1\sigma_1)$ for $B_1=B/r$ and define the rotation $U_{2,\theta_B}=e^{-i\frac12\theta_B\sigma_2}$ with angle $\theta_B$ about $\sigma_2$. The angle is $\theta_B=\arctan B_1$ and we define $\alpha_B=\cos\theta_B=(1+B_1^2)^{-\frac12}$ and $\beta_B=\sin\theta_B=B_1(1+B_1^2)^{-\frac12}$. We also define $r_B=\sqrt{r^2+B^2}$. We verify that  \alexis{Could e.g. appeal to \eqref{eq:6b}}
\[\label{eq:6r}
  U^*_{2,\theta_B} (r\sigma_3+B\sigma_1)   U_{2,\theta_B} = r_B \sigma_3, \qquad
  U^*_{2,\theta_B} \sigma_1   U_{2,\theta_B} = \alpha_B\sigma_1+\beta_B \sigma_3.
\]
\alexis{Maybe there is a sign mistake in \eqref{eq:6r}? For instance take $r (=\kappa_1)=0$ and $B=1$, I find $\cos \te_B = 0$ and $\sin \te_B = 1$ therefore $\te_B = \pi/2$. Hence
\begin{equation}
    U_{2,-\te_B} = U_{2,-\pi/4} = e^{i\pi \sigma_2 / 4} = \dfrac{\sqrt{2}}{2} \matrice{1 & 1 \\ -1 & 1 }, \ \ \ \ U_{2,-\te_B}^* = \dfrac{\sqrt{2}}{2} \matrice{1 & -1 \\ 1 & 1 },
\end{equation}
\begin{equation}
    U^*_{2,\theta_B} (r\sigma_3+B\sigma_1)   U_{2,\theta_B} = U^*_{2,\theta_B} \sigma_1   U_{2,\theta_B} = \dfrac{1}{2} \matrice{1 & -1 \\ 1 & 1 } \matrice{0 & 1 \\ 1 & 0} \matrice{1 & 1 \\ -1 & 1 } = \matrice{-1 & 0 \\ 0 & 1} = -\sigma_3,
\end{equation}
while \eqref{eq:6r} predicts $r_B \sigma_3 = \sigma_3$ (since $r_B=1$). 
}

The above problem then transforms to
\[
 [D_1(\alpha_B\sigma_1+r(t)) + D_2\sigma_2 + (r_B z_2+\beta_B D_1) \sigma_3] \psi_{B0} =0
\]
where we have defined $\psi_{B0}=U_{2,\theta_B}^*\tilde\psi_0$.

Under the Fourier transform $z_1\to\xi_1$ with $\hat\psi_{B0}={\mathcal F}\psi_{B0}$, we get
\[
  [\xi_1 \alpha_B\sigma_1 +r(t)\xi_1 + D_2\sigma_2 + (r_B z_2+\beta\xi_1)\sigma_3]\hat\psi_{B0}=0.
\]
We define $\gamma_B=\beta_B/r_B$ and change variables $v_2=z_2+\gamma_B\xi_1=:\tau_{-\gamma_B\xi_1}(z)$ with $\partial_{v_2}=\partial_{z_2}=\partial_2$ so that for $\hat\phi_B(\xi_1,v_2)=\hat\psi_B(\xi_1,z_2)$, i.e., $\hat \phi_B=\tau^\star_{\gamma_B\xi_1}\hat\psi_B$, we have
\[
   [\xi_1 \alpha_B\sigma_1 +r(t)\xi_1 + D_2\sigma_2 + r_B  v_2 \sigma_3]\hat\phi_{B0}= 0.
\]
This is a problem we can solve explicitly. We choose 
\[
0< r(t)=\alpha_B= \frac{1}{\sqrt{1+B_1^2}} =  \frac{r}{\sqrt{r^2+B^2}} ,\qquad B_1:=\frac B{r}.
\]
(with $|\alpha_B|\leq 1$ now; the magnetic field slows down the packet no matter its direction, which makes intuitive sense.)
\\

\subsection{Leading solution.} We find
\[
  \hat \phi_{B0}(t,\xi_1,v_2) = \alpha_0(t,\xi_1) e^{-\frac12 r_B v_2^2} \phi_{0}
\]
with $\phi_{0}$ the appropriate constant vector $2^{-\frac12}(1,-1)^t$ in this set of Pauli matrices and $\alpha_0(t,\xi_1)$ an arbitrary function.  Note that $(\phi_0,\sigma_1\phi_0)=-1$ has current in the $z_1$ direction as expected while $(\phi_0,\sigma_{2,3}\phi_0)=0$. 

The kernel may be pulled back to other wave functions as follows:
\[
  \hat \psi_{B0}(\xi_1,z_2) = \alpha_0(t,\xi_1) e^{-\frac12r_B \gamma_B^2\xi_1^2} e^{-r_B \gamma_B \xi_1 z_2}e^{-\frac12 r_B z_2^2} \phi_{0} = \hat f_0(t,\xi_1)  e^{- \beta_B \xi_1 z_2}e^{-\frac12 r_B z_2^2} \phi_{0}.
\]
Fourier transforming this back to $z$ we get
\begin{equation}\label{eq:magleadsol}
  \psi_{B0}(z) = f_0(t,z_1 +i \beta_B(t) z_2) e^{-\frac12 r_B(t) z_2^2} \phi_{0}.
\end{equation}
Note that the shift depends on the sign of $B=B(t)$. This is a funny new Ansatz in these magnetic-type variables for $
\psi_B(t,z)$. However, the natural object to solve for is $\alpha_0(t,\xi_1)$.


{\color{blue}(SB): In the following there are all these $U_{2,-\theta_B}$, cf. previous section, which should probably all not have the $-$ in front of the angle}

The above transformations are independent of $\eps$. We can apply them to the exact solutions, for instance $\psi_{B}(t,z) =U^*_{2,-\theta_B}\Ub_{\theta(t)}^*\psi(t,z)$, so that the wavepacket in the vicinity of $y_t$ is described by 
\[
\psi(t,z) = \Ub_{\theta(t)} U_{2,-\theta_B(t)} \psi_B(t,z).    
\]
The function defined on the macroscopic variables $(t,X)$ is given by
\begin{equation}\label{eq:ansatz4}
 \psi_m (t,\eps x) =\psi_M(t,X) = e^{i\frac{\chi(X)}\eps}  (T^\star_{y_t} \Ub_{\theta(t)} U_{2,-\theta_B(t)}) \psi_B(t,X).
\end{equation}
We further transform the above solution as $\psi_B={\mathcal F}^{-1}\tau^\star_{-\gamma_B\xi_1} \hat \phi_{B}$, where $\gamma_B=\dfrac{B}{\sqrt{r^2+B^2}}.${\color{blue}(SB): There are $\phi_{B_0}, \psi_{B_0}, \psi_B$ but no $\phi_B$ as of now, or are we defining $\phi_B$ by the preceding identity. I don't quite understand the logic here. Also, it is not quite clear to me if the $B$ is really a good subscript for the wavefunction as everything is magnetic somehow, one way or the other?} The leading terms in the expansions are characterized by $f_0(t,z)$ or $\alpha(t,\xi_1)$ above, which  are arbitrary and need to be evaluated by solving the next-level equation. 
{\color{blue}(SB): I think one major question should be how to present the hierarchy of the transformations in a comprehensible way. In how many different pieces do we want to split them?}
We summarize the four successive transformations as
\begin{equation}\label{eq:fourtransformations}
  \psi= \Ub_{\theta(t)} \tilde\psi  =  \Ub_{\theta(t)} U_{2,-\theta_B} \psi_{B}
 =  \Ub_{\theta(t)} U_{2,-\theta_B}  {\mathcal F}^{-1} \hat \psi_{B}=\Ub_{\theta(t)} U_{2,-\theta_B}  {\mathcal F}^{-1}\tau^\star_{-\gamma_B\xi_1} \hat \phi_{B}
\end{equation}
which give rise to the two equivalent problems
\begin{equation}\label{eq:operatorbis}
 L\psi =0 ,\quad L_{U} \hat \phi_{B}=0,\quad L_{U} = {\mathfrak U}^* L{\mathfrak U},\quad \psi = {\mathfrak U} \hat \phi_{B},\quad 
  {\mathfrak U}=\Ub_{\theta(t)} U_{2,-\theta_B}  {\mathcal F}^{-1}\tau^\star_{-\gamma_B\xi_1}.
\end{equation}
{\color{blue}(SB): In a nutshell, we only seem to care about going from $L$  to $L_U!$}

Since this defines transformations at all levels of approximation, the expansions of $L$ and $\psi$ in powers of $\eps^{\frac{1}{2}}$ translate into corresponding expansions of $L_{U}$ and $\hat\phi_{B}$ in \eqref{eq:operatorbis}-\eqref{eq:fourtransformations}-\eqref{eq:ansatz4}.

\subsection{Transport equation.} We now consider the next equation
\[
  L_0 \psi_1 + L_1 \psi_0=0, \qquad \mbox{ or } \qquad 
  L_{U0} \hat \phi_{B1} + L_{U1} \hat \phi_{B0} =0
\]
These equations are for functions $\hat \phi_{Bj}(t,\xi_1,v_2)$. Recall that $L_{U0}\hat \phi_{B0}=0$ and that $\hat\phi_{B0}=\alpha_0(t,\xi_1) e^{-\frac12r_B(t) v_2^2} \phi_0$ for an arbitrary function $\alpha_0(t,\xi_1)$.

Let $K$ be the kernel of $L_{U0}$ and $K^\perp$ the orthogonal complement for the $L^2-$inner product. Then we can boundedly invert $L_{U0}$ from $K^\perp\cap H$ to $H$ with $H$ endowed with a Fr\'echet topology so that $L_{U0}^{-1}$ is bounded by $1$ (see notes on cell problem). We thus require that the source term $ L_{U1} \hat \phi_{B0}$ be an element in $K^\perp$.  That compatibility condition for the existence of a solution $\hat\phi_{B1}$ is given by 
\[
 ( e^{-\frac12r_B v_2^2} \phi_0 ,  L_{U1} \hat \phi_{B0}  ) =0.
\]
This is 
\[
   ( e^{-\frac12r_B v_2^2} \phi_0 ,  L_{U1}  \alpha_0 e^{-\frac12r_B v_2^2} \phi_0) =0.
\]
This gives the transport equation for $\alpha_0$. 
Explicit calculations need to be carried out to see the effect of $\theta(t)$, $\theta_B(t)$, $r_B(t)$ and $\gamma_B(t)$.

The operator $L_1$ has the explicit form
\[
  L_1 = D_t + \dsum_{|\alpha|=2} \frac{1}{\alpha!} z^{\alpha} \partial^\alpha h(y_t)\cdot\sigma.
\] 
This involves a differentiation in time and multiplication by quadratic polynomials (these would be more general second-order polynomials if we included an electric potential $\eps^{\frac{1}{2}} V(X)$ or $\eps-$dependent perturbations to the leading terms $\kappa$ and $A$). 
We need to compute $L_{B1}$. Let us define $R_\theta h:=(-R_\theta A_r,\kappa)$. Conjugation by $\Ub^*_{\theta}$ gives
\[
  \Ub^*_{\theta} D_t \Ub_{\theta} +  \dsum_{|\alpha|=2} \frac{1}{\alpha!} (R_{-\theta}z)^\alpha \partial^\alpha (R_\theta h)(y_t)\cdot\sigma = \Ub^*_{\theta} D_t \Ub_{\theta} +  \dsum_{|\alpha|=2} \frac{1}{\alpha!} z^{\alpha} (R_\theta\partial)^\alpha (R_\theta h)(y_t)\cdot\sigma
\]
Further conjugation by $U^*_{2,-\theta_B}$ thus gives
\[
 U^*_{2,-\theta_B}\Ub^*_{\theta} D_t \Ub_{\theta} U_{2,-\theta_B}+  \dsum_{|\alpha|=2}  z^{\alpha} c_\alpha \cdot \sigma_B,\quad \sigma_B=U^*_{2,-\theta_B}\sigma U_{2,-\theta_B},\ c_\alpha = \frac{(R_\theta\partial)^\alpha (R_\theta h)(y_t)}{\alpha!}.
\]
The non-constant coefficients all take the form of polynomials $z^{\alpha}$ after rotation. \alexis{I think there are no terms in $z_1^2$, i.e. $\Phi_{(2,0)} = 0$; this may be important later on?} \gb{As I wrote below, I'm not sure $c_{2,0}=0$ for all of its three components. This should be sensitive to the curvature of $\kappa$ and I'm not sure the latter does not appear after rotation to $\sigma_B$.\\}
\alexis{That's right, sorry about the shortcut here. The claim is: $c_{(2,0),1} = c_{(2,0),2} = 0$ (see Lemma 3.3 in magnalexis). When computing transport equation you will not care about $c_{(2,0),2} (=0)$ nor $c_{(2,0),3}$ because
\begin{equation}
    \matrice{1 \\ -1} \sigma_2 \matrice{1 \\ -1} = \matrice{1 \\ -1} \sigma_3 \matrice{1 \\ -1} = 0. 
\end{equation}
You can have a look at (3.22). 
}

We thus need to understand the operators
\begin{equation}\label{eq:Phialpha2}
  \Phi_{\alpha} = \tau^*_{\gamma_B\xi_1} {\mathcal F} z^\alpha  {\mathcal F}^{-1} \tau^*_{-\gamma_B\xi_1}
\end{equation}
We observe that multiplication by $z_1$ correspond to applying $-D_\xi$ in the Fourier domain, i.e., $z_1={\mathcal F}^{-1} (-D_\xi) {\mathcal F}$. Thus, following the chain of operations, we get the differential operators with polynomial coefficients (\gb{wrote in variables $(\xi_1,v_2)$})
\[
     \Phi_{\alpha} f(\xi_1,v_2) = (v_2-\gamma_B\xi_1)^{\alpha_2} (-1)^{\alpha_1} (D_1+\gamma_BD_2)^{\alpha_1} f(\xi_1,v_2).
\]
It is relatively immediate to show that $\Phi_\alpha$ is bounded from $H$ to $H$ although it would be nice to see how the norm grows with $\alpha$, presumably not as badly as $\alpha!$.
Therefore, 
\[
  L_{1B} = {\mathfrak U}^* D_t {\mathfrak U} + \dsum_{|\alpha|=2} \Phi_\alpha c_\alpha(t)\cdot\sigma_B
\]
The second contribution is probably as explicit as it will ever get. It remains to compute the contribution involving $D_t$ and this is also painful since $\theta=\theta(t)$ and $\theta_B=\theta_B(t)$ as well. 
{\color{blue} So essentially there are two rotations happening $U_{3,\theta}U_{2,\theta_B}$ and perhaps it would be better if they would happen in two Lemmas.}
Let us first compute (as in the non-magnetic setting)
\[
   \Ub^*_{\theta} D_t \Ub_{\theta} = D_t -\frac 12\dot\theta \sigma_3 - \dot\theta u \cdot  JD.
\]
Then we find
\[
  U^*_{2,-\theta_B}\Ub^*_{\theta} D_t \Ub_{\theta} U_{2,-\theta_B} = D_t -\frac 12\dot\theta_B \sigma_2 -\frac 12\dot\theta \sigma_{B3} - \dot\theta u \cdot  JD.
\]
\aw{Given that $U_{2,-\theta_B} = e^{i \frac{1}{2} \theta_B \sigma_2}$ and $D_t e^{i \frac{1}{2} \theta_B \sigma_2} = \frac{1}{2} \dot{\theta}_B \sigma_2 e^{i \frac{1}{2} \theta_B \sigma_2}$ perhaps $\dot\theta_B$ term has wrong sign in above?}
\gb{There was a sign problem in my first write-up. Alex, do not hesitate to write what you believe is correct so we can check that version later.}
We compute
\begin{equation}\label{eq:Psialpha}
  \Psi_{\alpha} = \tau^*_{\gamma_B\xi_1} {\mathcal F} D^\alpha  {\mathcal F}^{-1} \tau^*_{-\gamma_B\xi_1} = \xi_1^{\alpha_1} D_2^{\alpha_2}.
\end{equation}
This implies
\[ 
  \tau^\star_{\gamma_B\xi_1} {\mathcal F} (z_1D_2-z_2D_1){\mathcal F}^{-1} \tau^\star_{-\gamma_B\xi_1} =D_1D_2-z_2\xi_1+\gamma_B(\xi_1^2-D_2^2 ) = :  L_z
\]  
\alexis{Maybe a sign mistake above? $-D_1 D_2 - z_2 \xi_1 + \dots$?}
This gives the expression \aw{have we accounted for $t$ dependence of $\tau$?} \gb{Here it is. It looks like we have
\[
  \tau^\star_{\gamma_B(t)\xi_1}D_t \tau^\star_{-\gamma_B(t)\xi_1} = \dot\gamma_B(t) \xi_1 D_2.
\]
}
\[
  L_{U1} = D_t - \frac 12(\dot\theta_B\sigma_2+\dot\theta\sigma_{B3}) -\dot\theta L_z + {\color{magenta}\dot\gamma_B(t) \xi_1 D_2} + \dsum_{|\alpha|=2} \Phi_\alpha c_\alpha(t)\cdot\sigma_B.
\]
We now compute
\[
   ( e^{-\frac12r_B v_2^2} \phi_0 ,  L_{U1}  \alpha_0 e^{-\frac12r_B v_2^2} \phi_0) =0.
\]
The terms $D_1D_2-v_2\xi_1$ both end up odd in powers of $v_2$ and cancel in the integration \gb{and same for the newcomer $\dot\gamma_B(t) \xi_1 D_2$}.  However, $\gamma_B(\xi_1^2-D_2^2)$ has a non trivial effect.  The term $\sigma_2$ also cancels out. We introduce the current vector
\[
   C(t) = (\phi_0,\sigma_B\phi_0) = - (\alpha_B,0,-\beta_B).
\]
We need to compute 
\[
   \int  e^{-\frac12r_B v_2^2} \Big(D_t\alpha_0  +\Big[\frac i2 \dot r_Bv_2^2+ \frac 12\dot\theta\beta_B + \gamma_B(\xi_1^2-D_2^2)+ \dsum_{|\alpha|=2} (c_\alpha\cdot C)(t) \Phi_\alpha\Big] \alpha_0\Big)e^{-\frac{1}{2}r_B v_2^2} dv_2.
\]
We expect $c_{\alpha1}$ and $c_{\alpha_3}$ to take arbitrary values. So, when $\beta_B\not=0$, i.e., in the presence of non-vanishing $B-$field, we expect $c_\alpha\cdot C$ to be a priori non-vanishing for any value of $\alpha$. \gb{Indeed, we may have $c_{(2,0),j}=0$ involving a second-order derivative along the interface for some components $j$. }
\begin{rem}
 It is not clear that $c_{2,0}=0$ for $\alpha=(2,0)$ as indicated above. The third component $c_{(2,0),3}$ is given by
 \[
   c_{2,0,3} = \frac12 (R_\theta\partial)^{2,0} \kappa(y_t).
 \]
 For $\kappa=x_2-\frac12 x_1^2$, say, we get at $y_t=0$ with $\theta=0$ that $c_{2,0,3}=\frac12$ displaying curvature of $\kappa$. It seems that $c_{2,0,3}$ is multiplied by $\beta_B$ above. Simplifications arise because the term for $\alpha=(2,0)$ is in fact proportional to $\sigma_3$; see Alexis' notes and the explicit construction of $\tilde h_t$. This simplification has to be checked in detail; the calculations are not present in this files.
 
 We find that $c_{2,0}=\alpha \tilde R_\theta \tilde h$ for a scalar $\alpha$ using the notation $h=\kappa\tilde h$. This is because $\nabla^2 h = \tilde h \nabla^2\kappa +$ terms that do not involve $z_1^2$. Now, with $\sigma_B=U_2^*\sigma U_2$
 \[
   \tilde R_\theta \tilde h\cdot\sigma_B = \kappa_B \sigma_3,\quad \kappa_B=\sqrt{\rho_t^2+B_t^2}.
 \]
 
 \hfill $\Box$
\end{rem}

The surviving terms are therefore of the form
\[
  \Phi_{11}=-(v_2-\gamma_B\xi_1)(D_1+\gamma_BD_2),\qquad\Phi_{02}=(v_2-\gamma_B\xi_1)^2.
\]
Out of these terms, the surviving contributions in the integral are
\[
   \tilde \Phi_{11} = -\gamma_B(v_2D_2+\xi_1D_1),\qquad \tilde \Phi_{02} = v_2^2+\gamma_B^2\xi_1^2. 
\]
The equation for $\alpha_0$ is therefore of the form (\gb{updated 4/14})
\[
  (D_t+ia_0(t)+a_1(t)\xi_1 D_1 + a_2(t) + a_3(t) \xi_1^2)\alpha_0=0.
\]
Since $\Phi_{20}$ does not contribute (\alexis{($\Phi_{20}$ contributes only though $c_{(2,0),1}$, which is $0$ -- see the explanation above and (3.22) in magnalexis)} \gb{not quite $c_{2,0,1}$ but an appropriate rotation does; nice.}), I agree there should be no $D_1^2$ contribution. All coefficients $a_j(t)$ in the above equation are real-valued. The term in $a_0(t)$ is coming from $i\dot r_Bv_2^2$. 


\begin{rem} 
\rm Following Alexis' computations, consider the problem (with $\xi_1$ replaced by $x$ and different coefficients than above)
\[
  (\partial_t + a_0(t) + a_1(t) x\partial_x + ia_2(t)x^2) \alpha =0
\]
and let us apply the Ansatz
\[
  \alpha(t,x) = f(b_1(t)x) e^{b_0(t)+ib_2(t)x^2}.
\]
Plugging into the equation and multiplying by the inverse exponential, we get
\[
  0=f'(b_1x)[b_1'x+a_1b_1x] + f(b_1x)[a_0+b_0'+i(a_2+2a_1b_2+b_2')x^2]
\]
It remains to choose solutions of
\[
  0=b_1'+a_1b_1 = a_0+b_0' = a_2+2a_1b_2+b_2'
\]
with initial conditions $b_1(0)=1$, $b_0(0)=b_2(0)=0$ to obtain somewhat miraculously a solution to the above transport equation with initial condition $\alpha(0,x)=f(x)$.

The effect on $\alpha_0\equiv f$ is therefore (i) a rescaling of $\alpha_0$ by $r_B$ as in the non-magnetic case; (ii) a time-dependent rescaling of $\xi$; and (iii) a Schr\"odinger-type dispersion/compression with standard time $t$ replaced by $b_2(t)$, which corresponds in the physical variable to $e^{-ib_2(t)\partial_{zz}}$.

It is reassuring that we have an explicit expression that preserves that $\alpha(t.\cdot)\in {\mathcal S}(\Rm)$ uniformly in $t$ compact.  \hfill $\Box$
\end{rem}

This is also consistent with Alexis' computations in Lemma 3.4. I believe it is a bit easier to get to the result using a transform from $z_2$ to $v_2$ (and appropriately accounting for the time-dependence of the transform even though the latter does not appear in the transport equation). 

The phase should indeed involve $a_2(t) + a_3(t) \xi_1^2$ as in Alexis' computations. It remains to invert $D_t+ia_0(t)+a_1(t)\xi_1 D_1$, which has an explicit expression according to Alexis' computations.

When the magnetic field vanishes, so that $A_r\equiv0$, then we find that $c_\alpha\cdot C=0$ since variations in $\kappa$ appear in the component $c_{\alpha3}$ while $C_3=0$ since $\phi_0$ does not carry current out of plane. The above equation then reduces to
\[
   \int  e^{-\frac12r_B v_2^2} \Big(D_t\alpha_0  -\frac12  D_tr_B v_2^2 \alpha_0\Big)e^{-\frac12r_B v_2^2} dv_2=0,\quad\mbox{ or } \quad D_t(r_B^{-1/4} \alpha_0)=0.
\]

We find
\[
  \int  e^{-r_B v_2^2} dv_2=\sqrt{\frac \pi {r_B}},\qquad  \int  e^{-r_B v_2^2} v_2^2 dv_2 = \frac1{2r_B}\sqrt{\frac \pi {r_B}}.
\]
This gives in the general case one-dimensional the Schr\"odinger equation for $\alpha_0$
\[
   D_t (r_B^{-1/4} \alpha_0) +  P_B(\xi_1,D_{\xi_1}) (r_B^{-1/4}\alpha_0) =0,
\]
with $P_B(\xi_1,D_{\xi_1})$ a self-adjoint (to be checked in more detail but believable from the symmetric expression of $\Phi_\alpha$ as a multiplier and important to preserve mass) second-order differential operator with quadratic coefficients that can be extracted explicitly from the above integral but whose expression is not trivial. \alexis{I think the term in $D_{\xi_1}^2$ vanishes because $\Phi_{(2,0)} = 0$.} We can equivalently write the above operator as a quadratic polynomial in $\fa_1$ and $\fa_1^*$, where $\fa_1=\partial_{x_1}+x_1$.
As in the non-magnetic case, we find that $r_B^{-1/4} \alpha_0$ has preserved `energy', which should correspond to a mass conservation.

\subsection{Higher-order expansion.}
Once $\alpha_0$ satisfies the above equation, the compatibility condition is satisfied and we can compute
\[
   \hat\phi_{B1} = - L_{U0}^{-1}  (L_{U1}\hat\phi_{B0}) + \alpha_1 e^{-\frac12r_Bv_2^2} \phi_0.
\]
The function $\alpha_1$ is chosen so that the compatibility condition for 
\[
  L_{U0} \hat \phi_{B2} + L_{U1} \hat\phi_{B1} + L_{U2}\hat \phi_{B0} =0
\]
is satisfied. The operators involved in $L_{Uj}$ are all bounded in $H$ and $L_{U0}^{-1}$ is bounded appropriately. We can then pursue the expansion up to any arbitrary finite order. 

This provides a residue in the equation $L(\psi-\psi_m)=O(\eps^m)$ of arbitrarily large order in $m$, where $O(\eps^m)$ is in the $L^2$ sense. Unitarity of the equation shows that the error is controled by $\eps^{m-1}t$ after time $t$.
\bigskip

With these rotations, we introduce the new wavepackets
\[
   \psi = \Ub_{\theta(t)} \tilde \psi = \Ub_{\theta(t)}  U_{2,-\theta_B} \psi_{B}
\]
These are all functions naturally given in the variables $(t,z)$. In the absence of magnetic field, $\tilde\psi$ is the right object to expand in powers of $\eps$. In the presence of a magnetic field, it is not the variables $(t,z)$ that are natural, but rather the variables $(t,\xi_1,v_2)$ with $\xi_1$ the dual variable to $z_1$ and $v_2=z_2+\gamma_B\xi_1=\tau_{-\gamma_B\xi_1}z_2$. We therefore define
\[
  \psi_{B} = {\mathcal F}^{-1} \hat\psi_{B} = {\mathcal F}^{-1}  \tau^*_{-\gamma_B\xi_1} \hat \phi_{B}.
\]
This gives rise to the two equivalent problems
\[
 L\psi =0 ,\quad L_{U} \hat \phi_{B}=0,\quad L_{U} = {\mathfrak U}^* L{\mathfrak U},\quad \psi = {\mathfrak U}\phi_{B},\quad 
  {\mathfrak U}=\Ub_{\theta(t)} U_{2,-\theta_B}  {\mathcal F}^{-1}\tau^\star_{-\gamma_B\xi_1}.
\]

We use the notation $H$ for a nice space such that $L_0^{-1}$ is bounded from $K^\perp\cap H$ to $H$ and all the operators $\Phi_\alpha$  and $\Psi_\alpha$ defined in \eqref{eq:Phialpha} and \eqref{eq:Psialpha} are bounded from $H$ to $H$. Such hypotheses are sufficient to push the expansion to arbitrary orders in $\eps$.

\newpage

\subsection{Unification} This is an attempt to unify Guillaume's ansatz with the previously studied case $A=0$. We recall that $B = \nabla \times A$ is the magnetic field associated to the vector potential $A$.

We start with a lemma that resumes Guillaume's phase construction:

\begin{lemm}\label{lem:1c} Let $V \subset \kappa^{-1}(0)$ be a topologically trivial set. There exists a neighborhood $\Omega$ of $V$ in $\R^2$, $\beta \in C^\infty(\Omega, \R)$ and $\chi \in C^\infty(\Omega, \R)$ such that
\begin{equation}\label{eq:6k}
    A(x) - \nabla\chi(x) = \beta(x) \kappa(x)  \cdot \nabla \kappa(x)^\perp, \ \ \ \ x \in \Omega. 
\end{equation}
Moreover, if $x \in \kappa^{-1}(0) \cap \Omega$ then $\beta(x) = B(x) \big| \nabla \kappa(x) \big|^{-2}$.
\end{lemm}

\textbf{Remark:} If $V$ is not topologically trivial, there is an obstruction to constructing $\chi$. Indeed, if $\gamma \subset V$ is a loop then 
\begin{equation}
    \oint_\gamma A = \oint_\gamma \nabla \chi = 0.
\end{equation}
This gives a condition for a global gauge to exist: the magnetic field must have vanishing flux along closed components of $\kappa^{-1}(0)$.

\begin{proof} 1. We first show that there exists a smooth function $f \neq 0$ defined on a neighborhood of $\kappa^{-1}(0)$ such that 
\begin{equation}
    \nabla \times \big( f (\nabla \kappa)^\perp \big) = 0.
\end{equation}
The equation is equivalent to 
\begin{equation}\label{eq:6f}
    f \nabla \times (\nabla \kappa)^\perp + \nabla f \times (\nabla \kappa)^\perp = 0.
\end{equation}
Moreover, $\nabla \times (\nabla \kappa)^\perp = \Delta \kappa$ and $\nabla f \times (\nabla \kappa)^\perp = \nabla \kappa \cdot \nabla f$. Hence \eqref{eq:6f} is equivalent to:
\begin{equation}
    \nabla \kappa \cdot \nabla f = -\Delta \kappa. 
\end{equation}
Since $\nabla \kappa$ is transverse (in fact, normal) to $\kappa^{-1}(0)$ we can solve this equation with $f|_{\kappa^{-1}(0)} = 1$, in a neighborhood of $\kappa^{-1}(0)$.

2. We now look for a function $\varphi$ such that
\begin{equation}\label{eq:6i}
    \nabla \times \left(A - \kappa \varphi \cdot f (\nabla \kappa)^\perp \right) = 0.
\end{equation}
With $B = \nabla \times A$ (the magnetic field) this is equivalent to
\begin{equation}
    B - \nabla (\kappa \varphi) \times f (\nabla \kappa)^\perp = 0. 
\end{equation}
We now use again that $\nabla \psi \times (\nabla \kappa)^\perp = \nabla \kappa \cdot \nabla \psi$ (this time with $\psi = \kappa \varphi$) and we obtain 
\begin{equation}\label{eq:6g}
    \dfrac{B}{f} - \nabla \kappa \cdot \nabla \psi = 0, \ \ \ \  \psi = \kappa \varphi. 
\end{equation}
Again, since $\nabla \kappa$ is transverse to $\kappa^{-1}(0)$, we can solve this equation for $\psi$  with $\psi|_{\kappa^{-1}(0)} = 0$, in a neighborhood of $\kappa^{-1}(0)$. Since $\psi$ vanishes on $\kappa^{-1}(0)$, we may write $\psi = \kappa \varphi$ for some function $\varphi$.

3. Fix $x_0 \in \kappa^{-1}(0)$. From \eqref{eq:6i} we deduce that there exists $\chi$ defined on a neighborhood $\Omega$ of $x_0$ such that 
\begin{equation}
    \nabla \chi = A - \kappa \varphi \cdot f (\nabla \kappa)^\perp. 
\end{equation}
This gives $A - \nabla \chi = \beta \kappa (\nabla \kappa)^\perp$ with $\beta = f \varphi$, hence \eqref{eq:6k} holds. To conclude, we observe thanks to \eqref{eq:6k} that
\begin{equation}
    \nabla \times A = \nabla \times \left(\beta \kappa (\nabla \kappa)^\perp \right).
\end{equation}
Thus if $\kappa(x) = 0$ then  $B(x) = \beta(x) \nabla \kappa(x) \times \nabla \kappa(x)^\perp = \beta(x) \big| \nabla \kappa(x) \big|^2$. This ends the proof. 
\end{proof}

We compute the operator $D_t+H$ conjugated by the phase $e^{i\frac{\chi(\epsi x)}{\epsi}}$:
\begin{align}\label{eq:6c}
 e^{i\frac{\chi(\epsi x)}{\epsi}} \Big(\eps D_t + \big(D_x -  A(\eps x) \big)\cdot \sigma + \kappa(\eps x) \sigma_3 \Big) e^{-i\frac{\chi(\epsi x)}{\epsi}}
  \\
  = \eps D_t +  \big(D_x - A(\epsi x) +  \nabla \chi(\epsi x) \big)\cdot \sigma + \kappa(\eps x) \sigma_3.
  \end{align}
We Taylor expand the terms $A(\epsi x_t)$, $\nabla \kappa(\epsi x)$ in \eqref{eq:6c} for $\epsi x$ near $y_t$, as we did in the case $A=0$. This yields the operator:
\begin{equation}
   \epsi D_t  + \sigma \cdot D_x - \sigma \cdot \sum_{\alpha} \dfrac{\p_x^\alpha ( A - \nabla \chi) (y_t)}{\alpha!} (\epsi x -y_t)^\alpha
      + 
   \sigma_3  \sum_{\alpha} \dfrac{\p_x^\alpha \kappa(y_t)}{\alpha!} (\epsi x -y_t)^\alpha. 
\end{equation}

If we now use the operator $W[a]_{y_t}$ of \eqref{eq:6r}, we obtain: \jl{According to \eqref{eq:6c}, I suppose the phase in front of $W$ and on rhs below should be negative $e^{-i \chi / \epsi}$? } \alexis{Yes -- thanks. I think this explains why I got some $-B$ instead of $+B$ in Guillaume's calculation. I'll review it. }
\begin{equation}
    \Big(\eps D_t + \big(D_x -  A(\eps x) \big)\cdot \sigma + \kappa(\eps x) \sigma_3 \Big) \left( e^{i\frac{\chi(\epsi x)}{\epsi} } \cdot W[a]_{y_t} \right) = e^{i\frac{\chi(\epsi x)}{\epsi}} \cdot W\left[ \sum_{j=-1}^\infty \epsi^{\frac{j+1}{2}} T_j a\right],
\end{equation}
where: 
\begin{align}
T_{-1} \ = & \ \big( A(y_t) - \nabla \chi(y_t) \big) \cdot \sigma + \kappa(y_t) \sigma_3, \\ 
T_0 \ = & \ - \dot{y_t} \cdot D_x  + \sigma \cdot D_x - \sigma \cdot \sum_{k=1}^2 x_k\dfrac{\partial ( A - \nabla \chi)(y_t)}{\partial x_k} 
+ \nabla \kappa(y_t) \cdot x \sigma_3,
\\ 
T_1 \ = & \ D_t - \dot{y_t} D_x + \sigma \cdot \sum_{|\alpha| = 2} x^\alpha \dfrac{\p_x^\alpha (A - \nabla \chi)(y_t)}{\alpha!}   + \sigma_3 \sum_{|\alpha| = 2} \dfrac{\p_x^\alpha \kappa(y_t)}{\alpha!} x^\alpha.
\end{align}

We now pick $y_t, \chi$ such that that $\kappa(y_t) = 0$ and $A(y_t) = \nabla \chi(y_t)$. Then $T_{-1} = 0$. We also observe that by Lemma \ref{lem:1c}, $A - \nabla \chi = \beta \kappa (\nabla \kappa)^\perp$. Therefore, since $\kappa(y_t) = 0$ we have
\begin{equation}\label{eq:6m}
  \sum_{k=1}^2 x_k  \dfrac{\p(A - \nabla \chi)(y_t)}{\p x_k} = \left(\sum_{k=1}^2  x_k \dfrac{\p\kappa(y_t)}{\p x_k}\right) \beta(y_t) \nabla \kappa(y_t)^\perp =\dfrac{B(y_t)x \cdot \nabla \kappa(y_t) }{ \big| \nabla \kappa(y_t) \big|^2} \nabla \kappa(y_t) ^\perp.
\end{equation}

As in \S\ref{sec:4.1}, we introduce the matrices
\begin{equation}
    R_\te = \matrice{\cos \te & \sin \te \\ -\sin \te & \cos \te}, \ \ \ \ U_\te = \matrice{e^{-i\te/2} & 0 \\ 0 & e^{i\te/2}}.
\end{equation}
We then define the operator $\RR_\te f(x) = f(R_\te x)$. We recall the relations (for any $\te$) of \S\ref{sec:4.1}:
\begin{equation}\label{eq:6n}
\RR_\te^{-1} D_x \RR_\te = R_{-\te} D_x, \ \ \ \ \matrice{U_\te^{-1} \sigma_1 U_\te \\ U_\te^{-1} \sigma_2 U_\te} = R_{-\te} \matrice{\sigma_1 \\ \sigma_2}, \ \ \ \ 
    U_\te^{-1} \RR_\te^{-1} (D_x \cdot \sigma) \RR_\te U_\te = D_x \cdot \sigma.
\end{equation}

We now specifically pick the angle $\te$ through the equation:
\begin{equation}\label{eq:6p2}
    R_{-\te_t} e_2 = \dfrac{\nabla \kappa(y_t)}{|\nabla \kappa(y_t)|}, \ \ \ \ R_{-\te_t} e_1 = -\dfrac{\nabla \kappa(y_t)^\perp}{|\nabla \kappa(y_t)|}.
\end{equation}
Then, as before we have $\RR_{\te_t}^{-1} \big(\nabla \kappa(y_t) \cdot x \big) \RR_{\te_t} = |\nabla \kappa(y_t)| x_2$. It follows from \eqref{eq:6m} that
\begin{equation}
    \RR_{\te_t}^{-1} \sum_{k=1}^2 x_k  \dfrac{\p(A - \nabla \chi)(y_t)}{\p x_k} \RR_{\te_t} = \dfrac{B(y_t) x_2}{ \big| \nabla \kappa(y_t) \big|} \nabla \kappa(y_t) ^\perp.
\end{equation}
In particular, taking the scalar product with $\sigma = [\sigma_1,\sigma_2]^\top$ and conjugating with $U_\te$ produces
\begin{multline}
   U_{\te_t}^{-1} \RR_{\te_t}^{-1} \sum_{k=1}^2 x_k  \dfrac{\p(A - \nabla \chi)(y_t)}{\p x_k} \cdot \sigma \RR_{\te_t}  U_{\te_t} = \dfrac{B(y_t) x_2}{ \big| \nabla \kappa(y_t) \big|} \nabla \kappa(y_t) ^\perp \cdot U_{\te_t}^{-1} \sigma U_{\te_t}
   \\
   = \dfrac{B(y_t) x_2}{ \big| \nabla \kappa(y_t) \big|^2} \nabla \kappa(y_t) ^\perp \cdot R_{-te} \sigma = \dfrac{B(y_t) x_2}{ \big| \nabla \kappa(y_t) \big|} R_\te \nabla \kappa(y_t) ^\perp \cdot \sigma = - B(y_t) x_2 \sigma_1.
\end{multline}
In the last line, we used the second identities of \eqref{eq:6n}, then of \eqref{eq:6p2}. We conclude that
\begin{align}\label{eq:}
    U_{\te_t}^{-1} \RR_{\te_t}^{-1} T_0 \RR_{\te_t} U_{\te_t} & = R_\te \dot{y_t} \cdot D_x +  D_x \cdot \sigma - B(y_t) x_2 \sigma_1 + \sigma_3 |\nabla \kappa(y_t)| x_2.
\end{align}

\newpage

\gb{Here are a few remarks on scalings, wavepackets in general, and a possible ansatz that allows us to address dispersive wavepackets.}
\subsection{{Scale, dispersion, transport of interface wavepackets.}}
\medskip
\paragraph{\bf Scalings.}
Consider a Dirac problem in the absence of electro-magnetic field of the form
\[
   (D_\tau+D\cdot\sigma+\mu(\eps x)\sigma_3)\tilde \psi(\tau,x)=0.
\]
Then $x$ is the natural scale for the transport of the fermion and we assume a slowly varying coefficient $\mu(\eps x)$. We also introduce $\eps=\eta^2$. We are interested in propagation over large distances at the scale $X=\eps x=\eta^2 x$ and for long times  $t=\eta^2\tau$ (i.e., morally $x$ and $\tau$ of order $\eta^{-2}$). This is the semiclassical regime that is best displayed by the equation
\[
   (\eta^2 D_t + \eta^2 D \cdot\sigma + \mu(X) \sigma_3) \psi(t,X)=0.
\]
Solutions $\psi(t,X)$, typically highly oscillatory at the scale $(\tau,x)$ for appropriate initial conditions, have energy well captured by semi-classical (Wigner) measures $a(t,X,\xi)$ solving Liouville equations of the form
\[
( \partial_t + \nabla_\xi H(X,\xi)\cdot \nabla _X - \nabla_X H(X,\xi)\cdot\nabla_\xi ) a=0
\]
for the Hamiltonian(s) $H(X,\xi)=\pm \sqrt{|\xi|^2+\mu^2(X)}$.

These measures clearly apply to our problem for `bulk' transport along the bicharacteristics of the Hamiltonian. References include the standard and excellent GMMP paper. I have a derivation of radiative transfer equations for 3D models in that same scaling and for random rapidly varying coefficients $\mu(x)$ that generate a right-hand side in the above equation modeling scattering; this is in a paper with Yu Gu and Olivier Pinaud. So, this is a natural and well-studied regime.

However, we want to consider initial conditions that do not oscillate at the scale $\eta^2$ but rather at the intermediate scale $\eta$, i.e. $X=\eta u$. We could reintroduce semiclassical measures at that scale but they would not provide much information since we can actually capture the whole wavepacket reasonably explicitly (that's the point of the paper). The reason we can do so is that 
\[
  [\eta^2D\cdot\sigma+\mu(X)\sigma_3] [\psi(\frac X\eta)] = [\eta D\cdot\sigma + \mu(\eta z)\sigma_3)] \psi (z)
\]
and $\mu(\eta z)\sim \mu_1
\medskip\eta u$ generates a domain wall in the vicinity of the curve $\mu^{-1}(0)$. 

So, the problem involves three scales, $X=\eta u=\eta^2 x$ with $x$ the microscopic scale at which transport occurs, $u$ the intermediate scale at which the wavepacket lives, and $X$ the macroscopic scale at which we observe transport in the geometry generated by $\mu(X)$. Also, $t$ will be macroscopic time.

\gb{Added 3/4/2021.} What makes the scaling interesting and {\em somewhat} unusual is that we consider dynamics in the vicinity of the curve $\mu^{-1}(0)$. The reason such dynamics occur is because $\mu$ then acts as a domain wall so that waves are evanescent in both directions away from the curve. Since $\mu^2\sim z_2^2$ locally, this is reminiscent of harmonic oscillators in the transverse direction as we see in our calculations. Our dynamics therefore sit at the bottom of a minimum of the potential.

There is in fact already an established literature on the topic, which is well summarized in Chapter 3 of the book by Dimassi and Sj\"ostrand. What they do there is the scalar case; what we do is the Dirac setting. As shown in that chapter, general frameworks have already been established to analyze such problems.

\medskip
\paragraph{\bf Local behavior.} 
All the action on how the wavepacket evolves occurs at this intermediate scale $u=\eta x$. This is modeled locally by solutions of
\[ 
(D_t + D\cdot\sigma + \mu_1 z_2 \sigma_3) \psi(t,z) = 0
\]
Here, $\mu_1$ characterizes the local geometry of $\mu$ close to $y\in\mu^{-1}(0)$ with $\mu_1=|\nabla\mu(y)|>0$.

Solutions to this problem are constructed as a family of the form
\[
  (\nu,U) \mapsto \psi(t,z_1,z_2;U;\nu;\mu_1)= e^{-i E(U;\nu;\mu_1)t} e^{iU z_1} \hat \psi(U,z_2;\nu;\mu_1)
\]
for the spectral decomposition
\[
   [ U\sigma_1 + D_2\sigma_2 + \mu_1 z_2 \sigma_3] \hat \psi(U,z_2;\nu;\mu_1) = E(U;\nu;\mu_1) \hat \psi(U,z_2;\nu;\mu_1),
\]
defined for each value of $\nu=\nu_n=2n$ for $n\geq0$ an eigenvalue of $\fa^*_2\fa_2$ for $\fa_2=\partial_2+z_2$ and with the dispersion relation
\[
  E^2(U;\nu;\mu_1) = \mu_1 \nu+U^2,\quad \nu\geq1,\quad E(U,0)=-U.
\]
Each $\nu$ corresponds to a different branch of a.c. spectrum of the self-adjoint operator $L=D\cdot\sigma+\mu_1z_2\sigma_3$ on $L^2(\Rm^2)\otimes\Cm^2$. The solutions $\nu\to\hat\psi(U,z_2;\nu;\mu_1)$ at fixed $U$ form a complete family of $L^2(\Rm)\otimes \Cm^2$ orthogonal for the measure $e^{-\frac12 \mu_1 z_2^2}dz_2$.

A wavepacket, essentially by definition, is a superposition of such solutions belonging to a given branch. For fixed $\nu$ and $\mu_1$, it is a spinor of the form
\[
  \psi(t,z_1,z_2) = \dint_{\Rm} e^{-i E(z)t} e^{iU z_1} \hat \psi(U,z_2) \hat\alpha(z) \frac{dU}{2\pi},
\]
where $\hat\alpha(z)$ can be chosen to reflect the wavenumbers composing the wavepacket. 
Since propagation occurs at the scale $X=\eta u$ eventually (i.e. $u$ is of order $\eta^{-1}$), we want to choose $t$ of the form $t/\eta$ above. This is our main local object of interest
\[
  \psi(\frac t\eta,z_1,z_2) = \dint_{\Rm} e^{-i E(z)\frac t\eta} e^{iU z_1} \hat \psi(U,z_2) \hat\alpha(z) \frac{dU}{2\pi}.
\]
In the case where $E(U;\nu)=\epsilon U$ ($\epsilon =-1$ for us) and $\psi(U,z_2;\nu;\mu_1)=\psi(z_2;\mu_1)$ is independent of $U$, then we find
\[
 \dint_{\Rm} e^{-i E(U;\nu)\frac t\eta} e^{iU z_1} \hat \psi(U,z_2) \hat\alpha(z) \frac{dU}{2\pi} = \hat \psi(z_2;\mu_1) \alpha (-\epsilon \frac t\eta + z_1) =  \hat \psi(z_2;\mu_1) \alpha (-\epsilon \frac t\eta + \eta x_1)  
\]
the scaling being used for the zero-mode more generally for $\alpha(z_1)=\alpha_\eta(t,z_1)$.

For general wavepackets, stationary phase computations show that a wavepacket with wavenumber in the vicinity of $U$ propagates according to a group velocity 
\[
v(z):=\partial_U E(U;\nu) .
\]
This is best seen by choosing $\hat\alpha$ concentrated about $U_0$ and linearizing the dispersion at $U_0$.

The parts of the wavepackets generated by $\alpha$ are therefore to be found at a distance $v(z)t/\eta$ away from their initial location after time $t/\eta$. When $v(z)$ is constant, then the wavepacket remains concentrated. However, when $v(z)$ is (very) non-constant, then the wavepacket strongly disperses. It is still interesting to analyze how propagation occurs. The mass of the wavepacket remains constant and so is at least of order $O(\eta)$ somewhere if initially normalized. This also solves the issue of how dispersive wavepackets separate from the zero-mode when the initial wavepacket does not contain very high wavenumbers. 

Different regimes in the dispersion relation can also be considered. Take $\nu=1$ and consider $U_0$ large; I believe $\eta^{-\frac13}$ will do. Then the dispersion relation for such modes is close to constant but not constant. We then observe dispersion at the scale $t/\eta$ instead of the scale $t$ (possibly to be explored).

\medskip
\paragraph{\bf Propagation along curve.}
The above picture occurs for propagation in translation-invariant geometry in $z_1$. We would like to see how curvature in the geometry and changes in $\mu_1$ affects the wavepacket. This means keeping track of the transport of the various components of the wavepacket and appropriately rotating them so everybody speaks the same language (i.e., is in the variables $(t,z_1,z_2)$).

To follow the wavepackets at their right scale, we define $y(t,z)$ as the solution to 
\[
  \dot y(t,z) = v(z) \frac{J\nabla \mu(y(t,z))}{|\nabla \mu(y(t,z)|},\qquad y(0,U)=y_0.
\]
This construction provides $y_t$ when $v(z)=1$. Thus, $y(t,z)$ is the point in $\mu^{-1}(0)$ reached at time $t$ by a trajectory along $\mu^{-1}(0)$ starting at $y_0$ and traveling with speed $v(z)$, which could be negative as well as positive. 

We define $\mu_1(t,z)=|\nabla \mu(y(t,z))|$ the strength of the domain wall at $y(t,z)$ and
\[
\hat\psi(t,U,z_2)=\hat\psi(U,z_2;\nu;\mu_1(t,z)),\qquad E(t,z)=E(u;\nu;\mu_1(t,z))
\]
the local wavepacket solution and its corresponding energy.

Let now $\theta(t,z)$ be the usual rotation angle that brings $\nabla\mu(y(t,z))$ to $\mu_1(t,z)e_2$ and let $\Ub_{\theta(t,z)}$ be the unitary operator implementing the spatial rotation by $\theta$ and counteracting spinorial rotation so that $\Ub^*D\cdot\sigma\Ub=D\cdot\sigma$. 

Finally, we define the scaling-translation moving back to a function of $X$
\[
  u:= T_{y}(X)= \frac{X-y}\eta,\qquad T_{y}^\star f(X) = f(\frac{X-y}\eta).
\]

We now (hopefully) have the correct tools to define the Ansatz
\[
 \psi_\eta(t,X) = \dint_\Rm e^{-iE(t,z) \frac t\eta}T^\star_{y(t,z)} \Ub_{\theta(t,z)} [e^{iUz_1} \hat\psi(t,U,z_2) \hat \alpha_\eta(t,z)] (X) \dfrac{dU}{2\pi}.
\]

We want to verify that for an appropriate choice of $\hat \alpha_\eta(t,z)$ with $\alpha_\eta(t,z_1)$ properly initialized close to $y_0$, we have 
\[
  [\eta^2(D_t+D\cdot\sigma) + \kappa(X)\sigma_3] \psi_\eta(t,X) = O(\eta^\infty).
\]
By unitarity, such errors in the residual propagate to $O(\eta^\infty)$ errors on the solution in some $L^2$ sense.
\medskip
\paragraph{\bf Electro-magnetic field influence.} 
The above picture can (presumably) be adapted to the more general setting
\[
   (D_\tau-\eps^{\frac{1}{2}} V(\eps x) +(D-A(\eps x))\cdot\sigma+\mu(\eps x)\sigma_3)\psi(\tau,x)=0.
\]
The magnetic potential can be partially gauged away  by $e^{i\frac{\chi(\eps x)}\eps}$ along the curve $\mu^{-1}(0)$. The next-order term involves $z_2 B \sigma_2$ and significantly modifies the leading equation for $\hat\psi(U,z_2)$ when $\nu=0$. The above framework should not be modified significantly though. One nice outcome of the magnetic field is that propagation seems to slow down as the magnetic field increases.  This seems plausible as gyroscopic motion is never good to go anywhere. 

The electric potential should remain small, say $\eps^{\frac{1}{2}}$ to be consistent with our current expansions. A strong $E-$field is likely to destroy the domain wall structure since $\pm\mu+V$ does not even need to be signed. We cannot afford the presence of an arbitrarily large electric potential; so again, we might as well assume it is small. It remains interesting to see whether at that scaling, we observe an effect on the transport term $f_0$.

\section{Numerical methods}
\label{sec:num_methods}
Our aim is to solve the iterative equations with an initial state $a_0(t=0,x),$ where
\begin{equation}\label{eq:3h}
a_0(t,x) = \pi^{-1/2} e^{-\frac{ \|x\|^2}{2}} \matrice{e^{-i\te_t/2} \\ -e^{i\te_t/2} },
\end{equation}
and $\theta_t$ is defined in terms of the interface. 

A non-linear example of such an interface $\kappa$ is obtained by considering 
\[\kappa(y):=\frac{\operatorname{tanh}(y_2-\operatorname{tanh}(y_1))}{\sqrt{1+\operatorname{sech}^4(y_1)}},\text{ with nodal set }y_2 =\operatorname{tanh}(y_1).\] Taking the gradient at the nodal set yields
\[ \nabla \kappa(y_1,\tanh(y_1)) = \tfrac{1}{\sqrt{1+\operatorname{sech}^4(y_1)}}\left[  \begin{matrix} -\operatorname{sech}^2(y_1)\\1 \end{matrix}\right] \text{ such that }\vert \nabla \kappa \vert_{\kappa^{-1}(0)} \vert=1.\]

We can then use a parametrization $\theta_t:=\operatorname{arctan}\Big(\operatorname{sech}(t)^2\Big)$ and parametrize the nodal set
\[ z(t) = \left[  \begin{matrix}
-t \\ -\tanh(t) \end{matrix} \right].\]
The normalized velocity curve, along the nodal set, satisfies
\[\dot{y_t}:=\frac{\dot{z}(t)}{\vert \dot{z}(t)\vert} =\frac{1}{\sqrt{1+\operatorname{sech}(t)^4}} \left[  \begin{matrix}
- 1 \\ -\operatorname{sech}(t)^2 \end{matrix} \right]=-\left[  \begin{matrix}
\cos(\te_t) \\ \sin(\te_t) \end{matrix} \right].\]
A straightforward computation shows that the operator $T_1$ is, in the absence of any magnetic field, explicitly given by
\begin{equation}
\begin{split}
T_1 = D_t+\tfrac{x_1 \operatorname{sech}(y_1(t))^2(x_1 + 2 x_2 \operatorname{sech}(y_1(t))^2-x_1\operatorname{sech}(y_1(t))^4) \tanh(y_1(t))}{(1+\operatorname{sech}(y_1(t))^4)^{3/2}} \sigma_3.
\end{split}
\end{equation}

To incorporate a magnetic field, we consider the function

$f(t):=\chi(t,\tanh(t))$ and find 
\[ f'(t) = \partial_1 \chi(t,\tanh(t)) + \partial_2 \chi(t,\tanh(t)) \operatorname{sech}(t)^2.\]

Using a magnetic vector potential $A(x)=(-Bx_2,0)$, associated with a constant magnetic field $B$, we find 
\[ f'(t) = -B\tanh(t).\]
Hence, $\chi(t,\tanh(t))= -B\log(\cosh(t)).$ For $t$ large we can then approximate $\chi(x_1,x_2) \approx - B \log(\cosh(x_1))\approx -B\vert x_1\vert$ for $\vert x_1 \vert$ large.

\subsubsection{Circular interface:}

We also consider a circular domain wall described by
\[\kappa_{\circ}(y)=\frac{\vert y \vert^2-1}{2}\text{ with }(\nabla \kappa)(y) = y.\]

We then find that
\begin{equation}
\begin{split}
T_0&= \left[  \begin{matrix}
\sin(t) \\ -\cos(t) \end{matrix} \right] \cdot D_x + \left[  \begin{matrix}
 (y_t \cdot x) & D_{x_1} - iD_{x_2}
\\ D_{x_1} + iD_{x_2} & - (y_t \cdot x)
 \end{matrix} \right],\\
 \end{split}
 \end{equation}
with $y_t=  \left[  \begin{matrix}
\cos(t) \\ \sin(t) \end{matrix} \right]$ and $\theta_t=t-\pi/2.$
In particular, the higher-order operators reduce to
\[T_1= D_t + \frac{\vert x \vert^2}{2} \sigma_3\text{ and }T_n \equiv 0\text{ for }n \ge 2.\] 
In particular, 
$$T_1 a_0(t,x) = \kappa_{\circ}(x)\pi^{-1/2} e^{-\frac{ \vert x\vert^2}{2}}\matrice{e^{-i\te_t/2} \\ e^{i\te_t/2} }. $$
Thus, we are left to solve 
\begin{equation}
\begin{split}
(T_0 a_1)(t,x) &=- \kappa_{\circ}(x)\varphi_0(x_1)\varphi_0(x_2)\matrice{e^{-i\te_t/2} \\ e^{i\te_t/2} }= -\frac{\varphi_0(x_1)\varphi_2(x_2) + \varphi_0(x_2) \varphi_2(x_1)}{2\sqrt 2}\matrice{e^{-i\te_t/2} \\ e^{i\te_t/2} }.
\end{split}
\end{equation} 
This computation can be carried out explicitly as the following remark shows:

We continue by discussing the circular interface in a magnetic field.
We find that the function $f(t)=\chi(a\cos(t),b\sin(t)),$ for $a,b>0$ and $t \in [0,2\pi)$ where we use the general parametrization of an ellipse, which satisfies
\begin{equation}
    \begin{split}
        f'(t) &=- a\partial_1 \chi(\cos(t),\sin(t)) \sin(t)+ b\partial_2 \chi(\cos(t),\sin(t)) \cos(t)\\
        &=-a A_1(\cos(t),\sin(t))\sin(t)+b A_2(\cos(t),\sin(t))\cos(t),
    \end{split}
\end{equation} 
which we can always integrate.
In the special case of a magnetic vector potential on an ellipse $A(x) = (-Bx_2,0)$, this reduces to 
$f'(t) = a B \sin^2(t).$
Thus, $f(t) = a\frac{B(t-\cos(t)\sin(t))}{2}+c,$ for some $c \in \mathbb R$. This means that for $\chi$ we can choose the following function in polar coordinates on the circle $a=1$
\[\chi(r,\varphi)= \frac{B(\varphi-\sin(\varphi)\cos(\varphi))}{2}+c \]
This function is in general multi-valued at $\varphi=0$. But by appropriate choices of the constant, we may patch continuous choices of $\chi$ together.

\gb{This is interesting. Note that the object of interest is $e^{i\frac1\eps\chi(X)}$. So, when $\frac {B\pi}\eps=2\pi k$ for $k\in\Zm$, then $e^{i\frac1\eps\chi(X)}$ is single-valued in the previous example. This gives a quantization of the magnetic field ensuring that our wavepacket is single-valued. I wonder if something like this is still possible when $B$ has a more complicated structure.}

\subsubsection{Implementation details:} For our numerical simulations, we take a Fourier basis representation of the Hamiltonian, i.e. using Basis functions $e_n(x):=\frac{e^{2\pi i n \cdot x/L}}{L}$ on some sufficiently large enough torus $\Omega=[-L/2,L/2]^2.$

Pars pro toto, the matrix-representation of the operator $T_0$, for the spherical well, consists of matrix elements  $(T_0)_{nm}(t) := Q_{nm}(t) + H_{nm}(t)$ where
\begin{equation}
\begin{split}
H_{nm}(t) &:= \left[\begin{matrix}  \langle e_{n},(y_t\cdot x) e_{m} \rangle & \frac{2\pi (n_1 -i n_2)}{L} \delta_{n,m} \\ \frac{2\pi (n_1 +i n_2)}{L}\delta_{n,m} & - \langle e_n,(y_t\cdot x) e_m \rangle \end{matrix}\right]\text{ and } \\
Q_{nm}(t) &:= \frac{2\pi}{L}(n_1 \cos(\te_t)+n_2 \sin(\te_t))\operatorname{id}_{\mathbb C^{2\times 2}}\delta_{n,m}.
\end{split}
\end{equation}
Since this operator $T_0$ is not invertible, we numerically solve the transport equations using the method of least squares.

The numerical propagation of the initial state is implemented using the $L^2$-norm preserving Crank-Nicolson method with time-discretization scheme given as the Cayley transform of the Hamiltonian \[e^{-iHt}\psi = \frac{1-\frac{iHt}{2}}{1+\frac{iHt}{2}} + \mathcal O_{L^2}(t^3 H^3 \psi).\] 

\newpage

\newpage
\section{M\'enage \sout{d'hiver} de printemps}

\gb{See below subsection on construction of $\chi$; all material up to there is redundant.} Let $\chi$ such that $A(y_t) = \nabla \chi(y_t)$. 
This implies that  $A_r=A-\nabla\chi$ vanishes along the curve $y_t$. Therefore $\eps^{-\frac12}S_t^\star A_r(z)=u\cdot\nabla A_r$ to leading order in the vicinity of the curve. This term needs to be evaluated to construct the leading term of $\psi$.

\alexis{which of these identities are valid for all $x$ and which one are valid for $(x_1,0)$? please clarify}

Recall that along the curve $x_2=0$ (to simplify), the gauge $\chi$ is constructed such that  $\partial_1\chi(x_1,0)=A_1(x_1,0)$ and $\partial_2\chi(x_1,x_2)=A_2(x_1,x_2)$ so that $\partial^2_{12}\chi(x)=\partial_1A_2(x)$. Thus, 
$\partial_1 A_{r_1}=0$ and $\nabla A_{r_2}=\nabla (A_2-\partial_2\chi)=0$ . Finally $\partial_2(A_1-\partial_1\chi)=\partial_2 A_1-\partial_1 A_2=-B$ with $B$ the magnetic field.
So, $(u\cdot\nabla A_r)_2=0$ and for $j=1$ we find
\[
  \partial_1 A_{r_1}=0,\quad \partial_2 A_{r_1} = \partial_2 A_1-\partial^2_{12}\chi = \partial_2 A_1-\partial_1 A_2 = -B
\]
with $B=B(\gamma(t))$ the magnetic field along the curve (in 2D a scalar). \alexis{The point is that $\p_2 A_{r_1} = -B$? What is the general version of this identity for a curved $\kappa$? When is this used?}

\subsection{Wave packets with magnetic potential (OLD VERSION).}

\gb{Old version. Probably mostly irrelevant by now.}
\\[3mm]
\gb{Thanks AD and SB for comments. I corrected the section by addressing the effect of the magnetic potential by a usual $U(1)$ gauge transform. The non-Abelian purely temporal gauge does not work (at least the way I wrote it), I agree. I pulled out the comments and replaced the whole section.}

\alexis{I do not understand this equation. (1) Does $e^{i b(t)/\sqrt{\epsi}}$ commute with $\sigma_1, \sigma_2, \sigma_3$? (2) Is $\sqrt{\epsi}\p_t e^{i b(t)/\sqrt{\epsi}}$ equal $b(t) e^{i b(t)/\sqrt{\epsi}}$? (it is neither equal to $\dot{b}(t) e^{i b(t)/\sqrt{\epsi}}$, unless $\dot{b}(t)$ and $b(t)$ commute).}
${\color{blue} SB:(\sigma-\dot y_t)\dot D + S^\star_{1} \kappa \sigma_3? }$

Consider the construction of wavepackets for operators  of the form
\[
  D_t+H = D_t + (D-A)\cdot \sigma + \kappa \sigma_3.
\]
Here, $\kappa$ is a scalar potential while $A$ is a magnetic (vector) potential.  We want to consider wave packets in the following scaling
\[
  \Big[\eps D_t + (D- \eps^{\frac{1}{2}} A(\eps x) )\cdot \sigma + \kappa(\eps x) \sigma_3 \Big]\ \Big(e^{i\frac{\chi(\eps x)}{\eps^{\frac{1}{2}}}} \check \psi\big( t,\frac{\eps x-y_t}{\eps^{\frac{1}{2}}}\big)\Big) =0.
\]
The scaling is chosen so that both terms have an influence at leading order in this `semi-classical' regime of slowly varying coefficients. The $U(1)$ gauge transform $\chi(x)$ is assumed to be a smooth scalar function. We recast this equation in the variable $u$ such that $\eps x=y_t+\sqrt \eps u$ and define
\[
  \psi(t,z) = e^{i\frac1{\eps^{\frac{1}{2}}} \chi(y_t+\eps^{\frac{1}{2}} z) } \check \psi\big( t,z).
\]
After dividing by $\eps^{\frac{1}{2}}$, we obtain the following equation for $\psi(t,z)$
\[
  \Big[ \eps^{\frac{1}{2}} D_t + [D+(\nabla\chi-A)(y_t+\eps^{\frac{1}{2}} z)]\cdot\sigma - \dot y_t\cdot D + \eps^{-\frac12}\kappa(y_t+\eps^{\frac{1}{2}} z)\sigma_3 \Big] \psi(t,z)=0.
\]

We cannot fully gauge out the potential $A$ when the magnetic field $B:=\nabla\times A \not=0$, i.e. when $A$ is not pure-gauge of the form $A=\nabla\chi$. However, all we need is to `cancel' its effect along the curve $y_t$.  \alexis{Ok, seems trustable to me. } This can be done even when $A$ is not pure gauge for $t\to y_t$ an open curve by solving $\partial_\tau\chi = \tau\cdot\nabla\chi=\tau\cdot A$ for $\tau$ tangent vector to the curve. We can then solve $n\cdot\nabla\chi=\partial_n\chi=A\cdot n$ for $n=J\tau$ normal vector. 

The procedure does not apply for closed curves. There, we need the integral of $A$ along the curve to vanish or need to consider a partition of unity with two patches along the curve (probably easily doable; then $\chi$ is double-valued \alexis{Yes, that seems fine; what is the mathematical framework for this that avoids doing partition of unity?}). \gb{We need to treat the curve as  a manifold with two patches and glue the $\chi$'s between the patches.} \alexis{I see. So for a simple loop I understand that $\te$ would be a section of the vector bundle that is trivial on two "three-quarter" loops and has transition functions $x \mapsto x + \int_\gamma A$? That's interesting; I am not sure this manifold version simplifies the picture though.}\gb{Yes, something like this. I'm not sure we want to develop any section here as I'm not sure how to geometrize the whole packet. But yes, the idea is to have two gauges $\chi_j$ each on three quarters of the curve and as the wavepacket reaches a region of overlap to transfer it to the other representation.}
\\\gb{The best might be to define a multivalued $\chi$ on the curve being univalued on $s\in\Rm$ covering the curve. We then pick the univalued part of $\chi$ we need as the wave-packet progresses. This will work if the wave-packet nearly vanishes on an open set somewhere along the curve (which is the case for us).}

Once $\chi$ is constructed in the vicinity of the curve, it is extended smoothly by $0$ a distance away from it. If $A$ is pure gauge, then we obviously choose $A=\nabla\chi$ but this is not a necessary assumption. We can therefore address the effects of a (sufficiently small and smooth) magnetic field on the propagating wavepacket.

\medskip

We next introduce $A_r=A-\nabla\chi$ the reduced, partially gauged magnetic potential, and recast the above problem as
\[
  \Big[ \eps^{\frac{1}{2}} D_t + [D-A_r(y_t+\eps^{\frac{1}{2}} z)]\cdot\sigma - \dot y_t\cdot D + \eps^{-\frac12}\kappa(y_t+\eps^{\frac{1}{2}} z)\sigma_3 \Big] \psi(t,z)=0.
\]

We can introduce a scaling-translation change of variables
\[
  S u = y_t+\eps^{\frac{1}{2}} u,\quad S^\star f(z) = f(y_t+\eps^{\frac{1}{2}} z).
\]
Note that $S=S(t,\eps)$.
With this, we can write the above more compactly
\[
  \Big[ \eps^{\frac{1}{2}} D_t + (D-   S^\star A_r(z))\cdot\sigma - \dot y_t\cdot D + \eps^{-\frac12} S^\star \kappa(z)  \sigma_3 \Big] \psi(t,z)=0.
\]
The curve $y_t$ is constructed in $\kappa^{-1}(0)$. More specifically, we have
\[
  \Rm^2\supset \Sm^1\ni \varphi(y_t)=\dfrac{\nabla \kappa(y_t)}{|\nabla \kappa(y_t)|},\quad \dot y_t = r(t)J\varphi(y_t), \quad J=\begin{pmatrix} 0 & -1 \\ 1 & 0 \end{pmatrix}.
\]
We initialize the trajectory with $y(0)\in \kappa^{-1}(0)$. We want to allow the norm of  $\dot y$ to possibly vary and thus have the parameter $r(t)>0$ to choose (eventually, we will choose $r=1$ though).

We now perform $t-$dependent rotations of the spatial variables $u$ and the spinor $\psi$ to construct an explicit leading-order term to the above equation. The rotation aims to ensure that $\nabla\kappa\not=0$ points in the $e_2$ direction with the same orientation. We introduce the rotation $R_\theta$ with $\theta=\theta(t)$ such that $\varphi=R_{-\theta} e_2$, where
\[
   R_\theta = \begin{pmatrix} \cos\theta & \sin \theta \\ -\sin \theta & \cos\theta\end{pmatrix}
\]
We define the pull-back $R_\theta^\star f(z)=f(R_\theta z)$. We can therefore associate $\kappa$, by using that $S^\star\kappa(z)=R^\star_\theta \kappa_0(z),$ to a model domain wall $\kappa_0(z)$ such that $\nabla \kappa_0(z)=|\nabla\kappa_0(z)|e_2$. We therefore aim to apply the unitary transform $F\mapsto R^\star_{-\theta} F R^\star_\theta$. The spatial rotation is then counter-acted by a rotation of the spinor $\psi$. We use that
\[
  R^\star_{-\theta} D R^\star_\theta = R_{-\theta} D, \qquad R_{-\theta}\sigma = e^{i\frac\theta 2\sigma_3}\sigma  e^{-i\frac\theta 2\sigma_3}.
\]
We introduce
\[
  U_\theta = R^\star_\theta \otimes  e^{-i\frac\theta 2\sigma_3},\qquad U^*_\theta=U_{-\theta},
\]
and compute $U_\theta^* (\eps^{\frac{1}{2}} D_t + H_\eps) U_\theta$ to get
\[
   \Big[\eps^{\frac{1}{2}} U^*_\theta D_t U_\theta +  (D-   R_\theta  R^\star_{-\theta}S^\star A_r)\cdot \sigma - R_\theta \dot y\cdot D + \eps^{-\frac12} R^\star_{-\theta}S^\star \kappa \sigma_2  \Big] \tilde\psi(t,z)=0
\]
where we have defined $\tilde\psi(t,z)=U^*_\theta \psi(t,z)=U_{-\theta}\psi(t,z)$, or equivalently $\psi(t,z)=U_\theta \tilde\psi(t,z)$. In more detail, we have
\[
   R^\star_{-\theta}S^\star \kappa(z) = \kappa\big(y_t + \eps^{\frac{1}{2}} R_{-\theta} u\big),\qquad 
   R_\theta R^\star_{-\theta}S^\star A_r(z) = (R_\theta A_r)\big(y_t + \eps^{\frac{1}{2}} R_{-\theta} u\big).
\]
Here, $R_\theta$ acts as a matrix multiplication.

The leading-order equation for $\tilde\psi$ is given by
\[
  \Big[(D-R_\theta A_r(y_t))\cdot\sigma - R_\theta\dot y_t \cdot D + |\nabla\kappa (y_t)| z_2 \sigma_3\Big]  \tilde\psi_0(t,z)=0.
\]
Since $A_r(y_t)=0$ by construction of the gauge transform, this is 
\[
  \Big[ D\cdot\sigma - R_\theta\dot y_t \cdot D + |\nabla\kappa (y_t)| z_2 \sigma_3\Big]  \tilde\psi_0(t,z)=0.
\]
Note $\dot y_t=r(t)J\varphi(t)$ so that $R_\theta\dot y_t = -r(t) e_1$ by construction.

It remains to choose $r(t)$  so as to maximize the size of the kernel in the above expression. 
We find, as in the absence of magnetic potential, that $r(t)=1$.
This gives the constraint
\[
     \Big[ D\cdot\sigma +D_1 + \beta(t) z_2 \sigma_3\Big] \tilde\psi_0(t,z)=0,\qquad 
 \beta(t):=  |\nabla\kappa (y_t)|.
\] 
As a consequence, we find the following parametrization of the kernel
\[
   \tilde\psi_0(t,z) = f(t,z_1) e^{-\frac12\beta(t) z_2^2} \begin{pmatrix} 1 \\ -1 \end{pmatrix}.
\]

The original problem may be written as 
\[
  L\psi(t,z):= \Big[ \eps^{\frac{1}{2}} D_t + (D-   S^\star A_r(z))\cdot\sigma - \dot y_t\cdot D + \eps^{-\frac12} S^\star \kappa(z)  \sigma_3 \Big] \psi(t,z)=0.
\]

We observe that modulo $\eps^\infty$ terms and for $f\in {\mathcal S}(\Rm^2)$, we have
\[
  (S^\star f) (z) =\dsum_{\alpha} \dfrac{z^{\alpha}}{\alpha!} \partial^\alpha  (S^\star f)(0)  =  \dsum_{\alpha} \dfrac{\eps^{\frac\alpha2} z^{\alpha}}{\alpha!} \partial^\alpha f(y_t).
\]
Let us introduce $\tilde S(z)=y_t+u$, $\tilde S^\star f(z)=f(y_t+z)$ when $\eps=1$, as well as
\[
   (S^\star_j f) (z) = \dsum_{|\alpha|=j} \dfrac{z^{\alpha}}{\alpha!} \partial^\alpha  (\tilde S^\star f)(0),\quad
       (S^\star_{\geq j} f) (z) = (\tilde S^\star f) (z) - \sum_{k=0}^{j-1} (S^\star_j f) (z) .
\]

We decompose the above operator for all $J\geq2$ as 
\[  \begin{array}{rcl}
  L &=& \dsum_{j=0}^{J-1} \eps^{\frac j2} L_{j} + L_{\geq J} \\
   L_0 &=&  D\cdot(\sigma-\dot y_t) + S^\star_{1} \kappa \sigma_3\\
  L_j &=&\delta_{j1} D_t  - (S^\star_j A_r) \cdot \sigma + S^\star_{j+1} \kappa \sigma_3,\qquad j\geq1 \\
  L_{\geq j}  &=&  - (S^\star_{\geq j} A_r) \cdot \sigma + S^\star_{\geq j+1} \kappa \sigma_3,\qquad j\geq2.
\end{array}
\]
We decompose $\psi(t,z)=\sum_{j\geq 0} \eps^{\frac j2} \psi_j(t,z)$ as well and equating like powers of $\eps$ obtain the sequence of equations
\[
  \dsum_{j+k=l}  L_j \psi_k=0,\qquad \forall l\geq0.
\]
We introduce $\tilde\psi_k(t,z)=U_\theta^*\psi_k(t,z)$ and $\tilde L_j=U_\theta^* L_j U_\theta$ and obtain the corresponding equations
\[
    \dsum_{j+k=l}  \tilde L_j \tilde\psi_k=0,\qquad \forall l\geq0.
\]
We already constructed the leading term solution
\[
   \tilde\psi_0(t,z) = f_0(t,z_1) e^{-\frac12 \beta(t) z_2^2} \frac{1}{\sqrt2} \begin{pmatrix} 1 \\ -1 \end{pmatrix} =: f_0(t,z_1) \phi(z_2)
\]
with $f_0(t,z_1)$ a so-far-arbitrary smooth function.
The next-order equation is
\[
 \tilde L_0 \tilde\psi_1 + (U_\theta^* D_t U_\theta - R_\theta R_{-\theta}^\star S_1^\star A_r \cdot\sigma  + R_{-\theta}^\star S_2^\star \kappa  \sigma_3) \tilde\psi_0=0.
\]
The compatibility condition for such an equation to admit solutions is
\[
  \Big(\phi,(U_\theta^* D_t U_\theta - R_\theta R_{-\theta}^\star S_1^\star A_r \cdot\sigma  + R_{-\theta}^\star S_2^\star \kappa  \sigma_3) f_0(t,z_1) \phi \Big) =0
\]
Let $\phi_0=(1,-1)^t/\sqrt 2$. 
We find that $(\phi_0,\sigma_{2,3}\phi_0)=0$ and $(\phi_0,\sigma_1\phi_0)=-1$. We compute
\[
    R_\theta R_{-\theta}^\star S_1^\star A_r = R_{-\theta} u\cdot\nabla (R_\theta A_r)(y_t) = u\cdot (R_\theta\nabla) R_\theta A_r (y_t). 
\]
The term involving $z_2$ cancels out in the integration in $z_2$. However, the contribution in $\sigma_1$ and $z_1$, which we denote by
\[ 
   \gamma(t) = (R_\theta\nabla)_1 (R_\theta A_r)_1 (y_t),
\] 
\gb{This term vanishes in fact. This is $\partial_1 (A-\nabla\chi)_1=0$.\\}
gives a source contribution $z_1\gamma(t)$. More precisely, the above constraint is
\[
   \dint_{\Rm}   e^{-\frac12\beta(t)z_2^2} (U_\theta^* D_t U_\theta+z_1\gamma(t)) [f_0(t,z_1)e^{-\frac12\beta(t)z_2^2}] dz_2=0.
\]
We now compute
\[\begin{array}{l}
  U_{-\theta} D_t U_\theta = U_{-\theta} D_t (e^{-i\frac \theta2 \sigma_3} \otimes R_\theta^*)
  = -\frac i2 \sigma_3 \dot\theta + \dot\theta u^\perp\cdot D + D_t.
\end{array}
\]
This uses $D_t R_\theta^\star f(z)=D_t f(R_\theta z)=\dot\theta R_{\theta+\frac\pi2}u\cdot D f(R_\theta z)=R_\theta^*( \dot\theta u^\perp\cdot D f(z))$. 

The term $u^\perp \cdot D=-z_2D_1+z_1D_2$ involves an integrand proportional to $z_2e^{-\beta(t)z_2^2}$ and hence vanishes. The term in $\sigma_3$ produces no contribution either.  Distributing the differentiation $D_t$ thus yields the constraint
\[
  \dint_{\Rm} [D_t f_0 -\frac 12 (D_t\beta(t)) z_2^2 f_0 + z_1\gamma(t)f_0] e^{-\beta(t)z_2^2} dz_2=0.
\]
We compute $\int u^2 e^{-\beta u^2}du=\frac{1}{2\beta} \int e^{-\beta u^2}du$ so that 
\[
    D_t \ln f_0 = D_t \ln \beta^{1/4} + iz_1 D_t\Gamma(t),\quad \Gamma(t) =\int_0^t \gamma(s)ds
\]
and hence \gb{(with $\Gamma\equiv\gamma=0$)}
\[
  \tilde D_t f_0 := D_t (\beta^{-1/4} e^{-i z_1\Gamma(t)} f_0) =0
\]
so that
\[
   f_0(t,z_1) = c(z_1) \beta^{1/4}(t) e^{i z_1 \Gamma(t)},
\]
for $c(z_1)$ an arbitrary function which we assume is now fixed and in $C^\infty_c(\Rm)$.  This is the initialization of the wavepacket at time $t=0$, say.

With this, we can solve the second equation (as in overleaf; we indeed need to be able to solve such a problem and obtain some continuity in an appropriate topology) and obtain the general solution
\[
   \tilde \psi_1 =  - \tilde L_0^{-1} \tilde L_1 \psi_0 + \tilde \psi_{10}, \quad  \tilde \psi_{10} = f_1(t,z_1)\phi(z_2),
\]
with a function $f_1(t,z_1)$ that remains to be chosen.  Let us assume that all functions $f_j(t,z_1)$ have been constructed for $0\leq j\leq J$. This implies that the compatibility condition
\[
  (\phi,\tilde L_0 \tilde\psi_{J+1}) + (\phi,\tilde L_1 \tilde \psi_{J})+  \dsum_{j=2}^{J+1} (\phi,\tilde L_j\tilde \psi_{J+1-j}) =0
\]
is satisfied and from which we deduce that 
\[
  \tilde \psi_{J+1} = \tilde \phi_{J+1} + \tilde \psi_{J+1,0},\quad \tilde \phi_{J+1} :=- \tilde L_0^{-1} \sum_{j=1}^{J+1} \tilde L_j\tilde\psi_{J+1-j} ,\quad  \tilde \psi_{J+1,0} =  f_{J+1}(t,z_1)\phi(z_2).
\]
The next-level compatibility condition thus implies
\[
  (\phi,\tilde L_0 \tilde\psi_{J+2}) + (\phi,\tilde L_1 \tilde \psi_{J+1,0}) = - \dsum_{j=2}^{J+2} (\phi,\tilde L_j\tilde \psi_{J+2-j}) - (\phi, \tilde L_1 \phi_{J+1}) =: S_{J+1}(t,z_1).
\]
The latter is
\[
   \tilde D_t f_{J+1}(t,z_1) = \dfrac{\sqrt{\beta(t)}}{\sqrt\pi} S_{J+1}(t,z_1).
\]
This is an equation that admits a unique solution with an  initialization (say, vanishing) at $t=0$. This allows us to construct the sequence  $\tilde \psi_j$  and $\psi_j=U_\theta \tilde\psi_j$ iteratively.

Starting from $c(z_1)\in C^\infty_0(\Rm)$, we obtain for each $J\geq0$
\[
  \psi(t,z) - \dsum_{j=0}^J \eps^{\frac j2} \psi_j(t,z) = O(\eps^{\frac{J+1}2}).
\]

\gb{Slightly modified version with a little more detailed calculations 2/26.}
\subsection{Large electromagnetic field}
Consider the problem with a large (comparable to the domain wall) electromagnetic field and the construction of wavepackets for (general) operators  of the form
\[
  D_t+H = D_t  + V + (D-A)\cdot \sigma + \kappa \sigma_3.
\]
Here, $\kappa$ is a scalar potential while $(V,A)$ is an electric (scalar) /magnetic (vector) potential.  We want to consider wave packets in the following scaling
\[
  \Big[\eps D_t + \eps^{\frac{1}{2}} V(\eps x) + (D-  A(\eps x) )\cdot \sigma + \kappa(\eps x) \sigma_3 \Big]\ \Big(e^{i\frac{\chi(\eps x)}\eps} \check \psi\big( t,\frac{\eps x-y_t}{\eps^{\frac{1}{2}}}\big)\Big) =0.
\]
The electric potential $V$ cannot be arbitrary $O(1)$ and so is chosen to next order $\eps^{\frac{1}{2}}$.
The $U(1)$ gauge transform $\chi(x)$ is constructed as earlier to ensure that $A-\nabla\chi$ vanishes on $\kappa^{-1}(0)$. We recast this equation in the variable $u$ such that $\eps x=y_t+\sqrt \eps u$ and define
\begin{equation}\label{eq:6s}
    \psi(t,z) = e^{i\frac1{\eps} \chi(y_t+\eps^{\frac{1}{2}} z) } \check \psi\big( t,z)
\end{equation} 
We introduce the reduced (partially gauged) magnetic potential $A_r=A-\nabla\chi$.
After dividing by $\eps^{\frac{1}{2}}$, we obtain the following equation for $\psi(t,z)$:
\begin{equation}\label{eq:6t}
    \Big[ \eps^{\frac{1}{2}} D_t + [D-\eps^{-\frac12}A_r(y_t+\eps^{\frac{1}{2}} z)]\cdot\sigma - \dot y_t\cdot D + \eps^{-\frac12}\kappa(y_t+\eps^{\frac{1}{2}} z)\sigma_3 \Big] \psi(t,z)=0.
\end{equation}
\alexis{There is a term missing here in \eqref{eq:6t}, I think: $\dot{y_t} \nabla \chi(y_t + \sqrt{\epsi }z)$? it comes out when you take the time-derivative of the phase in \eqref{eq:6s}. It corresponds to the term $\p_t A(y_t) x$ in \eqref{eq:6q} below) }
\gb{It should not be there; see new version. $\chi$ is never differentiated in time and disappears before we move from $\eps x$ to $u$ variables.}
%
We find $\eps^{-\frac12} A_r(y_t+\eps^{\frac{1}{2}} z) = u\cdot\nabla A_r$ to leading order. It remains to evaluate $(u\cdot\nabla A_r)_j$. Recall that along the curve $x_2=0$ (to simplify), the gauge $\chi$ is constructed such that  $\partial_1\chi(x_1,0)=A_1(x_1,0)$ and $\partial_2\chi(x_1,x_2)=A_2(x_1,x_2)$ so that $\partial^2_{12}\chi(x)=\partial_1A_2(x)$. Thus, 
$\partial_1 A_{r_1}=0$ and $\nabla A_{r_2}=\nabla (A_2-\partial_2\chi)=0$. Finally $\partial_2(A_1-\partial_1\chi)=\partial_2 A_1-\partial_1 A_2=-B$ with $B$ the magnetic field.
So, $(u\cdot\nabla A_r)_2=0$ and for $j=1$ we find
\[
  \partial_1 A_{r_1}=0,\quad \partial_2 A_{r_1} = \partial_2 A_1-\partial^2_{12}\chi = \partial_2 A_1-\partial_1 A_2 = -B
\]
with $B=B(\gamma(t))$ the magnetic field (in 2D a scalar). So, the leading term after appropriate rotations is
\begin{equation}\label{eq:leadmag}
   [(D_1 + B z_2) \sigma_1 + D_2 \sigma_2 -\dot y_t\cdot D + z_2\sigma_3] \psi_\theta=0.
\end{equation}
We use the notation $\psi_\theta$ as we need to rotate this further to analyze the above operator. Introduce spinor rotation by $-\theta/2$ about $\sigma_2$ given by
\[
  U_{2;-\theta} = e^{-i\frac\theta2 \sigma_2} 
\]
and apply the transform to the operators as follows:
\[
     \sigma_j \mapsto e^{i\frac\theta2 \sigma_2} \sigma_j e^{-i\frac\theta2 \sigma_2}  = U_{2;-\theta}^*\sigma_j U_{2;-\theta}
\]
which to $\sigma_1$ associates $\sigma_B:=\cos\theta \sigma_3+\sin\theta\sigma_1$ and to $\sigma_3$ associates $\cos\theta\sigma_1-\sin\theta\sigma_3$. Define $\theta=\arctan B$ so that $\alpha:=\cos\theta=(1+B^2)^{-\frac12}$ and $\beta:=\sin\theta:=B(1+B^2)^{-\frac12}$.

Under this transform, we have
\[
  \sigma_3+B\sigma_1 \to \sqrt{1+B^2}\sigma_3,\quad
  \sigma_1\to \sigma_B=\alpha\sigma_1+\beta\sigma_3.
\]
Let $\kappa_m=\sqrt{1+B^2}=\alpha^{-1}$ and apply the transform to \eqref{eq:leadmag}  to get
\[
  [D_1 (\alpha\sigma_1+\beta\sigma_3 - \dot y_1) + D_2 \sigma_2 + \kappa_m z_2\sigma_3] U^*_{2;-\theta} \psi_\theta=0.
\]
We define $\psi_0=U^*_{2;-\theta} \psi_\theta$ and Fourier transform $z_1\to\xi_1$ to get 
\[
  [\xi_1 \alpha\sigma_1 - \dot y_1\xi_1 + D_2\sigma_2 + \kappa_m (z_2+\beta\xi_1)\sigma_3]\psi_0=0.
\]
We change variables $v_2=z_2+\beta\xi_1$ with $\partial_{v_2}=\partial_{z_2}=\partial_2$ so that for $\phi_0(\xi_1,v_2)=\psi_0(\xi_1,z_2)$, we have
\[
   [\xi_1 \alpha\sigma_1 - \dot y_1\xi_1 + D_2\sigma_2 + \kappa_m  v_2 \sigma_3]\phi_0=0.
\]
This is a problem we can solve. We choose $\dot y_1=-\alpha=-(1+B^2)^{-\frac12}$ (with $|\alpha|\leq 1$ now; the magnetic field slows down the packet no matter its direction, which makes intuitive sense) and find
\[
  \phi_0(\xi_1,v_2) = \tilde\alpha(\xi_1) e^{-\frac12 \kappa_m v_2^2} \phi_{00}
\]
with $\phi_{00}$ the appropriate constant vector $(1,-1)^t$ presumably in this set of Pauli matrices and $\tilde\alpha(\xi_1)$ an arbitrary function. Therefore, the kernel is
\[
  \psi_0(\xi_1,z_2) = \tilde\alpha(\xi_1) e^{-\frac12\kappa_m \beta^2\xi_1^2} e^{-\kappa_m \beta \xi_1 z_2}e^{-\frac12 \kappa_m z_2^2} \phi_{00} = \alpha(\xi_1)  e^{-\kappa_m \beta \xi_1 z_2}e^{-\frac12 \kappa_m z_2^2} \phi_{00}
\]
Fourier transforming this back to $(z_1,z_2)$ we get
\[
  \psi_0(z_1,z_2) = f(z_1 +i \kappa_m \beta z_2) e^{-\frac12 \kappa_m z_2^2} \phi_{00}.
\]
Note $\beta=B(1+B^2)^{-\frac12}$ so the shift depends on the sign of $B$.
This is a funny new Ansatz in these magnetic-type variables. This is presumably how the kernel is modified by the presence of the magnetic field. 

This Ansatz is of a similar nature to the one when $\beta=0$. However, we have possible exponential growth in $z_2$ that should be compensated by the gaussian decay. Note that the above is not the only effect of the magnetic field as the equations need to be rotated about $\sigma_2$ to the above form first. We thus need two rotations which do not commute.

\newpage

\gb{Old comments; probably irrelevant now; we know what $\dot y$ should be.}
\\[3mm]
\noindent{\bf Another argument for studying the kernel, and justification of $|\dot y_t|=1$.}\footnote{Clo: May be, it could be useful} \\
Since we look for a curve $y_t$ included in the set $\{\kappa(x)=0\}$, we need to have $\dot{y_t}\cdot \nabla \kappa (y_t)=0$. However, we do not perform any assumption on $|\dot y_t|$. So we write 
$$\dot y_t= \varsigma(t) \nabla \kappa(y_t)^\perp$$ 
where $\nabla \kappa(y_t)^\perp$ is a normalized vector orthogonal to $\nabla \kappa(y_t)$ and $t\mapsto \varsigma (t)$ is a continuous function.
With these notations
\[
L_0= \matrice{\varsigma(t) & 1 \\ 1 & \varsigma(t) } D_x
+ \matrice{x_2 & -iD_{x_2}\\ iD_{x_2} & -x_2}.
\]
We use the transformation introduced in the next section and consider 
\[ 
M_0 = \matrice{1 & -1 \\ 1 &  1} L_0 \matrice{1 & -1 \\ 1 &  1}^{-1} =  
\matrice{(\varsigma(t)-1)D_{x_1}  & x_2 -\p_{x_2} \\ x_2 + \p_{x_2} & (\varsigma(t)+1) D_{x_1}}.
\]
Let us study the kernel of $M_0$. Let $u=(z_1,z_2)\in {\rm ker}(M_0)$ and consider the partial Fourier transform in the variable $x_1$ of $u$:
$$\widehat u(\xi,x_2) =(\widehat z_1(\xi,x_2), \widehat z_2(\xi,x_2)). $$
Then, $\widehat u$ must satisfy
\[
(x_2-\partial_{x_2} ) \widehat z_2 = -\xi (\varsigma(t)-1) \varsigma z_1,\;\;
(x_2+\partial_{x_2} ) \widehat z_1 = -\xi (\varsigma(t)+1) \varsigma z_2.
\]
We deduce 
\[ \mathfrak h \widehat z_2 = \xi^2 ((\varsigma(t)^2 -1) -2)\widehat z_2,\;\;
\mathfrak h \widehat z_2 = \xi^2 (\varsigma(t)^2 -1) \widehat z_1\]
where $\mathfrak h = -\partial_{x_2}^2 + x_2^2 -1$. 
We observe that this system has solution in $L^2$ if and only if $\varsigma(t)^2 =1$. Then, we find the same solutions as before.

\noindent{\bf Comment on the case where $\nabla \kappa(y_t)$ has a norm which is not necessarily~$1$.}\footnote{Same thing...}\\
We can perform the same transformation as before, after having written 
\[ 
\nabla \kappa(y_t)= \varrho(t) \matrice{-\sin(\theta_t)\\ \cos(\theta_t)}.
\]
After all the transformations, one is left with 
\[ 
M_0  =  
\matrice{(\varsigma(t)-1)D_{x_1}  & \varrho(t) x_2 -\p_{x_2} \\ \varrho(t) x_2 + \p_{x_2} & (\varsigma(t)+1) D_{x_1}}.
\]
We are left with the system 
\[ \mathfrak h (t)\widehat z_2 = \xi^2 ((\varsigma(t)^2 -1) -2\varrho(t))\widehat z_2,\;\;
\mathfrak h(t) \widehat z_2 = \xi^2 (\varsigma(t)^2 -1) \widehat z_1\]
where $\mathfrak h (t) = -\partial_{x_2}^2 + \varrho(t)^2 x_2^2 -\varrho(t) $, the spectrum of which is $\{2 \varrho(t)n, \;n\in\mathbb N\}$ with eigenfunctions that  express in terms  of rescaled Hermite functions. 
As in the previous paragraph, we see that we need $\xi^2 (\varsigma(t)^2-1) $ equal to a constant (when $t$ is fixed), that constant being necessarily $0$ so that $u$ is in $L^2$, which imply $\varsigma(t)=1$. Then,  
the kernel is time dependent, in terms of the rescaled-Hermite functions $x\mapsto \varrho(t)^{1/4}\varphi_n(\sqrt {\varrho(t)}x)$. To be complete, we need to check what happens with the differentiation in time.

\newpage 
\gb{OLD COMMENTS. PROBABLY IRRELEVANT BY NOW.}
\\[3mm]
\gb{Remark: $L_\theta$ is a rotation of $L_0$. I still have to translate to the right variables what I had done in an earlier paper. But in an appropriate basis we can write after partial FT $x_1\to\xi$
\[
  \hat L_0 = \begin{pmatrix} 0 & \aaa^* \\ \aaa & -2\xi \end{pmatrix}
\]
for $\aaa=\partial_y+y$ with $y=x_2$. There is a kernel given by $(1,0)^t$ times the gaussian function in the kernel of $\aaa$. This kernel is the same for all $\xi$ since replacing $D_t$ by $D_{x_1}$ flattens the topologically protected mode with relation $E(\xi)=\xi$ to $E(\xi)-\xi=0$ for all $\xi$. With $\epsi_k$ the common eigenvalues of $\aaa^*\aaa$ and $\aaa\aaa^*$, we find that the other eigenvalues of $\hat L_0$ are given by
\[ 
  E_l(\xi) = \epsilon_l \sqrt{\epsi_k+\xi^2} - \xi
\]
with $\epsilon_l=\pm1$ and $l=(k,\pm)$ for $k\geq1$ (since $k=0$ corresponds to the kernel). The eigenvectors associated to these eigenvalues for an orthonomal basis of $({\rm Ker\ }L_0)^\perp$ (these are the Hermite polynomials appropriately distributed in the spinor). These eigenvalues are all positive for $k\geq1$ and their inverse is bounded by $|\xi|$. So $L_0$ is invertible on $({\rm Ker\ }L_0)^\perp$ and is a bounded operator from $H_x\otimes\C^2$ to $L^2(\R^2)\otimes\C^2$ with $H_x$ the space of square integrable functions $f$ such that $\partial_{x_1}f\in L^2(\R^2)$. So, clearly that inverse is bounded on much smaller Fr\'echet spaces. But my feeling is that it is a PDO of order 1 in $x_1$.
}

\alexis{Sure, this is essentially the strategy below. Though most of the proof consists in justifying "clearly that inverse is bounded on much smaller Fr\'echet spaces". That's not totally trivial in my opinion because you put much smaller Frechet spaces on \textit{both} sides. The argument above and the PDO calculus implies that the inverse is bounded from $H^{s+1}_x$ to $H^s_x$, but I do not know a result that immediately yields boundedness on  $\SSS(\R^2)$).}

\gb{OK. The operator is equivalently Schwartz in the variables $(\xi,x_2)$. The full inverse operator is decomposed as a sum of rank-one projectors of the form
\[
   (\sqrt{\epsi_k+\xi^2}-\xi)^{-1} (\psi_k\psi_k^*)(x_2).
\]
So, full continuity corresponds to continuity of each one of these operators. In $x_2$, the spinors $\psi_k$ decay like $e^{-\frac12 x_2^2}$ and so are Schwartz class uniformly in $k$. \alexis{I doubt this can be true: otherwise their span would be $\SSS(\R^2)$, but it is $L^2$... Also they look like $2^{k/2} k!^{-1/2} x^k e^{-x^2/2}$, which is not uniformly bounded in $\SSS(\R)$ (take $x^2\vp_k(x)$ with $x=\sqrt{k}$)?} In $\xi$, we have multiplication by a smooth object bounded by $|\xi|$ and so should uniformly be Schwartz class as well. Might be what you are doing below. }

\newpage

\begin{lemm} If $(-\Delta_{\R^2} + x^2)^N u \in L^2$ for any $N \in \N$ then $u \in \SSS(\R^2)$. 
\end{lemm}


\begin{proof} 1. For $h > 0$, let $P_h$ be the semiclassical quantum harmonic oscillator: $P_h = -h^2 \Delta_{\R^2} + x^2$. Define $S_h u(x) = u(h^{1/2} x)$. Then,
\begin{equation}
(-\Delta_{\R^2} + x^2)^N u = P_1 S_h S_h^{-1} u = S_h \cdot h^{-1} P_h S_h^{-1} u.
\end{equation}
Hence, if $u$ is such that $(-\Delta_{\R^2} + x^2)^N u \in L^2$ for any $N \in \N$ and $v_h = S_h^{-1} u$, then $P_h^N v_h \in L^2$  for any $N$; also $(P_h+1)^N v_h \in L^2$  for any $N$ (while that the correponding norms blow up as $h \rightarrow 0$, we will eventually \textit{fix} $h > 0$).

2. Fix $N \in \N$ and introduce $A = ((-h^2 \Delta_{\R^2})^N + x^{2N} + 1)^{-1/2}$. This is a semiclassical operator. Hence so is $B_h = 3 A_h (P_h+1)^N A_h$; by the composition rules it has symbol
\begin{equation}
   B(x,\xi) =  3\dfrac{(\xi^2+x^2+1)^{N}}{\xi^{2N} + x^{2N}+1} + O_S(h), 
\end{equation}
where $S$ denotes the class of bounded symbols, together with all their derivatives. 
In particular, for $h$ sufficiently small, $B(x,\xi) \geq 2$ for any $(x,\xi) \in \R^2 \times \R^2$. By Garding's inequality, we deduce that there exists a constant $c > 0$ such that
\begin{equation}
  v \in \SSS(\R^2) \ \ \Rightarrow \ \  \lr{Bv,v} \geq (2-ch) |v|_{L^2}^2. 
\end{equation}
Applying this inequality with $A_h^{-1} v$ instead of $v$ and using that $A_h^{-2} = (-h^2 \Delta_{\R^2})^N + x^{2N} + 1$ yields
\begin{equation}\label{eq:2d}
 v \in \SSS(\R^2) \ \ \Rightarrow \ \ 
    \lr{(P_h+1)^N v,v} \geq (2-Ch) \left( h^{2N} \left|  \Delta^N_{\R^2} v  \right|^2_{L^2} + \left| x^N v \right|^2_{L^2} \right).
\end{equation}

3. We now fix $h = 1/c$, so that $(2-ch) = 1$. From \eqref{eq:2d}, we deduce that there exists $C > 0$ such that
\begin{equation} 
v \in \SSS(\R^2) \ \ \Rightarrow \ \ 
    C \lr{(P_h+1)^N v,v} \geq  \left|  \Delta^N_{\R^2} v  \right|^2_{L^2} + \left| x^N v \right|^2_{L^2}. 
\end{equation}
If $\alpha \leq N$ and $\beta$ are multi-indices with length than $N$, then we may write $(x^\alpha D_x^\beta)^2$ as a linear combination of terms of the form $x^\gamma D_x^\delta$ with $\gamma, \delta$ of length at most $N$. Thus for any $v \in \SSS(\R^2)$,
\begin{align}
    \left| x^\alpha D_x^\beta v\right|_{L^2}^2 = \left| \lr{ \big(x^\alpha D_x^\beta\big)^2 v, v} \right| \leq C' \sup_{|\gamma|, |\delta| \leq N} \lr{x^\gamma D_x^\beta v,v}
    \\
    = C'\sup_{|\gamma|, |\delta| \leq N} \lr{ D_x^\beta u, x^\gamma u} \leq C' \sup_{n \in [0,N]} \left|\Delta^n_{\R^2} v\right|_{L^2} \left| x^n v \right|_{L^2} 
    \\
    \leq C'C \sup_{n \in [0,N]} \lr{(P_h+1)^n v,v}.
\end{align}
By a density argument, this is valid for any $v$ such that $(P_h+1)^n v \in L^2, n \in [0,N]$. In particular this holds for $v_h$, and we conclude that $v_h$, hence $u$, belongs to $\SSS(\R^2)$. \end{proof}

{\color{orange} CT: We should be careful with the use of the word "topology". Strictly speaking the Dirac Hamiltonian has no bulk topological invariant (one needs its regularized version), only a bulk-difference invariant makes sense.}

{\color{orange} Also, in which sense the current discussed below are "topological"? I do not see anything really quantized here, unless we can make some connection with a conductivity. Nevertheless, perhaps we can show that the unidirectional currents along the nodal lines are stable against some perturbations or at least continuous in some sense when we deform $\kappa$ ?}

\alexis{Thank you for the comment. I wrote below two subsections, that introduces "Local Chern numbers" and clarifies the significance of $\kappa^{-1}(0)$ as the topological interface. This work constructs quantiatively a wave propagating precisely along $\kappa^{-1}(0)$, even though this interface is (weakly) bent. The method does not produce a counter-propagating current, indicating that these probably do not exist. However: 1. strictly speaking, it does not mean that counter-propagating currents do not exist; 2. it certainly does not address their stability. In particular we assume $|\nabla \kappa| = 1$, which is certainly not a stable assumption. }

\gb{I agree with Cl\'ement's comments. One result that I believe would be quite interesting is the construction of a closed curve. The asymptotic results would probably apply in such a geometry, which is however clearly trivial topologically, so that the topology is in fact hidden in the $O(\varepsilon^{\infty})$ terms. Heuristically, topology encodes an infinite time (spectral) behavior. This does not preclude the existence of things that look topologically protected but are not (ie the edge modes being constructed below).

Regarding counter-currents, they do exist at least for some models. For instance, they are in the higher-order Hermite polynomials corresponding to larger eigenvalues of the harmonic oscillator. What we have is a regime and a prepared initial condition that does not see such modes. }

\alexis{In fact the following picture should apply to \underline{any} semiclassical data localized at $(y_0, 0)$ with $\kappa(y_0) = 0$ -- for instance of the form \begin{equation}
    a \left( \dfrac{\epsi x - y_0}{\sqrt{\epsi}} \right).
\end{equation}
The associated wave should split in two parts with the following dynamics:
\begin{itemize}
\item[(1)] A propagating part, in a constrained direction. It corresponds to a zero-th order Hermite polynomial ($a(x) = e^{-x_2^2/2} f(x_1)$) and propagates dispersion-free for times $t$ up to e.g. $C \epsi^{-10}$. It may disperse only for $t \gg \epsi^{-10}$. 
\item[(2)] A dispersive part. It corresponds to higher-order Hermite polynomials. These disperse at speed $\epsi^{-1/2} t^{-1/2}$. The dispersion takes over for $t \geq \epsi^{-1}$, i.e. much faster than the propagating part.
\end{itemize}
We do not prove (2) here, but I have quite strong reasons to believe it. }

\end{document}